%% file: main.tex
\documentclass[reqno]{article}

\usepackage[utf8]{inputenc}

\usepackage{amsmath, amsthm}
\usepackage{amsfonts}
\usepackage{amssymb}
\usepackage{amsxtra}
\usepackage{mathrsfs}
\usepackage{mathtools}

\usepackage{bbm}
\usepackage{bm}
\usepackage{color}
\usepackage{float}
\usepackage{graphicx}
\usepackage{caption}
\usepackage{subfigure}
\usepackage{cancel}
\usepackage{comment}
\usepackage{latexsym}
\usepackage{soul}
\usepackage{todonotes}
\usepackage{accents}

\usepackage{authblk}
\usepackage{natbib}

\usepackage{tikz}
\usetikzlibrary{calc,arrows.meta, decorations.pathreplacing, calligraphy}

\usepackage{slashed}
\usepackage{leftidx}
\usepackage{tensor}
\usepackage{epstopdf}

\usepackage{enumerate}
\usepackage{enumitem}

\usepackage{hyperref}

\newtheorem{theorem}{Theorem}[section]
\newtheorem{lemma}[theorem]{Lemma}
\newtheorem{corollary}[theorem]{Corollary}
\newtheorem{proposition}[theorem]{Proposition}
\newtheorem{conjecture}[theorem]{Conjecture}
\newtheorem{definition}[theorem]{Definition}

\theoremstyle{remark}
\newtheorem{remark}[theorem]{Remark}
\newcommand{\mr}{\mathring}
\numberwithin{equation}{section}

\usepackage[left=2cm,right=2cm,top=2.5cm,bottom=2.5cm]{geometry}

\makeatletter

\newcommand{\Rmnum}[1]{\expandafter\@slowromancap\romannumeral #1@}
\makeatother

\begin{document}

\title{Nonlinear stability of continuously self-similar naked singularities for the Einstein-scalar field equations \Rmnum{1}: main results}

\author[1]{Weihao Zheng\thanks{wz344@math.rutgers.edu}}
\affil[1]{\small Department of Mathematics, Rutgers University, Hill Center, 110 Frelinghuysen Road, Piscataway, NJ, USA}
\date{\today}

\maketitle
\begin{abstract}
This is the first part of a series of papers proving the nonlinear stability of a one-parameter family of continuously self-similar $C^{1,\alpha}$ naked singularity solutions, with $0<\alpha\ll1$, to the spherically symmetric Einstein–scalar field equations. The stability holds for initial perturbations lying in a small open neighborhood of the data generating these naked singularity solutions, measured in a localized Hölder topology. 

These continuously self-similar naked singularity spacetimes were previously constructed by Christodoulou [D. Christodoulou, Examples of naked singularity formation in the gravitational collapse of a scalar field, Ann. of Math. 140 (1994), 607–653], who also proved their instability to black hole formation under sufficiently rough perturbations [D. Christodoulou, The instability of naked singularities in the gravitational collapse of a scalar field, Ann. of Math. 149 (1999), 183–217], thereby verifying weak cosmic censorship within a rough functional framework.

In complete contrast, in this paper, we obtain the first nonlinear stability of these naked singularity spacetimes under general perturbations of the same regularity as the background. We rely on the linearized stability result established in the companion paper [J. Singh and W. Zheng, Nonlinear stability of continuously self-similar naked singularities for the Einstein-scalar field equations \Rmnum{2}: linearized stability]. Our result underscores the decisive role of the functional framework in formulating the Weak Cosmic Censorship conjecture. 
\end{abstract}
\section{Introduction}
\label{sec: whole intro}
The Weak Cosmic Censorship (WCC) conjecture in General Relativity asserts that, for \textit{generic\footnote{{The use of the term ‘generic’ is due to the well-known examples of naked singularities constructed by Christodoulou \cite{chris94}.}} regular} initial data, singularities arising in the maximal globally hyperbolic development (MGHD) of Einstein equations (coupled with a reasonable matter model) are hidden inside a black hole region \cite{penroseWCC,chris99B}, i.e., they are not naked; see Section~\ref{sec: infinite blue-shift} for the rigorous definition of naked singularities. 

However, examples of naked singularity spacetimes as solutions to the spherically symmetric Einstein-scalar field equations were rigorously constructed by Christodoulou \cite{chris94}. The Einstein-scalar field equations are a geometric system of PDEs describing the evolution of a Lorentzian spacetime $(\mathcal{M},g)$ and scalar field $\phi: \mathcal{M} \rightarrow \mathbb{R}$, taking the form of \begin{align}
    Ric_{\mu\nu} [g]&= \partial_{\mu}\phi\partial_{\nu}\phi,\label{eq: ESF 1}\\
    \Box_{g}\phi &= 0,\label{eq: ESF 2}
\end{align}
where $Ric$ is the Ricci curvature of the spacetime $(\mathcal{M},g)$.

One important feature of the family of naked singularity spacetimes $(\mathcal{M}_{k},g_{k},\phi_{k})$ constructed in \cite{chris94} is the so-called $k$-self-similarity (see Definition~\ref{intro def: k self similarity}). Later in the seminal work \cite{chris99}, Christodoulou also proved their instability to black hole formation under a class of rough perturbations, hence naked singularities are non-generic within this class of rough solutions. We now summarize the results in \cite{chris94,chris99} below.
\begin{theorem}[\cite{chris94,chris99}]
\label{thm: rough version of chris construction and instability}
There exists a family of naked singularity spacetimes $(\mathcal{M}_{k},g_{k},\phi_{k})$, parametrized by $k$ with $k^{2}\in(0,\frac{1}{3})$, solving the spherically symmetric Einstein-scalar field equations~\eqref{eq: ESF 1}--\eqref{eq: ESF 2}. These spacetimes $(\mathcal{M}_{k},g_{k},\phi_{k})$ exhibit $k$-self-similarity, and the initial data has only finite Hölder regularity $C^{1,k^{2}/(1-k^{2})}$. 

Moreover, for generic scalar field perturbations, whose derivatives belong to the function class of bounded variation (BV), the maximal globally hyperbolic development of the perturbed initial data contains a black hole region.
\end{theorem}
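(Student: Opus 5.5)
This is a summary of Christodoulou's two Annals papers, so the plan is to reconstruct their arguments in outline rather than give a new proof. I would work throughout in Bondi coordinates $(u,r)$ on the spherically symmetric quotient. The metric is $g=-e^{2\nu}du^2-2e^{\nu+\lambda}du\,dr+r^2\gamma_{S^2}$, and I write $\theta=\partial_r(r\phi)$ for the renormalized field. The reduced system is then a transport equation for $\theta$ along incoming null rays, coupled to constraints for the Hawking mass $m$ and for $\nu,\lambda$.

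\textbf{Step 1: existence.} Following \cite{chris94}, I impose the ansatz of Definition~\ref{intro def: k self similarity}: there is a homothetic vector field $S$ with $\mathcal{L}_S g=2g$ and $S\phi=-k$. Equivalently, $\phi=-k\log(-u)+f(r/(-u))$ and $\nu,\lambda$ depend only on $s=r/(-u)$.

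With this ansatz the field equations reduce to a nonautonomous ODE system in the similarity variable. I would then do a phase-plane analysis of that system.
\begin{enumerate}
\item There is a regular singular point at the center $r=0$. There is a second singular point at the past self-similar horizon, where the homothetic field becomes null.
\item I need a solution that is regular at the center and crosses the past horizon. Its continuation across the horizon is only finitely smooth, with Hölder exponent $k^2/(1-k^2)$. This exponent comes from the eigenvalue ratio at the horizon singular point, which produces a term $|s-s_*|^{1+k^2/(1-k^2)}$.
\item The solution is then continued to the future self-similar cone, and beyond it down to the singular point $u=0$, $r=0$.
\end{enumerate}
Two checks complete the step. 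First, the condition $k^2<1/3$ keeps the exponent below $1/2$ and ensures that the future light cone of the singular point reaches null infinity, so that the singularity is naked. Second, I would truncate the data outside a compact region so that the spacetime has finite mass and is asymptotically flat, without affecting the domain of dependence near the singularity.

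\textbf{Step 2: instability.} Following \cite{chris99}, I would take BV data (that is, $\partial\phi\in BV$) and use Christodoulou's BV well-posedness theory from his 1993 work. The central tool is the trapped-surface criterion: if on an annulus $r_1<r<r_2$ with $\delta=(r_2-r_1)/r_1$ small the mass ratio satisfies
\[
\frac{2(m_2-m_1)}{r_2}>C\delta\log(1/\delta),
\]
then a trapped surface forms, and hence a black hole region forms. The proof then argues by contradiction.
\begin{enumerate}
\item Assume the development of the perturbed data has no trapped surface and has a singular point $\mathcal{O}$ whose future cone is complete. This is the naked-singularity configuration.
\item Consider a two-parameter family of perturbations $\phi+c_1f_1+c_2f_2$ supported near the past cone of $\mathcal{O}$.
\item Show that, except on a set of codimension at least one, the perturbation generates a blue-shifted jump along the past cone. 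This jump is amplified by the infinite blue-shift near $\mathcal{O}$, which is quantified via the behaviour of $\partial_r\phi$ along incoming rays.
\item The amplified jump makes the mass ratio exceed the criterion's threshold on dyadic annuli approaching $\mathcal{O}$, so a trapped surface forms, contradicting the assumption.
\end{enumerate}
Genericity then means that the exceptional set has positive codimension in the BV data space.

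\textbf{Main obstacle.} The blue-shift amplification in Step 2 is the hardest part. One must show that the perturbation's contribution to $2m/r$ near $\mathcal{O}$ dominates the $\delta\log(1/\delta)$ threshold. This requires a sharp lower bound on the integrated redshift factor along the past cone, which I would obtain from a scale-invariant integrated estimate for $\nu-\lambda$. In Step 1, the delicate point is the existence of the connecting orbit through the horizon singular point, which I would handle by a shooting argument in $k$.
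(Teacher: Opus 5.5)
Your outline matches the paper, which does not prove this theorem itself but cites \cite{chris94,chris99} and summarizes the same structure: a $k$-self-similar reduction analyzed by phase portrait, followed by BV well-posedness \cite{chris93}, the trapped-surface criterion of Theorem~\ref{thm: trapped surface}, and a contradiction argument driven by the infinite blue-shift along the singular horizon. Two points to tighten: the paper describes the reduced system as a $2\times 2$ \emph{autonomous} system (which is what a phase-portrait analysis requires), and the perturbations, as in Theorem~\ref{thm:intro:2}, must be supported in the exterior $\{v\geq 0\}$ so that, by domain of dependence, the interior, the singular point $\mathcal{O}$ and the infinite blue-shift on $\{v=0\}$ (immediate here, since $\mu\equiv k^{2}/(1+k^{2})$ there) are inherited from the background rather than assumed in your contradiction hypothesis.
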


\noindent We shall revisit the above theorem of Christodoulou in a more detailed manner in Section~\ref{intro:sec: previous instability result}. Notably, the $C^{1,k^{2}/(1-k^{2})}$-Hölder regularity of the initial data in Theorem~\ref{thm: rough version of chris construction and instability} should be understood in a localized sense. More precisely, let $\mathcal{O}$ denote the naked singularity, $C_{out}$ the initial hypersurface, and $\underline{C}_{o}^{-}$ the past light cone emanating from the naked singularity (see Figure~\ref{fig:penrose}). Then the initial data are in fact \emph{smooth} on $C_{out}\backslash \underline{C}_{o}^{-}$, and possess Hölder regularity $C^{1,k^{2}/(1-k^{2})}$ only at the intersecting sphere $C_{out}\cap \underline{C}_{o}^{-}$ \footnote{Since there is no conical geometric notion of characterizing Hölder continuity directly on a manifold, we leave the precise definition of the corresponding function class of the initial data until~\eqref{eq: intro: precise definition of the regularity}, after we introduce the gauge choice.}. We refer to function spaces exhibiting this localized Hölder $C^{1,\alpha}$ regularity as $\mathcal{C}_{loc}^{1,\alpha}$ with $\alpha\in(0,1)$; see Definition~\ref{def: definition of the localized holder space} and Definition~\ref{def: localized Holder regularity general} for the precise definitions\footnote{We use slightly different notation here from that in Definition~\ref{def: definition of the localized holder space} and Definition~\ref{def: localized Holder regularity general}. The reader should think of $\mathcal{C}_{loc}^{1,\alpha}$ corresponding to the space $\mathcal{C}_{N}^{1+\alpha,\delta}$ in Definition~\ref{def: definition of the localized holder space} and $\bar{\mathcal{C}}_{N}^{1+\alpha,\delta}$ in Definition~\ref{def: localized Holder regularity general}, with any $N\geq 2$ and $\delta\in(0,1)$.} and the corresponding topology. In particular, the initial data of Christodoulou's naked singularity spacetime belong to $\mathcal{C}_{loc}^{1,k^{2}/(1-k^{2})}$.

While the Weak Cosmic Censorship conjecture concerns the non-genericity of \textit{all} possible naked singularities arising from initial data in some suitable regular function space, its validation in the specific context of the $k$-self-similar naked singularity spacetimes arising from initial data with BV derivatives in view of Theorem~\ref{thm: rough version of chris construction and instability} relies on two pillars:\begin{itemize}
    \item The local well-posedness for the spherically symmetric Einstein-scalar field equations for rough initial data with BV derivatives established in \cite{chris93}, which enables the study of the genericity of naked singularity spacetimes constructed in \cite{chris94} arising from initial data in this function class.
    \item The instability of these naked singularities in \cite{chris99} under perturbations whose derivatives are in the BV class.
\end{itemize}
 
However, a subtle regularity discrepancy exists in this framework. The derivatives of the initial perturbations considered in \cite{chris99} for the instability are in the BV class, which is rougher than the background $k$-self-similar initial data with $\mathcal{C}_{loc}^{1,k^{2}/(1-k^{2})}$ regularity. This motivates a refined formulation of the Weak Cosmic Censorship conjecture for spherically symmetric Einstein-scalar field equations,
depending on the underlying functional framework.
\begin{conjecture}[Weak Cosmic Censorship]
    For generic spherically symmetric initial data in the regularity class $\mathcal{B}$, singularities arising in the MGHD of the Einstein-scalar field equations are not naked.
\end{conjecture}
Theorem~\ref{thm: rough version of chris construction and instability} validates the above conjecture for regularity class $\mathcal{B}$ to be functions with BV derivatives. However, it is natural to ask: \textit{does the instability in Theorem~\ref{thm: rough version of chris construction and instability} persist, or are these naked singularities actually generic, if the regularity class $\mathcal{B}$ is the localized Hölder function space $\mathcal{C}_{loc}^{1,k^{2}/(1-k^{2})}$?}

In contrast to the instability result in Theorem~\ref{thm: rough version of chris construction and instability}, we show that stability holds for initial data in a small open neighborhood of the data generating these naked singularity solutions in Theorem~\ref{thm: rough version of chris construction and instability} in the topology of $\mathcal{C}_{loc}^{1,k^{2}/(1-k^{2})}$, \textbf{thereby falsifying the $\mathcal{C}_{loc}^{1,k^{2}/(1-k^{2})}$-version of the Weak Cosmic Censorship conjecture {in spherical symmetry}.}

\begin{theorem}[Informal Statement of Theorem~\ref{thm: true main theorem}]
    \label{thm:intro:1}
The $k$-self-similar naked singularity spacetimes $(\mathcal{M}_{k},g_{k},\phi_{k})$ constructed in Theorem~\ref{thm: rough version of chris construction and instability} with $0<k^{2}\ll1$ are nonlinearly stable as solutions to the spherically symmetric Einstein-scalar field equations for the scalar field initial data lying in a small open neighborhood of $\phi_{k}|_{C_{out}}$ in the topology of the localized Hölder function class $\mathcal{C}_{loc}^{1,k^{2}/(1-k^{2})}$.
\end{theorem}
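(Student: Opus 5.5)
The plan is to pass to self-similar coordinates adapted to Christodoulou's solution and treat the perturbation as a nonlinear stability problem for a stationary solution of an autonomous first-order hyperbolic system. I will work in double-null (Bondi-type) gauge: $u$ is retarded time normalized on the axis, $v$ is the advanced coordinate along $C_{out}$, and the unknowns are $r$, the Hawking mass or $\nu=\partial_u r$, $\lambda=\partial_v r$, and $\theta=r\partial_v\phi$, $\zeta=r\partial_u\phi$. Then I introduce the scaling variables $s=-\log(-u)$ and $z=-v/u$, or an equivalent choice adapted to $k$-self-similarity. In these variables Christodoulou's solution is (up to the $k$-twist $\phi\mapsto\phi-k\log(-u)$) independent of $s$. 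The background is smooth away from the past cone $\underline{C}_o^-=\{z=0\}$ and has $C^{1,k^2/(1-k^2)}$ behaviour there. The localized Hölder space $\mathcal{C}^{1+\alpha,\delta}_N$ of Definition~\ref{def: definition of the localized holder space} is designed to encode exactly this behaviour: an expansion in $|z|^{\alpha}$ plus smooth weighted remainders up to order $N$. I will therefore take the perturbation $\psi=(\phi-\phi_k,\ r-r_k,\dots)$ in that space and write the system as $\partial_s\psi=L_k\psi+\mathcal{N}(\psi)$.

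The key steps, in order, are as follows.

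\textbf{Step (i): modulation.} The perturbed solution will have its own singular point, so I allow the singular point and the parameter $k$ to move. I introduce modulation parameters $(u_*(s),k(s))$ and, if needed, the shift of the singular sphere. I choose them so that $\psi$ stays transverse to the finitely many neutral and unstable modes of $L_k$ coming from symmetries: time translation, scaling, and the $k$-family itself.

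\textbf{Step (ii): linear decay.} I invoke the linearized stability result of the companion paper (Singh--Zheng, part II). It should give a semigroup bound $\|e^{sL_k}P\psi\|_{\mathcal{C}^{1+\alpha,\delta}_N}\lesssim e^{-\epsilon s}\|\psi\|$ on the complement of the symmetry modes, for $0<k^2\ll1$.

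\textbf{Step (iii): nonlinear estimates.} I will show that $\mathcal{N}$ is quadratic in the localized Hölder norm. This relies on product estimates in $\mathcal{C}^{1+\alpha,\delta}_N$ and on the fact that the null structure of spherical ESF keeps $r$, $\lambda$ and $\nu$ bounded away from degeneration near the background.

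\textbf{Step (iv): bootstrap.} I close a bootstrap for $\|\psi(s)\|\le C\varepsilon e^{-\epsilon s/2}$ using Duhamel's formula. The modulation ODEs are integrable, so $k(s)\to k_\infty$ and the singular point converges. For the unstable direction, which is the genuine translation of the singularity, I use either a Brouwer/shooting argument or the modulation directly, which avoids any codimension.

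\textbf{Step (v): conclusion.} Converting back, I obtain that the perturbed MGHD contains a singular point $\mathcal{O}'$ and that the solution converges to the $k_\infty$-self-similar solution. I then show that $\mathcal{O}'$ is naked, meaning its future is non-empty and null infinity is complete, by continuing the solution outside $\underline{C}_{o'}^-$. This uses exterior stability of the smooth region: Christodoulou's exterior extension $z>0$ is a well-posed region, and a standard small-data continuation there transports the smallness. Finally I check that the trapped surface condition $\lambda>0$ holds throughout.

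I expect the main obstacle to be Step (iii) combined with the low regularity at $z=0$. The nonlinear terms include products of $\partial\phi$, which are only $C^{\alpha}$ at the cone. The linear decay from part II is presumably proved in a norm that loses derivatives or weights near $z=0$ and near the axis. My first attempt will be to commute with the scaling vector field $S=\partial_s$ and with $z\partial_z$ rather than $\partial_z$, so that the $|z|^\alpha$ expansion is preserved. I will also propagate the leading coefficient of $|z|^{\alpha(s)}$ as an additional modulated quantity. With these choices, derivative loss is controlled by the transport structure along $\partial_v$. Near the axis I will use the regularity of $\theta/r$ and the standard averaging operators.
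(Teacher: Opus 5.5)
\textbf{The modulation set-up targets modes that are not in the space.} In Steps (i) and (iv) you treat translation of the singular point as a genuine unstable direction and the $k$-family as a neutral one, to be removed by modulating $(u_*(s),k(s))$ or by shooting. In the class $\mathcal{C}^{p_k,\delta}_N$ this picture is wrong.

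The data are $C^N$ away from $v=0$ and have their only $C^{1,p_kk^2}$ irregularity at $v=0$. Their leading part there is $c_0\mathring{\phi}\sim c_0|z|^{p_k}$ (Lemma~\ref{lemma: decomposition of the function spaces}), with exponent tied to $k$. Because this irregularity propagates along $\{v=0\}$, the singular point stays at $(0,0)$ and $k$ does not change. The paper accordingly obtains convergence to $\bigl((1+c_0)r_k,\ \phi_k+c_\infty,\ (1+c_0)m_k\bigr)$ with the same $k$.

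The would-be modes are not even elements of the space:
\begin{itemize}
\item an infinitesimal translation involves $\partial_v$ of the background, which is one derivative too rough at $z=0$;
\item the $k$-derivative carries $|z|^{p_k}\log|z|$ terms, and $z=v/(-u)^{q_k}$, the axis gauge and the weights $|z|^{j-p_k}$ all depend on $k$.
\end{itemize}
Their exclusion is exactly the high-regularity stabilizing effect. This is why Theorem~\ref{thm: linearized result} gives decay for \emph{all} above-threshold data with no projection, up to the single constant $c_\infty$.

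Modulating $u_*\neq0$ or $k(s)\neq k$ would put the perturbation's irregularity on a cone other than the singular horizon, or give it a mismatched exponent. The perturbation would then leave every space where the linear theory holds. A shooting argument on the data would give at best a codimension-one set, not the open neighbourhood claimed.

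\textbf{What actually closes the argument.} The relevant neutral directions are the two exact kernels:
\begin{itemize}
\item $(0,cr_k,0)$, from $\phi\mapsto\phi+c$, handled by shifting the data of the fixed-point map by $\widetilde{c}_\infty r$;
\item $(r_k,\Psi_k,m_k)$, from $(r,\Psi,m)\mapsto a(r,\Psi,m)$, i.e.\ $g\mapsto a^2g$.
\end{itemize}
The second needs no modulation: $c_0$ is read off the data. The exact identity \eqref{eq: scaling symmetry of the equations} shows that subtracting $c_0(r_k,\Psi_k,m_k)$ leaves the same $\mathcal{P}$ and a nonlinearity of the same form around $(1+c_0)(r_k,\Psi_k,m_k)$. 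The data then become $(-c_0r_k,\,r_k\bar{\phi}_p)$, strictly above threshold.

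Without this identity, the threshold component makes $\mathcal{N}$ decay only like $e^{-s}$, the rate of the linear flow itself. So ``$\mathcal{N}$ is quadratic in the localized Hölder norm'' cannot close a Duhamel bootstrap. There is also no $\alpha(s)$: the exponent is fixed at $p_k$.

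Moreover, Theorem~\ref{thm: linearized result} is not a semigroup bound in $\mathcal{C}^{\alpha}_N$. It yields only first-order (second-order near the axis) pointwise decay from $\mathcal{C}_N$ data. The paper therefore closes the fixed point in the weighted Sobolev space $\mathcal{S}^{a,N,\tau}$. It uses Duhamel for the low-order bounds, and $\partial_z^n$ energy estimates with weights $(-z)^{2n-4+2\gamma}$, $\gamma=\frac32-\alpha+\tau$, for the rest. Their bulk coefficient $(\frac32-\gamma)q_k-(1+a)=(\alpha-\tau)q_k-(1+a)$ is positive for small $a,\tau$ precisely because $\alpha>p_k$.

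Finally, the exterior is not a standard small-data continuation. It needs Singh's threshold exterior stability, Theorem~\ref{thm: nonlinear stability in the exterior region}. Also, nakedness means future null infinity is \emph{incomplete} (Definition~\ref{intro:def:1}), not complete as you wrote.
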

\noindent For a detailed statement of the result and outline of the proof, see Section~\ref{sec: main result and proof outline}. 
\paragraph{Discussion on the formulation of the Weak Cosmic Censorship conjecture}
Theorem~\ref{thm:intro:1}, together with Christodoulou's seminal result on the instability of general naked singularities in the BV class~\cite{chris99} (see Section~\ref{sec: infinite blue-shift}), suggests that the precise formulation of the Weak Cosmic Censorship conjecture remains open to debate, and any successful resolution of the conjecture must identify an appropriate underlying functional framework.
\begin{itemize}
    \item On the one hand, Theorem~\ref{thm:intro:1} disproves the spherically symmetric Weak Cosmic Censorship conjecture in the function class $\mathcal{C}_{loc}^{1,k^{2}/(1-k^{2})}$, by showing the nonlinear stability of the $k$-self-similar naked singularity spacetime $(\mathcal{M}_{k},g_{k},\phi_{k})$ for scalar field perturbations in this class with $0<k^{2}\ll1$.

    \item On the other hand, Christodoulou~\cite{chris93} established a BV well-posedness theory for the spherically symmetric Einstein--scalar field equations and proved that, within this BV framework, any potential naked singularity is unstable under BV perturbations, leading to trapped surface formation~\cite{chris99}; see Section~\ref{sec: instability literature review} for details. Thus, the BV class provides a natural setting for Christodoulou's formulation of the instability mechanism. Since the initial data of the scalar field in the $k$-self-similar naked singularity spacetime also belongs to this BV class, Christodoulou's BV instability result particularly applies to the $k$-self-similar naked singularity spacetime.

    \item However, for the only known examples of $k$-self-similar naked singularities for the Einstein-scalar field equations, this BV framework appears slightly unnatural, as the initial data of the background spacetime actually have higher regularity and the relevant perturbations leading to the trapped surfaces considered by Christodoulou in~\cite{chris99} thus have lower regularity than the background. From this perspective, it might be more natural to consider initial data in $\mathcal{C}_{loc}^{1,k^{2}/(1-k^{2})}$ for the $k$-self-similar naked singularities, a regularity class under which the local well-posedness of the Einstein-scalar field equations also holds (see Remark~\ref{rmk: localwellposedness in localized holder}). \item  A third formulation of the Weak Cosmic Censorship conjecture in spherical symmetry can yet be considered in a regularity class intermediate between the BV class of Christodoulou~\cite{chris99} and the $\mathcal{C}_{loc}^{1,k^{2}/(1-k^{2})}$ considered in this paper. Indeed, the derivative of the initial data of the $k$-self-similar naked singularity spacetime $\phi_{k}|_{C_{out}}$ also belongs to a class of functions whose tangential derivatives lie in $C^{0,k^{2}/(1-k^{2})}\cap BV$, noting that \begin{equation*}
\mathcal{C}_{loc}^{0,k^{2}/(1-k^{2})}\subset C^{0,k^{2}/(1-k^{2})}\cap BV\subset BV,
    \end{equation*}
 a Hölder class that allows non-smooth behavior away from $C_{out}\cap \underline{C}_{o}^{-}$ (as opposed to the localized Hölder class $\mathcal{C}_{loc}^{1,k^{2}/(1-k^{2})}$). The validity of the Weak Cosmic Censorship conjecture for~\eqref{eq: ESF 1}--\eqref{eq: ESF 2} in spherical symmetry for initial data with derivatives in $C^{0,k^{2}/(1-k^{2})}\cap BV$ remains open at this time.
\end{itemize}

\paragraph{Exterior perturbations and interior perturbations}
In the Penrose diagram (see Figure~\ref{fig:penrose}) of the naked singularity spacetime $(\mathcal{M}_{k},g_{k},\phi_{k})$, we denote by $\mathcal{O}$ the point corresponding to the singularity. We call the past light cone emanating from the singular point $\mathcal{O}$ the \textit{singular horizon}. We refer to the future of the singular horizon as the \textit{exterior region} of the naked singularity spacetime, and to the past of the singular horizon as the \textit{interior region}. We call perturbations on the initial hypersurface supported in the exterior region \emph{exterior perturbations}. By the finite speed of propagation, such perturbations do not affect the geometry of the spacetime in the interior region. Perturbations that are nontrivial in the interior region on the initial hypersurface are referred to as \emph{interior perturbations}. Note that interior perturbations are not necessarily supported entirely within the interior region; the terminology is intended to emphasize the central role of the interior region in the analysis. Indeed, the main ingredient in the proof of Theorem~\ref{thm:intro:1} is a careful study of solutions in the interior region under interior perturbations.
\begin{figure}[htbp]
\centering
\input{figures/characteristic_naked_singularity}
\caption{Penrose diagram for naked singularity spacetimes}
\label{fig:penrose}
\end{figure}

\paragraph{Previous nonlinear (in)stability results and their mechanism}
The original work of Christodoulou \cite{chris99} considered the instability of the $k$-self-similar naked singularities under exterior perturbations of the scalar field with BV derivatives. We now briefly review some later developments on the instability and the partial stability for these spacetimes; see Section~\ref{sec: instability literature review} for a more detailed review. \begin{itemize}
    \item(Progress on instability) Various types of low-regularity perturbations were studied and analogous instability results were established in \cite{an_highcodim,liuli,liuli_outsidesymm,li2025interior}, for perturbations of localized Hölder regularity $\mathcal{C}_{loc}^{1,\alpha}$ with $\alpha$ strictly below $k^{2}/(1-k^{2})$, which represents the regularity of the background; see Section~\ref{sec: instability literature review}. The underlying mechanism driving all of these instabilities is the so-called \textit{blue-shift instability}.
    \item(Partial progress on stability) On the other hand, the works of Singh \cite{singh1,singh2} proved that under \textit{exterior} spherical perturbations with localized Hölder regularity $\mathcal{C}_{loc}^{\alpha}$ for $\alpha\geq k^{2}/(1-k^{2})$, the $k$-self-similar naked singularities are in fact stable, showing that the blue-shift instability mechanism exploited in the previous instability results is sensitive to the regularity of the initial perturbations and is not triggered for \textit{exterior perturbations} of localized Hölder regularity $\mathcal{C}_{loc}^{1,\alpha}$ with $\alpha\geq k^{2}/(1-k^{2})$. This phenomenon is referred to as a \textit{high-regularity stabilizing effect}.
\end{itemize}
Both the blue-shift instability mechanism and the high-regularity stabilizing effect can already be understood at the linear level; see Section 1.3 of the companion paper \cite{zhenglinear} for a detailed discussion. However, high-regularity exterior perturbations alone are insufficient to reveal the stability properties of these naked singularities. To fully understand whether stability holds under perturbations of the localized Hölder regularity $\mathcal{C}_{loc}^{1,k^{2}/(1-k^{2})}$, it is necessary to consider general perturbations that are non-trivial in the interior region, which is the main purpose of Theorem~\ref{thm:intro:1}.
\paragraph{The notion of threshold regularity}
The stability result in this paper, together with previous instability results \cite{chris99,an_highcodim,liuli,liuli_outsidesymm,li2025interior}, highlights the crucial role played by the localized Hölder regularity $\mathcal{C}_{loc}^{1,k^{2}/(1-k^{2})}$ of the background $k$-self-similar naked singularity spacetime $(\mathcal{M}_{k},g_{k},\phi_{k})$, which we refer to as the threshold regularity. Now we can summarize them as follows: \begin{itemize}
    \item For \textit{generic} perturbations strictly below the threshold, the $k$-self-similar naked singularity spacetimes are unstable to black hole formation.
    \item For perturbations at or strictly above the threshold, the $k$-self-similar naked singularity spacetimes are stable.
\end{itemize}
\paragraph{Previous works on the interior of the naked singularity spacetime and the linearized stability}
The study of the (in)stability of the interior of the $k$-self-similar naked singularity spacetimes was initiated by Singh \cite{singh2}, where the linear wave equation $\Box_{g_{k}}\phi = 0$ on this background was analyzed. {For the linear wave equation on the $k$-self-similar naked singularity spacetime}, \cite{singh2} established stability for perturbations above the threshold and instability for perturbations below the threshold. This instability result was subsequently extended to the nonlinear Einstein-scalar field equations by Li \cite{li2025interior} for perturbations below the threshold, exploiting the blue-shift instability mechanism. However, the stability aspect of the linear wave equation alone is \textit{insufficient} to deduce an expectation for nonlinear stability under perturbations at or above the threshold, as it does not capture the influence of the background scalar field or the coupling between the geometry and the scalar field. Indeed, the analysis of the linearized Einstein-scalar field operator in the companion paper \cite{zhenglinear} reveals several additional instability mechanisms and uncovers new structures of the $k$-self-similar spacetimes that eliminate those mechanisms; see Section 1.3 and Section 3.7 of the companion paper~\cite{zhenglinear} for further discussion. The nonlinear stability result in this paper crucially relies on the linearized stability result in the companion paper \cite{zhenglinear}.

\paragraph{Methodology and the role of $0<k^{2}\ll1$}
The nonlinear argument in this paper heavily uses the linearized stability result in the companion paper~\cite{zhenglinear} (see also Theorem~\ref{thm: linearized result}), where $0<k^{2}\ll1$ plays an important role in the analysis of the corresponding unbounded non-self-adjoint linearized Einstein-scalar field operator. Although Christodoulou's original construction of $k$-self-similar naked singularities \cite{chris94} applies for all $0<k^{2}<1/3$, the restriction of $0<k^{2}\ll1$ in our main result comes from the linear analysis. Assuming the validity of the linearized result in~\cite{zhenglinear}, the nonlinear stability will hold for a larger range of $k$. We do not attempt to optimize the admissible range of $k$ in the present paper. It would be an interesting open problem to generalize the current nonlinear stability results to all $0<k^{2}<1/3$.

\paragraph{Relation to the critical collapse}
For asymptotically flat spherically symmetric solutions to the Einstein-scalar field equations, it is well known that the maximal globally hyperbolic development of small initial data will disperse to the Minkowski spacetime \cite{chris86}, while some large initial data will develop a black hole region \cite{chris91}. Numerical studies \cite{choptuik1,gund_understandingcritcollapse,gund_choptuikoutsidesymm} of the transition between the small-data dispersion and the large-data trapped surface formation in gravitational collapse have identified the existence of a smooth naked singularity spacetime, a phenomenon called the critical collapse; see also the review paper \cite{gund_review}. From a mathematical perspective, such a phenomenon can be formulated as follows. Fix a gauge choice on the initial hypersurface $C_{out}$ and consider the MGHD of a family of \textit{smooth} scalar field initial data $\lambda\phi_{0}|_{C_{out}}$. Then there should exist a critical value $\lambda = \lambda_{p}$, below which the corresponding solution will disperse to Minkowski, while above which the solution will form a black hole region. The critical value $\lambda =\lambda_{p}$ will correspond to a naked singularity spacetime; see Figure~\ref{fig:lambda-structure}. Although a rigorous mathematical construction of such a smooth naked singularity spacetime arising from critical collapse remains an open problem, it is natural to ask whether the known example constructed by Christodoulou \cite{chris94} can fit in this context. However, Theorem~\ref{thm:intro:1} proves that the $k$-self-similar naked singularity $(\mathcal{M}_{k},g_{k},\phi_{k})$ with $0<k\ll1$ is \textit{stable} under perturbations $\epsilon\phi_{k}|_{C_{out}}$ with $\vert\epsilon\vert\ll1$, showing that $(\mathcal{M}_{k},g_{k},\phi_{k})$ is \textit{not} critical with respect to the scaling of the initial data {$\lambda\phi_{k}$}; see Figure~\ref{fig:lambda-structure 2}. {More generally, for any spacetime solutions arising from a continuous one-parameter family of initial data $\phi_{\lambda}\in\mathcal{C}_{loc}^{1,k^{2}/(1-k^{2})}$ containing $\phi_{\lambda = 1} = \phi_{k}$, Theorem~\ref{thm:intro:1} showcases that $(\mathcal{M}_{k},g_{k},\phi_{k})$ is not critical. More precisely, there exists $\epsilon$ sufficiently small, such that for $\lambda\in(1-\epsilon,1+\epsilon)$, the corresponding spacetime arising from the scalar field initial data $\phi_{\lambda}$ has a naked singularity. }\begin{figure}[htbp]
\centering
\begin{tikzpicture}[>=Stealth]

\draw[->] (-0.3, 0) -- (9.5, 0) node[right] {$\lambda$};

\draw (0, 0.08) -- (0, -0.08) node[below] {$\lambda = 0$};
\draw (4.5, 0.08) -- (4.5, -0.08) node[below] {$\lambda_p$};
\draw (9.5, 0.08) -- (9.5, -0.08) node[below] {$\lambda \to \infty$};

\draw[decorate, decoration={calligraphic brace, amplitude=6pt, raise=4pt}, thick]
  (0, 0) -- (4.5, 0);
\node[above=14pt] at (2.25, 0) {disperses to Minkowski};

\draw[decorate, decoration={calligraphic brace, amplitude=6pt, raise=4pt}, thick]
  (4.5, 0) -- (9.5, 0);
\node[above=14pt] at (6.5, 0) {black hole formation};

\draw[->, thick] (4.5, 2.2) -- (4.5, 0.6);
\node[above] at (4.5, 2.2) {smooth naked singularity formation};

\end{tikzpicture}
\caption{The numerical expectation for initial data of the form $\lambda \phi_{0}$ for some smooth $\phi_{0}$.}
\label{fig:lambda-structure}
\end{figure}
\begin{figure}[htbp]
\centering
\begin{tikzpicture}[>=Stealth]

\draw[->] (-0.3, 0) -- (9.5, 0) node[right] {$\lambda$};

\draw (0, 0.08) -- (0, -0.08) node[below] {$\lambda = 0$};
\draw (3, 0.08) -- (3, -0.08) node[below] {$\lambda = 1-\epsilon$};
\draw (6, 0.08) -- (6, -0.08) node[below] {$\lambda = 1+\epsilon$};
\draw (4.5, 0.08) -- (4.5, -0.08) node[below] {$\lambda = 1$};
\draw (9.5, 0.08) -- (9.5, -0.08) node[below] {$\lambda \to \infty$};
\draw (1, 0.08) -- (1, -0.08) node[below] {$\lambda = \lambda_{*}$};

\draw[decorate, decoration={calligraphic brace, amplitude=6pt, raise=4pt}, thick]
  (0, 0) -- (1, 0);
\node[above=9pt] at (0.5, 0) {disperses to Minkowski};


\draw[decorate, decoration={calligraphic brace, amplitude=6pt, raise=10pt}, thick]
  (3, 0) -- (6, 0);
\node[above=9pt] at (4.5, 0.5) {naked singularity};

\end{tikzpicture}
\caption{The rigorous result for initial data of the form $\lambda \phi_{k}\in \mathcal{C}_{loc}^{1,k^{2}/(1-k^{2})}$.}
\label{fig:lambda-structure 2}
\end{figure}

\paragraph{The black hole formation for large smooth perturbations}
{Although Theorem~\ref{thm:intro:1} particularly implies that the $k$-self-similar naked singularity spacetimes are stable under small smooth perturbations, a trapped surface is expected to arise when the size of the initial smooth perturbation is sufficiently large in the $\mathcal{C}_{loc}^{1,k^{2}/(1-k^{2})}$ norm. In particular, in Appendix~\ref{appdx: large smooth perturbation}, we will prove the following theorem, establishing the first black hole formation result under smooth perturbations.}
\begin{theorem}
\label{thm: smooth perturbation instability}
For $0<k^{2}<\frac{1}{3}$, there exists a smooth spherically symmetric function $f\in C^{\infty}(C_{out},\mathbb{R})$ supported on the exterior region of the initial hypersurface $C_{out}$, such that the MGHD of the initial data $\phi_{k}|_{C_{out}}+f$ for the Einstein-scalar field equations contains a trapped surface. 
\end{theorem}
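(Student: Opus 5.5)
We will prove Theorem~\ref{thm: smooth perturbation instability} by adding a large, smooth, compactly supported bump of scalar field energy in the exterior region. Its role is to exploit the existing background geometry near the singular horizon $\underline{C}_{o}^{-}$, not to compete with the self-similar scaling. We work in Bondi/double-null coordinates $(u,v)$ adapted to $C_{out}=\{u=u_0\}$, with area radius $r$ and Hawking mass $m$. The interior region, bounded by $\underline{C}_{o}^{-}=\{v=v_*\}$, is left untouched, so by domain of dependence the background solution persists there up to $\mathcal{O}$. We choose $f$ supported in a short interval $v\in[v_1,v_2]$ with $v_*<v_1<v_2$ close to $v_*$. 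Near $\underline{C}_{o}^{-}$ the background has $r\sim$ distance to the vertex along the cone, and $2m/r$ is bounded below by a positive self-similar constant. In particular, the incoming cone $\underline{C}_{o}^{-}$ reaches $r=0$ only at $\mathcal{O}$.

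The key tool is Christodoulou's trapped surface formation criterion from \cite{chris91}, in the form used in \cite{chris99}. Consider an outgoing null segment from $(u,v_1)$ to $(u,v_2)$ with $r_1=r(u,v_1)$, $r_2=r(u,v_2)$, and $\delta=r_2/r_1-1$ small. If the mass difference satisfies
\[
\frac{2(m_2-m_1)}{r_2}>C\,\delta\log(1/\delta),
\]
then a trapped sphere forms in the future, in the region $\{u'\in[u,u_{\mathcal{O}}]\}$ along $v=v_2$, before $r$ vanishes. The plan is as follows.

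First, we control the characteristic initial value problem. Prescribing $\partial_v(r\phi)$ on $C_{out}$, the constraint $\partial_v m=\frac{(1-\mu)r^2}{2\partial_v r}(\partial_v\phi)^2$ lets us make the mass increment across $[v_1,v_2]$ as large as we like, relative to $r_2$, while keeping $\partial_v r>0$ on $C_{out}$. A smooth, large-amplitude $f$ (amplitude $A\gg1$, fixed support) does this. We must also make sure no trapped sphere is created already on $C_{out}$; we only need the MGHD to contain one somewhere, and if one appears on the data we are already done.

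Second, we propagate the mass increment. By monotonicity of $m$ in $u$ inside the untrapped region ($\partial_u m\le0$ controlled) and the Raychaudhuri equations, the quantity $m(u,v_2)-m(u,v_1)$ stays bounded below, roughly by its initial size minus a flux bounded by the background. Meanwhile $r(u,v_1)$ follows the background and tends to $0$ as $u\to u_{\mathcal{O}}$. Hence at some $u$ close to the singular cone the criterion ratio becomes large: $r_1(u)\to 0$, while $m_2-m_1$ stays comparable to a fixed positive quantity and $r_2(u)$ stays bounded below by roughly $2(m_2-m_1)$. This forces $2m/r>1$ on some sphere, provided the solution exists up to there. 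If instead the solution breaks down earlier in the exterior, Christodoulou's extension criterion \cite{chris93} says breakdown away from the center is preceded by trapping, and we again conclude.

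The main obstacle is the second step. The exterior perturbation modifies the solution in the strip $v\in[v_1,v_2]$ all the way to $u_{\mathcal{O}}$, and we must keep uniform lower bounds on $m_2-m_1$ and upper bounds on $r_1$ without a smallness assumption. We would handle this with the monotonicity of $\kappa=\partial_v r/(1-\mu)$ and $\partial_u r<0$, as in \cite{chris99}. The argument is a contradiction: assume no trapped surface forms in the region. Then $1-\mu>0$ throughout, and these monotonicity identities give $\partial_u m\le 0$ and $r_1(u)\le r_{bg}(u,v_1)\to0$ as $u\to u_{\mathcal{O}}$. We would then choose $v_1$ so close to $v_*$, and the amplitude $A$ so large, that $m(u_0,v_2)-m(u_0,v_1)$ exceeds the maximal possible mass loss. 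This contradicts $2m/r<1$ at small $r_1$ and yields the trapped surface.
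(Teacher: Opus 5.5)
Your second step is where the trapped surface is supposed to come from, and it does not work. Since $f$ vanishes for $v\le v_1$, the solution in $\{v\le v_1\}$ is exactly the background, so $r_1(u)=r_k(u,v_1)$. For $v_1>v_*$ the cone $\{v=v_1\}$ never reaches $\mathcal{O}$. Instead, $r_k(u,v_1)$ decreases to the \emph{positive} radius of the sphere where $\{v=v_1\}$ meets the future light cone of $\mathcal{O}$; only $v_1=v_*$ gives $r_1\to0$.

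Even granting $r_1\to0$ with $r_2$ bounded below, you get $\delta=r_2/r_1-1\to\infty$. That violates the hypothesis $\delta\le c_0$ of Theorem~\ref{thm: trapped surface}, which needs a \emph{thin} shell carrying comparatively large mass. There is also no contradiction with $2m/r<1$: in an untrapped region $r_2>2m_2\ge 2(m_2-m_1)$ holds automatically, so ``$m_2-m_1\gtrsim1$, $r_2\gtrsim1$'' is consistent. Finally, the flux $\int|\partial_u m|\,du$ through $\{v=v_2\}$ belongs to the large perturbed solution and is not bounded by the background, so ``the maximal possible mass loss'' is not controlled by anything you have.

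Making the criterion fire near $\mathcal{O}$ needs what the paper does, none of which is in your sketch:
\begin{itemize}
\item the perturbation starts exactly at the singular horizon ($\partial_v f=\epsilon\chi_\delta$, a mollified step);
\item the shells have width $v\lesssim\kappa(-u)^{1+k^2}$, so $\delta\to0$ as $u\to0$;
\item the blue-shift $\partial_v\phi\approx\partial_v\phi_k+\epsilon\chi_\delta/(-u)$ makes $\eta/\delta$ grow like $\epsilon^2(-u)^{-2k^2}$;
\item the bootstrap in $\widetilde{\mathcal{U}}_s$ controls the solution in that thin region.
\end{itemize}

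You do not actually need propagation. Your first step proves the theorem as stated if you apply Theorem~\ref{thm: trapped surface} on $C_{out}$ itself, which is an outgoing cone from the center. Your claim there is too strong as written: with $r=r_k$ fixed on $C_{out}$, the $v$-transport equation for $\mu$ keeps $\mu<1$, so $\eta=2(m_2-m_1)/r_2<1$ however large $A$ is. But the same equation, written for $1-\mu$, gives
\begin{equation*}
1-\mu_2\le(1-\mu_1)\exp\Bigl(-\int_{v_1}^{v_2}\frac{r}{\partial_v r}(\partial_v\phi)^2\,dv\Bigr)+\log(1+\delta),
\end{equation*}
hence $\eta\ge(1-\mu_1)\bigl(1-e^{-\int_{v_1}^{v_2}\frac{r}{\partial_v r}(\partial_v\phi)^2dv}\bigr)-\delta$. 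Because $r$ is gauge-fixed, $\delta$ depends only on $[v_1,v_2]$. Choose the interval so thin that $\delta\le c_0$ and $\delta+c_1\delta\log(1/\delta)<\tfrac12\bigl(1-\mu_k(u_0,v_1)\bigr)$. Then take $f=A\psi$ with $\psi$ a fixed bump in $(v_1,v_2)$ and $A$ large. Keeping $v_1$ close to $v_*$, as you do, leaves room for $r$ along $\{v=v_1\}$ to shrink by a large factor before $u_{\mathcal{O}}$.

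This is a legitimate and much shorter route to the statement, but the perturbation is large in every norm, BV included. The paper's blue-shift argument gives more: perturbations of BV size $\epsilon\ll1$ that are large only in $\mathcal{C}^{1,\alpha}_{loc}$, through the small mollification scale. That is what the discussion around Figure~\ref{fig:lambda-structure 3} relies on.
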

{In fact, the smooth perturbation in Theorem~\ref{thm: smooth perturbation instability} is obtained by mollifying the initial perturbations in Theorem~\ref{thm: rough version of chris construction and instability}. Let $\chi$ be one of the initial perturbations in Theorem~\ref{thm: rough version of chris construction and instability} and $\chi_{\delta,\epsilon} = \epsilon\rho_{\delta}*\chi$, where $\rho_{\delta}$ is a mollifier and $\epsilon$ is a small constant. Then, for initial perturbations $f = \chi_{\delta,\epsilon}$ with $\delta$ sufficiently small, we will show that a trapped surface forms. The smallness of $\delta$ particularly implies the largeness of the initial perturbation in Theorem~\ref{thm: smooth perturbation instability} in $\mathcal{C}_{loc}^{1,k^{2}/(1-k^{2})}$ norm, thus does not contradict Theorem~\ref{thm:intro:1}, which is only true for small initial perturbations. On the other hand, for $\delta$ close to $1$, the $\mathcal{C}_{loc}^{1,k^{2}/(1-k^{2})}$ norm of $\chi_{\delta}$ will be small due to the presence of $\epsilon$. Therefore, for this specific one-parameter family of initial perturbations $\chi_{\delta}$, Theorem~\ref{thm:intro:1} and Theorem~\ref{thm: smooth perturbation instability} show that the $k$-self-similar naked singularity spacetime with $0<k^{2}\ll1$ is not critical for black hole formation; see Figure~\ref{fig:lambda-structure 3}.}
\begin{figure}[htbp]
\centering
\begin{tikzpicture}[>=Stealth]

\draw[->] (-0.3, 0) -- (9.5, 0) node[right] {$\delta$};

\draw (0, 0.08) -- (0, -0.08) node[below] {$\delta = 0$};
\draw (5, 0.08) -- (5, -0.08) node[below] {$\delta =\delta_{2}$};

\draw (9.5, 0.08) -- (9.5, -0.08) node[below] {$\delta \to 1$};
\draw (3, 0.08) -- (3, -0.08) node[below] {$\delta = \delta_{1}\ll1$};

\draw[decorate, decoration={calligraphic brace, amplitude=6pt, raise=4pt}, thick]
  (0, 0) -- (3, 0);
\node[above=9pt] at (1.5, 0) {black hole formation};


\draw[decorate, decoration={calligraphic brace, amplitude=6pt, raise=10pt}, thick]
  (5, 0) -- (9.5, 0);
\node[above=9pt] at (7, 0.5) {naked singularity};

\end{tikzpicture}
\caption{The rigorous result for initial perturbations of the form $\chi_{\delta}$.}
\label{fig:lambda-structure 3}
\end{figure}

\paragraph{Some related open problems}
In this paragraph, we list several open problems related to the present work.
\begin{itemize}
    \item Fix the initial gauge choice and consider the one-parameter family of initial data $\lambda\phi_{k}|_{C_{out}}$. The previous small-data dispersion results \cite{chris93,lukoh_bvscatter} suggest the existence of a threshold $\lambda_{*}$ such that, for all $\lambda\in(0,\lambda_{*})$ the MGHD of the initial data $\lambda\phi_{k}|_{C_{out}}$ disperses to Minkowski. On the other hand, the result in this paper shows that $\lambda\in(1-\epsilon,1+\epsilon)$ lies in the stability regime of the naked singularity; see Figure~\ref{fig:lambda-structure 2}. It is therefore natural to investigate the intermediate regime $\lambda\in[\lambda_{*},1-\epsilon]$ in Figure~\ref{fig:lambda-structure 2}, which is expected to exhibit a transition between dispersion and naked singularity formation.
    \item It is also of interest to understand the regime $\lambda>1+\epsilon$ in Figure~\ref{fig:lambda-structure 2}, at least for sufficiently large $\lambda$, where one expects the formation of trapped surfaces.
    \item{Motivated by the scenarios in Figure~\ref{fig:lambda-structure 3}, one can also study the transition phenomenon when $\delta\in[\delta_{1},\delta_{2}]$. }
    \item Although we show that the $k$-self-similar naked singularity spacetimes are stable for initial data in the localized Hölder $\mathcal{C}_{loc}^{1,k^{2}/(1-k^{2})}$, it remains an open problem to determine whether an instability leading to dispersion can occur within the broader BV framework.
\end{itemize}

\paragraph{Organization of the rest of the introduction}
 The remainder of the introduction aims to situate our result within the literature on the {Weak Cosmic Censorship conjecture} and singularity formation on hyperbolic PDEs. We introduce a rigorous definition of naked singularities in Section~\ref{sec: infinite blue-shift}, recall some properties of the $k$-self-similar naked singularity spacetimes constructed in \cite{chris94}, and review the previous (in)stability results on naked singularities in Section~\ref{sec: instability literature review}. In Section~\ref{intro:subsec:relatedworks}, we review some related works, particularly the naked singularity construction for Einstein equations in Section~\ref{sec: naked singularity construction} and (stable) singularity formation results for other types of nonlinear wave equations in Section~\ref{sec:self-similar blowup}.


\subsection{Globally naked singularities with infinite blue-shift}
\label{sec: infinite blue-shift}

We begin by giving a rigorous definition of a naked singularity spacetime, following the definition in \cite{yakov_rod23} and adapting it to the case of spherically symmetric Einstein-scalar field equations.

    

\begin{definition}
    \label{intro:def:1}
Let $(C_{out},g_{0},\phi_{0})$ denote suitably regular \footnote{In the context of \cite{chris99}, ``suitably regular'' means the function class of $(g_{0},\phi_{0})$ where their tangent derivatives on $C_{out}$ in the outgoing null direction lie in the BV function class; see Section~\ref{sec: recall the local well-posedness}. However, as discussed above, since we can also restrict the function class of the initial data to be Hölder regular, we do not provide the explicit meaning of ``regular'' here. } spherically symmetric, and complete asymptotically flat initial data of the Einstein-scalar field equations~\eqref{eq: ESF 1}--\eqref{eq: ESF 2} on an outgoing null hypersurface $C_{out}$. Further, let $(\mathcal{M},g,\phi)$ be the maximal globally hyperbolic development of this initial data. We say that $(\mathcal{M},g,\phi)$ contains a \textbf{globally naked singularity arising from collapse} if $(\mathcal{M},g,\phi)$ admits a future null infinity $\mathcal{I}^{+}$ which is future incomplete \footnote{One may refer to Definition 1.1 of \cite{yakov_rod23} for the definition of the future null infinity incompleteness.}.




    
\end{definition}
 
Under the spherically symmetric assumption of the initial data, the maximal globally hyperbolic development of this initial data is also spherically symmetric. Hence, we can consider the projection under the action of the $SO(3)$ group $\mathcal{Q}: = \mathcal{M}\backslash SO(3)$. We abuse the notation $(\mathcal{Q},g,\phi)$ to denote the solution under this projection. For any $p\in\mathcal{Q}$, let $r$ be the area radius of the preimage of $p$ under the $SO(3)$-projection and the center $\Gamma\subset\mathcal{Q}$ be the fixed point of the $SO(3)$ action where $r = 0$. Let $\mathcal{O}$ denote the naked singularity in $\mathcal{Q}$ and $C_{o}^{-}$ be the singular horizon; see Figure~\ref{fig:penrose} for the Penrose diagram of a naked singularity spacetime.

We give the following definition of a naked singularity with infinite blue-shift:\begin{definition}
\label{intro:def:1.5}
    A naked singularity has \textbf{infinite blue-shift} if 
    \begin{equation}
    \label{intro:eq:1}
        \int_{C_{o}^{-}} \frac{\mu}{1-\mu}\frac{1}{r}dr = \infty.
    \end{equation}
    where $\mu \doteq \frac{2m}{r} \geq 0$ is the mass ratio, and $m$ is the Hawking mass contained in the sphere of area radius $r$, defined by $m: = \frac{r}{2}\left(1-g(\nabla r,\nabla r)\right)$.
\end{definition}

\begin{remark}
\label{remark:intro:1}
    The BV local well-posedness theory in \cite{chris93} implies that for the regular points on the center $\Gamma$, the corresponding integral~\eqref{intro:eq:1} is finite. Hence, the infinite blue-shift implies that no BV extension of the spacetime can exist in a neighborhood of the singularity $\mathcal{O}$. Further assuming that the causal past of $\mathcal{O}$ is regular, we can consider $\mathcal{O}$ to be a \textit{first} singularity.
\end{remark}




\begin{remark}
\label{remark:intro:2}
The integral~\eqref{intro:eq:1} is closely related to the geometry along the singular horizon $C_{o}^{-}$. In particular, assuming that $L^{\prime}$ is the geodesic ingoing null normal vector field on $C_{o}^{-}$ and $L$ is the outgoing null vector field with $g(L,L^{\prime}) = -2$, then the infinite blue-shift condition~\eqref{intro:eq:1} is equivalent to $\nabla_{L}r\rightarrow 0$ when approaching the singularity $\mathcal{O}$.

\end{remark}

\subsection{$k$-self-similar naked singularities}
\label{sec: k-self-similarity}
In this section, we introduce the $k$-self-similar naked singularity constructed in \cite{chris94}. We first recall the following definition of the $k$-self-similarity, first introduced in \cite{chris94}.
\begin{definition}[$k$-self-similarity]
\label{intro def: k self similarity}
    Let $(\mathcal{M},g,\phi)$ be a solution to the Einstein-scalar field equations~\eqref{eq: ESF 1}--\eqref{eq: ESF 2}. We say that $(\mathcal{M},g,\phi)$ is $k$-self-similar if it possesses a vector field $K$ such that \begin{equation}
        \mathcal{L}_{K}g = 2g,\quad K\phi = -k,
    \end{equation}
    where $k\in\mathbb{R}$ is a free parameter.
\end{definition}
Under the assumption of the spherical symmetry and the $k$-self-similarity, Christodoulou successfully reduced the equation~\eqref{eq: ESF 1}--\eqref{eq: ESF 2} to a $2\times 2$ autonomous system of ODEs. A careful phase portrait analysis gives the existence of the $k$-self-similar naked singularity $(\mathcal{M}_{k},g_{k},\phi_{k})$ in the $k$ range $k^{2}\in(0,\frac{1}{3})$. For a detailed introduction on the construction, see Section~\ref{sec: pre on k naked singularity sapcetime}.

We briefly recall some basic properties of $(\mathcal{M}_{k},g_{k},\phi_{k})$. Recall the global double-null coordinates $(u,v,\theta,\varphi)$ \begin{equation*}
    g = -\frac{1}{2}\Omega^{2}du\otimes dv-\frac{1}{2}\Omega^{2}dv\otimes du+r^{2}(d\theta^{2}+\sin^{2}\theta d\varphi^{2}),
\end{equation*}
where $\Omega^{2}$ is called the lapse function. We can further assume the conformal Killing vector field $K$ takes the form \footnote{ A natural gauge choice of $K$ seems to be $K = u\partial_{u}+v\partial_{v}$. However, since the $k$-self-similar spacetime constructed in \cite{chris94} is of finite Hölder regularity, such a choice of $K$ will make the above double-null coordinates non-global.} of \begin{equation*}
    K = u\partial_{u}+(1-k^{2})v\partial_{v}.
\end{equation*}


A general continuous $k$-self-similar ansatz consistent with the symmetries of the Einstein-scalar field system is given by 
\begin{equation}
\label{intro:eq:2}
    r(u,v) = (-u)\mr{r}(z), \ \Omega^{2}(u,v) = (-u)^{k^{2}}\mr{\Omega}^{2}(z), \ \phi(u,v) = \mr{\phi}(z) - k \log(-u),
\end{equation}
where 
\begin{equation}
\label{intro:eq:3}
    z \doteq \frac{v}{(-u)^{1-k^2}}
\end{equation}
is a similarity coordinate. We can further fix the gauge choice of the center of this spherically symmetric spacetime to be $\Gamma:= \{r = 0\} = \{z=  -1\}$.

The unique vanishing point  $(u,v) = (0,0)$ of $K$ is called the scaling center, and it corresponds to the naked singularity $\mathcal{O}$ in $\mathcal{M}_{k}$. Under the double null coordinates, the singular horizon $C_{0}^{-}$ simply becomes $\{v = 0\}$. The exterior region corresponds to $\{v>0\}$, while the interior region corresponds to $\{v<0\}$. The initial hypersurface $C_{out}$ will become $\{u = -1\}$. The initial hypersurface is parametrized by $ v\in[-1,\infty)$. It follows from Christodoulou's original construction \cite{chris94} that the regularity of the initial data of $(\mathcal{M}_{k},g_{k},\phi_{k})$ is 
\begin{equation}
\label{eq: intro: precise definition of the regularity}
\begin{aligned}
    &r|_{u  =-1} = \mr{r}(v)\in C^{\infty}\left([-1,\infty)\backslash\{0\}\right)\cap C^{2}([-1,\infty)),\quad \Omega^{2}|_{u = -1}\in C^{\infty}\left([-1,\infty)\backslash\{0\}\right)\cap C^{1,\frac{k^{2}}{1-k^{2}}}([-1,\infty)),\\& \phi|_{u = -1}\in C^{\infty}\left([-1,\infty)\backslash\{0\}\right)\cap C^{1,\frac{k^{2}}{1-k^{2}}}([-1,\infty)).
\end{aligned}
\end{equation}
It follows from the construction that $\mathcal{O}$ is an infinite blue-shift singularity in the sense of Definition~\ref{intro:def:1.5}.

\subsection{Previous results on (in-)stability of naked singularity spacetimes} 
\label{sec: instability literature review}
In this section, we revisit the aforementioned instability and stability results on naked singularities in a more detailed manner. 

\subsubsection{Previous exterior instability results}
\label{intro:sec: previous instability result}
The first instability result of the $k$-self-similar naked singularity was established in \cite{chris99} by Christodoulou.

\begin{theorem}[\cite{chris99,liuli}]
    \label{thm:intro:2}
    Let $(\mathcal{M},g,\phi)$ denote a naked singularity spacetime with an infinite blue-shift singularity $\mathcal{O} = (0,0)$ in the sense of Definitions~\ref{intro:def:1} and Definition~\ref{intro:def:1.5}, arising from initial data for the scalar field\footnote{We suppress the initial data for $r(u,v)$ in this discussion.} $\partial_v\phi(-1,v) =h(v) \in \text{BV}$ on an outgoing null cone $\{u=-1\}$. Then there exist perturbations $f_0(v) \in \text{BV}, f_1(v) \in \text{AC}$ with support on the exterior region of the naked singularity $\{v\geq0\}$ such that the maximal development of data
    \begin{equation*}
        \partial_v\phi(-1,v) = h(v) + \lambda_0 f_0(v) + \lambda_1 f_1(v)
    \end{equation*}
    contains a trapped surface for all non-zero $\lambda_0, \lambda_1$ sufficiently small. 

    Let $\mathcal{E}_{*}$ be the set of all initial data of $\partial_{v}\phi$ which will develop naked singularities with infinite blue-shift. Then $\mathcal{E}_{*}$ has co-dimension at least 1 in the space of \text{BV} datasets.
\end{theorem}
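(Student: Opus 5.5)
The plan is to follow Christodoulou's strategy in \cite{chris99}. It combines a \emph{sharp trapped surface formation criterion}, established in \cite{chris91}, with the \emph{infinite blue-shift} along the singular horizon from Definition~\ref{intro:def:1.5}, and then uses a dichotomy between two independent perturbation directions $f_0,f_1$.

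\textbf{Step 1 (criterion).} I would first recall the criterion from \cite{chris91} for the BV solutions of \cite{chris93}. Let $\{u=u_0\}$ be an outgoing cone, and let $r_1<r_2$ be radii along it with $\delta=(r_2-r_1)/r_1$ small. Suppose the mass increment satisfies
\begin{equation*}
\frac{2(m_2-m_1)}{r_2}>C\,\delta\log\frac{1}{\delta}.
\end{equation*}
Then the future development contains a trapped surface before $u$ reaches a value comparable to $r_1$. The whole proof therefore reduces to producing this mass concentration near $\mathcal{O}$.

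\textbf{Step 2 (unperturbed interior and linearized transport).} By finite speed of propagation, any perturbation supported in $\{v\geq0\}$ leaves the interior $\{v<0\}$, and in particular $C_o^-=\{v=0\}$, unchanged. I would argue by contradiction: assume the data $h+\lambda_0f_0+\lambda_1f_1$ produce no trapped surface. Then the development extends a bit beyond $C_o^-$, up to a region near $\mathcal{O}$.

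In that thin strip $0\le v\le\epsilon$, I would write $\partial_v(r\phi)$ and $\partial_v r$ as their background values plus small corrections. The scalar wave equation $\partial_u\partial_v(r\phi)=\ldots$ and the Raychaudhuri/Hawking-mass equations then give a transport equation along $u$ for the $v$-derivative of the perturbation at the horizon. Its integrating factor is
\begin{equation*}
\exp\Big(\int\frac{\mu}{1-\mu}\frac{dr}{r}\Big).
\end{equation*}
This blows up as $u\to0$ by \eqref{intro:eq:1}; equivalently, $\nabla_L r\to0$ by Remark~\ref{remark:intro:2}.

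\textbf{Step 3 (generating the mass increment).} The consequence is that the quantity $\partial_v r/(1-\mu)$ on the perturbed outgoing cone at late $u$ differs from the background by an amount amplified by this diverging factor. The term $\lambda_0 f_0$ has a jump in $\partial_v\phi$ at $v=0$, and $f_1$ has a nonzero derivative there. Either of these creates a nonzero leading coefficient $a(\lambda_0,\lambda_1)$ in the amplified difference.

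Because $\partial_v r\to0$ on the background, any nonzero amplified correction drives $\partial_v r$ toward zero or negative values on a short $v$-interval. Through $\partial_v m=\frac12(1-\mu)r^2(\partial_v\phi)^2/\partial_v r$, this yields the mass increment required in Step 1 on a short interval $[r_1,r_2]$ with $\delta\to0$, contradicting the assumption that no trapped surface forms.

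The case split is the following. The leading coefficient is affine in $(\lambda_0,\lambda_1)$, and its $\lambda_0$ and $\lambda_1$ parts cannot cancel simultaneously once both are nonzero. This is why two perturbation directions appear. It also gives the codimension statement: data in $\mathcal{E}_*$ must satisfy $a=0$, which is a nontrivial linear constraint in the BV topology.

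\textbf{Main obstacle.} I expect the hard step to be the quantitative control of the strip $0\le v\le\epsilon$ under only BV regularity. One has to show that the nonlinear error terms stay small relative to the amplified linear term, uniformly as $u\to0$, while the background quantities degenerate. I would handle this with Christodoulou's \emph{a priori} BV estimates from \cite{chris93} in a bootstrap on the region where $\mu\le 1-\eta$. I would choose $\epsilon$ depending on $u$ so that the blue-shift factor dominates before the nonlinear terms can accumulate.
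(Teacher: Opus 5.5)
Since the paper only cites this result from \cite{chris99,liuli}, I compare your proposal with Christodoulou's argument. Your Steps 1--2 have the right ingredients: the criterion of \cite{chris91}, the unchanged interior, the contradiction hypothesis, and the blue-shift \eqref{intro:eq:1}. Step 3, however, contains a concrete error: an absolutely continuous perturbation with $f_1(0)=0$, $f_1'(0)\neq 0$ is \emph{not} amplified.

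What the blue-shift amplifies is the ratio $r\partial_v\phi/\partial_v r$. Since $\partial_u(r\partial_v\phi)=-\partial_v r\,\partial_u\phi$, and both $\partial_v r$ and $\partial_u\phi$ are continuous across $\{v=0\}$, a jump of $r\partial_v\phi$ is transported unchanged along $\{v=0\}$. Meanwhile $\partial_v r(u,0)\to 0$ by \eqref{intro:eq:1}. Hence $\partial_v m/\partial_v r=\frac{1-\mu}{2}\left(r\partial_v\phi/\partial_v r\right)^2$ is huge just outside the horizon, and this is what feeds the criterion. (On the horizon $\partial_v r$ and $\mu$ are unperturbed, so it is not $\partial_v r/(1-\mu)$ that gets amplified.)

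A perturbation vanishing at $v=0$ enters only through its size on the scale where the criterion can be applied, namely $v\lesssim r(u,0)/\partial_v r(u,0)$. For the $k$-self-similar spacetimes this scale is $v\lesssim(-u)^{1-k^2}$ and $\partial_v r(u,0)\sim(-u)^{k^2}$. So $f_1(v)\sim v$ gives a ratio $\lesssim(-u)^{1-2k^2}\to 0$. Indeed, such an $f_1$ corresponds to a $\phi$-perturbation $\sim v_+^2$, which is above the threshold. By the exterior stability result, Theorem~\ref{thm: nonlinear stability in the exterior region}, it produces \emph{no} trapped surface. The AC direction must therefore be built from the blue-shift rate of the given background, vanishing at $v=0$ more slowly than $\partial_v r(u,0)$ decays on that scale. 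Compare Corollary~\ref{thm:intro:3}: for constant $\mu_0$ this tuned perturbation becomes $C^{0,c(\mu_0)}$. Your proposal does not contain this construction.

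Your case split and codimension argument also fail as written. A nonzero affine function of $(\lambda_0,\lambda_1)$ vanishes on a whole line of pairs with both entries nonzero, so the claim that the two parts ``cannot cancel'' is false. Moreover, if a single affine coefficient with nonzero $\lambda_0$-part governed trapped surface formation, the jump direction alone would suffice, and the second direction would be unexplained. What actually excludes cancellation is that the two contributions live at different scales. The jump gives a fixed amount at $v=0^+$, amplified by the full blue-shift factor. A tuned AC perturbation contributes only at $v>0$ and is dominated by the jump as $v\to0$.

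Likewise, your $a$ depends on the background, through $\partial_v r(u,0)$ and the blue-shift integral. So ``$\mathcal{E}_*\subset\{a=0\}$'' is not a linear constraint on BV data. The codimension statement must be proved pointwise: for each $h\in\mathcal{E}_*$, exhibit a direction $f$ with $h+\lambda f\notin\mathcal{E}_*$ for all small $\lambda\neq0$.

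Finally, for a general BV naked singularity there are no background asymptotics near $\underline{C}_o^-$ for your proposed bootstrap to close against. In \cite{chris99} the uniform control of the strip comes from the contradiction hypothesis itself. It uses the BV extension theory of \cite{chris93} together with the contrapositive of the criterion. That contrapositive bounds $2(m_2-m_1)/r_2$ by $c_1\delta\log(1/\delta)$ on every outgoing cone.
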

The above theorem applies to a wide class of naked singularity spacetimes of the Einstein-scalar field equations satisfying the infinite blue-shift~\eqref{intro:eq:1}. In particular, it contains the $k$-self-similar naked singularity constructed in \cite{chris94}. When restricting the above result to a refined class of naked singularities, the argument in \cite{chris99} yields\footnote{We thank Carsten Gundlach for sharing this calculation with us.} the following, which improves the regularity of the perturbations:
\begin{corollary}[\cite{chris99}]
\label{thm:intro:3}
Assume $(\mathcal{M},g,\phi)$ satisfies, in addition to the assumptions of Theorem~\ref{thm:intro:2}, 
\begin{equation}
\label{intro:eq:8}
    \mu(u,0) = \mu_0 \in (0,1).
\end{equation}
Then there exist perturbations
\begin{equation*}
        \partial_v\phi(-1,v) = h(v) + \lambda f(v)
\end{equation*}
with support on $\{v\geq0\}$ leading to a trapped surface for all $\lambda$ sufficiently small, where $f(v)$ has Hölder regularity
\begin{equation*}
    f(v) \in C^{0,c(\mu_0)}
\end{equation*}
for an explicit constant
\begin{equation*}
    c(\mu_0) = \frac{\mu_0}{4(1+4\mu_0)}.
\end{equation*}

\end{corollary}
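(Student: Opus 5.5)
Under hypothesis \eqref{intro:eq:8}, we rerun Christodoulou's instability argument from \cite{chris99} and track how rough the perturbation must be. The key input is that a constant $\mu_0\in(0,1)$ along the singular horizon $\{v=0\}$ turns the infinite blue-shift of Definition~\ref{intro:def:1.5} into a quantitative power law. In the geodesic normalization of Remark~\ref{remark:intro:2}, the Raychaudhuri-type equation along $C_o^{-}$ gives
\begin{equation*}
\partial_{u}\log(\nabla_{L}r)\sim -\frac{\mu_0}{1-\mu_0}\,\frac{\partial_u r}{r},
\end{equation*}
so $\nabla_L r$ tends to zero like $r^{\mu_0/(1-\mu_0)}$, up to harmless constants, as one approaches $\mathcal{O}$. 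We therefore first record precise power-law asymptotics for $r$, $\partial_v r$ and $\mu$ near $\mathcal{O}$ on $\{v=0\}$. These exponents are what will determine $c(\mu_0)$.

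Next we recall the structure of Christodoulou's argument. Take a sequence of dyadic outgoing cones $u=u_n\to0$. Suppose the exterior data $\partial_v\phi(-1,\cdot)$ has a sufficiently large jump in its $v$-derivative, or a sufficiently large oscillation, on the scale $\delta_n$ adjacent to $v=0$. Christodoulou's trapped surface criterion (the lemma on $\mu$ growth in a thin slab in \cite{chris99}) then yields a trapped surface in the slab $[u_n,0)\times[0,\delta_n]$. This needs two ingredients. The first is that the perturbation size relative to the slab width,
\begin{equation*}
\frac{\delta_n}{r}\Big(\int |\partial_v\phi|\Big)^2,
\end{equation*}
measured in the geometric variable adapted to $\nabla_L r$, exceeds a threshold. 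The second is that the blue-shift amplifies $v$-differences near $v=0$ by the factor $(\nabla_L r)^{-1}$. Because of this amplification, a perturbation localized in $v\in[0,\delta_n]$ only needs an oscillation of size roughly $\delta_n^{\beta}$, where $\beta$ is a power coming from the decay rate above.

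To build $f$, we follow Christodoulou's dichotomy: if the data with $\lambda_0=0$ does not already form a trapped surface, then adding $\lambda f$ does. We take $f=\sum_n a_n\psi_n$, where the $\psi_n$ are bumps supported in $[\delta_{n+1},\delta_n]$ with $\delta_n=2^{-n}$ and amplitudes $a_n\approx \delta_n^{\beta}$. Such a sum lies in $C^{0,\beta}$, and any nonzero $\lambda$ eventually exceeds the threshold at a scale $n$ large enough that $|\lambda|$ is absorbed by the gain $\delta_n^{c}$ from the blue-shift. We then optimize the exponent. Three losses enter: the factor $4$ in the mass inequality, the factor $(1+4\mu_0)$ from comparing $\mu$ at the slab boundary with $\mu_0$ (this uses the monotonicity $\partial_u\mu$ and a bound of the form $\mu\le \mu_0(1+O(\cdot))$), and the conversion between the $u$-scale and the $v$-scale. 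Tracking these should give $\beta=c(\mu_0)=\mu_0/(4(1+4\mu_0))$.

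The main obstacle is this exponent bookkeeping. Christodoulou's estimates are written qualitatively, for general BV data, and each one must be redone with explicit powers. Particular care is needed with the control of $\partial_v r$ and $\mu$ in the thin exterior slab under the perturbation, which must hold uniformly in $n$. We would try first to rely only on the monotonicity identities from \cite{chris99}, replacing the BV smallness with the explicit Hölder bounds.
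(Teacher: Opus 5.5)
The paper does not prove this corollary itself. It says only that the argument of \cite{chris99}, specialized to $\mu(u,0)=\mu_0$, yields it, and it credits the exponent computation to Gundlach. Your plan of rerunning \cite{chris99} with explicit powers is therefore the intended route. However, your proposal stops exactly where the content of the statement lies, which is the value $c(\mu_0)=\mu_0/(4(1+4\mu_0))$, and you do not derive it. The three ``losses'' you list are matched to the factors of the target formula rather than extracted from any stated estimate. One of them also rests on a monotonicity of $\mu$ in $u$ that does not hold: $\partial_u\mu=-\frac{\partial_u r}{r}\mu+\frac{r}{\partial_u r}(\partial_u\phi)^2(1-\mu)$ has no sign, and only $m$ is monotone.

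Worse, the heuristic you do set up leads to a different exponent. Take $\partial_v r(u,0)\sim r^{\mu_0/(1-\mu_0)}$ and $\theta:=r\partial_v\phi/\partial_v r$. On the cone $u=u_n$ the criterion of Theorem~\ref{thm: trapped surface} reads roughly $(1-\mu)\overline{\theta^2}\gtrsim\log(1/\delta)$, where $\delta=r_2/r_1-1\approx\partial_v r\,\Delta v/r\le c_0$. For $\mu_0<\frac12$, take the widest admissible slab, $\Delta v\sim r^{(1-2\mu_0)/(1-\mu_0)}$, and an undamped perturbation of $r\partial_v\phi$ of size $\lambda(\Delta v)^\beta$. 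This gives $\theta\sim\lambda\, r^{(\beta(1-2\mu_0)-\mu_0)/(1-\mu_0)}$, i.e.\ instability for every $\beta<\mu_0/(1-2\mu_0)$. That is the self-similar threshold $k^2/(1-k^2)$ when $\mu_0=k^2/(1+k^2)$, at least four times larger than $c(\mu_0)$. It cannot be justified from the hypotheses: it assumes that $\partial_v r$, $\mu$ and the propagated perturbation behave on the whole slab as on $\{v=0\}$. In fact only $\mu$ on $\{v=0\}$ is prescribed, and the background is merely BV off the horizon.

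So the missing step is the one you defer as ``the main obstacle.'' Under the contradiction hypothesis that the development of $h+\lambda f$ has no trapped surface, \cite{chris99} derives a priori bounds. From these you must determine, as a power of $r$ along $\{v=0\}$, how far into $\{v>0\}$ two things hold:
\begin{itemize}
\item $\partial_v r$ stays comparable to its horizon value;
\item the perturbation of $r\partial_v\phi$ stays comparable to its initial size.
\end{itemize}
You must also determine what power of $r$ is lost in $\partial_v r$ on that region. The explicit $c(\mu_0)$ is the output of this bookkeeping. Since a fixed small $|\lambda|$ must be absorbed by a power gain, the critical exponent you reach must also be strictly larger than the Hölder exponent of $f$. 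Without this the proposal is a roadmap, not a proof.

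Smaller points:
\begin{itemize}
\item The wave equation for $r$ gives $\partial_u\log\partial_v r=\frac{\mu}{1-\mu}\frac{\partial_u r}{r}$, with no minus sign. With your sign, $\nabla_L r$ would blow up rather than decay.
\item Christodoulou's BV perturbation is a jump of $\partial_v\phi$ itself at $v=0$, not of its $v$-derivative.
\item By Cauchy--Schwarz, the criterion follows from $(1-\mu)(\int_0^{\Delta v}\partial_v\phi\,dv)^2\gtrsim\delta^2\log(1/\delta)$. The slab width therefore enters in the denominator, not as in your $\frac{\delta_n}{r}(\int|\partial_v\phi|)^2$.
\item The trapped-surface criterion is that of \cite{chris91}.
\item The argument is by contradiction, not a dichotomy on the base data $h$, which has no trapped surface by assumption.
\end{itemize}
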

\begin{remark}
    For $k$-self-similar naked singularity spacetimes, on the singular horizon, the condition $\mu_0 = \frac{k^2}{1+k^2}$ holds. Hence, the perturbations in Corollary~\ref{thm:intro:3} have regularity $C^{1, \frac{k^2}{4(1+5k^2)}}.$ Since $\frac{k^2}{4(1+5k^2)} < \frac{k^2}{1-k^2},$ this regularity is strictly lower than the threshold regularity.
\end{remark}

It is worth emphasizing that the proof in \cite{chris99} does not aim to optimize the regularity of unstable perturbations for any particular class of naked singularities (e.g., self-similar spacetimes). In fact, for a general BV naked singularity, the geometry of the singular horizon can be significantly more complex than that of the $k$-self-similar spacetimes (cf. the behavior of $\mu$~\eqref{intro:eq:8}). It is therefore unlikely that a Hölder instability result holds without additional assumptions on the spacetime.

Modern works on WCC take inspiration from Christodoulou's spirit in spherical symmetry, exploiting the blue-shift amplification (see Section 1.3 in \cite{zhenglinear} for the details) of ingoing wave packets. The original proof in \cite{chris99} uses a contradiction argument, thus providing no information on the quantitative nature of trapped surfaces formation. Later in \cite{liuli}, Li--Liu gave another proof of the trapped surface formation under exterior BV perturbations, using a well-designed bootstrap argument. The work \cite{liuli_outsidesymm} considers the instability of spherically symmetric infinite blue-shift naked singularities under exterior perturbations outside of the spherical symmetry, significantly raising the co-dimension of the initial data set $\epsilon_{*}$.

Later works have primarily considered the (in)stability of the naked singularities in the self-similar regime, as the additional symmetry renders the (in)stability analysis more tractable. In \cite{singh2}, Singh showed the instability of the $k$-self-similar naked singularity under the exterior spherical perturbations of the scalar field strictly below the threshold regularity. An further showed in \cite{an_highcodim} that the $k$-self-similar naked singularity is unstable under exterior non-spherical BV perturbations, leading to the emergence of an anisotropic apparent horizon.

\subsubsection{Previous exterior stability results}
Rigorous mathematical study of the stability aspect of the $k$-self-similar naked singularity to the Einstein-scalar field equations was initiated in the work of Singh \cite{singh1}, where the author showed the $k$-self-similar naked singularity is nonlinearly stable under spherical exterior perturbation of $C^{2}$ regularity. This regularity has been improved in the later work of Singh \cite{singh2} to be any regularity at or above the threshold. More precisely, the work \cite{singh2} shows that the $k$-self-similar naked singularity is nonlinearly orbitally stable under any small spherical perturbations of regularity at the threshold and asymptotically stable under perturbations of regularity above the threshold. Moreover, the orbital stability result in~\cite{singh2} can further be upgraded to the asymptotic stability by the approach in the present paper; see Section~\ref{sec: main result and proof outline} for details.

Therefore, to address the issue of the validity of WCC under more regular functional frameworks, a careful study of the (in)stability of the $k$-self-similar naked singularity under general Hölder perturbations at or above the threshold is required. {This paper gives a complete answer in this direction, showing that the $k$-self-similar naked singularity spacetime is indeed stable for general perturbations at or above the threshold regularity. Moreover, the approach in this paper can further upgrade the threshold exterior orbital stability to asymptotic stability for general perturbations at the threshold.}

\subsubsection{Previous (in)stability works on interior perturbations}
\label{sec: intro: interior wave}
The motivation for studying the (in)stability of naked singularities under interior perturbations is twofold. 

First, in order to demonstrate the nonlinear stability of the $k$-self-similar naked singularity under perturbations at or above the threshold, it is necessary to consider general perturbations rather than restricting to exterior perturbations alone. In fact, the core difficulty of the stability problem lies in analyzing the interior region under interior perturbations.

Second, in a different context, for aforementioned naked singularities arising from critical collapse, numerics \cite{choptuik1,gund_understandingcritcollapse,gund_choptuikoutsidesymm} suggest that certain perturbations can lead to dispersion, producing a spacetime without any singularity. We note that such perturbations cannot be of exterior type, since exterior perturbations do not influence the presence or absence of the singularity. This fact showcases the importance of studying the interior region under interior perturbations. While the present paper focuses on the $k$-self-similar naked singularities, we hope that the approach developed here will provide insight into this phenomenon in future work.

The study of the (in)stability of the $k$-self-similar naked singularity under interior perturbations was initiated in the work of Singh \cite{singh2}, where a linear wave equation \begin{equation}
    r_{k}\Box_{g_{k}}\phi = r_{k}\partial_{u}\partial_{v}\phi+\partial_{u}r_{k}\partial_{v}\phi+\partial_{v}r_{k}\partial_{u}\phi = 0\label{eq: linear wave equation introduction}
\end{equation} 
on a fixed $k$-self-similar naked singularity spacetime was considered. The result of Singh \cite{singh2} shows that for perturbations \textit{above} the threshold, solutions to the linear wave equation exhibit better behavior than $\phi_{k}$, whereas for perturbations below the threshold, the behavior is worse. The nonlinear instability result \cite{li2025interior} of Li for interior perturbations is consistent with this latter linear instability aspect, showing that the $k$-self-similar naked singularity is unstable under perturbations supported in the interior region near the past light cone emanating from the singularity.

As we have mentioned above, the linear wave equation on the $k$-self-similar geometry is insufficient for the nonlinear stability for the following three reasons \begin{itemize}
    \item The linear wave equation does not capture the influence of the background scalar field $\phi_{k}$, whose regularity is one derivative less than $r_{k}$. {In fact, the analysis of the linear wave equation~\eqref{eq: linear wave equation introduction} in~\cite{singh2} crucially relies on the $C^{1,k^{2}/(1-k^{2})}$ regularity class of $\partial r_{k}$. However, in the study of the nonlinear stability, the term $\partial\phi_{k}$ appears in the {propagation} equation for the lapse function $\Omega^{2}$, and $\partial\phi_{k}$ is only $C^{0,k^{2}/(1-k^{2})}$, thereby invalidating the argument in~\cite{singh2}.}

    \item The linear wave equation fails to account for the interaction between the background geometry and scalar field when studying the nonlinear stability.
    \item The linear wave equation could not provide any insight for nonlinear perturbations at the threshold.
\end{itemize}
The first two points are resolved in the analysis of the linearized Einstein-scalar field operator in the companion paper~\cite{zhenglinear}, while the resolution of the last point lies in the scaling invariance of the Einstein-scalar field equations; see Section~\ref{sec: outline of the proof whole sec}.

\subsection{Related works}
\label{intro:subsec:relatedworks}

\subsubsection{Naked singularity constructions in general relativity}
\label{sec: naked singularity construction}
In the original formulation of WCC by Penrose \cite{penroseWCC}, it was conjectured that naked singularities in the Einstein equations (coupled with a reasonable matter model) do not exist. The construction of naked singularity solutions to the Einstein-scalar field equations in \cite{chris94,choptuik1} changes the paradigm drastically, and the word ``generic'' was added for the validity of WCC \cite{chris99B}. The original construction of Christodoulou heavily exploited the spherical symmetry and the exact $k$-self-similarity in the sense of Definition~\ref{intro def: k self similarity} and the expression~\eqref{intro:eq:2}. Under these symmetries, the unknown functions in the Einstein-scalar field equations become $\mr{r}(z)$, $\mr{\Omega}(z)$, and $\mr{\phi}(z)$, which dramatically reduces the complexity of the Einstein-scalar field equations. Moreover, by considering the algebraic combinations of $\mr{r}$, $\mr{\Omega}$, and $\mr{\phi}$, one can further derive a $2\times 2$ autonomous system of ODEs. Christodoulou conducted a careful phase portrait analysis of this system of ODEs and constructed the $k$-self-similar naked singularity spacetimes for $k^{2}\in(0,\frac{1}{3})$.

The original approach of Christodoulou is not immediately generalizable to other models, as it relies on a reduction to a $2\times 2$ system of ODEs under strong symmetry assumptions.

\paragraph{Construction of naked singularities for the Einstein vacuum equations}
For the Einstein vacuum equations, naked singularities cannot exist in the regime of spherical symmetry due to the Birkhoff theorem \cite{Wald1984}. 
Nonetheless, in the breakthrough of Rodnianski and Shlapentokh-Rothman \cite{yakov_rod23,yakov22}, a proper notion of the asymptotic $k$-self-similarity was introduced, and a family of asymptotically $k$-self-similar naked singularities was constructed for $0<k\ll 1$. The spacetimes share a limited Hölder regularity near the past light cone of the singularity $\{v = 0\}$ under the double-null gauge. The authors divided the construction into two parts: the construction of naked singularities exterior \cite{yakov_rod23}; and the construction of naked singularities interior \cite{yakov22}. In particular, the work \cite{yakov_rod23} provided a robust framework for constructing naked singularity exteriors and demonstrating directly from the construction that naked singularities constructed in \cite{yakov_rod23} are stable under the exterior perturbations of sufficiently high regularity. 

\paragraph{Construction of naked singularities for the Einstein--Euler equations}
Under exact self-similarity and spherical symmetry, an asymptotically flat naked singularity spacetime to Einstein--Euler was also constructed in \cite{guo_hadzic_jang21}. In contrast to the naked singularities of the Einstein-scalar field equations or the Einstein vacuum equations, the naked singularity spacetimes in \cite{guo_hadzic_jang21} have smooth initial data. However, this naked singularity is still believed to be unstable under gravitational perturbations. In the recent work of Li--Zhu \cite{li2025scalar}, the authors studied a toy model addressing this instability aspect, where they proved the instability of this naked singularity spacetime as a solution to the Einstein--Euler-scalar field equations under scalar field perturbations of finite Hölder regularity.

\paragraph{Critical collapse and discretely self-similar naked singularities}
For the aforementioned naked singularity spacetimes arising from gravitational collapse, numerical evidence suggests that they exhibit the so-called discrete self-similarity. Since it lies at the threshold of the dispersion and black hole formation in the process of the gravitational collapse, such a naked singularity has been numerically identified to be of codimension one in the moduli space of all regular initial data and has unstable directions leading to the dispersion. Recently, the exterior region corresponding to a discretely self-similar naked singularity to the Einstein-scalar field equations has been constructed \cite{cicortas_kehle24}.

\paragraph{Construction of naked singularities for Einstein--Maxwell-charged scalar field equations}
The mathematical construction of naked singularities in the aforementioned works relies on the self-similarity. However, if one considers the Einstein--Maxwell-charged scalar field equations \begin{align}
&Ric_{\mu\nu}(g)-\frac{1}{2}R(g)g_{\mu\nu} = T_{\mu\nu}^{EM}+T_{\mu\nu}^{SC},\label{eq:whole equation1}\\&
T_{\mu\nu}^{EM} = 2\left(g^{\alpha\beta}F_{\alpha\mu}F_{\beta\nu}-\frac{1}{4}F^{\alpha\beta}F_{\alpha\beta}g_{\mu\nu}\right),\quad \nabla^{\mu}F_{\mu\nu} = iq_{0}\left(\frac{\phi\overline{D_{\nu}\phi}-\overline{\phi}D_{\nu}\phi}{2}\right),\ F=dA,\\&
T^{SC}_{\mu\nu} = 2\left(\Re{\left(D_{\mu}\phi\overline{D_{\nu}\phi}\right)}-\frac{1}{2}\left(g^{\alpha\beta}D_{\alpha}\phi\overline{D_{\beta}\phi}\right)g_{\mu\nu}\right),\quad D_{\mu} = \nabla_{\mu}+iq_{0}A_{\mu},\\&
g^{\mu\nu}D_{\mu}D_{\nu}\phi = 0,\label{eq:whole equation5}
\end{align}
where $q_{0}$ is the scalar field charge, $F$ is the Maxwell field, and $A$ is the electromagnetic potential, the presence of the scalar field charge breaks the translation invariance. Therefore, one cannot directly impose the twisted self-similarity as in the work of Christodoulou \cite{chris94}. Nonetheless, using the fact that the back-reaction of the Maxwell field is always sub-leading, the author can construct a naked singularity solution to the Einstein--Maxwell-charged scalar field equations in the {upcoming work~\cite{zhengcharged}}.
\begin{theorem}
    There exists a family of asymptotically flat spherically symmetric $C^{1,\alpha_{k}}$ naked singularity solutions $(\mathcal{M}_{k},g_{k},\phi_{k},F_{k})$ to~\eqref{eq:whole equation1}--\eqref{eq:whole equation5}, parametrized by $k$ with $0<k\ll1$ and satisfying the upper bounds predicted by self-similarity. Moreover, on the past light cone emanating from the singularity, the Maxwell field has the {lower bound} \begin{equation*}
        \vert F\vert:=\left(g^{\alpha\beta}g^{\mu\nu}F_{\alpha\mu}F_{\beta\nu}\right)^{\frac{1}{2}}\gtrsim 1.
    \end{equation*}
\end{theorem}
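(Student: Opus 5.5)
The plan is to construct the solution as a perturbation of Christodoulou's $k$-self-similar solution $(\mathcal{M}_{k},g_{k},\phi_{k})$. The idea is that the charge $q_{0}$ breaks scaling in a favorable direction: rescaling $(u,v)\mapsto(\lambda u,\lambda^{1-k^{2}}v)$ maps $q_{0}\mapsto\lambda q_{0}$. So near $\mathcal{O}$, as $\lambda\to0$, the charged equations formally reduce to the uncharged Einstein-scalar field equations.

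\emph{Variables and ansatz.} Work in the double-null gauge with the similarity coordinates $s=-\log(-u)$, $z=v/(-u)^{1-k^{2}}$. Write the unknowns as the self-similar profiles of~\eqref{intro:eq:2} plus corrections:
\[
r=(-u)(\mr{r}+\tilde r),\qquad \Omega^{2}=(-u)^{k^{2}}(\mr{\Omega}^{2}+\tilde\Omega^{2}),\qquad \phi=\mr{\phi}-k\log(-u)+\tilde\phi .
\]
Here $\tilde\phi$ is now complex. In spherical symmetry the Maxwell field is encoded by the charge function $Q(u,v)$ with $F_{uv}\sim \Omega^{2}Q/r^{2}$. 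It satisfies the transport equations
\[
\partial_{u}Q,\ \partial_{v}Q\sim q_{0}r^{2}\,\mathrm{Im}(\phi\overline{D\phi}).
\]
The electromagnetic stress enters the equations for $\partial_{u}\partial_{v}r$, $\partial_{u}\partial_{v}\log\Omega$ and the Raychaudhuri equations only through $Q^{2}/r^{2}$ and $q_{0}$-terms.

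In self-similar variables these sources carry weights $e^{-s}$ (from $q_{0}r$) and $Q^{2}/(-u)^{2}$. I will impose $Q=O((-u)^{2})$ on $\{v=0\}$. Then $|F|\sim |Q|/r^{2}\sim 1$ there, while $Q^{2}/r^{2}=O((-u)^{2})$ is still $e^{-2s}$-small. This is precisely the sense in which the Maxwell back-reaction is sub-leading, and it is compatible with the claimed lower bound.

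\emph{Interior construction.} In the region $\{v\le0\}$, I will solve backwards-in-scale, i.e.\ as $s\to\infty$, a nonlinear problem for $(\tilde r,\tilde\Omega,\tilde\phi,Q)$ with forcing decaying like $e^{-s}$. I plan a fixed-point argument in a weighted space with norm $\sup_{s}e^{\gamma s}\|\cdot\|_{X}$, where $X$ is the localized Hölder space at the threshold regularity and $0<\gamma<1$. The linear input is the linearized stability of the Einstein-scalar field operator about $(\mathcal{M}_{k},g_{k},\phi_{k})$ from~\cite{zhenglinear}, valid for $0<k\ll1$. Its finitely many non-decaying modes are the gauge and scaling modes (translation of $\phi$, the constant phase of $\phi$, and the choice of $K$), and I will absorb them by modulation parameters chosen in the fixed point.

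The charge is treated separately by integrating its transport equation from the center, where $Q=0$. I will prescribe $\phi$ with a nontrivial phase so that $\mathrm{Im}(\phi\overline{D\phi})\gtrsim (-u)^{-1}$ along suitable regions. This produces a definite-sign contribution and hence $|Q(u,0)|\gtrsim(-u)^{2}$; I may need to add a small explicit phase profile in $z$ to arrange this.

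\emph{Exterior and asymptotic flatness.} With the data on $\{v=0\}$ obtained from the interior, I will solve the characteristic problem in $\{v\ge0\}$ following the exterior framework of~\cite{yakov_rod23}. There the regularity at the threshold gives exterior stability, as in~\cite{singh2}. I then cut off at large $v$ to obtain asymptotically flat data and check that $\mathcal{I}^{+}$ is incomplete, since the blue-shift condition $\partial_{v}r/\Omega^{2}\to0$ at $\mathcal{O}$ is inherited from the background. The upper bounds ``predicted by self-similarity'' then follow from the weighted norm.

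\emph{Main obstacle.} The hardest step is the interior fixed point. The charged terms involve the undifferentiated $\phi$ through $iq_{0}A\phi$, and $\phi$ grows like $k\log(-u)$; the potential $A$ is gauge dependent. I would first fix the gauge $A_{v}=0$ with $A_{u}$ determined by $Q$, and then check that $q_{0}r\,\phi\sim e^{-s}s$ is still integrable in $s$. A second point is that the lower bound on $Q$ must survive the nonlinear corrections, which requires the correction to $Q$ to be $o((-u)^{2})$; I expect this to follow from choosing the phase amplitude large relative to the fixed-point smallness.
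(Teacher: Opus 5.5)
The paper does not prove this theorem. It is announced for the forthcoming work \cite{zhengcharged}, with only the remark that the Maxwell back-reaction is subleading. Your overall strategy, perturbing Christodoulou's solution and treating the $q_0$-terms as $e^{-s}$-small sources, is in line with that remark. However, your argument for the lower bound on $|F|$ has a genuine gap. Near $\mathcal{O}$, the size of $|F|$ on $\{v=0\}$ is governed by the late-time state of $\mathrm{Im}\,\phi$, not by the phase you prescribe initially. Around the real background, $\mathrm{Im}\,\phi$ decouples at linear order and solves $\Box_{g_k}\mathrm{Im}\,\phi=0$ up to $e^{-s}$-small $q_0$-sources. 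So in the interior it relaxes to $\beta\phi_k+c_\infty$, a constant-phase mode plus an imaginary translation. The phase mode carries no current, because $\mathrm{Im}(\phi\overline{\partial\phi})$ is invariant under constant phase rotations. Hence on $\{v=0\}$ you get $\mathrm{Im}(\phi\overline{D_u\phi})=c_\infty k(-u)^{-1}+o((-u)^{-1})$. Integrating $\partial_uQ$ from $\mathcal{O}$, where $Q\to0$ because the interior slices shrink, gives $|Q|/r^2\to C|q_0c_\infty|k$. Thus $|F|\gtrsim1$ is equivalent to $c_\infty\neq0$.

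A constant phase gives zero current. A $z$-dependent phase profile is not an approximate solution: since $|\phi|\sim ks$, the phase that actually persists is $\approx c_\infty/(ks)\to0$. Such a profile relaxes to some $c_\infty$ that may vanish, so a definite sign of the current at $u=-1$ proves nothing near $\mathcal{O}$. Moreover, you plan to absorb the translation modes by modulation, yet the imaginary translation is exactly the mode that must survive. You need, for example, data $\phi_k+ic$ with $c\neq0$; this is an exact solution of the uncharged complex system and is regular at the axis. You then need an estimate $|c_\infty-c|<|c|$ from the smallness of the $q_0$-forcing. Asking the corrections to $Q$ to be $o((-u)^2)$ is the wrong condition, since they shift $c_\infty$ by a constant.

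Second, the interior fixed point cannot be run at threshold regularity with \cite{zhenglinear} as the linear input. The linearized decay estimates require data strictly above threshold. The paper reaches the threshold only through the exact scaling invariance of the full nonlinear system, which the charge breaks, since rescaling maps $q_0\mapsto q_0/a$. Your data perturbation is smooth, so you can work above threshold, but then the $q_0$-sources must themselves be above threshold at $v=0$. Suppose $A_v=0$ and $A_u$ is normalized by regularity at the axis. Then $A_u(u,0)\neq0$, since up to sign it is the $v$-integral of $\Omega^2Q/(2r^2)$ over the interior slice. Consequently the source $iq_0A_u\partial_v(r\phi_k)$ is only $C^{0,k^2p_k}$. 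Every Duhamel slice then carries threshold data and excites the phase mode $i\phi_k$, which grows linearly in $s$. You must either fix the residual gauge by $A_u(u,0)=0$ or carry out a genuine time-dependent modulation of phase (and scale) with $O(e^{-s})$ rates.

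If your interior argument closed as sketched, it would also give interior stability of the charged solution under general above-threshold perturbations. The paper explicitly leaves that open, which is a sign that this step is where the real work lies.

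Finally, infinite blue-shift does not imply incompleteness of $\mathcal{I}^+$; that must come from the exterior construction up to $\{u=0\}$. Also, in the renormalized gauge $\partial_vr/\Omega^2\approx1$ on $\{v=0\}$. The correct blue-shift statement is $\partial_vr\to0$, with $L'=\Omega^{-2}\partial_u$.
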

The exterior stability of this naked singularity spacetime already follows from the construction. For general high-regularity perturbations, since the approach in the current paper crucially relies on the exact $k$-self-similarity, it remains {open} whether the method developed here can be applied to establish stability. We leave this question for future work.

\subsubsection{Self-similar blowup in nonlinear wave equations}
\label{sec:self-similar blowup}
A common feature of all the naked singularity spacetimes mentioned in Section~\ref{sec: naked singularity construction} is self-similarity, in the sense that the spacetime satisfies an upper bound predicted by the (twisted) self-similarity. More generally, the study of self-similar solutions to nonlinear wave equations is often viewed as a first step toward understanding their behavior in the large-data regime. Over the past decades, the construction of self-similar solutions to various nonlinear wave equations, along with the analysis of their stability and instability properties, has been extensively studied. We briefly summarize some of the progress below.
\paragraph{Self-similar blowup solutions to focusing semi-linear wave equations $\Box u = -\vert u\vert^{p-1}u$} The focusing semi-linear wave equations in $\mathbb{R}^{1+d}$ with $d\geq3$ take the form of
\begin{equation}
    \Box u: = -\partial_{t}^{2}u+\Delta u = -\vert u\vert^{p-1}u,\label{eq: semilinear wave equation}
\end{equation}
The self-similar solution $u(t,r)$ to the above focusing semi-linear wave equations takes the form of \begin{equation}
    u(t,r) = (T-t)^{-\frac{2}{p-1}}\varphi(\frac{r}{T-t}).\label{intro: eq: self-similar form}
\end{equation}
A spatially homogeneous self-similar blowup solution for the radial focusing semi-linear wave equation~\eqref{eq: semilinear wave equation} is: \begin{equation*}
    u_{T} = c_{p}(T-t)^{-\frac{2}{p-1}},
\end{equation*}
which blows up at $t = T$. This blowup solution has been proved to be stable for all dimensions and all values of $p>1$ \cite{ostermann2024stable,donninger2012stable,donninger2014stable,donninger2017stable,merle2003determination,merle2005determination,merle2007existence,merle2015stability,merle2016dynamics,donninger2016blowup,donninger2010nonlinear,chatzikaleas2019stable}.

Besides the aforementioned spatially homogeneous self-similar blowup solutions, countably many self-similar solutions with non-trivial $\varphi$ in~\eqref{intro: eq: self-similar form} have also been constructed \cite{bizon2007self,bizon2010self,dai2021self,kycia2011self}; see \cite{csobo2024blowup,chen2024co} for the progress on co-dimension one stability results on some of these solutions.

\paragraph{Self-similar blowup solutions to the supercritical wave map equations}
The supercritical wave map equations in $\mathbb{R}^{1+d}$ with $d\geq 3$ take the form of: \begin{equation}
    -\partial_{t}^{2}U+\Delta U = U\left((\partial_{t}U)^{2}-\left\vert\nabla_{x}U\right\vert^{2}\right),\label{eq: wave map equation}
\end{equation}
where $U:\mathbb{R}^{1+d}\rightarrow S^{d}$. Under the co-rotational setting and the spherical symmetry, the solution $U$ can be written as \begin{equation*}
    U(t,x) = \begin{bmatrix}
        \cos(u(t,r))\frac{x}{\vert x\vert}\\
        \sin(u(t,r))
    \end{bmatrix}.
\end{equation*}
Then the wave map equations will be reduced to the following equation for $u$: \begin{equation*}
    \partial_{t}^{2}u-\partial_{r}^{2}u-\frac{d-1}{r}\partial_{r}u +\frac{2\sin(2u)}{r^{2}}=0, 
\end{equation*}
which has an explicit self-similar blowup solution: \begin{equation*}
    u^{T} = 2\arctan\left(\frac{r}{\sqrt{d-2}(T-t)}\right).
\end{equation*}
A crucial step towards proving the stability of this co-rotational self-similar solution is to prove its mode stability; see \cite{donninger2011stable,donninger2012stable,costin2016proof,costin2017mode,chatzikaleas2017blowup,donninger2024stable,glogic2025globally,donninger2023optimal,donninger2025optimal,weissenbacher2025mode} for the stability results within the co-rotational perturbations and \cite{donninger2026blowup,donninger2026mode} for the most recent stability results outside of symmetry.

There are also many interesting works on the construction and the stability of self-similar blowup solutions to the nonlinear Schrödinger equations and Yang--Mills equations. We do not discuss these works in detail and only briefly mention them here. For the Yang--Mills equations, readers can refer to \cite{bizon2015generic,cazenave1998harmonic,bizon2006formation,bizon2000equivariant} for the construction of self-similar blowup solutions; \cite{donninger2014stable, costin2016stability,donninger2023globally,donninger2024stable} for the stability of those self-similar blowup solutions. For the nonlinear Schrödinger equations, readers can refer to \cite{donninger2024self} for the construction of self-similar blowup solutions with cubic nonlinearity and \cite{donninger2025self} for their stability.

Lastly, related to the critical collapse discussed in the previous section, we want to mention that a discretely self-similar solution to a wave maps type equation has been constructed, and its stability has been proved \cite{glogic2025existence}.

\subsubsection{Type II blowup in nonlinear wave equations}
While not directly related to the present work on the stability of (twisted) self-similar solutions to the Einstein-scalar field equations, we briefly highlight several notable results concerning Type II blow-up solutions to nonlinear wave equations. Roughly speaking, a Type II blow-up refers to singularity formation that does not occur at the self-similar (Type I) blow-up rate predicted by the scaling of the equation.

For the critical wave map equations (\eqref{eq: wave map equation} with $d = 2$), it is well-known \cite{shatah1994cauchy} that there is no self-similar blow-up solution in the co-rotational (equivariant) setting. However, two different kinds of Type II blow-up solutions under the co-rotational symmetry were constructed \cite{kriegerschlagtat_WM,raphael2012stable}. Their stability has been proved in \cite{kriegermiao_stabwavemap_corot,kriegermiaoschlag_stabwavemap_full} and \cite{raphael2012stable}, respectively. Under the $k$-equivariant setting, the work \cite{IgorSterb} constructed Type II blow-up solutions and showed their stability.

\subsection{Acknowledgments}
\label{intro:subsec:acknowledgments}
The author acknowledges the fundamental contributions from Jaydeep Singh, including his patient guidance, valuable discussions, and friendship. The author would like to thank his advisor, Maxime Van de Moortel, for his encouragement, kind support, careful reading of the manuscript, and many useful pieces of advice. The author would also like to thank Yakov Shlapentokh-Rothman, Igor Rodnianski, Mihalis Dafermos, and Jonathan Luk for their interest in this work and enlightening conversations. The author gratefully acknowledges the support
from the NSF Grant DMS-2247376.

\section{Preliminary}
\label{prelims:sec}

\subsection{Reduction of the Einstein-scalar field system under the spherical symmetry}
Recall that the Einstein-scalar field equations take the form of \begin{align}
Ric(g)_{\mu\nu}&= \partial_{\mu}\phi\partial_{\nu}\phi,\label{ric equation}\\
\Box_{g}\phi&=0,\label{scalar field equation}
\end{align}
where $g$ is the metric of the spacetime $\mathcal{M}$ and $\phi$ is the scalar field. We say that the spacetime $(\mathcal{M},g)$ is spherically symmetric if the metric $g$ can induce a Lorentzian metric $g_{\mathcal{Q}}$ on the quotient manifold $\mathcal{Q}: = \mathcal{M}/SO(3)$ by the projection $p:\mathcal{M}\rightarrow\mathcal{Q}$. Let the boundary $\Gamma\subset \mathcal{Q}$ be the set of fixed points of the $SO(3)$ action, referred to as the center (axis) of $\mathcal{Q}$. Assume that on the quotient manifold $\mathcal{Q}$, there exist the so-called global double-null coordinates $(u,v)$ such that $g_{\mathcal{Q}}$ takes the form of \begin{equation*}
    g_{\mathcal{Q}} = -\frac{1}{2}\Omega^{2}(u,v)du\otimes dv-\frac{1}{2}\Omega^{2}(u,v)dv\otimes du,
\end{equation*}
where $\Omega^{2}$ is called the lapse function. For each point $q$ on $\mathcal{Q}$, we can define the associated area radius function by \begin{equation*}
    r(q): = \sqrt{\frac{Area(p^{-1}(q))}{4\pi}}.
\end{equation*}
Then, we can express the metric $g$ on $\mathcal{M}$ by
\begin{equation*}
g = -\frac{1}{2}\Omega^{2}du\otimes dv-\frac{1}{2}\Omega^{2}dv\otimes du+r^{2}(u,v)d\sigma^{2},
\end{equation*}
where $d\sigma^{2} = d\theta^{2}+\sin^{2}(\theta)d\varphi^{2}$ is the standard metric on a unit sphere. The coordinate system $(u,v,\theta,\varphi)$ is referred to as the double-null coordinates here.

Under the spherically symmetric assumption, the Einstein-scalar field equations~\eqref{ric equation}--\eqref{scalar field equation} can be reduced to\begin{align}
\partial_{u}\left(\frac{r_{u}}{\Omega^{2}}\right) &= -\frac{r}{\Omega^{2}}(\partial_{u}\phi)^{2},\label{eq:u-Ray equation}\\
\partial_{v}\left(\frac{r_{v}}{\Omega^{2}}\right)& = -\frac{r}{\Omega^{2}}(\partial_{v}\phi)^{2},\label{eq:v-Ray equation}\\
r\partial_{u}\partial_{v}r+\partial_{u}r\partial_{v}r &= -\frac{1}{4}\Omega^{2},\label{eq:wave equation for r}\\
\partial_{u}\partial_{v}\log(\Omega^{2})& = \frac{\Omega^{2}}{2r^{2}}+\frac{2r_{u}r_{v}}{r^{2}}-2\partial_{u}\phi\partial_{v}\phi,\label{eq:wave equation for omega}\\
r\partial_{u}\partial_{v}\phi +r_{u}\partial_{v}\phi+r_{v}\partial_{u}\phi &= 0\label{eq:wave equation for phi}.  
\end{align} 
The transport equations~\eqref{eq:u-Ray equation}--\eqref{eq:v-Ray equation} are called the Raychaudhuri equations, and the equations~\eqref{eq:wave equation for r}--\eqref{eq:wave equation for phi} are the wave equations for $r$, $\Omega^{2}$, and $\phi$, respectively. Note that by a simple algebraic computation, one can deduce the wave equation~\eqref{eq:wave equation for omega} for $\Omega$ from the equations~\eqref{eq:u-Ray equation}--\eqref{eq:v-Ray equation}, \eqref{eq:wave equation for r} and \eqref{eq:wave equation for phi}. We will not use the wave equation for $\Omega^{2}$ in the later study.

It is often useful to work with the Hawking mass $m$ and the mass ratio $\mu$, which are defined as follows:\begin{align}
m: &= \frac{r}{2}\left(1-g(\nabla r,\nabla r)\right),\\
\mu:&= \frac{2m}{r}.
\end{align}
It is straightforward to derive the transport equations for $m$ and $\mu$ from~\eqref{eq:u-Ray equation}--\eqref{eq:wave equation for phi}:\begin{align}
\partial_{u}m +\frac{r}{r_{u}}\left(\partial_{u}\phi\right)^{2}m &= \frac{1}{2}\frac{r^{2}}{r_{u}}\left(\partial_{u}\phi\right)^{2},\label{eq:u-equ for m}\\
\partial_{v}m+\frac{r}{r_{v}}\left(\partial_{v}\phi\right)^{2}m &= \frac{1}{2}\frac{r^{2}}{r_{v}}\left(\partial_{v}\phi\right)^{2},\label{eq:v-equ for m}\\
\partial_{u}\mu+\left(\frac{r_{u}}{r}+\frac{r}{r_{u}}\left(\partial_{u}\phi\right)^{2}\right)\mu&=\frac{r}{r_{u}}\left(\partial_{u}\phi\right)^{2},\label{eq:u-equ for mu}\\
\partial_{v}\mu+\left(\frac{r_{v}}{r}+\frac{r}{r_{v}}\left(\partial_{v}\phi\right)^{2}\right)\mu&=\frac{r}{r_{v}}\left(\partial_{v}\phi\right)^{2}.\label{eq:v-equ for mu}
\end{align}
For the Einstein-scalar field equations, instead of solving for $(g,\phi)$, it is equivalent to consider $(r,\phi,m)$. We call this the $(r,\phi,m)$-formulation of the Einstein-scalar field equations. Under this formulation, it suffices to consider equations~\eqref{eq:wave equation for r}, \eqref{eq:wave equation for phi}, and \eqref{eq:u-equ for m}--\eqref{eq:v-equ for m}.

\subsection{Christodoulou's naked singularity spacetime}
\label{sec: pre on k naked singularity sapcetime}
In this section, we briefly review the $k$-self-similar naked singularity spacetime solution to~\eqref{ric equation}--\eqref{scalar field equation}, originally constructed by Christodoulou \cite{chris94} under Bondi coordinates. For the convenience of the later discussion, we present the construction and its key properties in double-null coordinates. Further details can be found in Appendix B of \cite{singh1}.

Recall that under the double-null coordinates $(\hat{u},\hat{v},\theta,\varphi)$ with \begin{equation*}
    g = -\frac{1}{2}\Omega^{2}d\hat{u}\otimes d\hat{v}-\frac{1}{2}\Omega^{2}d\hat{v}\otimes d\hat{u}+r^{2}d\sigma^{2},
\end{equation*}
the Einstein-scalar field equations~\eqref{ric equation}--\eqref{scalar field equation} have the scaling and translation symmetry. More precisely, if $\left(r,\Omega^{2},\phi\right)$ is a solution to~\eqref{eq:u-Ray equation}--\eqref{eq:wave equation for phi}, then $\left(r_{a},\Omega_{a}^{2},\phi_{a,b}\right)$\begin{equation*}
    r_{a}(\hat{u},\hat{v}): = ar\left(\frac{\hat{u}}{a},\frac{\hat{v}}{a}\right),\quad \Omega_{a}^{2}(u,v): = \Omega^{2}\left(\frac{\hat{u}}{a},\frac{\hat{v}}{a}\right),\quad \phi_{a,b}(\hat{u},\hat{v}): = \phi\left(\frac{\hat{u}}{a},\frac{\hat{v}}{a}\right)+b
\end{equation*}
 is also a solution to~\eqref{ric equation}--\eqref{scalar field equation}. Taking $b = -k\log a$ and assuming the solution spacetime $(\mathcal{M},g,\phi)$ is invariant under scaling and translation \begin{equation*}
    r(\hat{u},\hat{v}) = ar\left(\frac{\hat{u}}{a},\frac{\hat{v}}{a}\right),\quad\Omega^{2}(\hat{u},\hat{v}) = \Omega^{2}\left(\frac{\hat{u}}{a},\frac{\hat{v}}{a}\right),\quad \phi(\hat{u},\hat{v}) = \phi\left(\frac{\hat{u}}{a},\frac{\hat{v}}{a}\right)+k\log a,
\end{equation*}
then the resulting solution $\left(r,\Omega^{2},\phi\right)$ to~\eqref{ric equation}--\eqref{scalar field equation}, known as the $k$-self-similar solution, takes the form of \begin{equation}
    r\left(\hat{u},\hat{v}\right) = \left(-\hat{u}\right)\mr{r}\left(\hat{z}\right),\quad \Omega^{2}\left(\hat{u},\hat{v}\right) = \mr{\Omega}^{2}\left(\hat{z}\right),\quad \phi(\hat{u},\hat{v}) = \mr{\phi}\left(\hat{z}\right)-k\log(-\hat{u}),\label{eq: k self similar form of the equations}
\end{equation}
where $\hat{z}$ is defined to be \begin{equation*}
    \hat{z}:= -\frac{\hat{v}}{\hat{u}}.
\end{equation*}

Alternatively and more geometrically, for the $k$-self-similarity, we have the following definition:
\begin{definition}[\textup{\cite{singh2}}]
Fix a parameter $k\in\mathbb{R}$. A spherically symmetric solution $\left(\mathcal{M},g,\phi\right)$ to the Einstein scalar field system is called $k$-self-similar if there exists a conformal Killing vector field $K$ such that \begin{equation*}
    \mathcal{L}_{K}g = 2g,\quad \mathcal{L}_{K}\phi = -k.
\end{equation*}
We define the scaling origin $\mathcal{O}$ as the point where $K$ vanishes.
\end{definition}
We can make the gauge choice such that the conformal Killing vector field $K$ takes the form of \begin{equation*}
    K = \hat{u}\partial_{\hat{u}}+\hat{v}\partial_{\hat{v}}.
\end{equation*}
Then, it is not hard to see that the coordinate $\hat{z}$ can be used to parametrize the integral curve of $K$. Moreover, the scaling origin $\mathcal{O}$ corresponds to $(0,0)$ under such a gauge choice. In addition, we can further fix the gauge freedom such that the center $\Gamma$ corresponds to \begin{equation*}
    \Gamma = \{\hat{z} = -1\}.
\end{equation*}
Then the region under consideration will become \begin{equation*}
    \{u<0,\hat{z}\geq-1\}.
\end{equation*}
The following proposition states a direct consequence of the $k$-self-similarity, in the spirit of~\eqref{eq: k self similar form of the equations}.
\begin{proposition}
    Assume that the spacetime solution $\left(\mathcal{M},g_{k},\phi_{k}\right)$ is a $k$-self-similar solution to~\eqref{ric equation}--\eqref{scalar field equation}, then $(r_{k},\Omega_{k}^{2},\phi_{k})$ takes the form of \begin{equation*}
        r_{k} = \left(-\hat{u}\right)\mr{r}(\hat{z}),\quad \Omega_{k}^{2} = \mr{\Omega}^{2}(\hat{z}),\quad \phi_{k} = \mr{\phi}(\hat{z})-k\log\left(-\hat{u}\right).
    \end{equation*}
    Denoted by $\nu_{k}$ and $\lambda_{k}$ the $\hat{u}$-derivative and $\hat{v}$-derivative of $r_{k}$ respectively, we have \begin{equation*}
        \nu_{k}(\hat{u},\hat{v})= -\mr{r}(\hat{z})+\hat{z}\frac{d\mr{r}}{d\hat{z}} =:\mr{\nu}(\hat{z}),\quad \lambda_{k}(\hat{u},\hat{v}) = \frac{d\mr{r}}{d\hat{z}}(\hat{z}) =:\mr{\lambda}(\hat{z}).
    \end{equation*}
    Moreover, for the Hawking mass $m_{k}$ and the mass ratio $\mu_{k}$, we have \begin{equation*}
        m_{k}(\hat{u},\hat{v}) = (-\hat{u})\mr{m}(\hat{z}),\quad \mu_{k}(\hat{u},\hat{v}) = \mr{\mu}(\hat{z}).
    \end{equation*}
    \label{prop: consequence of the k self similarity}
\end{proposition}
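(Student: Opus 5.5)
We need to prove the Proposition: under $k$-self-similarity in the gauge $K=\hat u\partial_{\hat u}+\hat v\partial_{\hat v}$, the functions take the stated form, and the formulas for $\nu_k,\lambda_k,m_k,\mu_k$ hold.

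The plan is to work directly from the invariance under scaling and translation. First, use $\mathcal{L}_K g=2g$ in the given double-null gauge. Since $K$ is tangent to the quotient $\mathcal{Q}$, this splits into the $\mathcal{Q}$-part and the sphere part. The sphere part, $\mathcal{L}_K(r^2 d\sigma^2)=2r^2d\sigma^2$, gives $Kr=r$, that is $(\hat u\partial_{\hat u}+\hat v\partial_{\hat v})r=r$. The $\mathcal{Q}$-part gives $K(\Omega^2)\,d\hat u\,d\hat v+\Omega^2\,\mathcal{L}_K(d\hat u\, d\hat v)=2\Omega^2 d\hat u\, d\hat v$. Since $\mathcal{L}_K d\hat u=d\hat u$ and $\mathcal{L}_K d\hat v=d\hat v$, this forces $K(\Omega^2)=0$. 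Finally, $K\phi=-k$.

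Next, integrate these first-order PDEs along the integral curves of $K$, namely the rays $\hat z=\mathrm{const}$ parametrized by $-\hat u$ on $\{\hat u<0\}$. Change variables to $(s,\hat z)=(-\hat u,-\hat v/\hat u)$, in which $K=s\partial_s$ because $K\hat z=0$. Then:
\begin{itemize}
\item $s\partial_s r=r$ gives $r=s\,\mr r(\hat z)$;
\item $s\partial_s\Omega^2=0$ gives $\Omega^2=\mr\Omega^2(\hat z)$;
\item $s\partial_s\phi=-k$ gives $\phi=\mr\phi(\hat z)-k\log s$.
\end{itemize}
Here $\mr r,\mr\Omega,\mr\phi$ are the restrictions to $\hat u=-1$.

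The derivative formulas then follow from the chain rule with $\partial_{\hat u}\hat z=\hat v/\hat u^2=-\hat z/\hat u$ and $\partial_{\hat v}\hat z=-1/\hat u$. We get
\begin{equation*}
\partial_{\hat u}r=-\mr r+(-\hat u)\mr r'\cdot\left(-\frac{\hat z}{\hat u}\right)=-\mr r+\hat z\mr r',\qquad \partial_{\hat v}r=(-\hat u)\mr r'\cdot\left(-\frac{1}{\hat u}\right)=\mr r'.
\end{equation*}

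For the mass, use $g(\nabla r,\nabla r)=-4\Omega^{-2}\partial_{\hat u}r\,\partial_{\hat v}r$. Every factor here depends only on $\hat z$, so $1-g(\nabla r,\nabla r)$ is a function of $\hat z$. Hence $m=\frac r2(\cdots)=(-\hat u)\mr m(\hat z)$ and $\mu=2m/r=\mr\mu(\hat z)$.

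The only mildly delicate point is regularity. The background has only finite Hölder regularity, which is $C^{1,\alpha}$ in $\hat z$ for $\Omega^2,\phi$ and better for $r$. This is enough for the Lie derivatives and the chain rule above, since only first derivatives are used.

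The scaling-invariance formulation stated earlier in the paper gives the same conclusion directly. Choose $a=-\hat u$; then $r(\hat u,\hat v)=(-\hat u)\,r(-1,\hat z)$, and similarly $\Omega^2(\hat u,\hat v)=\Omega^2(-1,\hat z)$ and $\phi(\hat u,\hat v)=\phi(-1,\hat z)-k\log(-\hat u)$. The signs are consistent with $K\phi=-k$. I would present this substitution as the main argument and the Lie-derivative computation as the geometric justification.
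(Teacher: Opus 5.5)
Your proposal is correct and is essentially the paper's argument: the self-similar form comes from setting $a=-\hat u$ in the scaling/translation invariance, and the formulas for $\nu_k$, $\lambda_k$, $m_k$, $\mu_k$ follow from the chain rule and $g(\nabla r,\nabla r)=-4\Omega^{-2}\partial_{\hat u}r\,\partial_{\hat v}r$ as you wrote. Your sign $\phi(\hat u,\hat v)=\phi(\hat u/a,\hat v/a)-k\log a$ (that is, $b=-k\log a$) is the right one and agrees with $K\phi=-k$; the paper's displayed invariance relation with $+k\log a$ is a sign slip.

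One correction to your regularity remark. In the hatted gauge the background is not $C^{1,\alpha}$ at $\hat z=0$: $\mathring{\Omega}^{2}$, $d\mathring{\phi}/d\hat z$ and $\partial_{\hat v}r$ blow up like $|\hat z|^{-k^{2}}$, and the H\"older statements refer to the renormalized gauge. So the integration along the rays $\hat z=\mathrm{const}$ should be stated on $\{\hat z\neq 0\}$, where everything is smooth and which these rays never leave.
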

Substituting the form in Proposition~\ref{prop: consequence of the k self similarity} into~\eqref{eq:u-Ray equation}--\eqref{eq:wave equation for phi}, one can derive the ODE system for $(\mr{r},\mr{\Omega},\mr{\phi})$. In the seminal work of Christodoulou \cite{chris94}, a careful study of the ODE system has been conducted, and a naked singularity spacetime to the Einstein-scalar field equations was constructed within the range $k^{2}\in\left(0,\frac{1}{3}\right)$. Under such a construction, the scaling origin $\mathcal{O}$ corresponds to the naked singularity. The past light cone of the naked singularity $\{\hat{v} = 0\}$ is called the singular horizon. The region \begin{equation*}
    \mathcal{Q}^{(ex)}: = \{\hat{u}<0,\hat{z}\geq0\}
\end{equation*} 
is called the naked singularity spacetime exterior; the region \begin{equation*}
    \mathcal{Q}^{(in)}: = \{\hat{u}<0,-1\leq\hat{z}\leq0\}
\end{equation*}
is called the naked singularity spacetime interior. We denote by $\mathcal{Q}: =\mathcal{Q}^{(in)}\cup\mathcal{Q}^{(ex)}$ the whole region of the naked singularity spacetime. We denote the hypersurface $\{u = const\}$ by $\Sigma_{u_{0}}$. Since we particularly care about the interior region in this paper, we denote the hypersurface $\{u=const\}\cap \mathcal{Q}^{(in)}$ by $\Sigma_{u_{0}}^{(in)}$. We will not pursue the detailed construction here. In the following few subsections, we will list some useful properties of the naked singularity spacetime.

\subsubsection{Renormalized gauge}
\label{sec: renormalized gauge}
First, although the scaling conformal Killing vector field $K$ takes a simple form $K = \hat{u}\partial_{\hat{u}}+\hat{v}\partial_{\hat{v}}$ under the double-null coordinates $\left(\hat{u},\hat{v},\theta,\varphi\right)$, the ODE analysis shows that various quantities arising from the regular data at the center $\Gamma$ will become singular when approaching the singular horizon $\{\hat{v} = 0\}$. In particular, $\frac{d\mr{\phi}}{d\hat{z}}$, $\mr{\Omega}^{2}(\hat{z})$, and $\partial_{\hat{v}}r(\hat{z})$ will behave like $\vert\hat{z}\vert^{-k^{2}}$ when $\hat{z}\rightarrow 0$. Hence, the spacetime metric is not well-defined on the singular horizon. However, such a behavior is only a coordinate singularity, and can be compensated for by a suitable change of the coordinates.

Motivated by extending the spacetime geometry continuously to the whole region $\mathcal{Q}$, we define \begin{equation*}
    u = \hat{u},\quad (-v) = \left(-\hat{v}\right)^{q_{k}},\quad (-z) = \left(-\hat{z}\right)^{q_{k}},
\end{equation*}
where $q_{k}: = 1-k^{2}$ will be frequently used in the later analysis. Let $p_{k}: = \frac{1}{q_{k}}$. Then, we have \begin{equation*}
    \partial_{\hat{u}} = \partial_{u},\quad \partial_{\hat{v}} = q_{k}\left(-v\right)^{-k^{2}p_{k}}\partial_{v}.
\end{equation*}
Proposition \ref{prop: consequence of the k self similarity} can be easily adapted to this renormalized gauge. Without any confusion, and to avoid the tedious notation, we reuse the notations $\nu_{k}$ and $\lambda_{k}$ introduced in Proposition \ref{prop: consequence of the k self similarity} to denote the derivatives $\partial_{u}r$ and $\partial_{v}r$, respectively. To make things more precise, we have the following proposition: \begin{proposition}
    Assume that the spacetime solution $\left(\mathcal{M},g_{k},\phi_{k}\right)$ is a $k$-self-similar solution to \eqref{ric equation}-\eqref{scalar field equation} under the renormalized gauge $(u,v,\theta,\varphi)$ with \begin{equation*}
        g_{k}=-\frac{1}{2}\Omega_{k}^{2}du\otimes dv-\frac{1}{2}\Omega_{k}^{2}dv\otimes du+r^{2}d\sigma^{2}.
    \end{equation*}
    Then $\left(r_{k},\Omega_{k}^{2},\phi_{k}\right)$ takes the form of \begin{equation*}
        r_{k}(u,v) = (-u)\mr{r}(z),\quad \Omega_{k}^{2}(u,v) = (-u)^{k^{2}}\mr{\Omega}^{2}(z),\quad \phi_{k}(u,v) = \mr{\phi}(z)-k\log(-u).
    \end{equation*}
    Moreover, for $\nu_{k},\lambda_{k},m_{k},\mu_{k}$, we have \begin{align*}
        \nu_{k}(u,v) = \mr{\nu}(z) = -\mr{r}(z)+q_{k}z\frac{d\mr{r}}{dz},\quad \lambda_{k}(u,v) =p_{k}(-v)^{p_{k}k^{2}}\mr{\lambda}(z) = (-u)^{k^{2}}\frac{d\mr{r}}{dz}. 
    \end{align*}
    Since the Hawking mass $m_{k}$ and the mass ratio $\mu_{k}$ are gauge-invariant quantities, we have \begin{equation*}
        m_{k}(u,v) = (-u)\mr{m}(z),\quad \mu_{k}(u,v) = \mr{\mu}(z).
    \end{equation*}
\end{proposition}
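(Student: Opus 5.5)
Under the change of variables $(-v)=(-\hat v)^{q_k}$, $u=\hat u$, we have $-\hat v=(-v)^{p_k}$ and $-\hat z=(-z)^{p_k}$. The plan is to transform the formulas of Proposition~\ref{prop: consequence of the k self similarity} directly. Working in the interior $\hat v<0$ (the exterior is handled the same way with $v=\hat v^{q_k}$ or the analogous convention), we compute $-\hat z=\hat v/\hat u=(-v)^{p_k}/(-u)$. Hence
\[
(-z)=(-\hat z)^{q_k}=\frac{-v}{(-u)^{q_k}},
\]
which matches \eqref{intro:eq:3}. The area radius is a scalar, so $r_k=(-u)\mr r(\hat z)$. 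We then rename $\mr r$ as a function of $z$, abusing notation, and obtain $r_k=(-u)\mr r(z)$. The scalar field is also a scalar, so $\phi_k=\mr\phi(z)-k\log(-u)$ immediately. The Hawking mass and $\mu$ are gauge invariant, so $m_k=(-u)\mr m$ and $\mu_k=\mr\mu$ carry over.

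For the lapse, the metric components transform via $d\hat v=p_k(-v)^{p_k-1}dv$. This gives $\Omega_k^2(u,v)=\hat\Omega^2\, p_k(-v)^{p_k-1}=\hat\Omega^2\,p_k(-v)^{k^2p_k}$. We then write $(-v)^{k^2p_k}=(-u)^{k^2}(-z)^{k^2p_k}$, using $(-v)=(-z)(-u)^{q_k}$ and $q_kp_kk^2=k^2$. Therefore
\[
\Omega_k^2=(-u)^{k^2}\bigl[p_k(-z)^{k^2p_k}\mr\Omega^2(\hat z)\bigr],
\]
and we define the bracket to be the new $\mr\Omega^2(z)$. The previously noted behavior $\mr\Omega^2(\hat z)\sim|\hat z|^{-k^2}=|z|^{-k^2p_k}$ shows that this bracket is bounded and continuous up to $z=0$. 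This is the one point requiring input from the ODE analysis of \cite{chris94}, but it is only needed for regularity, not for the form of the formulas.

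For the derivatives, we apply the chain rule to $r_k=(-u)\mr r(z)$ with $z=v/(-u)^{q_k}$. We have $\partial_u z=q_kz/(-u)$, so $\nu_k=-\mr r+q_kz\,\mr r'(z)$. We also have $\partial_v z=(-u)^{-q_k}$, so $\lambda_k=(-u)^{1-q_k}\mr r'(z)=(-u)^{k^2}\mr r'(z)$. For the alternative form, we use $\partial_v=p_k(-v)^{k^2p_k}\partial_{\hat v}$, which is the inverse of the stated relation $\partial_{\hat v}=q_k(-v)^{-k^2p_k}\partial_v$. This gives $\lambda_k=p_k(-v)^{p_kk^2}\mr\lambda(\hat z)$, with $\mr\lambda=d\mr r/d\hat z$ as before.

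The only mildly delicate step is keeping exponent bookkeeping consistent, namely $p_k-1=k^2p_k$ and $q_kp_k=1$. There is no real obstacle: the statement is a direct change of variables.
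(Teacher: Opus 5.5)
Your proposal is correct and takes the same route the paper indicates: the paper only remarks that Proposition~\ref{prop: consequence of the k self similarity} ``can be easily adapted'' to the renormalized gauge, and your direct change of variables $-\hat v=(-v)^{p_k}$ does exactly that, absorbing the Jacobian factor $p_k(-z)^{k^2p_k}$ into the renamed $\mr\Omega^2(z)$ and using the chain rule for $\nu_k$ and $\lambda_k$. Your exponent bookkeeping ($p_k-1=k^2p_k$, $q_kp_k=1$) is right, and your two expressions for $\lambda_k$ agree because $d\hat z/dz=p_k(-z)^{k^2p_k}$.
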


An important feature of the naked singularity spacetime $\left(\mathcal{M},g,\phi\right)$ is that the metric $g$ and the scalar field $\phi$ are of low regularity. Using the subscript $k$ to denote the $k$-self-similar naked singularity spacetime, the following proposition collects the regularity properties and some useful quantitative behaviors of the spacetime geometry and the scalar field, the proof of which can be found in Appendix B of \cite{singh1}. \textbf{Since the Einstein-scalar field equations are translation and scaling invariant, to fix the gauge freedom of $(r_{k},\phi_{k},m_{p})$, throughout this paper, we assume $r(u,0) = (-u)$ and $\phi_{k}(u,0) = -k\log(-u)$.}
\begin{proposition}[\textup{\cite{chris94,singh2}}]
\label{prop: estimate on the exact k self similar spacetime}
Under the renormalized double-null coordinates $\left(u,v,\theta,\varphi\right)$, for $0<k^{2}<\frac{1}{3}$, the $k$-self-similar naked singularity spacetime $\left(\mathcal{M},g_{k},\phi_{k}\right)$ constructed in \cite{chris94} are of finite regularity nature. More precisely, we have \begin{enumerate}
    \item[(1)] The spacetime $\left(\mathcal{M},g_{k},\phi_{k}\right)$ is smooth in the region $$\{-1\leq u<0,v<0\}\cup\{-1\leq u<0,v>0\}.$$ In other words, the $k$-self-similar spacetime is smooth away from the singular horizon.
    \item[(2)] The low regularity nature of the spacetime only arises at the singular horizon $\{v = 0\}$. In the neighborhood $\{-1\leq u<0,\ -1\leq z\leq 1\}$ of the singular horizon, we have \begin{equation*}
        \mr{m},\ \mr{\Omega},\ \mr{\phi}\in C^{1,p_{k}k^{2}}\left([-1,1]\right),\quad \mr{r}\in C^{2,p_{k}k^{2}}([-1,1]),
    \end{equation*}
    with the estimates \begin{equation*}
        \left\vert\vert z\vert^{j-1-p_{k}k^{2}}\partial_{z}^{j+1}\mr{r}\right\vert+\left\vert\vert z\vert^{j-1-p_{k}k^{2}}\partial_{z}^{j}\mr{\Omega}\right\vert+\left\vert\vert z\vert^{j-1-p_{k}k^{2}}\partial_{z}^{j}\mr{m}\right\vert+\left\vert\vert z\vert^{j-1-p_{k}k^{2}}\partial_{z}^{j}\mr{\phi}\right\vert\lesssim_{k}1,\quad j\geq 2.
    \end{equation*}
    Moreover, the above finite regularity is sharp in the sense that \begin{align*}
        \frac{d\mr{\phi}}{dz}\sim\frac{1}{k},\quad 
        \frac{d^{2}\mr{\phi}}{dz^{2}}\sim \vert z\vert^{-1+p_{k}k^{2}},\quad z\rightarrow 0.
    \end{align*}
    \item[(3)] In the region $\mathcal{Q}^{(in)}$, we have the following estimates on the geometric quantities: \begin{align*}
        &\mr{r}(-1) = 0,\quad  \mr{r}\approx 1-\left\vert z\right\vert,\quad (-\nu_{k})\approx 1.\quad \lambda_{k}\approx (-u)^{k^{2}},\\&
        \Omega_{k}^{2}\approx (-u)^{k^{2}},\quad 0<\frac{\mr{m}}{\mr{r}^{3}}\lesssim k^{2},\quad 0<\frac{\mr{\mu}}{\mr{r}^{2}}\lesssim k^{2}.
    \end{align*}
    \item[(4)] In the region $\mathcal{Q}^{(in)}$, we have the following estimates on the quantities related to the scalar field: \begin{align*}
    \mr{\phi}^{\prime}(z = 0) = \frac{C}{k}, \quad \left\Vert\mr{\phi}^{\prime}\right\Vert_{L^{\infty}([-1,-\frac{1}{2}])}\lesssim k^{2},\quad 
    \left\Vert\vert z\vert^{\epsilon}\mr{\phi}^{\prime}\right\Vert_{L^{\infty}([-1,0])}\lesssim \frac{k}{\epsilon},\quad k\ll\epsilon<1.
    \end{align*}
    \item[(4)] As a consequence of Property $(4)$ and our gauge choice of $\phi_{k}$, we have \begin{equation*}
        \left\vert\mr{\phi}\right\vert\lesssim \frac{k}{\epsilon}\left\vert z\right\vert^{1-\epsilon}.
    \end{equation*}
    \item[(5)] For the higher order derivatives, we have \begin{align*}
    &\vert z\vert^{1+\epsilon}\left\vert \mr{\phi}^{\prime\prime}\right\vert\lesssim \frac{k^{2}}{\epsilon},\quad \vert z\vert^{\epsilon}\left\vert\frac{d^{2}\mr{r}}{dz^{2}}\right\vert+\vert z\vert^{1+\epsilon}\left\vert\frac{d^{3}\mr{r}}{dz^{3}} \right\vert\lesssim \frac{k^{2}}{\epsilon^{2}},\\&
    \left\vert z\right\vert^{\epsilon}\left\vert\frac{\partial_{z}\mr{\mu}}{(1+z)}\right\vert+\vert z\vert^{1+\epsilon}\left\vert\partial_{z}^{2}\mr{\mu}\right\vert\lesssim \frac{k^{2}}{\epsilon^{2}},\quad \vert z\vert^{\epsilon}\left\vert\frac{\partial_{z}\mr{m}}{(z+1)^{2}}\right\vert+\vert z\vert^{1+\epsilon}\left\vert\frac{\partial_{z}^{2}\mr{m}}{z+1}\right\vert\lesssim \frac{k^{2}}{\epsilon^{2}}.
    \end{align*}
\end{enumerate}
\end{proposition}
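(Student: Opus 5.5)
This is the \emph{interior} analysis of Christodoulou's self-similar ODE system in the renormalized gauge. The plan is to extract every listed bound from the autonomous reduction combined with smallness of $k$, treating the solution as a perturbation of Minkowski, for which $\mr{r}=1+z$ (in the normalization $r(u,0)=-u$), $\mr{\phi}\equiv0$, $\mr{m}\equiv0$, $\mr{\Omega}^{2}\equiv\mathrm{const}$.

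\emph{Reduction and central regularity.} Substitute $r_k=(-u)\mr{r}(z)$, $\Omega_k^2=(-u)^{k^2}\mr{\Omega}^2(z)$, $\phi_k=\mr{\phi}(z)-k\log(-u)$ into \eqref{eq:u-Ray equation}--\eqref{eq:wave equation for phi}. Using $\partial_u=(-u)^{-1}\bigl(-\,\cdot\,\text{scaling}+q_k z\partial_z\bigr)$ and $\partial_v=(-u)^{-q_k}\partial_z$, this gives a non-autonomous ODE system in $z\in[-1,1]$ for $(\mr{r},\mr{m},\mr{\phi})$. It is singular only at $z=-1$ (the center) and at $z=0$, where the coefficient of $\partial_z$ coming from the $u$-derivative degenerates; this is exactly the coordinate singularity that $(-v)=(-\hat v)^{q_k}$ was designed to absorb.

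Regularity at the center $z=-1$, together with $\mr{r}(-1)=0$ and $\mr{m}=O(\mr{r}^3)$, comes from the standard regular-center expansion. For smoothness away from $v=0$ (item (1)) I would simply invoke the ODE theory: the coefficients are smooth for $z\neq 0$, and $\mathcal{Q}$ is foliated by the curves $z=\text{const}$.

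\emph{Regularity at $z=0$ (item (2)).} Near $z=0$, linearize the $\mr{\phi}$-equation. The wave equation \eqref{eq:wave equation for phi} gives, schematically,
\[
q_k z\,\mr{\phi}''+\bigl(a(z)\bigr)\mr{\phi}'=b(z),
\]
with $a(0)/q_k=-p_kk^2+1-1$ in the sense that the indicial roots of the homogeneous problem are $0$ and $1+p_kk^2$. Hence, on each side of $0$,
\[
\mr{\phi}'=\text{smooth}+c_\pm|z|^{p_kk^2}\cdot\text{smooth}.
\]
This yields exactly $C^{1,p_kk^2}$ with the weighted bounds $|z|^{j-1-p_kk^2}|\partial_z^j\mr{\phi}|\lesssim 1$ for $j\ge2$, obtained by differentiating the Frobenius expansion. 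The same argument transfers to $\mr{\Omega}$ and $\mr{m}$ through the transport equations \eqref{eq:v-equ for m} and \eqref{eq:v-Ray equation}, and to $\mr{r}$ with one extra derivative via \eqref{eq:wave equation for r}.

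Sharpness requires $c_\pm\neq0$. This follows from Christodoulou's construction: the interior solution emanating from the regular center generically carries the singular branch, and the scale normalization gives $\mr{\phi}'(0)=C/k$ with $C\neq0$.

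\emph{Quantitative interior bounds (items (3)--(5)).} I would run a continuity argument on $[-1,0]$ with bootstrap assumptions
\[
|\mr{\phi}'|\le Ak|z|^{-\epsilon}/\epsilon,\qquad |\mr{\mu}|\le A k^2\mr{r}^2 .
\]
The Hawking-mass equation then gives $\mr{m}\lesssim k^2\mr{r}^3$, because the source is $\mr{r}^2(\partial\phi)^2$ and the $k\log$ term contributes $k^2$. This in turn yields $\mr{\mu}\lesssim k^2$, $1-\mr{\mu}\approx1$, and hence $\lambda_k$, $-\nu_k$, $\Omega_k^2$ comparable to their Minkowski values; in particular $\mr{r}\approx1-|z|$.

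Feeding these back into the $\phi$-equation written as a first-order ODE for $\mr{\phi}'$ with integrating factor $|z|^{-p_kk^2}$ gives
\[
|\mr{\phi}'(z)|\lesssim k^2+k\int_z^{0}|s|^{-1+p_kk^2}\,ds\;\cdot\;\text{(small)} .
\]
More precisely, the solution behaves like $\tfrac{C}{k}|z|^{p_kk^2}$ plus corrections. The trivial inequality $|z|^{\epsilon}\cdot\tfrac1k|z|^{k^2}$... Here I must be careful: the pure $\tfrac1k$ size at $z=0$ must be traded. I would use that the homogeneous singular solution grows like $|z|^{-\,\cdot}$ from the center side, so $\mr{\phi}'$ is $O(k^2)$ near $z=-1$ (by regularity and the $k$-source) and increases in a controlled way toward $z=0$; the weight $|z|^{\epsilon}$ with $\epsilon\gg k$ absorbs the growth, producing the constant $k/\epsilon$. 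Integrating gives item (4)'s bound $|\mr{\phi}|\lesssim \tfrac{k}{\epsilon}|z|^{1-\epsilon}$, using $\mr{\phi}(0)=0$. Differentiating the equations once and twice gives (5), each differentiation costing $|z|^{-1}$ and each time integral a factor $1/\epsilon$.

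\emph{Main obstacle.} The delicate step is reconciling $\mr{\phi}'(0)\sim1/k$ with the $O(k^2)$ smallness near the center: a naive Grönwall loses. I expect to handle it by explicitly solving the linear $\phi$-ODE with frozen Minkowski coefficients and treating $O(k^2)$ geometric deviations perturbatively.
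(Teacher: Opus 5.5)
The paper gives no argument for this proposition: it is quoted from \cite{chris94,singh2}, with the details deferred to Appendix B of \cite{singh1}. Re-deriving it directly from the self-similar ODEs is a legitimate alternative, and your architecture (regular-center expansion at $z=-1$, indicial analysis at $z=0$, continuity argument on $[-1,0]$ against Minkowski) is sensible. The gap is that the step you flag as the main obstacle is never carried out, and the mechanism you sketch for it is wrong.

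The missing observation is that the scalar field equation integrates explicitly. With $w=\mr{r}\,\mr{\phi}'$, substituting the ansatz into \eqref{eq:wave equation for phi} gives exactly
\[
q_k z\,w'+\Bigl(q_k z\,\frac{\mr{r}'}{\mr{r}}-k^2\Bigr)w=-k\,\mr{r}'.
\]
The integrating factor $\mr{r}\,|z|^{-p_kk^2}$ then gives, for the unique solution bounded at the center,
\[
\mr{r}^2(z)\,\mr{\phi}'(z)=p_k k\,|z|^{p_kk^2}\int_{-1}^{z}\mr{r}\,\mr{r}'(z')\,|z'|^{-1-p_kk^2}\,dz'.
\]
Hence $\mr{\phi}'>0$, $\mr{\phi}'(-1)=k/(2q_k)$, and $\mr{\phi}'(0)=\mr{r}'(0)/k$. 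The last value is forced by the degeneration of the equation at $z=0$, not by a scale normalization. Near $z=0$ one finds $w=\frac{\mr{r}'(0)}{k}\bigl(1-|z|^{p_kk^2}\bigr)+\dots$. So the value $1/k$ is carried by the particular solution, and the homogeneous branch $|z|^{p_kk^2}$ vanishes at $z=0$ rather than growing. The two cancel to leading order except in the layer $|z|\lesssim e^{-1/k^2}$, so that $\mr{\phi}'\approx k\log(1/|z|)$ in the bulk. Your ``behaves like $\frac{C}{k}|z|^{p_kk^2}$'' has this backwards.

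The monotonicity of $a\mapsto(x^{-a}-1)/a$ for $x\in(0,1)$ then gives $|z|^{\epsilon}\mr{\phi}'\lesssim k/\epsilon$ for every $\epsilon\geq p_kk^2$. Differentiating the formula gives $|z|^{1-p_kk^2}\mr{\phi}''(z)\to p_kk\,\mr{r}'(0)\bigl(1+O(k^2)\bigr)\neq0$ as $z\to0^-$. This is the sharpness statement with an explicit nonzero coefficient, and it replaces your appeal to genericity. The same formula yields the weighted bounds of item (2) directly. Note that, since $\mr{r}$ is only $C^{2,p_kk^2}$, the expansion at $z=0$ is not of the form smooth $+\,c_\pm|z|^{p_kk^2}\cdot$smooth that you wrote.

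Separately, your claim that $\mr{\phi}'=O(k^2)$ near $z=-1$ ``by regularity and the $k$-source'' is false. Evaluating the ODE at $z=-1$ (equivalently, imposing $\nu_k\partial_v\phi_k+\lambda_k\partial_u\phi_k=0$ on $\Gamma$) forces $\mr{\phi}'(-1)=k/(2q_k)$. For example, with $\mr{r}=1+z$ and $k^2$ dropped from the exponents, the formula above is the Minkowski profile $\mr{\phi}'=-k\,\frac{\log|z|+1+z}{(1+z)^2}$.

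Consequently the bound $\|\mr{\phi}'\|_{L^\infty([-1,-\frac12])}\lesssim k^2$ in item (4), and likewise $|z|^{1+\epsilon}|\mr{\phi}''|\lesssim k^2/\epsilon$ in item (5), cannot be proved with $k$-independent constants. The correct size in both is $k$. That is also all the paper uses later: the coefficient of $\partial_z(\Psi_p/\mr{r})$ in \eqref{eq:z-transport eq for mrho} is of class $\mathcal{F}_k$, not $\mathcal{F}_{k^2}$.

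With the explicit profile in hand, the remaining bounds do close in your continuity argument. Outside the exponentially thin layer, $(\mr{\phi}')^2\lesssim k^2\log^2(1/|z|)$, which is integrable. Feeding this into the mass and Raychaudhuri equations gives $\mr{m}\lesssim k^2\mr{r}^3$, $\mr{\mu}\lesssim k^2\mr{r}^2$, $\mr{r}'\approx1$, $|z|^{\epsilon}|\mr{r}''|\lesssim k^2/\epsilon^2$, and the other bounds in items (3)--(5).
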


\subsection{Self-similar coordinates}
Motivated by the $k$-self-similarity and the renormalized gauge introduced in Section \ref{sec: pre on k naked singularity sapcetime}, we define the self-similar coordinates $(s,z)$ to be \begin{equation}
    s: = -\log(-u),\quad z = \frac{v}{(-u)^{q_{k}}}.
\end{equation}
Then under the self-similar coordinates, the center $\Gamma$ in the quotient spacetime $\mathcal{Q}$ corresponds to the curve $\{z = -1\}$, the singular horizon $\mathcal{H}$ corresponds to $\{z = 0\}$, and the interior region $\mathcal{Q}^{(in)}$ corresponds to $\{-1\leq z\leq0\}$. Moreover, a simple coordinate transform gives that \begin{align}
    &\partial_{u} = e^{s}\partial_{s}+q_{k}ze^{s}\partial_{z},\quad \partial_{v} = e^{q_{k}s}\partial_{z},\\&
    \partial_{s} =(-u)\partial_{u}+q_{k}(-v)\partial_{v},\quad \partial_{z} = (-u)^{q_{k}}\partial_{v}.
\end{align}

\subsection{Local well-posedness of the Einstein-scalar field equations}
\label{sec: recall the local well-posedness}
In this section, we recall the local well-posedness of the mixed characteristic-timelike initial value problem to the Einstein-scalar field equations established in \cite{chris93}.

Let $BV(U)$ and $AC(U)$ be the spaces of bounded variation functions and absolutely continuous functions on some set $U$, respectively. We denote the norm of total variation by $T.V.$. Let $\mathcal{Q}_{u_{0},v_{0}}$ be the region $\mathcal{Q}\cap\{u\leq u_{0},v\leq v_{0}\}$, and let $\lambda = \partial_{v}r$ and $\nu = \partial_{u}r$, respectively. Moreover, we fix the gauge choice of the axis to be $\Gamma = \{(-v) = (-u)^{q_{k}}\}$.

In \cite{chris93}, Christodoulou established the local well-posedness results on the spherically symmetric Einstein-scalar field equations for the initial data with BV derivatives and $C^{1}$ derivatives, respectively. First, we recall the following two definitions on the $BV$ and $C^{1}$ solutions.
\begin{definition}
    A BV solution  in a domain $\mathcal{Q}_{u_{0},v_{0}}$ consists of a triple $(r,\phi,m)$  solving \eqref{eq:u-Ray equation}-\eqref{eq:wave equation for phi}, satisfying the following:
    \begin{enumerate}
        \item[(1)]$ 
            \sup_{\mathcal{Q}_{u_{0},v_{0}}}(-\nu)<\infty,\quad \sup_{\mathcal{Q}_{u_{0},v_{0}}}\lambda^{-1}<\infty$.
            \item[(2)] $\lambda\in BV(\Sigma_{u}\cap\mathcal{Q}_{u_{0},v_{0}})$ uniformly in $u$, and $\nu\in BV(\underline{\Sigma}_{v}\cap \mathcal{Q}_{u_{0},v_{0}})$ uniformly in $v$.
            \item[(3)] $\phi(u,\cdot)\in AC(\Sigma_{u}\cap\mathcal{Q}_{u_{0},v_{0}})$ and $\phi(\cdot,v)\in AC(\underline{\Sigma}_{v}\cap\mathcal{Q}_{u_{0},v_{0}})$, with $T.V.$ norm uniform in $u$ and $v$, respectively.
            \item[(4)] The derivatives $\partial_{v}(r\phi)\in BV(\Sigma_{u}\cap\mathcal{Q}_{u_{0},v_{0}})$ uniformly and $\partial_{u}(r\phi)\in BV(\underline{\Sigma}_{v}\cap\mathcal{Q}_{u_{0},v_{0}})$ uniformly.
            \item[(5)] The solution is regular on the axis, in the sense that for any $(u,v)\in\Gamma$ \begin{align*}
                &r(u,v) = r\phi(u,v) = m(u,v) = 0,\\&
                \lim_{\delta\rightarrow0}\left(\lambda+p_{k}\vert u\vert^{k^{2}}\nu\right)(u,v+\delta) = 0,\quad \lim_{\delta\rightarrow0}\left(\partial_{v}(r\phi)+p_{k}\vert u\vert^{k^{2}}\partial_{u}(r\phi)\right)(u,v+\delta) = 0.
            \end{align*}
    \end{enumerate}
\end{definition}
\begin{definition}
    A $C^{1}$ solution in a domain $\mathcal{Q}_{u_{0},v_{0}}$ consists of a triple $(r,\phi,m)$ solving \eqref{eq:u-Ray equation}-\eqref{eq:wave equation for phi}, satisfying the following:\begin{enumerate}
        \item[(1)] $\sup_{\mathcal{Q}_{u_{0},v_{0}}}(-\nu)<\infty,\quad \sup_{\mathcal{Q}_{u_{0},v_{0}}}\lambda^{-1}<\infty$.
        \item[(2)] $\lambda$, $\nu$, $\partial_{u}(r\phi)$, and $\partial_{v}(r\phi)$ are in $C^{1}(\mathcal{Q}_{u_{0},v_{0}})$.
        \item[(3)] The solution is regular on the axis, in the sense that for any $(u,v)\in\Gamma$, we have \begin{align*}
            &r(u,v) = r\phi(u,v) = m(u,v) = 0,\\&
            \lim_{\delta\rightarrow 0}\left(\lambda+p_{k}\vert u\vert^{k^{2}}\nu\right)(u,v+\delta) = 0,\quad \lim_{\delta\rightarrow0}\left(\partial_{v}(r\phi)+p_{k}\vert u\vert^{k^{2}}\partial_{u}(r\phi)\right)(u,v+\delta) = 0,\\&
            \lim_{\delta\rightarrow 0}\left(\lambda+p_{k}\vert u\vert^{k^{2}}\nu\right)(u-\delta,v) = 0,\quad \lim_{\delta\rightarrow 0}\left(\partial_{v}(r\phi)+p_{k}\vert u\vert^{k^{2}}\partial_{u}(r\phi)\right)(u-\delta,v) = 0.
        \end{align*}
    \end{enumerate}
\end{definition}
Given the gauge choice of the center $\Gamma = \{(-v) = (-u)^{q_{k}}\}$ and the gauge choice of $r(-1,\cdot) = r_{k}(-1,\cdot)$, the only free initial data is $\partial_{v}(r\phi)$. Using the regularity of the center and the transport equation \eqref{eq:u-equ for m} for $m$, we can uniquely determine $r\phi$ and $m$ on the initial hypersurface $\Sigma_{-1}$. One corollary of the local well-posedness result established in \cite{chris93} adapting to our setting is that, for $\partial_{v}(r\phi)_{\Sigma_{-1}}\in BV$, there exist a domain of the form $\{-1\leq u\leq-1+\delta,\ -(-u)^{q_{k}}\leq v\}$ and a $BV$ solution $(r,\phi,m)$ to \eqref{eq:u-Ray equation}-\eqref{eq:wave equation for phi} in this domain solving the given initial data. Moreover, if the initial data $\partial_{v}(r\phi)|_{\Sigma_{-1}}\in C^{1}$, then the solution is also a $C^{1}$ solution.

If we further adapt to the setting of the $k$-self-similar naked singularity spacetime, then given the initial data $\partial_{v}(r\phi)|_{\Sigma_{-1}} = \partial_{v}(r_{k}\phi_{k})(-1,\cdot)$, the solution to \eqref{eq:u-Ray equation}-\eqref{eq:wave equation for phi} achieving this initial data is the $k$-self-similar naked singularity spacetime. Hence, to consider the perturbation of the initial data, it suffices to perturb $\partial_{v}(r\phi)|_{\Sigma_{-1}}$.

\subsection{Function spaces}
In this section, we introduce the following function spaces, which will be useful when imposing the interior initial data on the linearized system $\mathcal{P} = 0$ and in the nonlinear analysis for the interior region.
\begin{definition}
\label{def: definition of the localized holder space}
    Let $\alpha\in(1,2)$, $\gamma\in(-\frac{1}{2},\frac{1}{2})$, $\delta\in(0,1)$, and $I:=[-1,0]$. We can define \begin{align}
        \mathcal{C}^{\alpha,\delta}_{N}(I): &=\left\{f(z):I\rightarrow\mathbb{R}| f\in C_{z}^{N}(I\backslash\{0\})\cap C_{z}^{1}(I),\ \vert z\vert^{j-\alpha}\frac{d^{j}f}{dz^{j}}\in C_{z}^{0,\delta},\ 2\leq j\leq N\right \},
        \\
        H^{\gamma}_{N}(I):&=\left\{f(z): I\rightarrow \mathbb{R}|f\in C_{z}^{N}(I\backslash\{0\})\cap W_{z}^{1,2}(I),\ (-z)^{j-2+\gamma}\frac{d^{j}f}{dz^{j}}\in L^{2}(I),\ 2\leq j\leq N\right\},
    \end{align}
    associated with the norms \begin{align}
        \Vert f\Vert_{\mathcal{C}_{N}^{\alpha,\delta}}: &= \sum_{j = 0}^{1}\left\Vert\frac{d^{j}f}{dz^{j}}\right\Vert_{L^{\infty}(I)}+\sum_{j = 2}^{N}\left\Vert \vert z\vert^{j-\alpha}\frac{d^{j}f}{dz^{j}}\right\Vert_{C^{0,\delta}_{z}(I)},\\
        \left\Vert f\right\Vert_{H^{\gamma}_{N}}:&=\sum_{j = 0}^{1}\left\Vert\frac{d^{j}f}{dz^{j}}\right\Vert_{L^{2}(I)}+\sum_{j = 2}^{N}\left\Vert \vert z\vert^{j-2+\gamma}\frac{d^{j}f}{dz^{j}}\right\Vert_{L^{2}(I)},
    \end{align}
    where $\Vert\cdot\Vert_{C_{z}^{0,\delta}}$ means the standard Hölder norm. When $\delta = 0$, $C_{z}^{0,0}$ will be reduced to the standard continuous function space. We simplify the notation $\mathcal{C}_{N}^{\alpha,0}$ to be $\mathcal{C}_{N}^{\alpha}$.

    Finally, we define the function space 
    \begin{equation}
        H_{N,a}^{\gamma} = \left\{f:\mathbb{R}_{+}\times I\rightarrow\mathbb{R}|f(s,\cdot)\in H_{N}^{\gamma}(I),\ \sup_{s\geq0}e^{(1+a)s}\left\Vert f(s,\cdot)\right\Vert_{H_{N}^{\gamma}}<\infty\right\}.
    \end{equation}

    associated with the norm \begin{equation}
        \left\Vert f\right\Vert_{H_{N,a}^{\gamma}} = \sup_{s\geq0}e^{(1+a)s}\left\Vert f(s,\cdot)\right\Vert_{H_{N}^{\gamma}}.
    \end{equation}
\end{definition}
The following proposition is a direct consequence of Proposition~\ref{prop: estimate on the exact k self similar spacetime}

\begin{proposition}
The interior initial data $(r_{k},\Psi_{k},m_{k})|_{\Sigma_{-1}^{(in)}}$ of the $k$-self-similar naked singularity spacetime constructed by Christodoulou \cite{chris94} belongs to the function space $\mathcal{C}_{\infty}^{2-}\times\mathcal{C}_{\infty}^{p_{k},\delta}(I)\times\mathcal{C}_{\infty}^{p_{k},\delta}(I)$ for all $\delta\in(0,1-O(k))$.
\end{proposition}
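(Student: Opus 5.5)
The plan is to restrict everything to the slice $u=-1$, where the self-similar ansatz gives explicit formulas in $z$, and then verify the weighted Hölder conditions of Definition~\ref{def: definition of the localized holder space} derivative by derivative using Proposition~\ref{prop: estimate on the exact k self similar spacetime}. On $\Sigma_{-1}^{(in)}$ we have $u=-1$, so $z=v$, $r_{k}(-1,v)=\mr{r}(z)$, $m_{k}(-1,v)=\mr{m}(z)$, and $\Psi_{k}$ (which I take to be $r_{k}\phi_{k}$, the free datum singled out in Section~\ref{sec: recall the local well-posedness}) equals $\mr{r}(z)\mr{\phi}(z)$, since $\log(-u)=0$. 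The task is therefore a statement about functions of one variable $z\in I=[-1,0]$. Property~(1) of Proposition~\ref{prop: estimate on the exact k self similar spacetime} gives smoothness on $I\backslash\{0\}$, so $C^{N}_{z}(I\backslash\{0\})$ holds for all $N$. Property~(2) gives $\mr{r}\in C^{2,p_{k}k^{2}}$ and $\mr{m},\mr{\phi}\in C^{1,p_{k}k^{2}}$ up to $z=0$, so the $C^{1}_{z}(I)$ requirement and the $L^{\infty}$ bounds on $f$ and $f'$ also hold.

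For the weighted higher derivatives with $\alpha=p_{k}=1+p_{k}k^{2}$, I need $|z|^{j-p_{k}}\partial_{z}^{j}f\in C^{0,\delta}_{z}(I)$ for $j\ge2$. Property~(2) states exactly $|z|^{j-1-p_{k}k^{2}}|\partial_{z}^{j}\mr{\phi}|\lesssim_{k}1$, and the same for $\mr{m}$ and $\partial_{z}^{j+1}\mr{r}$. This gives boundedness of $g_{j}:=|z|^{j-p_{k}}\partial_{z}^{j}f$. For $\Psi_{k}=\mr{r}\mr{\phi}$, I expand with Leibniz. Each term $\partial^{i}\mr{r}\,\partial^{j-i}\mr{\phi}$ is controlled because $\mr{r}$ is one derivative better, so the worst weight comes from the $\mr{\phi}$ factor. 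The $i=j$ term involves $\partial_{z}^{j}\mr{r}$, which carries weight $|z|^{j-1-1-p_{k}k^{2}}$, and that is dominated by $|z|^{j-p_{k}}$ since $|z|\le1$. For $r_{k}$ itself, membership in $\mathcal{C}^{2-}_{\infty}$ follows the same way, with exponent $2-\eta$ and any small $\eta$, because $\mr{r}$ gains a derivative.

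The Hölder continuity of $g_{j}$ is the main obstacle. Away from $0$ it is smooth, but near $z=0$ a uniform bound is not enough. My approach is to show $|g_{j}'|\lesssim|z|^{-1}$, together with the fact that $g_{j}$ behaves like a polyhomogeneous expansion in $|z|^{p_{k}k^{2}}$, i.e.\ $g_{j}=c_{j}+O(|z|^{p_{k}k^{2}})$ with $|\partial_{z}(g_{j}-c_{j})|\lesssim|z|^{p_{k}k^{2}-1}$. Such a function is $C^{0,\delta}$ for $\delta\le p_{k}k^{2}$. To get the full range $\delta<1-O(k)$ claimed in the statement, I will use the more precise structure near $z=0$: the ODE system for the self-similar profile has a regular-singular point at $z=0$, and Christodoulou's construction gives expansions of the form $\mr{\phi}=a(z)+|z|^{p_{k}}b(z)$ with $a,b$ smooth, up to corrections coming from the $O(k)$ bounds in Property~(4). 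With this structure, $|z|^{j-p_{k}}\partial_{z}^{j}(|z|^{p_{k}}b)$ is smooth in $z$, and $|z|^{j-p_{k}}\partial_{z}^{j}a=|z|^{j-p_{k}}\times(\text{smooth})$, which is Lipschitz for $j\ge2$ since $j-p_{k}\geq 1-p_{k}k^{2}>0$ up to $O(k^2)$. I would derive this expansion by Frobenius-type analysis of the autonomous ODE near $z=0$, using Property~(4)–(5) to control the $\epsilon$-losses. This is where the restriction $\delta<1-O(k)$ enters: the $|z|^{\epsilon}$ weight losses with $\epsilon\sim k$ cost $O(k)$ in the Hölder exponent. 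The last step is to apply Leibniz again to transfer the result to $\Psi_{k}$ and $\mr{m}$.
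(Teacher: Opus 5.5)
\textbf{The step that fails is the last one: transferring the result to $\mr{m}$ ``by Leibniz''.} That works for $\Psi_k=\mr{r}\mr{\phi}$, but $\mr{m}$ is not a product of $\mr{r}$ and $\mr{\phi}$. On $u=-1$ it solves $\mr{m}'=\frac{\mr{r}^{2}}{2\mr{r}'}(1-\mr{\mu})(\mr{\phi}')^{2}$, which is \emph{quadratic} in $\mr{\phi}'$.

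Put $\lambda:=p_kk^2$. Near $z=0$ you have $\mr{\phi}'=w_0+c|z|^{\lambda}+O(|z|)$, where $w_0=\mr{\phi}'(0)$ and $c\neq0$; $c\neq 0$ is exactly the sharpness $\mr{\phi}''\sim|z|^{-1+\lambda}$ in Proposition~\ref{prop: estimate on the exact k self similar spacetime}(2). The prefactor is $A_0+O(|z|)$ with $A_0=\frac{1-\mr{\mu}(0)}{2\mr{r}'(0)}\neq0$. Squaring produces the term $A_0c^2|z|^{2\lambda}$ in $\mr{m}'$. Nothing cancels it, since every other contribution is constant, proportional to $|z|^{\lambda}$, or carries a factor of $z$. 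Hence
\[
|z|^{2-p_k}\partial_z^2\mr{m}=C_0+C_1|z|^{\lambda}+O(|z|^{1-\lambda}),\qquad C_1=-2\lambda A_0c^2\neq0,
\]
which is $C^{0,\delta}$ at $z=0$ only for $\delta\le p_kk^2$.

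So your route cannot give $\delta$ up to $1-O(k)$ for the third component, and this is not an artifact of the method. Unless $c=0$, which the sharpness statement excludes, $m_k|_{\Sigma_{-1}^{(in)}}$ lies in $\mathcal{C}^{p_k,\delta}$ only for $\delta\le p_kk^2$. The full range $\delta<1-O(k)$ is available only for $r_k$ and $\Psi_k$ (and $\phi_k$, through which $\delta_0$ is later defined).

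The paper's one-line justification (``a direct consequence of Proposition~\ref{prop: estimate on the exact k self similar spacetime}'') does not see this either. As you rightly note, those bounds only give boundedness of $|z|^{j-p_k}\partial_z^j(\cdot)$ and a derivative bound $\lesssim|z|^{-1}$. That settles the $\mathcal{C}^{2-}_\infty$ claim for $r_k$, where the weighted derivatives tend to $0$, but it gives no Hölder modulus at $z=0$ for the other two components.

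For $\mr{\phi}$ and $\Psi_k$ your plan can be completed, but two points need correcting.

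First, the ansatz $\mr{\phi}=a+|z|^{p_k}b$ with $a,b$ smooth is not what the coupled system produces: coupling through $\mr{r}$ and $\mr{\mu}$ generates terms $z^m|z|^{n\lambda}$ with $n\ge2$. You also do not need a Frobenius analysis of the full system. Inserting the self-similar ansatz into $r\partial_u\partial_v\phi+r_u\partial_v\phi+r_v\partial_u\phi=0$ at $u=-1$ gives the exact linear equation
\[
q_k\mr{r}\,z\,w'+\bigl(-k^2\mr{r}+2q_kz\mr{r}'\bigr)w+k\mr{r}'=0,\qquad w=\mr{\phi}',
\]
whose homogeneous solution is exactly $|z|^{\lambda}\mr{r}^{-2}$. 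Variation of constants, using only $\mr{r}\in C^{2,\lambda}$ and the weighted bounds on $\partial_z^{j}\mr{r}$, yields $w=\mr{r}^{-2}\bigl(c|z|^{\lambda}+w_0+P_1z+O(|z|^{1+\lambda})\bigr)$ with matching derivative bounds. Every correction to $w_0+c|z|^{\lambda}$ carries a factor $z$, which is all you need.

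Second, $|z|^{j-p_k}\partial_z^ja$ is not Lipschitz for $j=2$: $|z|^{2-p_k}=|z|^{1-\lambda}$ is exactly $C^{0,1-\lambda}$. That term, and not the $\epsilon$-losses of Proposition~\ref{prop: estimate on the exact k self similar spacetime}(4)--(5), is what caps $\delta$ at $1-p_kk^2$ for $\phi_k$ and $\Psi_k$.
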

An easy example of a function $f$ in the function space $\mathcal{C}_{N}^{\alpha,\delta}$ is $f = (-z)^{\alpha}$, where $f$ is smooth away from the point $z = 0$ and is of finite Hölder regularity at $z = 0$. The following decomposition lemma illustrates that the motivation of the function space $\mathcal{C}_{N}^{\alpha,\delta}$ is to generalize the function $(-z)^{\alpha}$.

\begin{lemma}{(\cite{singh2})}
\label{lemma: decomposition of the function spaces}
    For any $f\in\mathcal{C}_{N}^{\alpha,\delta}$, there exist a unique constant $c_{0}$ and a unique function $f_{1}\in\mathcal{C}_{N}^{\alpha+\delta^{\prime}}$ with $\delta^{\prime}<\min\{\delta,2-\alpha\}$ such that \begin{equation}
        f = c_{0}\vert z\vert^{\alpha}+f_{1}.\label{eq: decompose holder function}
    \end{equation}
    Moreover, we have the estimate \begin{equation}
        \left\vert c_{0}\right\vert+\left\Vert f_{1}\right\Vert_{\mathcal{C}^{\alpha+\delta^{\prime}}}\lesssim \left\Vert f\right\Vert_{\mathcal{C}_{N}^{\alpha}}.
        \label{eq: estimate on the decomposition}
    \end{equation}
    For $\alpha = \frac{1}{1-k^{2}}$, the function $\vert z\vert^{\alpha}$ in \eqref{eq: decompose holder function} can be replaced by $\mr{\phi}$ and the estimate \eqref{eq: estimate on the decomposition} continues to hold.
\end{lemma}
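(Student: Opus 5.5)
Write $\beta:=\frac{d^{2}f}{dz^{2}}\cdot|z|^{2-\alpha}$. By the definition of $\mathcal{C}_{N}^{\alpha,\delta}$ with $j=2$, $\beta\in C^{0,\delta}_{z}(I)$, so the limit $\beta(0):=\lim_{z\to0^-}\beta(z)$ exists and $|\beta(z)-\beta(0)|\le \|f\|_{\mathcal{C}^{\alpha,\delta}_N}|z|^{\delta}$. The constant $c_0$ will be read off from $\beta(0)$. Since $\frac{d^{2}}{dz^{2}}|z|^{\alpha}=\alpha(\alpha-1)|z|^{\alpha-2}$ and $\alpha\in(1,2)$ gives $\alpha(\alpha-1)\neq0$, I set
\[
c_0:=\frac{\beta(0)}{\alpha(\alpha-1)},\qquad f_1:=f-c_0|z|^{\alpha}.
\]
Uniqueness is immediate. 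If $c|z|^{\alpha}$ lay in $\mathcal{C}^{\alpha+\delta'}_N$ with $c\neq0$, then $|z|^{2-\alpha-\delta'}\partial_z^2(c|z|^\alpha)=c\alpha(\alpha-1)|z|^{-\delta'}$ would have to be bounded near $0$, which it is not. The bound $|c_0|\lesssim\|f\|$ is clear.

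The main step is to show $f_1\in\mathcal{C}^{\alpha+\delta'}_N$. First, $f_1\in C^1(I)$, because both $f$ and $|z|^\alpha$ are $C^1$ for $\alpha>1$; the $L^\infty$ bounds on $f_1$ and $f_1'$ follow from those of $f$ and from $|c_0|$. For $j=2$, we have
\[
|z|^{2-\alpha-\delta'}f_1''=|z|^{-\delta'}\bigl(\beta(z)-\beta(0)\bigr),
\]
which is bounded by $\|f\||z|^{\delta-\delta'}$ and tends to $0$. Continuity on $I\setminus\{0\}$ is clear, so this is in $C^0$ with the right bound; here the norm $\mathcal{C}^{\alpha+\delta'}$ has $\delta=0$, i.e. it is just sup-continuity. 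For $3\le j\le N$, $\partial_z^j|z|^\alpha$ is exactly homogeneous. I expect this to be the main obstacle: one only controls $|z|^{j-\alpha}f^{(j)}$, not a vanishing difference, so the extra $|z|^{-\delta'}$ weight must be absorbed. My first attempt is to use the Hölder modulus of $g_j:=|z|^{j-\alpha}f^{(j)}$ together with an interpolation/integration argument. Differentiating $|z|^{j-1-\alpha}f^{(j-1)}$ gives
\[
g_j \pm (j-1-\alpha)|z|^{-1}g_{j-1},
\]
and $g_{j-1}$ has a limit at $0$ by induction. Integrating from $0$, I would show inductively that $g_j(z)-g_j(0)=O(|z|^{\delta})$ and that $g_j(0)$ is determined by $\beta(0)$ through the same recursion satisfied by $|z|^\alpha$. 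Hence $g_j(0)=c_0\,\partial^j|z|^\alpha\cdot|z|^{j-\alpha}$, which gives $|z|^{j-\alpha-\delta'}f_1^{(j)}=O(|z|^{\delta-\delta'})$. If this recursion fails to give decay, I would instead prove the Hölder-based estimate at level $j$ via a Taylor expansion of $f^{(j-1)}$ on dyadic intervals $[-2|z|,-|z|/2]$. There the rescaled function $F(y)=f(zy)|z|^{-\alpha}$ has $C^{N,\delta}$ control, and interpolation between the decay of $F''-c_0\partial^2|y|^\alpha$ (of size $|z|^\delta$) and the uniform $C^{N}$ bounds yields decay $|z|^{\delta''}$ for all derivatives up to order $N$. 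This is why $\delta'$ is taken strictly below $\delta$. The restriction $\delta'<2-\alpha$ ensures that $\alpha+\delta'<2$, so that $\mathcal{C}^{\alpha+\delta'}_N$ is a legitimate space in the definition.

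For the final claim with $\alpha=p_k=\frac1{1-k^2}$, Proposition~\ref{prop: estimate on the exact k self similar spacetime}(2) gives $\mr{\phi}=c_\ast|z|^{p_k}+\psi$. Here $c_\ast\neq0$ because $\mr{\phi}''\sim|z|^{-1+p_kk^2}=|z|^{p_k-2}$ sharply, and $\psi\in\mathcal{C}^{p_k+\delta'}_N$ by applying the first part to $\mr{\phi}$ itself, which lies in the space with some $\delta>0$. Then I write $|z|^{p_k}=c_\ast^{-1}(\mr{\phi}-\psi)$ and substitute this into $f=c_0|z|^{p_k}+f_1$. The new remainder $f_1-c_0c_\ast^{-1}\psi$ has norm $\lesssim\|f\|$, since $|c_\ast|^{-1}$ is a fixed constant depending on $k$.
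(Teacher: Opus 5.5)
Your proof is correct in substance. The paper gives no argument of its own here (it only cites Section~2.9 of \cite{singh2}), and what you propose is the natural argument: read off $c_0$ from $\lim_{z\to0^-}|z|^{2-\alpha}f''$, subtract $c_0|z|^{\alpha}$, and for the $\mr{\phi}$ version decompose $\mr{\phi}$ itself and substitute.

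The step you left hedged, $3\le j\le N$, closes along your first route, so the dyadic fallback is unnecessary. Your derivative identity is off by a factor. It should read
\[
\partial_z g_{j-1}=|z|^{-1}\bigl(g_j-(j-1-\alpha)g_{j-1}\bigr)=:|z|^{-1}h_j \quad\text{on } [-1,0),
\]
with $h_j\in C^{0,\delta}$.

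Since $g_{j-1}$ has a finite limit at $0$, the integral $\int_{-1/2}^{z}|z'|^{-1}h_j(z')\,dz'$ must converge as $z\to0^-$. Because $|h_j(z')-h_j(0)|\lesssim|z'|^{\delta}$, this forces $h_j(0)=0$; otherwise the integral diverges like $h_j(0)\log\frac{1}{|z|}$. Hence $g_j(0)=(j-1-\alpha)g_{j-1}(0)$. This is exactly the recursion satisfied by $a_j:=|z|^{j-\alpha}\partial_z^j|z|^{\alpha}=(-1)^j\alpha(\alpha-1)\cdots(\alpha-j+1)$. So $g_j(0)=c_0a_j$, and
$|z|^{j-\alpha-\delta'}\partial_z^jf_1=|z|^{-\delta'}\bigl(g_j-g_j(0)\bigr)=O\bigl(|z|^{\delta-\delta'}\bigr)$, with constant $[g_j]_{C^{0,\delta}}$.

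The strict inequality $\delta'<\delta$ is needed only so that this quantity tends to $0$, i.e.\ lies in $C^0(I)$. It is not caused by an interpolation loss. In fact the fallback used on its own would lose more. Interpolating an $O(|z|^{\delta})$ sup bound on $F''-c_0\,\partial_y^2|y|^{\alpha}$ against uniform $C^{N,\delta}$ bounds yields only $O\bigl(|z|^{\delta^2/(N-2+\delta)}\bigr)$ at top order. You would still need the H\"older continuity of $g_j$ at $0$ to reach every $\delta'<\delta$.

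Two bookkeeping points remain.
\begin{itemize}
\item The right-hand side of the estimate has to be $\|f\|_{\mathcal{C}^{\alpha,\delta}_N}$, which is what you actually use. No bound on $\|f_1\|_{\mathcal{C}^{\alpha+\delta'}}$ by the $\delta=0$ norm written in the statement is possible.
\item In the $\mr{\phi}$ case, $\delta'$ must also lie below the H\"older index $\delta_0$ of $\mr{\phi}$; this is automatic when $\delta\le\delta_0$, as in the paper. The implicit constant then depends on $k$ through $c_*^{-1}$ and $\|\psi\|$.
\end{itemize}
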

\begin{proof}
The proof can be found in Section 2.9 of \cite{singh2}.
\end{proof}

For the function spaces $\mathcal{C}_{N}^{\alpha}$ and $H_{N}^{\gamma}$, we can establish the following embedding lemma.
\begin{lemma}
\label{lemma: sobolev embedding}
For any $\gamma\in(-\frac{1}{2},\frac{1}{2})$, $0\leq\delta<1$, and $\alpha\in(1,\frac{3}{2}-\gamma]$ we have that $ \mathcal{C}_{N}^{\alpha}\hookrightarrow H_{N+1}^{\gamma}$.
\end{lemma}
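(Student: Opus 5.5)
The plan is to compare weights term by term. For $f\in\mathcal{C}^{\alpha}_{N}$ the norm gives pointwise bounds
$$\Big|\frac{d^{j}f}{dz^{j}}(z)\Big|\le \|f\|_{\mathcal{C}^{\alpha}_{N}}\,|z|^{\alpha-j},\qquad 2\le j\le N,$$
together with $L^\infty$ bounds on $f$ and $f'$. The first step is the low-order part of the $H^{\gamma}_{N+1}$ norm: since $I=[-1,0]$ is bounded, $\|f\|_{L^2(I)}+\|f'\|_{L^2(I)}\lesssim \|f\|_{L^\infty}+\|f'\|_{L^\infty}$, and $f\in C^{1}(I)\subset W^{1,2}(I)$.

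The second step treats the middle orders $2\le j\le N$. Writing $|z|^{j-2+\gamma}f^{(j)}=|z|^{\alpha-2+\gamma}\cdot\big(|z|^{j-\alpha}f^{(j)}\big)$ gives
$$\big\||z|^{j-2+\gamma}f^{(j)}\big\|_{L^2(I)}^2\le \|f\|_{\mathcal{C}^{\alpha}_N}^2\int_{-1}^{0}|z|^{2(\alpha-2+\gamma)}\,dz .$$
This integral is finite exactly when $2(\alpha-2+\gamma)>-1$, that is $\alpha>\tfrac32-\gamma$.

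The third step is the top order $j=N+1$, which is not controlled by the $\mathcal{C}^{\alpha}_{N}$ norm at all, since that norm only involves derivatives up to order $N$. To get it I would try to use $f\in C^{N}$ together with a Hölder-type control of $|z|^{N-\alpha}f^{(N)}$. With $\delta=0$, however, there is nothing to differentiate further, so I do not expect this step to go through without additional regularity.

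I expect the main obstacle to be that the statement, read literally, cannot hold, and both the second and third steps break. The test function $f=(-z)^{\alpha}\in\mathcal{C}^{\alpha}_{N}$ shows this. It has $|z|^{\gamma}f''\sim|z|^{\alpha-2+\gamma}$, which fails to be in $L^2(I)$ precisely when $\alpha\le\tfrac32-\gamma$, the stated range. Moreover, a generic element of $\mathcal{C}^{\alpha}_N$, such as a function that is only $C^N$ away from $0$, has no $(N+1)$-st derivative at all. So before running the argument I would pin down the intended reading. The version that the term-by-term comparison actually proves is $\mathcal{C}^{\alpha}_{N}\hookrightarrow H^{\gamma}_{N}$ for $\alpha>\tfrac32-\gamma$, using the first two steps with constant $\big(\int_{-1}^0|z|^{2(\alpha-2+\gamma)}dz\big)^{1/2}=(2\alpha+2\gamma-3)^{-1/2}$. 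Equivalently one can use $\mathcal{C}^{\alpha}_{N+1}\hookrightarrow H^{\gamma}_{N+1}$, which applies the same comparison up to order $N+1$.

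If the author's ordering of the inequality were meant in terms of $\gamma$, i.e. $\gamma>\tfrac32-\alpha$, the same proof applies verbatim. In that case the only step requiring care is recording that the constant blows up as $\alpha+\gamma\downarrow\tfrac32$, which matters when the lemma is later applied with $\alpha=p_k$ close to $1$ and $\gamma$ close to $\tfrac12$.
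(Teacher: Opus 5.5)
You are right that $\mathcal{C}_{N}^{\alpha}\hookrightarrow H_{N+1}^{\gamma}$ is false as literally written, and $(-z)^{\alpha}$ is a valid counterexample. However, the repair you settle on is not the intended lemma, so your proposal does not prove what the paper proves or needs. Every feature of the statement says the arrow is simply reversed: the loss of one derivative from $N+1$ to $N$, the closed endpoint $\alpha\le\frac{3}{2}-\gamma$, and the label ``Sobolev embedding''.

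The paper's proof explicitly reduces to showing $\|f\|_{\mathcal{C}_{N}^{\alpha}}\lesssim\|f\|_{H_{N+1}^{\gamma}}$ for $f\in H^{\gamma}_{N+1}$, i.e.\ $H_{N+1}^{\gamma}\hookrightarrow\mathcal{C}_{N}^{\alpha}$. This is also how the lemma is used in the fixed-point section. From $f\in H^{-\frac{1}{4}+\tau}_{N+1}$, $g\in H^{\frac{3}{2}-\alpha+\tau}_{N+1}$, $h\in H^{\frac{3}{2}-p_{k}+\tau}_{N+1}$ the paper concludes $f\in\mathcal{C}^{\frac{7}{4}-\tau}_{N}$, $g\in\mathcal{C}^{\alpha-\tau}_{N}$, $h\in\mathcal{C}^{p_{k}-\tau}_{N}$. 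That is exactly the endpoint $\alpha=\frac{3}{2}-\gamma$, which your reading excludes. Your substitutes, $\mathcal{C}^{\alpha}_{N}\hookrightarrow H^{\gamma}_{N}$ or $\mathcal{C}^{\alpha}_{N+1}\hookrightarrow H^{\gamma}_{N+1}$ for $\alpha>\frac{3}{2}-\gamma$, go from pointwise to $L^{2}$ control. The nonlinear estimates need the opposite: pointwise weighted bounds on $(f,g,h)$ extracted from the weighted energy norms.

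In the correct direction a term-by-term comparison of weights cannot work. The missing idea is a weighted fundamental-theorem-of-calculus plus Cauchy--Schwarz argument.
\begin{itemize}
\item On $[-1,-\frac{1}{2}]$ the weights are comparable to $1$, so ordinary one-dimensional Sobolev embedding gives $\sum_{j\le N}\|f^{(j)}\|_{L^{\infty}}\lesssim\|f\|_{H^{\gamma}_{N+1}}$.
\item For $z\in[-\frac{1}{2},0)$ and $2\le j\le N$, write $f^{(j)}(z)=f^{(j)}(-\frac{1}{2})+\int_{-1/2}^{z}f^{(j+1)}$. Split the integrand as $(-\tilde z)^{j-1+\gamma}f^{(j+1)}\cdot(-\tilde z)^{-(j-1+\gamma)}$. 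The first factor is in $L^{2}$ by the $H^{\gamma}_{N+1}$ norm; this is where the $(N+1)$-st derivative enters. Since $2(j-1+\gamma)>1$, the second factor contributes $\big(\int_{-1/2}^{z}(-\tilde z)^{-2(j-1+\gamma)}d\tilde z\big)^{1/2}\lesssim(-z)^{\frac{3}{2}-\gamma-j}$.
\item Hence, using $|z|\le 1$, you get $(-z)^{j-\alpha}|f^{(j)}(z)|\le(-z)^{j-\frac{3}{2}+\gamma}|f^{(j)}(z)|\lesssim\|f\|_{H^{\gamma}_{N+1}}$ for every $\alpha\le\frac{3}{2}-\gamma$.
\item The bounds on $f$ and $f'$ on all of $I$ follow the same way from $\int_{-1}^{0}|f''|\le\|(-z)^{\gamma}f''\|_{L^{2}}\|(-z)^{-\gamma}\|_{L^{2}}$, which is finite because $2\gamma<1$.
\end{itemize}
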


\begin{proof}
    It suffices to show that for any $f\in H_{N+1}^{\gamma}$, we have that $$\left\Vert f\right\Vert_{\mathcal{C}_{N}^{\alpha}}\lesssim \left\Vert f\right\Vert_{H_{N+1}^{\gamma}}$$ for any $\alpha\in(1,\frac{3}{2}-\gamma]$.

    By the Sobolev embedding on $[-1,-\frac{1}{2}]$, we have that \begin{equation*}
    \sum_{j = 0}^{N}\left\Vert \frac{d^{j}f}{dz^{j}}\right\Vert_{L^{\infty}[-1,-\frac{1}{2}]}\lesssim \left\Vert f\right\Vert_{H_{N+1}^{\gamma}}.
\end{equation*}
Then for any $2\leq j\leq N$, by the fundamental theorem of calculus and Hölder inequality, we have that \begin{equation*}
    \left\vert \frac{d^{j}f}{dz^{j}}(z)-\frac{d^{j}f}{dz^{j}}(-\frac{1}{2})\right\vert\leq\left(\int_{-\frac{1}{2}}^{z}(-\widetilde{z})^{2\gamma+2j-2}\left(\frac{d^{j+1}f}{dz^{j+1}}\right)^{2}\right)\left(\int_{-\frac{1}{2}}^{z}(-\widetilde{z})^{-2\gamma-2j+2}\right)^{\frac{1}{2}}.
\end{equation*}
Hence, we have \begin{equation*}
    \left\vert(-z)^{j-\alpha}\frac{d^{j}f}{dz^{j}}\right\vert\leq \left\vert(-z)^{\gamma+j-\frac{3}{2}}\frac{d^{j}f}{dz^{j}}\right\vert\lesssim \left\Vert f\right\Vert_{H_{N+1}^{\gamma}}.
\end{equation*}
This concludes the proof.
\end{proof}
The function spaces defined in Definition \ref{def: definition of the localized holder space} will be used to describe the initial interior perturbations. For perturbations on the whole initial hypersurface $\{u = -1,\ v\geq -1\}$, we have the following definition, capturing the finite localized Hölder regularity in the exterior region as well as the asymptotic flatness of the initial data.
\begin{definition}
\label{def: localized Holder regularity general}
    Let $\alpha\in(p_{k},\frac{3}{2})$, and $\widetilde{I} = [-1,1]$. We define the function spaces \begin{align}
        &\widetilde{\mathcal{C}}_{N}^{\alpha,\delta}: = \{f(z):[-1,\infty)\rightarrow\mathbb{R}|f\in C_{z}^{N}(\widetilde{I}\backslash\{0\})\cap C_{z}^{1}(\widetilde{I}),\ \vert z\vert^{j-\alpha}\frac{d^{j}f}{dz^{j}}\in C_{z}^{0,\delta},\ 2\leq j\leq N\},\\&
        AF: = \{f:[-1,\infty)\rightarrow \mathbb{R}|f\in C^{2}([1,\infty)), \sup_{v\in[1,\infty)}\left\vert\partial_{v}^{i}f\right\vert\lesssim v^{-1-i},\ i = 1,2\}.
    \end{align}
    Let $$\bar{\mathcal{C}}_{N}^{\alpha,\delta}: =\widetilde{\mathcal{C}}_{N}^{\alpha,\delta}\cap AF$$ associated with the norm: \begin{equation}
        \left\Vert f\right\Vert_{\bar{\mathcal C}_{N}^{\alpha,\delta}}: = \sum_{j = 0}^{1}\left\Vert\frac{d^{j}f}{dz^{j}}\right\Vert_{C^{0,\delta}([-1,1])}+\sum_{j = 2}^{N}\left\Vert\vert z\vert^{j-\alpha}\frac{d^{j}f}{dz^{j}}\right\Vert_{C^{0,\delta}([-1,1])}+\sum_{j = 1}^{2}\left\Vert v^{1+j}\partial_{v}^{j}f\right\Vert_{L^{\infty}([1,\infty))}.
    \end{equation}
\end{definition}
\begin{remark}
\label{rmk: localwellposedness in localized holder}
    Fixing the gauge choice of $r$ on $\Sigma_{-1}$ and imposing the scalar field initial data $\phi|_{\Sigma_{-1}}\in\bar{\mathcal C}_{N}^{\alpha,\delta}$, the local well-posedness of the Einstein-scalar field equations in this class follows similarly as in~\cite{chris93,lukoh_bvscatter}. We omit the proof in this paper.
\end{remark}
We prove the following proposition, showing that the space $\bar{\mathcal{C}}^{\alpha,\delta}_{N}$ is a Banach space.
\begin{proposition}
    The function space $\bar{\mathcal{C}}_{N}^{\alpha,\delta}$ is a Banach space for any $N\geq 2$, $\alpha\in(p_{k},\frac{3}{2})$ and $\delta\in(0,1)$.
\end{proposition}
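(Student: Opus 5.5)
The plan is the standard completeness argument for weighted Hölder spaces. First I check that $\Vert\cdot\Vert_{\bar{\mathcal C}_{N}^{\alpha,\delta}}$ is a norm on $\bar{\mathcal C}_{N}^{\alpha,\delta}$. Homogeneity and the triangle inequality are immediate, since each summand is a seminorm. Definiteness follows from the first summand, which controls $\Vert f\Vert_{L^\infty([-1,1])}$, together with the last summand, which controls $\partial_v f$ on $[1,\infty)$. Since $f$ is continuous at $v=1$, the condition $\Vert f\Vert=0$ forces $f\equiv 0$ on $[-1,1]$, then $\partial_v f\equiv0$ on $[1,\infty)$, and hence $f\equiv 0$ everywhere. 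For this I read $AF$ and the $C^2([1,\infty))$ condition as functions that are continuous on all of $[-1,\infty)$ and glue at $v=1$, with $z$ identified with $v$ on the slice $u=-1$.

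Now take a Cauchy sequence $(f_n)$.

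\textbf{On $[-1,1]$.} The sequences $f_n$ and $f_n'$ are Cauchy in $C^{0,\delta}([-1,1])$, which is complete. So $f_n\to f$ and $f_n'\to g$ uniformly, and passing to the limit in $f_n(z)=f_n(-1)+\int_{-1}^z f_n'$ gives $f\in C^1$ with $f'=g$ and $f\in C^{1}([-1,1])$.

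For $2\le j\le N$, the functions $w_{j,n}:=|z|^{j-\alpha}f_n^{(j)}$ are Cauchy in $C^{0,\delta}([-1,1])$, so they converge uniformly to some $w_j\in C^{0,\delta}$. On each compact set $K\subset[-1,1]\setminus\{0\}$ the weight $|z|^{j-\alpha}$ is bounded above and below. Hence $f_n^{(j)}$ converges uniformly on $K$ to $|z|^{\alpha-j}w_j$ for every $j\le N$. The usual theorem on uniform convergence of derivatives then gives $f\in C^N(K)$ with $f^{(j)}=|z|^{\alpha-j}w_j$. Consequently $f\in C^N([-1,1]\setminus\{0\})$ and $|z|^{j-\alpha}f^{(j)}=w_j\in C^{0,\delta}$.

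The exponents $j-\alpha$ may be negative, which is why the weighted objects $w_j$ carry the convergence near $z=0$. The point requiring care is that $z=0$ is excluded in the identification above. Since $f$ is only required to be $C^N$ away from $0$, working on compacts avoiding $0$ suffices, and the weighted Hölder bound extends to $0$ automatically because $w_j$ is already globally Hölder. This is the step I regard as the main (mild) obstacle: one must not try to identify derivatives of $f$ at $z=0$ itself.

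\textbf{On $[1,\infty)$.} The sequences $v^{2}\partial_v f_n$ and $v^{3}\partial_v^2 f_n$ are Cauchy in $L^\infty$. Since $\partial_vf_n$ and $\partial_v^2 f_n$ are continuous, they converge uniformly on $[1,\infty)$ to continuous limits $h_1,h_2$ with $|h_i|\lesssim v^{-1-i}$. We also have $f_n(v)=f_n(1)+\int_1^v\partial_vf_n$, and $f_n(1)\to f(1)$ from the previous step. The integrand converges in the weighted sup norm, so it converges in $L^1([1,\infty))$ because $v^{-2}$ is integrable. Therefore $f_n$ converges uniformly on $[1,\infty)$ to a $C^2$ function that agrees with $f$ at $v=1$, so $f\in AF$.

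Finally, for the convergence $\Vert f_n-f\Vert\to0$, each summand of $\Vert f_n-f_m\Vert$ is less than $\epsilon$ for $n,m\ge M$. Letting $m\to\infty$, Hölder seminorms and weighted sup norms are lower semicontinuous under the pointwise convergence established above. This gives $\Vert f_n-f\Vert\le C\epsilon$, which completes the argument.
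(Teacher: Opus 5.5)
Your proof is correct and follows essentially the same route as the paper. Both arguments get $C^N$ convergence on compacts away from $z=0$, where the norm controls the unweighted derivatives, treat $[1,\infty)$ separately, and then pass to the limit in each summand of the norm; you supply the details the paper calls ``straightforward'', in particular using completeness of $C^{0,\delta}([-1,1])$ for the weighted quantities $|z|^{j-\alpha}f_n^{(j)}$. One harmless slip: since $\alpha<\frac{3}{2}$ and $j\geq 2$, the exponents satisfy $j-\alpha>\frac{1}{2}$, so they are never negative and the weights vanish at $z=0$; the weighted quantities are needed because $f^{(j)}$ itself may blow up there.
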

\begin{proof}
    Assume that $\{f_{n}\}$ is a Cauchy sequence in $\bar{\mathcal{C}}_{N}^{\alpha,\delta}$. Then for any non-zero $z\in\widetilde{I}$, there exists $\epsilon$ small enough such that $x\in\widetilde{I}\backslash[-\epsilon,\epsilon]$. Then we have that $f_{n}$ is a Cauchy sequence in $C^{N,\delta}(\widetilde{I}\backslash[-\epsilon,\epsilon])$, where $C^{N,\delta}$ is the standard Hölder space. Therefore, we have that $f_{n}\rightarrow f$ in $C^{N,\delta}$ for any $x\in\widetilde{I}\backslash\{0\}$. Then we have $f\in C_{z}^{N}(\widetilde{I}\backslash\{0\})$. Similarly, we can show that $f_{n}\rightarrow f$ in $C^{2}$ on any compact subset of $[1,\infty)$. It is straightforward to check that $f\in\bar{\mathcal{C}}_{N}^{\alpha,\delta}$ and $\left\Vert f_{n}-f\right\Vert_{\bar{\mathcal{C}}_{N}^{\alpha,\delta}}\rightarrow 0$ by definition. This concludes the proof.
    \end{proof}

The choice of the interior initial perturbation $(r_{p},\Psi_{p})|_{\Sigma_{-1}^{(in)}}$ for the fully nonlinear equations will be \begin{equation}
    (r_{p},\Psi_{p})|_{\Sigma_{-1}^{(in)}} = (0,\epsilon\Psi_{p}^{0}),\label{eq: the choice of the initial data}
\end{equation} 
where $\Psi_{p}^{0}\in\mathcal{C}_{N}^{p_{k},\delta}$, for any value of $\delta\in(0,1)$ such that $\phi_{k}|_{s = 0}\in\mathcal{C}^{p_{k},\delta}$ and $N\geq 6$.

\subsection{Formulation of the nonlinear evolution in the interior region}
\label{sec: setup of the nonlinear stability}
As we have discussed in Section \ref{sec: whole intro}, the nonlinear stability argument crucially relies on the linearized stability. The linearized Einstein-scalar field equations around the $k$-self-similar spacetime are a system of linear equations for the geometric quantities and the scalar field, with coefficients depending on the background. For the technical reason to make the coefficients in the linearized operator independent of $s$, instead of considering $(r,\phi,m)$ as the formulation of the Einstein-scalar field equations, we consider the $(r,\Psi,m)$-formulation, where $\Psi$ is the renormalized radiation field, taking the form of \begin{equation*}
    \Psi = r(\phi+k\log(-u)).
\end{equation*}
Then, we only need to consider the following four equations under the self-similar coordinates as the Einstein-scalar field equations \begin{align}
    \partial_{s}\partial_{z}r+q_{k}z\partial_{z}^{2}r+q_{k}\partial_{z}r = &\frac{\mu}{1-\mu}\frac{(\partial_{s}+q_{k}z\partial_{z})r\partial_{z}r}{r},\label{eq: wave equation for r under self-similar coordinates}\\
    \partial_{s}\partial_{z}\Psi+q_{k}z\partial_{z}^{2}\Psi+q_{k}\partial_{z}\Psi+k\partial_{z}r = &\frac{\mu}{1-\mu}\frac{(\partial_{s}+q_{k}z\partial_{z})r\partial_{z}r}{r^{2}}\Psi,\label{eq: wave equation for psi under self-similar coordinates}\\
    \partial_{z}m+\frac{r}{\partial_{z}r}\left(\partial_{z}\left(\frac{\Psi}{r}\right)\right)^{2}m =&\frac{1}{2}\frac{r^{2}}{\partial_{z}r}\left(\partial_{z}\left(\frac{\Psi}{r}\right)\right)^{2},\label{eq: u transport equation for m under self-similar coordinates}\\
    \left(\partial_{s}+q_{k}z\partial_{z}\right)m+\frac{r}{(\partial_{s}+q_{k}z\partial_{z})r}\left[(\partial_{s}+q_{k}z\partial_{z})\left(\frac{\Psi}{r}\right)+k\right]^{2}m=&\frac{1}{2}\frac{r^{2}}{(\partial_{s}+q_{k}z\partial_{z})r}\left[(\partial_{s}+q_{k}z\partial_{z})\left(\frac{\Psi}{r}\right)+k\right]^{2}.\label{eq: v-transport equation for m under self-similar coordinates}
\end{align}
By the local well-posedness of the Einstein-scalar field equations, we will only use equations \eqref{eq: wave equation for r under self-similar coordinates}-\eqref{eq: u transport equation for m under self-similar coordinates}. We can also systematically write the Einstein-scalar field equations for $(r,\Psi,m)$ as $\mathcal{F}(r,\Psi,m) = 0$, where $\mathcal{F}$ represents the equations \eqref{eq: wave equation for r under self-similar coordinates}-\eqref{eq: u transport equation for m under self-similar coordinates} and $(r,\Psi,m)$ is the solution to the Einstein-scalar field equations with the perturbed initial data $(r,\Psi)|_{\Sigma_{-1}^{(in)}} = (r_{k},\Psi_{k}+\Psi_{p})$ for $\Psi_{p}\in\mathcal{C}_{N}^{p_{k},\delta}$. Then we have \begin{equation*}
    0 = \mathcal{F}(r_{k}+r_{p},\Psi_{k}+\Psi_{p},m_{k}+m_{p})-\mathcal{F}(r_{k},\Psi_{k},m_{k}),
\end{equation*}
where \begin{equation*}
    r_{p}: = r-r_{k},\quad \Psi_{p}: = \Psi-\Psi_{k},\quad m_{p}: = m-m_{k}.
\end{equation*}
Taylor expanding the above expression with respect to $(r_{p},\Psi_{p},m_{p})$ and denoting the linear part as $\mathcal{P}$, we have \begin{equation*}
    \mathcal{P}(r_{p},\Psi_{p},m_{p}) = \mathcal{N}(r_{p},\Psi_{p},m_{p};r_{k},\Psi_{k},m_{k}),
\end{equation*}
where $\mathcal{N}$ is the quadratic nonlinearities with coefficients depending on the background spacetime $(r_{k},\Psi_{k},m_{k})$. A direct computation shows that the linearized operator $\mathcal{P}$ takes the following form under the self-similar coordinates
\begin{align}
\mathcal{P}^{(1)}(r_{p},\Psi_{p},m_{p}) =& 
\partial_{s}\partial_{z}r_{p}+q_{k}z\partial_{z}^{2}r_{p}+\left(q_{k}-\frac{\mu_{k}}{1-\mu_{k}}\frac{\partial_{s}r_{k}+2q_{k}z\partial_{z}r_{k}}{r_{k}}\right)\partial_{z}r_{p}-\frac{\mu_{k}}{1-\mu_{k}}\frac{\partial_{z}r_{k}}{r_{k}}\partial_{s}r_{p}\nonumber\\&+\frac{\mu_{k}(2-\mu_{k})}{(1-\mu_{k})^{2}}\frac{(\partial_{s}r_{k}+q_{k}z\partial_{z}r_{k})\partial_{z}r_{k}}{r_{k}^{2}}r_{p}-\frac{2}{(1-\mu_{k})^{2}}\frac{(\partial_{s}r_{k}+q_{k}z\partial_{z}r_{k})\partial_{z}r_{k}}{r_{k}^{2}}m_{p},
\label{eq:wave eq for rp under self-similar coordinates}\allowdisplaybreaks\\[1em]
\mathcal{P}^{(2)}(r_{p},\Psi_{p},m_{p}) =& 
\partial_{s}\partial_{z}\Psi_{p}+q_{k}z\partial_{z}^{2}\Psi_{p}+q_{k}\partial_{z}\Psi_{p}-\frac{\mu_{k}}{1-\mu_{k}}\frac{(\partial_{s}r_{k}+q_{k}z\partial_{z}r_{k})\partial_{z}r_{k}}{r_{k}^{2}}\Psi_{p}\nonumber\\&-\frac{1}{r_{k}^{2}}\frac{\mu_{k}}{1-\mu_{k}}\Psi_{k}\partial_{z}r_{k}\partial_{s}r_{p}-\left(\frac{1}{r_{k}^{2}}\frac{\mu_{k}}{1-\mu_{k}}\Psi_{k}(\partial_{s}r_{k}+q_{k}z\partial_{z}r_{k})+k\right)\partial_{z}r_{p}\nonumber\\&+\left(\frac{\mu_{k}(3-2\mu_{k})}{r_{k}^{3}(1-\mu_{k})^{2}}\Psi_{k}(\partial_{s}r_{k}+q_{k}z\partial_{z}r_{k})\partial_{z}r_{k}\right)r_{p}-\frac{2\Psi_{k}(\partial_{s}r_{k}+q_{k}z\partial_{z}r_{k})\partial_{z}r_{k}}{r_{k}^{3}(1-\mu_{k})^{2}}m_{p},
\label{eq:wave eq for psip under self-similar coordinate}\allowdisplaybreaks\\[1em]\mathcal{P}^{(3)}(r_{p},\Psi_{p},m_{p})=&
\partial_{z}m_{p}+\frac{r_{k}}{\partial_{z}r_{k}}(\partial_{z}\phi_{k})^{2}m_{p}-(1-\mu_{k})\frac{r_{k}}{\partial_{z}r_{k}}\partial_{z}\phi_{k}\partial_{z}\Psi_{p}+(1-\mu_{k})\partial_{z}\phi_{k}\Psi_{p}\nonumber\\&+\left(\frac{1}{2}\left(\frac{r_{k}}{\partial_{z}r_{k}}\right)^{2}(\partial_{z}\phi_{k})^{2}+\frac{r_{k}}{\partial_{z}r_{k}}\partial_{z}\phi_{k}(\phi_{k}-ks)\right)(1-\mu_{k})\partial_{z}r_{p}\nonumber\\&-
\left(\frac{1}{2}\frac{r_{k}}{\partial_{z}r_{k}}(\partial_{z}\phi_{k})^{2}\mu_{k}+(1-\mu_{k})\partial_{z}\phi_{k}(\phi_{k}-ks)\right)r_{p}.\label{eq:v transport eq for m under self-similar coordinates}
\end{align}

To simplify the notation, we suppress the dependence on the background and denote the nonlinearities by $\mathcal{N}(r_{p},\Psi_{p},m_{p})$. The explicit structure of $\mathcal{N}$ is computed in Appendix~\ref{appendix: nonlinear equations}.
\subsection{Schematic notations}
Since the coefficients of the linearized operator $\mathcal{P}$ are rather complicated, in this section, we use the estimates in Proposition~\ref{prop: estimate on the exact k self similar spacetime} on the background $k$-self-similar spacetime to introduce some schematic notations. 
\begin{definition}
Let $\mathcal{F}_{a}$, $\mathcal{G}_{a}$, and $\mathcal{K}$ be three classes of functions independent of $s$ satisfying the following properties:
\begin{enumerate}
  \item[(a)] For any $F_{a}\in \mathcal{F}_{a}$, we have\begin{equation}
  \left\Vert F_{a}\right\Vert_{L^{\infty}([-1,-\frac{1}{2}])}\lesssim a,\quad \left\Vert z\vert^{\epsilon} F_{a}\right\Vert_{L^{\infty}\left([-1,0]\right)}\lesssim\frac{a}{\epsilon},\quad\epsilon>0.
  \end{equation}
  \item[(b)] For any $G_{a}\in\mathcal{G}_{a}$, we have $\frac{dG_{a}}{dz}\in\mathcal{F}_{a}$ and \begin{equation}
  \Vert G_{a}\Vert_{L^{\infty}([-1,0])}\lesssim a.
  \end{equation} 
  \item[(c)] For any $K\in\mathcal{K}$, we have $K(z)$ is bounded.
\end{enumerate}
In the following, without any confusion, we will simply use $F_{a}(z)$, $G_{a}(z)$, and $K(z)$ to denote any function in the function classes $\mathcal{F}_{a}$, $\mathcal{G}_{a}$, and $\mathcal{K}$, respectively. Moreover, $F_{a}^{n}$, $G_{a}^{n}$, and $K^{n}$ denote the product of any $n$ functions in $\mathcal{F}_{a}$, $\mathcal{G}_{a}$, and $\mathcal{K}$, respectively.
\label{def:schematic notation}
\end{definition}
\begin{proposition}
Adopting the notation in Definition \ref{def:schematic notation}, we can rewrite the equations \eqref{eq:wave eq for rp under self-similar coordinates}-\eqref{eq:v transport eq for m under self-similar coordinates} in the following forms:\begin{align}
&\partial_{s}\partial_{z}r_{p}+q_{k}z\partial_{z}^{2}r_{p}+(q_{k}+G_{k^{2}}(z))\partial_{z}r_{p}+G_{k^{2}}(z)\partial_{s}r_{p}+G_{k^{2}}(z)r_{p} = K(z)\frac{m_{p}}{\mr{r}^{2}},\label{eq:wave equation for r rho}\\&
\partial_{s}\partial_{z}\Psi_{p}+q_{k}z\partial_{z}^{2}\Psi_{p}+q_{k}\partial_{z}\Psi_{p}+G_{k^{2}}(z)\Psi_{p} = G_{k^{2}}(z)\partial_{s}r_{p}+(G_{k^{2}}(z)+k)\partial_{z}r_{p}+G_{k^{2}}(z)r_{p}\nonumber\\&\hspace{6.5cm}+\vert z\vert^{\frac{1}{2}}G_{k}(z)\frac{m_{p}}{\mr{r}^{2}},\label{eq:wave equation for Psi rho}\\&
\partial_{z}m_{p}+\left(\frac{\mr{r}}{\partial_{z}\mr{r}}(\partial_{z}\mr{\phi})^{2}\right)m_{p} = \mr{r}^{2}\left(F_{k^{2}}(z)\frac{r_{p}}{\mr{r}}+F_{k^{2}}(z)\partial_{z}r_{p}+F_{k}(z)\left(\partial_{z}\left(\frac{\Psi_{p}}{\mr{r}}\right)-\partial_{z}\left(\mr{\phi}\frac{r_{p}}{\mr{r}}\right)\right)\right).\label{eq:z-transport eq for mrho}
\end{align}
\end{proposition}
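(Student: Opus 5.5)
The plan is to verify the schematic form coefficient by coefficient. We start from the explicit linearized operator \eqref{eq:wave eq for rp under self-similar coordinates}--\eqref{eq:v transport eq for m under self-similar coordinates} and insert the background bounds of Proposition~\ref{prop: estimate on the exact k self similar spacetime}. First we record the self-similar form of the background in $(s,z)$ coordinates: $r_{k}=e^{-s}\mr{r}(z)$, $\partial_{z}r_{k}=e^{-s}\cdot e^{k^{2}s}\cdot$(bounded)$\approx e^{-q_{k}s}$, and $\partial_{s}r_{k}=-r_{k}$. Further, $\Psi_{k}=r_{k}\mr{\phi}$ and $\mu_{k}=\mr{\mu}(z)$. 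After the appropriate renormalization (rescaling $r_{p},m_{p}$ by $e^{-s}$ and $\Psi_{p}$ by $e^{-s}$, implicit in \eqref{eq:wave equation for r rho}--\eqref{eq:z-transport eq for mrho}), every coefficient becomes a function of $z$ alone. This is the point of the $(r,\Psi,m)$ formulation. So it suffices to estimate finitely many explicit functions of $z$ on $[-1,0]$.

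The key scalar inputs are the following.

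\begin{itemize}
\item Since $0<\mr{\mu}/\mr{r}^{2}\lesssim k^{2}$ and $\mr{r}\approx 1+z$, the combination $\frac{\mu_{k}}{1-\mu_{k}}\frac{1}{r_{k}^{2}}\times(\text{products of }\partial r_{k}\text{ and }r_{k})$ is bounded by $k^{2}$ uniformly, including near the axis $z=-1$. The factor $\mr{r}^{2}$ from $\mu$ cancels the $r_{k}^{-2}$.
\item Its $z$-derivative involves $\partial_{z}\mr{\mu}$ and $\partial_{z}^{2}\mr{r}$. By item (5) of that proposition, $|z|^{\epsilon}|\partial_{z}\mr{\mu}/(1+z)|\lesssim k^{2}/\epsilon^{2}$ and $|z|^{\epsilon}|\partial_{z}^{2}\mr{r}|\lesssim k^{2}/\epsilon^{2}$, and these are smooth near $z=-1$. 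Hence such coefficients lie in $\mathcal{G}_{k^{2}}$, possibly after absorbing an extra $\epsilon^{-1}$ that the definition tolerates (or trading $\epsilon$ for $2\epsilon$).
\end{itemize}

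This handles the coefficients of $\partial_{z}r_{p}$, $\partial_{s}r_{p}$ and $r_{p}$ in \eqref{eq:wave equation for r rho}, and the $\Psi_{p}$ coefficient in \eqref{eq:wave equation for Psi rho}. The $m_{p}$ coefficient $\frac{2}{(1-\mu_{k})^{2}}\frac{(\partial_{s}r_{k}+q_{k}z\partial_{z}r_{k})\partial_{z}r_{k}}{r_{k}^{2}}$ is simply bounded by $K(z)/\mr{r}^{2}$, which gives the right side of \eqref{eq:wave equation for r rho}.

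For the $\Psi_{p}$ equation, the new coefficients carry $\Psi_{k}/r_{k}=\mr{\phi}$. By the bound $|\mr{\phi}|\lesssim \frac{k}{\epsilon}|z|^{1-\epsilon}$ (taking $\epsilon=1/2$ gives $|\mr{\phi}|\lesssim k|z|^{1/2}$), the $m_{p}$ coefficient becomes $|z|^{1/2}G_{k}(z)/\mr{r}^{2}$. Multiplied by $\mu_{k}/r_{k}^{2}$, the coefficients of $\partial_{s}r_{p},\partial_{z}r_{p},r_{p}$ are $O(k^{3})\subset G_{k^{2}}$, plus the explicit $k\partial_{z}r_{p}$. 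Their derivatives involve $\mr{\phi}'\in F_{k}$ times $k^{2}$, which is in $F_{k^{2}}$, so these are $G_{k^{2}}$.

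For the transport equation, we divide \eqref{eq:v transport eq for m under self-similar coordinates} by nothing and regroup. The terms $-(1-\mu_{k})\frac{r_{k}}{\partial_{z}r_{k}}\partial_{z}\phi_{k}\partial_{z}\Psi_{p}+(1-\mu_{k})\partial_{z}\phi_{k}\Psi_{p}$ are rewritten as $-(1-\mu_{k})\frac{r_{k}^{2}}{\partial_{z}r_{k}}\partial_{z}\phi_{k}\,\partial_{z}(\Psi_{p}/r_{k})$. The $\phi_{k}-ks=\mr{\phi}$ terms combine with $\partial_{z}r_{p}$ and $r_{p}$ into $\partial_{z}(\mr{\phi}r_{p}/\mr{r})$, up to remainders of the form $\mr{\phi}'^{2}\times$(bounded), $\mr{\mu}\mr{\phi}'^{2}$ and $\mu$-weighted terms. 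Then $\partial_{z}\mr{\phi}\in F_{k}$ gives the $F_{k}$ coefficient, and $(\partial_{z}\mr{\phi})^{2}$ together with $\mr{\mu}/\mr{r}^{2}\lesssim k^{2}$ gives $F_{k^{2}}$. For the $L^{\infty}([-1,-1/2])$ part we use $\|\mr{\phi}'\|\lesssim k^{2}$ there. The square of an $F_{k}$ function is only $|z|^{-2\epsilon}k^{2}/\epsilon^{2}$, which is again of $F_{k^{2}}$ type after renaming $\epsilon$.

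The main obstacle is the last algebraic regrouping: one must check that \emph{all} terms of $\mathcal{P}^{(3)}$ not of size $k^{2}$ assemble exactly into the gauge-covariant combination $\partial_{z}(\Psi_{p}/\mr{r})-\partial_{z}(\mr{\phi}r_{p}/\mr{r})$. This combination is the linearization of $\partial_{z}\phi$, and it is the only place where the weaker size $k$ is allowed. We would verify this by linearizing $\partial_{z}(\Psi/r)$ directly, which gives $\partial_{z}(\Psi_{p}/r_{k})-\partial_{z}(\Psi_{k}r_{p}/r_{k}^{2})$, and matching it against the expanded right side of \eqref{eq: u transport equation for m under self-similar coordinates}. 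A secondary care point is the axis $z=-1$, where we must ensure every $r_{k}^{-1}$ is compensated by $\mr{\mu}\lesssim k^{2}\mr{r}^{2}$ or by the explicit $\mr{r}^{2}$ prefactor.
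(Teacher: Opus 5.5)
Your route is the paper's. Its proof is only the remark that the forms follow by direct computation from Proposition~\ref{prop: estimate on the exact k self similar spacetime}, and your coefficient-by-coefficient check is that computation. The regrouping of $\mathcal{P}^{(3)}$ that you single out does close exactly. After factoring out $\mr{r}^{2}$, all $\mr{\phi}$-terms are absorbed by $F_{k}=(1-\mr{\mu})\mr{\phi}'/\mr{r}'$ in front of $\partial_{z}(\Psi_{p}/\mr{r})-\partial_{z}(\mr{\phi}\,r_{p}/\mr{r})$. What remains is $F_{k^{2}}=-\frac{1}{2}(1-\mr{\mu})(\mr{\phi}')^{2}/(\mr{r}')^{2}$ in front of $\partial_{z}r_{p}$ and $F_{k^{2}}=(1-\frac{1}{2}\mr{\mu})(\mr{\phi}')^{2}/\mr{r}'$ in front of $r_{p}/\mr{r}$.

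Your setup paragraph, however, contains a real error.

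\begin{itemize}
\item In $(s,z)$ coordinates $\partial_{z}r_{k}=e^{-s}\mr{r}'(z)$, not $\approx e^{-q_{k}s}$. Combining $\partial_{z}=e^{-q_{k}s}\partial_{v}$ with $\lambda_{k}=e^{-k^{2}s}\mr{r}'$ gives $e^{-s}$; you flipped the sign of the $k^{2}$.
\item No renormalization is needed. Every coefficient in \eqref{eq:wave eq for rp under self-similar coordinates}--\eqref{eq:v transport eq for m under self-similar coordinates} is already a function of $z$ alone. Each is a degree-zero ratio of $r_{k}$, $\partial_{s}r_{k}=-r_{k}$, $\partial_{z}r_{k}$, $\Psi_{k}=r_{k}\mr{\phi}$ and $m_{k}=e^{-s}\mr{m}$ (each $e^{-s}$ times a function of $z$), multiplied by functions of $\mu_{k}=\mr{\mu}$, $\partial_{z}\phi_{k}=\mr{\phi}'$ and $\phi_{k}-ks=\mr{\phi}$.
\item A rescaling is also not permitted. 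The statement concerns the unrescaled $(r_{p},\Psi_{p},m_{p})$. Setting $r_{p}=e^{\mp s}\tilde{r}_{p}$ turns $\partial_{s}\partial_{z}r_{p}$ into $e^{\mp s}(\partial_{s}\partial_{z}\tilde{r}_{p}\mp\partial_{z}\tilde{r}_{p})$, so the $\partial_{z}$-coefficient becomes $q_{k}\mp1+G_{k^{2}}$ instead of $q_{k}+G_{k^{2}}$. This is exactly the $-(1+a_{\lambda})$ shift the paper later picks up for $r_{\lambda}=e^{(1+a_{\lambda})s}r_{p}$.
\item Conversely, had $\partial_{z}r_{k}/r_{k}$ really grown like $e^{k^{2}s}$, no rescaling of the unknowns could remove it. So your justification of $s$-independence was invalid as written, even though the conclusion is right.
\end{itemize}

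There are also two smaller points.

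First, on $[-1,-\frac{1}{2}]$, membership in $\mathcal{G}_{k^{2}}$ requires $|\partial_{z}(\mr{\mu}/\mr{r}^{2})|\lesssim k^{2}$. This does not follow termwise from item (5). The two terms $\mr{\mu}'/\mr{r}^{2}$ and $2\mr{\mu}\mr{r}'/\mr{r}^{3}$ are each only $O(k^{2}(1+z)^{-1})$, and smoothness at the axis gives boundedness but not the size $k^{2}$. To get the size, write $\mr{\mu}/\mr{r}^{2}=2\mr{m}/\mr{r}^{3}$ and use $\mr{m}'=\frac{1}{2}\frac{\mr{r}^{2}}{\mr{r}'}(1-\mr{\mu})(\mr{\phi}')^{2}$ with $\mr{r}$ as the variable. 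This gives $\mr{m}/\mr{r}^{3}=\int_{0}^{1}t^{2}g(t\mr{r})\,dt$, where $g=\frac{(1-\mr{\mu})(\mr{\phi}')^{2}}{2(\mr{r}')^{2}}$ is regarded as a function of $\mr{r}$. Then $|\partial_{z}(\mr{m}/\mr{r}^{3})|\lesssim\sup_{[-1,-1/2]}|\partial_{z}g|\lesssim k^{4}$, since $|\mr{\phi}'|,|\mr{\phi}''|,|\mr{\mu}'|,|\mr{r}''|\lesssim k^{2}$ there.

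Second, renaming $\epsilon$ does not convert the $k^{2}/\epsilon^{2}$ of item (5), or of $(\mr{\phi}')^{2}$, into the $a/\epsilon$ of Definition~\ref{def:schematic notation}. This looseness is inherited from the paper and is harmless for fixed $\epsilon$. Say that explicitly rather than claiming the definition absorbs it.
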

\begin{proof}
Using Proposition \ref{prop: estimate on the exact k self similar spacetime}, a straightforward computation gives the proof of this proposition.
\end{proof}
Since we will also use the second-order energy estimate, we state the following schematic equations after taking one $\partial_{z}$-derivative of \eqref{eq:wave equation for r rho}-\eqref{eq:z-transport eq for mrho}:
\begin{proposition}
Adopting the notation in Definition \ref{def:schematic notation}, commuting the equations \eqref{eq:wave equation for r rho}-\eqref{eq:wave equation for Psi rho} with $\partial_{z}$, we have the following schematic forms:\begin{align}
&\begin{aligned}
&\partial_{s}\partial_{z}^{2}r_{p}+q_{k}z\partial_{z}^{3}r_{p}+\left(2q_{k}+G_{k^{2}}(z)\right)\partial_{z}^{2}r_{p}\\ =& F_{k^{2}}(z)\partial_{z}r_{p}+F_{k^{2}}(z)\partial_{s}r_{p}+F_{k^{2}}(z)r_{p}+K(z)\partial_{z}\left(\frac{m_{p}}{\mr{r}^{2}}\right)+(K(z)+F_{k^{2}}(z))\frac{m_{p}}{\mr{r}^{2}},
\end{aligned}\label{eq:second order wave equation for r rho}\\[1em]
&\begin{aligned}
&\partial_{s}\partial_{z}^{2}\Psi_{p}+q_{k}z\partial_{z}^{3}\Psi_{p}+2q_{k}\partial_{z}^{2}\Psi_{p}+G_{k^{2}}(z)\partial_{z}\Psi_{p}+F_{k^{2}}(z)\Psi_{p}\\=& G_{k}(z)\partial_{z}^{2}r_{p}+F_{k^{2}}(z)\partial_{s}r_{p}+F_{k^{2}}(z)\partial_{z}r_{p}+F_{k^{2}}(z)r_{p}+\vert z\vert^{\frac{1}{2}}G_{k}(z)\partial_{z}\left(\frac{m_{p}}{\mr{r}^{2}}\right)+F_{k}(z)\frac{m_{p}}{\mr{r}^{2}}.
\end{aligned}\label{second order wave equation for psi rho}
\end{align}
\end{proposition}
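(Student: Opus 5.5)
We obtain the commuted system by applying $\partial_{z}$ to \eqref{eq:wave equation for r rho}--\eqref{eq:wave equation for Psi rho}, keeping exact track of the principal part, and sorting every lower-order term into the classes of Definition~\ref{def:schematic notation}. For the principal part, note that $[\partial_{z},\partial_{s}]=0$ and $\partial_{z}(q_{k}z\partial_{z}^{2}f)=q_{k}z\partial_{z}^{3}f+q_{k}\partial_{z}^{2}f$. Hence $\partial_{z}$ of $\partial_{s}\partial_{z}f+q_{k}z\partial_{z}^{2}f+q_{k}\partial_{z}f$ equals $\partial_{s}\partial_{z}^{2}f+q_{k}z\partial_{z}^{3}f+2q_{k}\partial_{z}^{2}f$. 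This explains the coefficient $2q_{k}$ in both equations.

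\textbf{The $r_{p}$ equation.} The remaining terms are products $G(z)\cdot X$ with $X\in\{\partial_{z}r_{p},\partial_{s}r_{p},r_{p}\}$. Differentiating gives $G\,\partial_{z}X+G'X$.
\begin{itemize}
\item Since $G_{k^{2}}'\in\mathcal{F}_{k^{2}}$, the terms $G'X$ produce $F_{k^{2}}\partial_{z}r_{p}$, $F_{k^{2}}\partial_{s}r_{p}$ and $F_{k^{2}}r_{p}$.
\item The terms $G\,\partial_{z}X$ give $G_{k^{2}}\partial_{z}^{2}r_{p}$, which stays on the left; $G_{k^{2}}\partial_{z}r_{p}$, which is absorbed into $F_{k^{2}}\partial_{z}r_{p}$ since $\mathcal{G}_{a}\subset\mathcal{F}_{a}$ (bounded by $a$ implies both bounds in (a)); and $G_{k^{2}}\partial_{s}\partial_{z}r_{p}$.
\item The term $G_{k^{2}}\partial_{s}\partial_{z}r_{p}$ must be eliminated. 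We substitute \eqref{eq:wave equation for r rho} itself for $\partial_{s}\partial_{z}r_{p}$, which reintroduces $q_{k}z\partial_{z}^{2}r_{p}$, lower-order terms and $K m_{p}/\mr{r}^{2}$. The second-order term joins the left-hand coefficient $G_{k^{2}}$ (the extra $z$ factor is bounded). The lower-order terms are quadratic in $G_{k^{2}}$ and hence $F_{k^{2}}$-type. The $m_{p}$ term gives $G_{k^{2}}K\,m_{p}/\mr{r}^{2}$, which is of $K$ type.
\item The right-hand side $K\,m_{p}/\mr{r}^{2}$ differentiates to $K\partial_{z}(m_{p}/\mr{r}^{2})+K'm_{p}/\mr{r}^{2}$. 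The $K'$ term comes from the explicit coefficient $\frac{2}{(1-\mu_{k})^{2}}\frac{(\partial_{s}r_{k}+q_{k}z\partial_{z}r_{k})\partial_{z}r_{k}}{r_{k}^{2}}\mr{r}^{2}$. Its derivative involves $\partial_{z}\mu_{k}$ and $\partial_{z}^{2}\mr{r}$, both bounded by $k^{2}|z|^{-\epsilon}/\epsilon^{2}$ by Proposition~\ref{prop: estimate on the exact k self similar spacetime}(5), so it is of the form $K+F_{k^{2}}$.
\end{itemize}

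\textbf{The $\Psi_{p}$ equation.} The left side is handled the same way, giving $G_{k^{2}}\partial_{z}\Psi_{p}+F_{k^{2}}\Psi_{p}$. On the right, $k\,\partial_{z}r_{p}$ and $G_{k^{2}}\partial_{z}r_{p}$ produce $(k+G_{k^{2}})\partial_{z}^{2}r_{p}$, which is of type $G_{k}\partial_{z}^{2}r_{p}$ since $k+G_{k^{2}}\in\mathcal{G}_{k}$. The term $G_{k^{2}}\partial_{s}\partial_{z}r_{p}$ is again eliminated by substituting \eqref{eq:wave equation for r rho}. This yields $G_{k^{2}}\partial_{z}^{2}r_{p}$ and $K m_{p}/\mr{r}^{2}$ multiplied by $G_{k^{2}}$; the latter is bounded by $k^{2}$ and hence of the form $F_{k}\,m_{p}/\mr{r}^{2}$. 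For the term $|z|^{1/2}G_{k}\,m_{p}/\mr{r}^{2}$, the derivative of $|z|^{1/2}G_{k}$ is $\tfrac12|z|^{-1/2}G_{k}+|z|^{1/2}F_{k}$, and both pieces lie in $\mathcal{F}_{k}$ because $|z|^{-1/2}$ is integrable.

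\textbf{Main obstacle.} The only delicate point is this membership $|z|^{-1/2}G_{k}\in\mathcal{F}_{k}$. Condition (a) for $\mathcal{F}_{a}$ allows only $|z|^{-\epsilon}$ blow-up with constant $a/\epsilon$, whereas $|z|^{-1/2}$ is a fixed power. So either the $G_{k}$ appearing here actually vanishes at $z=0$ like $|z|^{1/2-\epsilon}$, or the factor $|z|^{1/2}$ was not essential. I would resolve this by going back to the explicit coefficient $\Psi_{k}(\partial_{s}r_{k}+q_{k}z\partial_{z}r_{k})\partial_{z}r_{k}/(r_{k}^{3}(1-\mu_{k})^{2})$ times $\mr{r}^{2}$. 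Using $|\mr{\phi}|\lesssim \frac{k}{\epsilon}|z|^{1-\epsilon}$ from Proposition~\ref{prop: estimate on the exact k self similar spacetime}(4), this coefficient is $O(k|z|^{1-\epsilon})$. Its $z$-derivative is then $O(k|z|^{-\epsilon}/\epsilon)$, which is exactly $F_{k}$ type.
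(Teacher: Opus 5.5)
Your proposal is correct and is essentially the computation the paper has in mind: the paper gives no separate proof and treats this as a direct $\partial_{z}$-commutation of the first-order schematic system, and your re-substitution of the first-order $r_{p}$ equation to eliminate $G_{k^{2}}\partial_{s}\partial_{z}r_{p}$ is exactly what the stated form requires. The one thing to delete is the claim that $|z|^{-1/2}G_{k}\in\mathcal{F}_{k}$ ``because $|z|^{-1/2}$ is integrable,'' which is false. Your explicit-coefficient repair is the right fix, but the $\mathcal{F}_{k}$ bound on the $z$-derivative of the coefficient $c(z)\propto\mr{\phi}\,(-\mr{r}+q_{k}z\,\partial_{z}\mr{r})\,\partial_{z}\mr{r}/(1-\mu_{k})^{2}$ of $m_{p}/\mr{r}^{2}$ must come from actually differentiating $c$, using $\||z|^{\epsilon}\partial_{z}\mr{\phi}\|_{L^{\infty}}\lesssim k/\epsilon$ and $\|\partial_{z}\mr{\phi}\|_{L^{\infty}([-1,-\frac{1}{2}])}\lesssim k^{2}$; it cannot be inferred from the size bound $|c|\lesssim \frac{k}{\epsilon}|z|^{1-\epsilon}$ alone.
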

For the convenience of the integration, we define \begin{equation*}
    \mathcal{R}(s_{0}):=\left\{-1\leq z\leq 0,\ 0\leq s\leq s_{0}\right\},\quad \mathcal{H}(s_{0}): = \left\{v = 0,\ -s_{0}\leq u\leq0\right\}.
\end{equation*}

\subsection{Linearized stability result for the $k$-self-similar naked singularity interior}
Since the nonlinear stability result crucially relies on the decay estimates for solutions to the linearized Einstein-scalar field equations $\mathcal{P}(r_{p},\Psi_{p},m_{p}) = 0$ established in \cite{zhenglinear}, we recall one of the main theorems in \cite{zhenglinear} here, for perturbations above the threshold. Let $\delta_{0}\in(0,1)$ be the real number such that $\phi_{k}|_{\Sigma_{-1}^{(in)}}\in\mathcal{C}^{p_{k},\delta_{0}}$. Then we have the following.
\begin{theorem}[Linearized stability for perturbations above the threshold in \cite{zhenglinear}]
\label{thm: linearized result}
    For $k$ sufficiently small, any $\alpha\in(p_{k},\frac{3}{2})$, and $\beta\in(\alpha+\frac{1}{4},2)$, let $(r_{p},\Psi_{p},m_{p})$ be the solution to the spherically symmetric Einstein-scalar field equations $\mathcal{P} = 0$ linearized around the $k$-self-similar naked singularity interior with the initial data $$(r_{p},\Psi_{p})|_{\Sigma_{-1}^{(in)}} = (r_{p}^{0},\Psi_{p}^{0})\in\mathcal{C}_{N}^{\beta}\times\mathcal{C}_{N}^{\alpha}$$ for $N\geq 5$. Then there exist constants $c_{\infty}$ depending on the initial data and $C$ depending on $k$, such that for any $\alpha^{\prime}\in(p_{k},\alpha)$, the solution $(r_{p},\Psi_{p},m_{p})$ satisfies the following bounds: \begin{align}
        \sum_{0\leq i+j\leq 1}\left\vert\partial_{u}^{i}\partial_{v}^{j}(\Psi_{p}-c_{\infty}r_{k})\right\vert\leq& C\left(\left\Vert r_{p}^{0}\right\Vert_{\mathcal{C}_{N}^{\beta}}+\left\Vert \Psi_{p}^{0}\right\Vert_{\mathcal{C}_{N}^{\alpha}}\right)(-u)^{\alpha^{\prime}q_{k}-i-q_{k}j},\label{eq: linearized estimate 1}\\\sum_{0\leq i+j\leq 1}\left\vert\partial_{u}^{i}\partial_{v}^{j}r_{p}\right\vert\leq& C\left(\left\Vert r_{p}^{0}\right\Vert_{\mathcal{C}_{N}^{\beta}}+\left\Vert \Psi_{p}^{0}\right\Vert_{\mathcal{C}_{N}^{\alpha}}\right)(-u)^{\alpha^{\prime}q_{k}-i-q_{k}j},\label{eq: linearized estimate 2}\\\sum_{0\leq i+j\leq 1}\left\vert\partial_{u}^{i}\partial_{v}^{j}m_{p}\right\vert\leq& C\left(\left\Vert r_{p}^{0}\right\Vert_{\mathcal{C}_{N}^{\beta}}+\left\Vert \Psi_{p}^{0}\right\Vert_{\mathcal{C}_{N}^{\alpha}}\right)(-u)^{\alpha^{\prime}q_{k}-i-q_{k}j}.\label{eq: linearized estimate 3}
    \end{align}
    Moreover, for any fixed $k$, we have the following estimate on the constant $c_{\infty}$ in terms of the initial data: \begin{equation}
        \vert c_{\infty}\vert\lesssim_{k}\left\Vert r_{p}^{0}\right\Vert_{\mathcal{C}_{N}^{\beta}}+\left\Vert \Psi_{p}^{0}\right\Vert_{\mathcal{C}_{N}^{\alpha}}.
    \end{equation}
   In the near-axis region $-1\leq z\leq -e^{-2q_{k}}$, we have the following second-order estimates \begin{align}
        \sum_{0\leq i+j\leq 2}\left\vert\partial_{u}^{i}\partial_{v}^{j}(\Psi_{p}-c_{\infty}r_{k})\right\vert\leq& C\left(\left\Vert r_{p}^{0}\right\Vert_{\mathcal{C}_{N}^{\beta}}+\left\Vert \Psi_{p}^{0}\right\Vert_{\mathcal{C}_{N}^{\alpha}}\right)(-u)^{\alpha^{\prime}q_{k}-i-q_{k}j},\label{eq: linearized estimate 1 second order}\\\sum_{0\leq i+j\leq 2}\left\vert\partial_{u}^{i}\partial_{v}^{j}r_{p}\right\vert\leq& C\left(\left\Vert r_{p}^{0}\right\Vert_{\mathcal{C}_{N}^{\beta}}+\left\Vert \Psi_{p}^{0}\right\Vert_{\mathcal{C}_{N}^{\alpha}}\right)(-u)^{\alpha^{\prime}q_{k}-i-q_{k}j},\label{eq: linearized estimate 2 second order}\\\sum_{0\leq i+j\leq 2}\left\vert\partial_{u}^{i}\partial_{v}^{j}m_{p}\right\vert\leq& C\left(\left\Vert r_{p}^{0}\right\Vert_{\mathcal{C}_{N}^{\beta}}+\left\Vert \Psi_{p}^{0}\right\Vert_{\mathcal{C}_{N}^{\alpha}}\right)(-u)^{\alpha^{\prime}q_{k}-i-q_{k}j}.\label{eq: linearized estimate 3 second order}
    \end{align}
\end{theorem}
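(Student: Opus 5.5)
Since this result is imported from \cite{zhenglinear}, I only sketch how I would prove it. I would work throughout in the self-similar coordinates $(s,z)$, where the claimed bounds read $e^{-\alpha' q_k s}$ (after accounting for the weights $(-u)^{-i-q_kj}$ produced by $\partial_u,\partial_v$).

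\textbf{Step 1: isolate the non-decaying direction.} The linearized operator $\mathcal{P}$ has $s$-independent coefficients. The translation symmetry $\phi\mapsto\phi+b$ of the Einstein-scalar field equations produces the exact stationary solution $(r_p,\Psi_p,m_p)=(0,r_k,0)$ of $\mathcal{P}=0$, since $\Psi=r(\phi+k\log(-u))$. This mode does not decay and accounts for the term $c_\infty r_k$. The scaling symmetry would give a second neutral direction. I expect it to be excluded by our gauge, namely the fixed axis $\Gamma=\{z=-1\}$ and the fixed initial data of $r$, but this has to be checked.

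\textbf{Step 2: reduce to a weighted energy space.} By Lemma~\ref{lemma: decomposition of the function spaces}, I would split $\Psi_p^0=c_0|z|^{\alpha}+(\text{a more regular part})$, and similarly for $r_p^0$ with its higher exponent $\beta$. By Lemma~\ref{lemma: sobolev embedding}, with $\gamma=\tfrac32-\alpha$ (shifted slightly using $\alpha'<\alpha$), the data then lie in $H^{\gamma}_{N+1}$, up to explicit profiles. For $\mathcal{P}$, I would:
\begin{itemize}
\item use the multiplier $\partial_z$ applied to the transport-type principal part $\partial_s\partial_z+q_kz\partial_z^2$;
\item add the weight $(-z)^{2\gamma+2j-2}$ on the $j$-th derivative;
\item exploit that at the horizon $z=0$ the weight generates a damping term of size roughly $(\tfrac32-\gamma)q_k-\tfrac12$, which should yield $e^{-2\alpha' q_k s}$ decay of the $H^\gamma_N$ norm.
\end{itemize}
The terms with coefficients $G_{k^2},F_{k^2}$ are perturbative for $k$ small. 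The $m_p$ term would be controlled by integrating the transport equation \eqref{eq:z-transport eq for mrho} from the axis, using $m_p/\mathring r^{2}$ and Hardy inequalities near $z=-1$.

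\textbf{Main obstacle.} The hard step is the coupling through $F_k(z)$ and $\partial_z\mathring\phi\sim 1/k$ near $z=0$, in the $\Psi_p$-equation and in the source of the $m_p$-equation. These terms are not small, so a naive energy estimate fails. My first attempt would be a spectral, non-self-adjoint semigroup argument: show that the generator of the $s$-evolution on $H^\gamma_N$ has spectrum in $\{\mathrm{Re}\,\lambda\le -\alpha' q_k\}$ apart from the simple eigenvalue $0$ of Step 1. I would do this by ODE mode analysis with $k$ small, treating the $O(1/k)$ terms through the explicit structure of the background, for instance by combining $\Psi_p$ and $r_p$ into the quantity $\partial_z(\Psi_p/\mathring r)-\partial_z(\mathring\phi\,r_p/\mathring r)$ that appears in \eqref{eq:z-transport eq for mrho}. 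I would then obtain resolvent bounds via the energy estimate above, applied as a Gårding-type inequality. The Riesz projection onto the eigenvalue $0$ defines $c_\infty$, and it is bounded by the data norm, with a constant depending on $k$.

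\textbf{Step 3: pointwise bounds.} Decay of the remainder in $H^\gamma_N$, combined with Sobolev embedding and the equations themselves, gives the first-order pointwise bounds. Near the axis, $-1\le z\le -e^{-2q_k}$, the geometry is smooth. There I would commute once more with $\partial_z$ using \eqref{eq:second order wave equation for r rho}--\eqref{second order wave equation for psi rho} and use the axis regularity conditions to recover $\partial_u$ derivatives, which gives the second-order estimates.
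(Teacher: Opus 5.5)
This statement has no proof in the paper; it is imported from \cite{zhenglinear}. A spectral treatment of a non-self-adjoint generator is consistent with how the paper describes that work. Measured against what the paper does reveal, however, your Step~1 misidentifies what removes the scaling mode, and the step that actually carries the theorem is only announced.

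\textbf{The scaling direction is removed by regularity, not by the gauge.} The theorem allows arbitrary $r_p^0\in\mathcal{C}^{\beta}_N$, and the paper applies it with $r_p^0=-c_0r_k$ in \eqref{eq: final reduction of the threshold case}. The mode $(r_k,\Psi_k,m_k)$ solves $\mathcal{P}=0$, vanishes at $z=-1$, and has $r$-datum $\mathring r\in\mathcal{C}^{\beta}_N$ for every $\beta<2$. It is excluded only because its $\Psi$-datum $\mathring r\mathring\phi$ has the sharp $|z|^{p_k}$ singularity ($\mathring\phi''\sim|z|^{-1+p_kk^2}$). So it lies in $\mathcal{C}^{p_k}_N$ but in no $\mathcal{C}^{\alpha}_N$ with $\alpha>p_k$, which is the reason for the hypothesis $\alpha>p_k$. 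Setting $r_p^0=0$ would not help either, since what matters is whether this direction lies in the phase space you evolve on. Note also that neither symmetry mode is stationary in $(s,z)$. Both $(0,r_k,0)=e^{-s}(0,\mathring r,0)$ and $(r_k,\Psi_k,m_k)=e^{-s}(\mathring r,\mathring r\mathring\phi,\mathring m)$ have eigenvalue $\lambda=-1$, and this lies inside $\{\mathrm{Re}\,\lambda>-\alpha'q_k\}$ because $\alpha'q_k>1$. Renormalizing by $e^{s}$ moves it to $0$, but then the required gap becomes $-(\alpha'q_k-1)$. Algebraic simplicity of $-1$ on $H^\gamma_N$ (one eigenvector, no Jordan block, which is what makes $c_\infty$ a constant) must therefore be proved. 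Its first half requires $\tfrac32-\gamma>p_k$, which is what pushes the scaling eigenvector out of $H^\gamma_N$.

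\textbf{The mode analysis is the real gap.} It cannot be a regular perturbation from $k=0$. There the scaling mode is $(r_0,0,0)$ with $r_0=(-u)(1+z)$: smooth, admissible, solving $\mathcal{P}=0$, and decaying only like $(-u)$. So the estimate fails at $k=0$, and $\lambda=-1$ is double because the system decouples into two free equations. For small $k\neq0$ you must show that the second branch does not survive as an admissible eigenvalue near $-1$, and that nothing else enters the strip. This is a singular perturbation problem. At $z=0$, a mode $e^{\lambda s}(\cdot)$ of \eqref{eq:wave equation for Psi rho} satisfies $(\lambda+q_k)\partial_z\Psi_p=k\,\partial_zr_p+O(k^2)$. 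Since $\lambda+q_k=-k^2$ at $\lambda=-1$, horizon amplitudes of size $1/k$ appear, consistent with $\mathring\phi'(0)\sim1/k$. The paper says the linearized operator, unlike $\Box_{g_k}$, has additional instability mechanisms that are removed only by specific structures of the background. Your proposal names none of them, so the key step has no mechanism behind it.

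Your diagnosis of the obstacle is also off target. At top order, the $F_k$ and $G_k$ couplings are perturbative in weighted norms, since $\Vert\,|z|^{\epsilon}F_k\Vert_{L^\infty}\lesssim k/\epsilon$. This works provided $r_p$ and $\Psi_p$ carry staggered weights: compare the multipliers $(-z)^{2\tau-\frac12}\partial_z^2r_p$ and $(-z)^{2\gamma}w(z)\partial_z^2\Psi_p$ in the proof of Proposition~\ref{prop: second order energy estimate in the nonlinear problem}. This is what the hypothesis $\beta>\alpha+\tfrac14$ reflects, and your single weight does not produce it. The horizon damping from the weight is also $(\tfrac32-\gamma)q_k$, not $(\tfrac32-\gamma)q_k-\tfrac12$; the latter is below $1$ and could not beat the symmetry modes. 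What energy estimates cannot supply is decay of the low modes, and that is exactly what the paper imports from this theorem.
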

\begin{remark}
Due to the constraint equations of the Einstein-scalar field equations under the spherically symmetric assumption, the regularity of $r$ is closely related to the regularity of the scalar field $\phi$. In particular, in the nonlinear problem, once the initial regularity of $\phi$ is given, the regularity of both $\phi$ and $r$ is consequently fixed. Therefore, one should distinguish the regularity of the nonlinear problem and the freedom of imposing $r_{p}^{0}$ in the linearized problem. The aforementioned regularity freedom of $r_{p}^{0}$ in the linearized problem should be interpreted as follows: in estimating the decay rate of solutions to the linearized Einstein-scalar field equations, only the $\mathcal{C}_{N}^{\beta}$-norm of $r_{p}$ enters the analysis. This does not imply that, in the nonlinear problem, the regularity of $r_{p}$ can be below $C^{2,k^{2}p_{k}}$.
\end{remark}
\begin{remark}
    The constant $c_{\infty}$ in the above theorem arises from the translation invariance of the Einstein-scalar field equations. In other words, if $(r,\Psi,m)$ is a solution to the Einstein-scalar field equations, then so is $(r,\Psi+cr,m)$ for any $c\in\mathbb{R}$. Therefore, the linearized operator $\mathcal{P}$ admits a kernel $(r,\Psi,m) = (0,cr_{k},0)$, the regularity of which is above the threshold. The presence of this non-trivial kernel obstructs improved decay estimates for $\Psi_{p}$, and cannot be eliminated solely by the regularity constraint of the initial data. One of the main difficulties in the analysis of the linearized operator is to exclude this kernel in terms of the initial data. 
\end{remark}
\section{Main results and the outline of the proof}
\label{sec: main result and proof outline}
\subsection{Nonlinear stability result}
\label{sec: main results}
In this section, we state our nonlinear stability result. Recall that $q_{k} = 1-k^{2}$ and $p_{k} = \frac{1}{q_{k}}$, and the interior initial data of the background $k$-self-similar spacetime $(r_{k},\phi_{k},m_{k})$ belongs to the localized Hölder space $\mathcal{C}_{N}^{2-}\times \mathcal{C}_{N}^{p_{k},\delta_{0}}\times \mathcal{C}_{N}^{p_{k},\delta_{0}}$ for some $\delta_{0}\in(1-O(k),1)$ and any $N\geq 2$. In this paper, we will prove the following theorem.
\begin{theorem}
\label{thm: true main theorem}
    For $k\neq0$ sufficiently small, let $(r,\phi,m)$ be the solution to the spherically symmetric Einstein-scalar field equations \eqref{eq: ESF 1}-\eqref{eq: ESF 2} with the initial data $(r,\phi)|_{\Sigma_{-1}} = (r_{k},\phi_{k}+\phi_{p}^{0})$, where $\phi_{p}^{0}\in\bar{\mathcal{C}}_{N}^{p_{k},\delta}$ for $N\geq 6$ and $0<\delta\leq \delta_{0}$. According to Lemma \ref{lemma: decomposition of the function spaces}, we can decompose $\phi_{p}^{0}$ as \begin{equation*}
        \phi_{p}^{0} = c_{0}\mr{\phi}+\phi_{1},
    \end{equation*}
    with $\phi_{1}\in\bar{\mathcal{C}}_{N}^{p_{k}+\delta^{\prime}}$ for any $\delta^{\prime}\in(0,\delta)$. Then there exists $\epsilon_{0}$ sufficiently small depending on $k$, such that for any $\left\Vert\phi_{p}^{0}\right\Vert_{\bar{\mathcal{C}}_{N}^{p_{k},\delta}}\leq \epsilon_{0}$, the maximal globally hyperbolic development of this initial data is asymptotically flat, contains no trapped surface, and has incomplete future null infinity.
    
    Moreover, there exists a constant $c_{\infty}$ depending on the initial data with the bound \begin{equation}
        \vert c_{\infty}\vert\lesssim_{k}\left\Vert\phi_{p}^{0}\right\Vert_{\bar{\mathcal{C}}_{N}^{p_{k},\delta}},
    \end{equation}
    such that for any $\alpha\in\left(p_{k},\min\{p_{k}+\delta,\frac{3}{2}\}\right)$, the solution $(r,\phi,m)$ satisfies the following quantitative bounds:
    \begin{itemize}
        \item (Bounds in the interior region) The solution asymptotically converges to the $k$-self-similar solutions as $u\rightarrow0$ in the interior region $\{-1\leq u<0,\ -(-u)^{1-k^{2}}\leq v\leq 0\}$ under the renormalized gauge defined in Section~\ref{sec: renormalized gauge}:
        \begin{align}
        \sum_{0\leq i+j\leq 1}\left\vert\partial_{u}^{i}\partial_{v}^{j}(\phi-\phi_{k}-c_{\infty})\right\vert\lesssim_{k}& \left\Vert \phi_{p}^{0}\right\Vert_{\bar{\mathcal{C}}_{N}^{p_{k},\delta}}(-u)^{\alpha q_{k}-1-i-q_{k}j},\\\sum_{0\leq i+j\leq 1}\left\vert\partial_{u}^{i}\partial_{v}^{j}(r-(1+c_{0})r_{k})\right\vert\lesssim_{k}& \left\Vert \phi_{p}^{0}\right\Vert_{\bar{\mathcal{C}}_{N}^{p_{k},\delta}}(-u)^{\alpha q_{k}-i-q_{k}j},\\\sum_{0\leq i+j\leq 1}\left\vert\partial_{u}^{i}\partial_{v}^{j}(m-(1+c_{0})m_{k})\right\vert\lesssim_{k}& \left\Vert \phi_{p}^{0}\right\Vert_{\bar{\mathcal{C}}_{N}^{p_{k},\delta}}(-u)^{\alpha q_{k}-i-q_{k}j},\\\sum_{0\leq i+j\leq 1}\left\vert\partial_{u}^{i}\partial_{v}^{j}(\mu-\mu_{k})\right\vert\lesssim_{k}& \left\Vert \phi_{p}^{0}\right\Vert_{\bar{\mathcal{C}}_{N}^{p_{k},\delta}}(-u)^{\alpha q_{k}-1-i-q_{k}j}.
    \end{align}
    \item (Self-similar rate in the exterior neighborhood of $\{v = 0\}$) In the exterior neighborhood $\left\{0\leq z\lesssim 1\right\}$ of the singular horizon $\{v= 0\}$, the following bounds hold:\begin{align}
        \sum_{0\leq i+j\leq 1}\left\vert\partial_{u}^{i}\partial_{v}^{j}(\phi-\phi_{k}-c_{\infty})\right\vert\lesssim_{k}& \left\Vert \phi_{p}^{0}\right\Vert_{\bar{\mathcal{C}}_{N}^{p_{k},\delta}}(-u)^{\alpha q_{k}-1-i-q_{k}j},\\\sum_{0\leq i+j\leq 1}\left\vert\partial_{u}^{i}\partial_{v}^{j}(r-(1+c_{0})r_{k})\right\vert\lesssim_{k}& \left\Vert \phi_{p}^{0}\right\Vert_{\bar{\mathcal{C}}_{N}^{p_{k},\delta}}(-u)^{\alpha q_{k}-i-q_{k}j},\\\sum_{0\leq i+j\leq 1}\left\vert\partial_{u}^{i}\partial_{v}^{j}(m-(1+c_{0})m_{k})\right\vert\lesssim_{k}& \left\Vert \phi_{p}^{0}\right\Vert_{\bar{\mathcal{C}}_{N}^{p_{k},\delta}}(-u)^{\alpha q_{k}-i-q_{k}j},\\\sum_{0\leq i+j\leq 1}\left\vert\partial_{u}^{i}\partial_{v}^{j}(\mu-\mu_{k})\right\vert\lesssim_{k}& \left\Vert \phi_{p}^{0}\right\Vert_{\bar{\mathcal{C}}_{N}^{p_{k},\delta}}(-u)^{\alpha q_{k}-1-i-q_{k}j}.
    \end{align}
    \item (Convergence to the $k$-self-similar naked singularity spacetime in the region near $\{u = 0\}$) Let $U = -\left\vert u\right\vert^{1-k^{2}},\ V = v^{p_{k}}$, then there exist two constants $0<\eta,s<1$ sufficiently small, the solution asymptotically converges to the $k$-self-similar naked singularity spacetime with the rates: \begin{align}
        \sum_{0\leq i+j\leq 1}\left\vert\partial_{U}^{i}\partial_{V}^{j}(\phi-\phi_{k}-c_{\infty})\right\vert&\lesssim \left\Vert\phi_{p}^{0}\right\Vert_{\bar{\mathcal{C}}_{N}^{p_{k},\delta}}^{1-s}V^{\eta-i-q_{k}j},\\
        \sum_{0\leq i+j\leq 1}\left\vert\partial_{U}^{i}\partial_{V}^{j}(r-(1+c_{0})r_{k})\right\vert&\lesssim\left\Vert\phi_{p}^{0}\right\Vert_{\bar{\mathcal{C}}_{N}^{p_{k},\delta}}^{1-s} V^{1+\eta-i-q_{k}j},\\
        \sum_{0\leq i+j\leq 1}\left\vert\partial_{U}^{i}\partial_{V}^{j}(m-(1+c_{0})m_{k})\right\vert&\lesssim \left\Vert\phi_{p}^{0}\right\Vert_{\bar{\mathcal{C}}_{N}^{p_{k},\delta}}^{1-s}V^{1+\eta-i-q_{k}j},\\\sum_{0\leq i+j\leq 1}\left\vert\partial_{U}^{i}\partial_{V}^{j}\left(\mu-\mu_{k}\right)\right\vert&\lesssim \left\Vert\phi_{p}^{0}\right\Vert_{\bar{\mathcal{C}}_{N}^{p_{k},\delta}}^{1-s}V^{\eta-i-q_{k}j}.
    \end{align}
    \item (Asymptotic flatness) The spacetime $(r,\phi,m)$ is asymptotically flat in the sense that \begin{align*}
        \lim_{v\rightarrow\infty} r(u,v) = \infty,\quad \lim_{v\rightarrow \infty}m(u,v)<\infty,\quad \left\vert\partial_{v}^{i}\phi\right\vert(u,v)\lesssim v^{-1-i},\ 1\leq i\leq2,\quad -1\leq u<0.
    \end{align*}
    \end{itemize}
\end{theorem}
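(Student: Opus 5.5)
The argument has three stages: a scaling/translation renormalization that removes the threshold component of the data, a bootstrap in the interior region driven by the linearized decay of Theorem~\ref{thm: linearized result}, and a propagation of the resulting self-similar bounds into the exterior.

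\textbf{Step 1: renormalize by the symmetries.} By Lemma~\ref{lemma: decomposition of the function spaces}, write $\phi_{p}^{0}=c_{0}\mr{\phi}+\phi_{1}$ with $\phi_{1}$ strictly above the threshold. The threshold component $c_{0}\mr{\phi}$ is a pure $\mathcal{C}^{p_{k}}$ singular profile. It cannot be absorbed into the linearized theory, because Theorem~\ref{thm: linearized result} requires $\Psi_{p}^{0}\in\mathcal{C}_{N}^{\alpha}$ with $\alpha>p_{k}$. I would use the scaling and translation invariance recalled in Section~\ref{sec: pre on k naked singularity sapcetime} to select a modified background whose initial scalar field matches $(1+c_{0})$ times the singular part. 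This is consistent with the appearance of $(1+c_{0})r_{k}$ and $(1+c_{0})m_{k}$ in the conclusions.

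Concretely, compare with the rescaled/translated self-similar solution, keeping the gauge $r|_{\Sigma_{-1}}=r_{k}$ and the fixed axis. The difference of initial data is then of the form $\epsilon\Psi$ with $\Psi\in\mathcal{C}_{N}^{p_{k}+\delta'}$, up to an $r$-perturbation with $r_{p}^{0}\in\mathcal{C}_{N}^{\beta}$. This brings the problem into the regime strictly above the threshold, with $O(\epsilon_{0})$ data. The exterior part of $\phi_{1}$ is treated separately in Step 3.

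\textbf{Step 2: interior nonlinear bootstrap.} In self-similar coordinates $(s,z)$ on $\mathcal{R}(s_{0})$, write the equations as $\mathcal{P}(r_{p},\Psi_{p},m_{p})=\mathcal{N}$ with $\mathcal{N}$ quadratic (Appendix~\ref{appendix: nonlinear equations}). The bootstrap assumption is the decay $e^{-(\alpha' q_{k}-1)s}$-type bounds for $(r_{p},\Psi_{p}-c(s)r_{k},m_{p})$ in the weighted spaces $H^{\gamma}_{N}$, with $c(s)$ a modulation parameter for the translation kernel $(0,cr_{k},0)$.

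Use Duhamel's formula with the linearized evolution from Theorem~\ref{thm: linearized result} (its semigroup version from~\cite{zhenglinear}), together with the embedding of Lemma~\ref{lemma: sobolev embedding}. Since $\mathcal{N}$ is quadratic and decays like $e^{-2(\cdot)s}$, the Duhamel integral converges and the bootstrap improves for $\epsilon_{0}$ small. Each forcing term produces its own kernel coefficient. Summing these yields $c_{\infty}=\lim c(s)$ with $|c_{\infty}|\lesssim_{k}\epsilon_{0}$, and this accounts for the $\phi-\phi_{k}-c_{\infty}$ normalization.

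The singular weights $\mr{r}^{-2}$ near the axis and the $|z|^{-1+p_{k}k^{2}}$ behavior of background derivatives near $z=0$ must be controlled in $\mathcal{N}$. For this I would use the second-order near-axis estimates of Theorem~\ref{thm: linearized result} together with the regularity condition at $\Gamma$. Converting the $(s,z)$ bounds to $(u,v)$ gives the stated interior estimates, and the estimate for $\mu-\mu_{k}$ follows algebraically.

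\textbf{Step 3: exterior.} Given the data on $\{v=0\}$ from Step 2 and the exterior data, whose singular part has been removed, argue in the spirit of Singh~\cite{singh2}. First run a bootstrap in the region $0\le z\lesssim1$ along outgoing characteristics, integrating the transport equations for $\partial_{v}(r\phi)$ and $m$; this gives the self-similar rates there. Second, in the region near $\{u=0\}$, pass to the coordinates $U,V$, in which the background is more regular. There, interpolate between the $O(\epsilon_{0})$ smallness and the available decay, which produces the $\epsilon_{0}^{1-s}V^{\eta}$ rates. Asymptotic flatness follows from standard $v\to\infty$ estimates with the $AF$ data. The absence of trapped surfaces follows from $\mu$ staying close to $\mu_{k}<1$, and incompleteness of $\mathcal{I}^{+}$ follows because $\lambda\to0$ along $\{v=0\}$ as $u\to0$, exactly as in the background.

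\textbf{Main obstacle.} I expect the main difficulty to be closing Step 2: handling the kernel/modulation $c(s)$ consistently at the nonlinear level, and controlling the nonlinear terms that involve $\partial\phi_{k}$, which is only $C^{0,p_{k}k^{2}}$, against the weighted norms near $z=0$. The first approach I would try is to estimate these terms in $H^{\gamma}_{N}$ with $\gamma$ close to $\tfrac12$ and absorb the $|z|^{-\epsilon}$ losses through the $\alpha'<\alpha$ margin.
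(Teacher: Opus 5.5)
Your proposal follows the paper's proof: the scaling symmetry recasts the threshold problem as an above-threshold one around $(1+c_{0})(r_{k},\Psi_{k},m_{k})$ with data $(-c_{0}r_{k},\,r_{k}\bar{\phi}_{p})$ (the paper notes that $\mathcal{P}$ is unchanged under this rescaling and $\mathcal{N}$ keeps its quadratic form), the interior is closed by Duhamel on top of Theorem~\ref{thm: linearized result} with the slice-wise translation constants $c_{\infty}^{s_{1}}$ summed into $c_{\infty}$, and the exterior is taken from Singh's result (Theorem~\ref{thm: nonlinear stability in the exterior region}). The differences are in implementation: the paper runs a contraction in $\mathcal{S}^{a,N,\tau}$ and removes the translation kernel by shifting the initial data by $\widetilde{c}_{\infty}r|_{\Sigma_{-1}}$ rather than modulating $c(s)$, and, since Theorem~\ref{thm: linearized result} gives only low-order pointwise decay, it propagates the higher weighted norms by separate $(-z)^{2n-4+2\gamma}$-weighted energy estimates (Proposition~\ref{prop: higher order energy estimate on nonlinear problem}) rather than by the semigroup bound you assume.
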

In this paper, we break down the proof of this main theorem into two parts: the proof of the nonlinear result for the interior region (Theorem \ref{thm: nonlinear stability result}), and the gluing for the exterior region (Theorem \ref{thm: nonlinear stability in the exterior region}). In the next few sections, we state the main results for each step.

\subsection{Nonlinear stability result for the interior region}
Writing the nonlinear spherically symmetric Einstein-scalar field equations in the form of $\mathcal{P} = \mathcal{N}$ introduced in Section \ref{sec: setup of the nonlinear stability} and using the linearized result, we can prove the following nonlinear stability result in the interior region:
\begin{theorem}
    \label{thm: nonlinear stability result}
    For $k\neq0$ sufficiently small, let $(r,\phi,m)$ be the solution to the spherically symmetric Einstein-scalar field equations \eqref{eq: ESF 1}-\eqref{eq: ESF 2} with the interior initial data $(r,\phi)|_{\Sigma_{-1}^{(in)}} = (r_{k},\phi_{k}+\phi_{p}^{0})$, where $\phi_{p}^{0}\in{\mathcal{C}}_{N}^{p_{k},\delta}$ for $N\geq 6$ and $0<\delta\leq \delta_{0}$. Assume the decomposition of $\phi_{p}^{0}$ is \begin{equation*}
        \phi_{p}^{0} = c_{0}\mr{\phi}+\phi_{1}
    \end{equation*}
    with $\phi_{1}\in\mathcal{C}^{p_{k}+\delta^{\prime}}$ for any $\delta^{\prime}\in(0,\delta)$. Then there exists $\epsilon_{0}$ sufficiently small depending on $k$, such that for any $\left\Vert\phi_{p}^{0}\right\Vert_{\mathcal{C}_{N}^{p_{k},\delta}}\leq \epsilon_{0}$, the solution $(r,\phi,m)$ satisfies the following in the interior region $\{-1\leq u<0,\ -(-u)^{q_{k}}\leq v\leq0\}$ for $\alpha\in(p_{k},\min\{p_{k}+\delta,\frac{3}{2}\})$
    \begin{align}
        \sum_{0\leq i+j\leq 1}\left\vert\partial_{u}^{i}\partial_{v}^{j}(\phi-\phi_{k}-c_{\infty})\right\vert\lesssim_{k}& \left\Vert \phi_{p}^{0}\right\Vert_{\mathcal{C}_{N}^{p_{k},\delta}}(-u)^{\alpha q_{k}-1-i-q_{k}j},\\\sum_{0\leq i+j\leq 1}\left\vert\partial_{u}^{i}\partial_{v}^{j}(r-(1+c_{0})r_{k})\right\vert\lesssim_{k}& \left\Vert \phi_{p}^{0}\right\Vert_{\mathcal{C}_{N}^{p_{k},\delta}}(-u)^{\alpha q_{k}-i-q_{k}j},\\\sum_{0\leq i+j\leq 1}\left\vert\partial_{u}^{i}\partial_{v}^{j}(m-(1+c_{0})m_{k})\right\vert\lesssim_{k}& \left\Vert \phi_{p}^{0}\right\Vert_{\mathcal{C}_{N}^{p_{k},\delta}}(-u)^{\alpha q_{k}-i-q_{k}j}.
    \end{align}
\end{theorem}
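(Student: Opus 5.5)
The plan has three stages: first modulate out the threshold-regularity component of the data by the symmetries, then run a bootstrap for the remainder in self-similar coordinates driven by Theorem~\ref{thm: linearized result}, and finally convert back to $(u,v)$ bounds.

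\textbf{Removing the threshold component.} By Lemma~\ref{lemma: decomposition of the function spaces}, the data are $\phi_{p}^{0}=c_{0}\mr{\phi}+\phi_{1}$. The component $c_{0}\mr{\phi}$ sits exactly at the threshold, so Theorem~\ref{thm: linearized result} does not apply to it directly; we absorb it using the symmetries of the equations. The scaling symmetry maps $k$-self-similar solutions to $k$-self-similar solutions with $r$ and $m$ multiplied by a constant. Combined with the translation symmetry $\Psi\mapsto\Psi+cr$ and the fact that the gauge of $r$ on $\Sigma_{-1}$ is fixed, this lets us build a modified background whose geometric part is $((1+c_{0})r_{k},(1+c_{0})m_{k})$, up to errors of size $O(c_{0}^{2})$ which are above the threshold. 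Concretely, choose a reference solution $(\tilde r,\tilde\Psi,\tilde m)$ in the self-similar family (exactly $k$-self-similar, with its initial data matching the $c_{0}\mr{\phi}$ component). The remaining difference $(r_{p},\Psi_{p},m_{p})$ relative to it then has initial data in $\mathcal{C}^{\beta}_{N}\times\mathcal{C}^{\alpha}_{N}$ with $\alpha<p_{k}+\delta$, with norm $\lesssim\epsilon_{0}$. If the family does not close exactly, the $O(c_{0}^{2})$ mismatch is treated as an above-threshold source.

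\textbf{Bootstrap for the remainder.} In the self-similar coordinates $(s,z)$, write the equations as $\mathcal{P}(r_{p},\Psi_{p},m_{p})=\mathcal{N}$, using the quadratic structure of $\mathcal{N}$ recorded in Appendix~\ref{appendix: nonlinear equations}. Assume the bootstrap bounds
\[
\|(r_{p},\Psi_{p}-c(s)r_{k},m_{p})(s)\|\le 2C\epsilon\, e^{-(\alpha' q_{k}-1)s}
\]
in a norm containing the weighted $H^{\gamma}_{N}$ energies together with the $C^{1}$ pointwise bounds of Theorem~\ref{thm: linearized result}. Apply Duhamel's principle with the linear solution operator of Theorem~\ref{thm: linearized result}. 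Each source slice $\mathcal{N}(s')$, placed on $\Sigma_{s'}$, is rescaled to initial data at $s=0$; by the scaling invariance of $\mathcal{P}$ (its coefficients are independent of $s$), this yields decay $e^{-(\alpha'q_k-1)(s-s')}$ times the $\mathcal{C}^{\beta}\times\mathcal{C}^{\alpha}$ norm of $\mathcal{N}(s')$. Being quadratic, $\mathcal{N}(s')$ has norm $\lesssim\epsilon^{2}e^{-2(\alpha'q_k-1)s'}$, so the $s'$-integral converges and closes the bootstrap with $\epsilon$ small. The constant $c_{\infty}$ is produced as the convergent sum of the kernel projections: the linear one plus $\int_{0}^{\infty}c_{\infty}[\mathcal{N}(s')]\,ds'$.

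\textbf{Main obstacle.} The hardest step is controlling the nonlinearity in the localized Hölder norms $\mathcal{C}^{\alpha}_{N}$ near $z=0$. Products like $\partial_{z}\phi_{k}\,\partial_{z}\Psi_{p}$ and $(\partial_z(\Psi_p/r))^2$ must lie in the right weighted class, which requires $N\ge6$ derivatives and the embedding of Lemma~\ref{lemma: sobolev embedding} to pass between the energy norms $H^{\gamma}_{N+1}$ and $\mathcal{C}^{\alpha}_{N}$; this accounts for the loss from $\alpha$ to $\alpha'$ and the need for $\beta>\alpha+\tfrac14$. A further delicate point is the $m_{p}/\mr r^{2}$ terms and the near-axis behavior, which need the second-order near-axis estimates of Theorem~\ref{thm: linearized result}. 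I would first try running the bootstrap directly on the commuted equations~\eqref{eq:second order wave equation for r rho}--\eqref{second order wave equation for psi rho}, rather than through a black-box Duhamel formula, if the linear estimate does not accept time-dependent sources.

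\textbf{Returning to physical coordinates.} Once the bootstrap closes, convert the bounds in $s$ into the stated $(-u)^{\alpha q_{k}-i-q_{k}j}$ rates, using $\partial_{u}=e^{s}(\partial_{s}+q_{k}z\partial_{z})$ and $\partial_{v}=e^{q_{k}s}\partial_{z}$. The $\phi$ bounds then follow from $\phi=\Psi/r-k\log(-u)$ by dividing by $r$, which is the source of the extra factor $(-u)^{-1}$ in the $\phi$ estimate.
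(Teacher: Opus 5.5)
Your overall architecture is the paper's. You expand around the exact rescaled solution $(1+c_0)(r_k,\Psi_k,m_k)$, leaving data $(-c_0r_k,\,r_k\phi_1)$ above the threshold. You then use Duhamel with the $s$-translated Theorem~\ref{thm: linearized result} and integrate the kernel constants in $s_1$.

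On the first stage, drop the hedge. No $O(c_0^2)$ mismatch arises: $\mathcal{F}$ is homogeneous of degree one, so the rescaled solution is exact and has the same linearized operator $\mathcal{P}$. A threshold-level mismatch could not be treated as an above-threshold source anyway. Also, the slice data in the Duhamel step are $z$-primitives of $\widetilde{\mathcal{N}}^{(1)},\widetilde{\mathcal{N}}^{(2)}$, into which the $m$-source $\mathcal{N}^{(3)}$ is first folded; they are not $\mathcal{N}(s')$ itself.

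\textbf{Main gap: how your norm closes.} Theorem~\ref{thm: linearized result} outputs only $C^1$ bounds ($C^2$ near the axis). To apply it to a Duhamel slice you need that slice's data in $\mathcal{C}^{\beta}_N\times\mathcal{C}^{\alpha}_N$, i.e.\ about $N$ weighted $z$-derivatives of the source, hence of the solution. Duhamel alone therefore loses derivatives, and nothing in your plan propagates the weighted $H^{\gamma}_{N+1}$ energies you put into the bootstrap norm.

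The missing idea is that for $n\ge 2$ the $\partial_z^n$-commuted equations are coercive. Commuting through $q_kz\partial_z$ produces damping. With weights $(-z)^{2n-4+2\tau-\frac{1}{2}}$ for $r_p$ and $(-z)^{2\gamma+2n-4}w(z)$, $w=3-2(-z)^{\frac{1}{2}}$, for $\Psi_p$, the bulk coefficients are $(\frac{7}{4}-\tau)q_k-(1+a_\lambda)>0$ and $(\frac{3}{2}-\gamma)q_k-(1+a_\lambda)>0$. The paper closes the top order this way:
\begin{itemize}
\item every lower-order term ($F_{k^2}\partial_z r_p$, $F_{k^2}\partial_s r_p$, $\partial_z^j((m_p)_L/\mr{r}^2)$, \dots) is fed by the pointwise Duhamel bounds;
\item the $(m_p)_L/\mr{r}^2$ terms are controlled by the Hardy and Taylor lemmas;
\item Lemma~\ref{lemma: sobolev embedding} returns the result to $\mathcal{C}^{\alpha-\tau}_N$.
\end{itemize}
Your fallback of working with the commuted equations \emph{instead of} Duhamel cannot work on its own. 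At order $\le 1$ the transport part gives no damping beyond $e^{-s}$, since the required weight is non-integrable at $z=0$. That low-order decay is exactly what the linear theorem supplies, so the two tools must be combined.

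\textbf{Second, fixable problem: the kernel inside the nonlinearity.} Your bootstrap lets $\Psi_p$ carry $c(s)r_k$, which decays only like $e^{-s}$. Yet you assert $\mathcal{N}(s')\lesssim\epsilon^2e^{-2(\alpha'q_k-1)s'}$. The cross terms $\Psi_p\times(r_p,m_p)$ break this. Examples are $\Psi_p$ times the part of $\frac{\mu}{1-\mu}\frac{\partial_z r\,(\partial_s+q_kz\partial_z)r}{r^2}$ linear in $(r_p,m_p)$ in $\mathcal{N}^{(2)}$, and terms like $\partial_z(\Psi_p r_p/r_k^2)$ in $\mathcal{N}^{(3)}$. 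For $\Psi_p\approx c\,r_k$ these are $|c|$ times linear terms, so they decay only at the linear rate. When your bootstrap rate equals the linear rate, the Duhamel integral then picks up a factor of $s$.

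The paper avoids this with the exact translation symmetry $(r,\Psi,m)\mapsto(r,\Psi+cr,m)$. It solves with data $\Psi_p^0-\widetilde{c}_\infty\, r|_{\Sigma_{-1}^{(in)}}$. Here $\widetilde{c}_\infty$ is built from the linear $c_\infty$ and $\int_0^\infty c_\infty^{s_1}\,ds_1$, so it depends on the iterate, and it is chosen so the solution has no kernel component. The fixed-point space is then kernel-free, $\mathcal{N}$ is genuinely quadratic, and the theorem's $c_\infty$ is recovered by undoing the translation. Alternatively, you could treat these cross terms as $O(\epsilon)$ linear perturbations with a small loss in the decay rate, but that has to be built into the argument explicitly.
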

\noindent We will prove this theorem in Section \ref{sec: conclude the nonlinear stability}.
\subsection{Nonlinear stability for the exterior region}
To conclude the nonlinear stability for perturbations above the threshold, we combine the interior stability theorem proved above with the exterior stability result established in \cite{singh1}. We recall here the exterior stability result from \cite{singh1}, adapting to the setting in the present paper.

\begin{theorem}[\textup{\cite{singh1}}]
\label{thm: nonlinear stability in the exterior region}
    Under the renormalized gauge introduced in Section \ref{sec: renormalized gauge}, let $(\bar{r},\bar{\phi},\bar m)$ be the interior solution to the spherically symmetric Einstein-scalar field equations with the gauge choice $\bar r(-1,v) = r_{k}(-1,v)$ and the self-similar bounds\begin{align*}
         \sum_{ i+j =1}\left\vert\partial_{u}^{i}\partial_{v}^{j}(\bar{\phi}-\phi_{k})\right\vert\lesssim& (-u)^{\alpha^{\prime}q_{k}-1-i-q_{k}j},\\\sum_{0\leq i+j\leq 1}\left\vert\partial_{u}^{i}\partial_{v}^{j}(\bar{r}-(1+c_{0})r_{k})\right\vert\lesssim& (-u)^{\alpha^{\prime}q_{k}-i-q_{k}j},\\\sum_{0\leq i+j\leq 1}\left\vert\partial_{u}^{i}\partial_{v}^{j}(\bar m-(1+c_{0})m_{k})\right\vert\lesssim&(-u)^{\alpha^{\prime}q_{k}-i-q_{k}j}.
    \end{align*}
    Let $\phi_{p}^{0}\in\bar{\mathcal{C}}_{N}^{p_{k},\delta}$ for $N\geq 6$ with the decomposition \begin{equation*}
        \phi_{p}^{0} = c_{0}\mr{\phi}+\phi_{1},\quad \phi_{1}\in\bar{\mathcal{C}}^{\alpha+\delta^{\prime}},\quad \delta^{\prime}\in(0,\delta).
    \end{equation*}
    Moreover, assume that $\phi_{p}^{0}$ coincides with $\bar{\phi}(-1,v)-\phi_{k}(-1,v)$ for $v\in[-1,0]$. Then there exists $\epsilon_{0}>0$, such that for characteristic initial data $(r,\phi)|_{\Sigma_{-1}^{(ex)}} = (r_{k},\phi_{k}+\phi_{p}^{0})$ and $(r,\phi)|_{\underline{\Sigma}_{0}} = (\bar{r},\bar{\phi})|_{\underline{\Sigma}_{0}}$ with $\left\Vert\phi_{p}^{0}\right\Vert_{\bar{\mathcal{C}}_{N}^{p_{k},\delta}}\leq \epsilon_{0}$, the maximal globally hyperbolic development of this initial data of the spherically symmetric Einstein-scalar field equations is asymptotically flat, contains no trapped surface, and has incomplete future null infinity. Moreover, the solution is asymptotically $k$-self-similar, in the sense that \begin{itemize}
        \item (Self-similar rate in the exterior neighborhood of $\{v = 0\}$) There exists $\delta$ small such that in the exterior neighborhood $\left\{0\leq z\lesssim 1\right\}$ of the singular horizon $\{v= 0\}$, the following bounds hold:\begin{align}
        \sum_{ i+j= 1}\left\vert\partial_{u}^{i}\partial_{v}^{j}(\phi-\phi_{k})\right\vert&\lesssim_{k}\left\Vert \phi_{p}^{0}\right\Vert_{\bar{\mathcal{C}}_{N}^{p_{k},\delta}}(-u)^{\delta-i-q_{k}j},\\
        \sum_{0\leq i+j\leq 1}\left\vert\partial_{u}^{i}\partial_{v}^{j}(r-(1+c_{0})r_{k})\right\vert&\lesssim_{k}\left\Vert\phi_{p}^{0}\right\Vert_{\bar{\mathcal{C}}_{N}^{p_{k},\delta}}(-u)^{1+\delta-i-q_{k}j},\\
        \sum_{0\leq i+j\leq 1}\left\vert\partial_{u}^{i}\partial_{v}^{j}(m-(1+c_{0})m_{k})\right\vert&\lesssim_{k}\left\Vert\phi_{p}^{0}\right\Vert_{\bar{\mathcal{C}}_{N}^{p_{k},\delta}}(-u)^{1+\delta-i-q_{k}j}.
    \end{align}
    \item (Convergence to the $k$-self-similar naked singularity spacetime in the region near $\{u = 0\}$) Let $U = -\left\vert u\right\vert^{1-k^{2}},\ V = v^{p_{k}}$, then there exist two constants $0<\delta,s<1$ sufficiently small, the solution asymptotically converges to the $k$-self-similar naked singularity spacetime with the rates: \begin{align}
        \sum_{0\leq i+j\leq 1}\left\vert\partial_{U}^{i}\partial_{V}^{j}(\phi-\phi_{k})\right\vert&\lesssim \left\Vert\phi_{p}^{0}\right\Vert_{\bar{\mathcal{C}}_{N}^{p_{k},\delta}}^{1-s}V^{\delta-i-q_{k}j},\\
        \sum_{0\leq i+j\leq 1}\left\Vert\partial_{U}^{i}\partial_{V}^{j}(r-(1+c_{0})r_{k})\right\Vert&\lesssim\left\Vert\phi_{p}^{0}\right\Vert_{\bar{\mathcal{C}}_{N}^{p_{k},\delta}}^{1-s} V^{1+\delta-i-q_{k}j},\\
        \sum_{0\leq i+j\leq 1}\left\Vert\partial_{U}^{i}\partial_{V}^{j}(m-(1+c_{0})m_{k})\right\Vert&\lesssim \left\Vert\phi_{p}^{0}\right\Vert_{\bar{\mathcal{C}}_{N}^{p_{k},\delta}}^{1-s}V^{1+\delta-i-q_{k}j}.
    \end{align}
    \item (Asymptotic flatness) The spacetime $(r,\phi,m)$ is asymptotically flat in the sense that \begin{align*}
        \lim_{v\rightarrow\infty} r(u,v) = \infty,\quad \lim_{v\rightarrow \infty}m(u,v)<\infty,\quad \left\vert\partial_{v}^{i}\phi\right\vert(u,v)\lesssim v^{-1-i},\ 1\leq i\leq2,\quad -1\leq u<0.
    \end{align*}
    \end{itemize}
\end{theorem}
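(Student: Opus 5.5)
The statement is Theorem~\ref{thm: nonlinear stability in the exterior region}, which is recalled from \cite{singh1}. My plan is therefore to adapt Singh's exterior argument rather than build a new one. The structure is a characteristic initial value problem in the exterior region $\{v\geq0\}$, with data on $\underline{\Sigma}_{0}=\{v=0\}$ supplied by the interior solution and data on $\Sigma_{-1}^{(ex)}$ supplied by $\phi_{k}+\phi_{p}^{0}$. The first step is to remove the scaling part of the perturbation. Since $\phi_{p}^{0}=c_{0}\mr{\phi}+\phi_{1}$, and the assumed interior bounds compare $\bar r,\bar m$ with $(1+c_{0})r_{k},(1+c_{0})m_{k}$, I compare the solution with the rescaled self-similar background $((1+c_{0})r_{k},\dots)$ instead of with $(r_{k},\phi_{k},m_{k})$ itself. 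The rescaled background is again an exact solution, by the scaling/translation invariance of the equations. Relative to it, the exterior data on $\Sigma_{-1}^{(ex)}$ become $\phi_{1}\in\bar{\mathcal{C}}^{\alpha+\delta'}$, which is strictly above threshold. The data on $\{v=0\}$ obey self-similar rates with an extra factor $(-u)^{\alpha'q_{k}-p_{k}q_{k}}$, a strict gain.

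The second step is a bootstrap in the exterior. I split it into a near-horizon region $\{0\leq z\leq z_{0}\}$ and a far region. In the near region I work in the self-similar coordinates $(s,z)$ and bootstrap bounds of the form $|\partial(\phi-\phi_{k})|\lesssim\epsilon(-u)^{\delta-i-q_{k}j}$, together with the corresponding bounds for $r,m$. The decay comes from two sources:
\begin{itemize}
\item the strict gain in the horizon data, and
\item the fact that above-threshold data on $\Sigma_{-1}^{(ex)}$ are transported along the ingoing direction without blue-shift amplification.
\end{itemize}
For the second point I would integrate $\partial_{u}(r\partial_{v}\phi)$ and $\partial_{v}(r\partial_{u}\phi)$ along characteristics. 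The key estimate is to use $\mu_{k}\approx k^{2}/(1+k^{2})$ on $\{v=0\}$ to compute the blue-shift exponent $\int\frac{\mu}{1-\mu}\frac{\lambda}{r}$. I then check that it is dominated by the Hölder gain $(-v)^{\alpha-p_{k}}$ of the data. This is exactly the high-regularity stabilizing effect.

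The third step treats the far region $\{z\geq z_{0}\}$ and the region near $\{u=0\}$. There I switch to the coordinates $(U,V)$ and use smallness of the perturbation relative to the exterior background, which is regular for $v>0$. I interpolate between the $\epsilon$-smallness and the self-similar decay; this interpolation is what produces the factor $\|\phi_{p}^{0}\|^{1-s}V^{\delta}$. I then close the $m$ and $r$ estimates via \eqref{eq:v-equ for m}, \eqref{eq:u-equ for m} and \eqref{eq:wave equation for r}. Finally I propagate $\lambda>0$ and $1-\mu>0$. This gives the absence of trapped surfaces, and the incompleteness of null infinity follows from the persistence of the singular point at $u=0$.

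Asymptotic flatness follows from the AF part of the $\bar{\mathcal{C}}$ norm together with standard decay propagation for large $v$, using Christodoulou's estimates.

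The main obstacle is the blue-shift step near $\{v=0\}$. The data regularity $\alpha'$ sits barely above threshold, so the margin against the blue-shift exponent is only of size $O(\alpha'-p_{k})$. Closing the bootstrap will require a careful weighted integration that tracks the $(-u)^{k^{2}}$ factors in $\lambda$ and $\Omega^{2}$ exactly.
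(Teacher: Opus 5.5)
Your overall route matches the paper's. The paper quotes the above-threshold exterior result from \cite{singh1} and obtains the at-threshold \emph{asymptotic} statement only through the reduction to the rescaled background $(1+c_0)(r_k,\Psi_k,m_k)$, which is what you propose. But your Step~1 is false as stated, and it is exactly the step that separates this theorem from Singh's orbital result.

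The symmetry $(r,\Psi,m)\mapsto(ar,a\Psi,am)$ is $g\mapsto a^2g$ with $\phi$ unchanged, so the rescaled background still has scalar field $\phi_k$. Relative to it, the data on $\Sigma_{-1}^{(ex)}$ are:
\begin{itemize}
\item $\phi-\phi_k=c_0\mr{\phi}+\phi_1$, which is still at threshold;
\item $r-(1+c_0)r_k=-c_0r_k$, which is forced by the gauge $r|_{\Sigma_{-1}}=r_k$ and is nonzero even at the corner $(-1,0)$.
\end{itemize}
Only the radiation field becomes above threshold: $\Psi-(1+c_0)\Psi_k=r_k\phi_1$. The threshold part of $\phi$ has been traded for an $O(c_0)$ perturbation of $r$. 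That perturbation is regular ($\mathcal{C}^{7/4}$, above the threshold for $r$) but does not decay on the data. This is the paper's reduction $(\widetilde r_p,\widetilde\Psi_p)|_{\Sigma_{-1}}=(-c_0r_k,r_k\bar\phi_p)$. It is also why Section~\ref{sec: conclude the nonlinear stability} is set up with $r_p^0\neq0$.

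Consequently your Step~2 does not close as formulated. Suppose you bootstrap $\partial(\phi-\phi_k)$ using ``the Hölder gain of the $\Sigma_{-1}^{(ex)}$ data beats the blue-shift''. The component $c_0\mr{\phi}$ has zero net gain, because $q_k(\alpha-1)-k^2=q_k(\alpha-p_k)$ vanishes at $\alpha=p_k$. Integrating $\partial_u(r\partial_v\phi)$ then gives only self-similar-size bounds, without the factor $(-u)^{\delta}$. That is Singh's orbital statement, not the convergence claimed here.

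The decay of this component is a cancellation, visible in the identity $\phi-\phi_k=\frac{\Psi-(1+c_0)\Psi_k}{r}-\frac{\mr{\phi}}{r}\bigl(r-(1+c_0)r_k\bigr)$. So the near-horizon bootstrap has to be run for $\bigl(r-(1+c_0)r_k,\,\Psi-(1+c_0)\Psi_k,\,m-(1+c_0)m_k\bigr)$.

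You then need a separate argument that the datum $-c_0r_k$ decays, and this is not a regularity effect. Exactly, $\partial_u\log\frac{\lambda}{(1+c_0)\lambda_k}=\frac{\mu}{1-\mu}\frac{\nu}{r}-\frac{\mu_k}{1-\mu_k}\frac{\nu_k}{r_k}$, with initial value $-\log(1+c_0)$, so there is no homogeneous decay. The discrepancy must be removed in two ways:
\begin{itemize}
\item by comparing with the horizon, where $\bar r$ already converges to $(1+c_0)r_k$;
\item by bounding the $v$-variation of the integrated source in $\{0\le z\lesssim1\}$.
\end{itemize}

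Minor points: the blue-shift factor is $\int\frac{\mu}{1-\mu}\frac{\nu}{r}\,du$ along $\{v=0\}$, not $\int\frac{\mu}{1-\mu}\frac{\lambda}{r}$. On the exterior side the gain is $v^{\alpha-p_k}$, not $(-v)^{\alpha-p_k}$.
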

\begin{remark}
    In~\cite{singh2,singhphdtheis}, Singh only proved the asymptotic stability for exterior perturbations above the threshold and orbital stability for exterior perturbations at the threshold. Here, orbital stability means that for perturbations at the threshold, the resulting spacetime converges to another naked singularity spacetime as one approaches the singularity. However, using the method developed in the present paper, one can upgrade this orbital stability to asymptotic stability. Accordingly, in Theorem~\ref{thm: nonlinear stability in the exterior region}, we state asymptotic stability for perturbations at the threshold.
\end{remark}

\subsection{Outline of the proof}
\label{sec: outline of the proof whole sec}
\subsubsection{Nonlinear stability for perturbations above the threshold}
Using Theorem~\ref{thm: nonlinear stability in the exterior region} for exterior perturbations, to show nonlinear stability for perturbations above the threshold, it suffices to consider the nonlinear interior perturbations above the threshold.

Now, assuming we can establish the linearized result (Theorem \ref{thm: linearized result}) in the interior region, then we write the Einstein-scalar field equations under perturbations above the threshold as \begin{align*}
   & \mathcal{P}(r_{p},\Psi_{p},m_{p}) = \mathcal{N}(r_{p},\Psi_{p},m_{p}),\\&
   r_{p}|_{\Sigma_{-1}^{(in)}} = 0,\quad \Psi_{p}|_{\Sigma_{-1}^{(in)}} = \Psi_{p}^{0}\in\mathcal{C}_{N}^{\alpha},\quad \alpha>p_{k},
\end{align*}
where $\mathcal{N}$ is a quadratic nonlinearity with the explicit form computed in Appendix \ref{appendix: nonlinear equations}. At this stage, a naive toy model to capture the quantitative behavior of $\mathcal{N}$ is $\mathcal{N}_{toy} = \frac{1}{r_{k}}(\partial_{z}\Psi_{p})^{2}\approx e^{s}(\partial_{z}\Psi_{p})^{2}$ under the self-similar coordinates. The estimates for solutions to the linearized Einstein-scalar field equations established in \ref{thm: linearized result} can be equivalently stated as \begin{align}
    \sum_{0\leq i+j\leq 1}\left\vert\partial_{s}^{i}\partial_{z}^{j}(\Psi_{p}-c_{\infty}r_{k})\right\vert\lesssim e^{-\alpha^{\prime}q_{k}s},\quad \sum_{0\leq i+j\leq 1}\left\vert\partial_{s}^{i}\partial_{z}^{j}r_{p}\right\vert\lesssim e^{-\alpha^{\prime}q_{k}s},\quad \sum_{0\leq i+j\leq 1}\left\vert\partial_{s}^{j}\partial_{z}^{j}m_{p}\right\vert\lesssim e^{-\alpha^{\prime}q_{k}s},
\end{align} 
with $\alpha^{\prime}>p_{k}$.
Substituting these bounds into the toy nonlinearity $\mathcal{N}_{toy}$, we have \begin{equation*}
    \vert\mathcal{N}_{toy}\vert\lesssim e^{-\alpha^{\prime}q_{k}s}e^{-(\alpha^{\prime}-p_{k})q_{k}s}.
\end{equation*}
Therefore, in the nonlinear analysis of the Einstein-scalar field equations under perturbations above the threshold, the linearized operator dominates, while the quadratic nonlinearities remain sub-leading. We employ a fixed-point argument to obtain estimates for the solution of the nonlinear system.
\subsubsection{Scaling invariance and the nonlinear stability for perturbations at the threshold}
To extend the nonlinear stability result for perturbations above the threshold to those at the threshold, we first note that Einstein-scalar field equations admit the scaling invariance, i.e., if $(r,\Psi,m)$ is a solution, then so is $(ar,a\Psi,am)$ for any $a\in \mathbb{R}\backslash\{0\}$. Therefore, the linearized Einstein-scalar field operator $\mathcal{P}$ admits another kernel $(r,\Psi,m) = (r_{k},\Psi_{k},m_{k})$, arising from the scaling invariance. Although the decay estimate for solutions to the linearized Einstein-scalar field equations only applies to perturbations above the threshold, using the decomposition lemma (Lemma~\ref{lemma: decomposition of the function spaces}) and subtracting the nontrivial kernel $(r_{k},\Psi_{k},m_{k})$ from the translation invariance, one can still obtain the decay estimates for solutions to the linearized Einstein-scalar field equations with initial data of the threshold regularity $\mathcal{C}_{N}^{p_{k},\delta}$. 

To be more precise, assume the initial perturbation $(r_{p},\phi_{p})|_{\Sigma_{-1}^{(in)}} = (0,\phi_{p}^{0}) = (0,c_{0}\mr{\phi}+\phi_{1})$ with $\phi_{1}\in\mathcal{C}^{p_{k}+\delta}$. The the corresponding $\Psi_{p}|_{\Sigma_{-1}^{(in)}} = c_{0}\Psi_{k}|_{\Sigma_{-1}^{(in)}}+\Psi_{1}$ for $\Psi_{1}\in\mathcal{C}^{p_{k}+\delta}$. Therefore, subtracting the non-trivial kernel of $\mathcal{P}$ from the translation invariance, the linearized problem with initial data at the threshold \begin{equation*}
    \mathcal{P}(r_{p},\Psi_{p},m_{p}) = 0,\quad (r_{p},\Psi_{p})|_{\Sigma_{-1}^{(in)}} = (r_{p}^{0},\Psi_{p}^{0})
\end{equation*}
is equivalent to the following linearized problem with initial data above the threshold \begin{equation*}
    \mathcal{P}(r_{p}-c_{0}r_{k},\Psi_{p}-c_{0}\Psi_{k},m_{p}-c_{0}m_{k}) = 0,\quad (r_{p},\Psi_{p})|_{\Sigma_{-1}^{(in)}} = (r_{p}^{0}-c_{0}r_{k},\Psi_{1}).
\end{equation*}
Therefore, for the linearized Einstein-scalar field equations, we can extend the result for perturbations above the threshold to those at the threshold, with $(r_{p},\Psi_{p},m_{p})$ in~\eqref{eq: linearized estimate 1}--\eqref{eq: linearized estimate 3} replaced by $(r_{p}-c_{0}r_{k},\Psi_{p}-c_{0}\Psi_{k},m_{p}-c_{0}m_{k})$ and $\alpha^{\prime}$ replaced by $\widetilde{\alpha}$ for any $\widetilde{\alpha}\in(p_{k},\min\{\frac{3}{2},p_{k}+\delta\})$.

Although the decay estimates for solutions to the linearized equations survive for perturbations at the threshold, the additional $(c_{0}r_{k},c_{0}\Psi_{k},c_{0}m_{k})$ is often believed to destroy the nonlinearities $\mathcal{N}$. For the toy nonlinearity $\mathcal{N}_{toy}$, those additional subtractions of the background will generate terms with threshold behavior, i.e., one can only get $\vert \mathcal{N}_{toy}\vert\lesssim e^{-s}$, which means $\mathcal{N}_{toy}$ will dominate in the nonlinear analysis. Therefore, one usually expects that the linearized estimates fail to control the solutions to the nonlinear problem.

Nonetheless, the toy nonlinearity $\mathcal{N}_{toy}$ does not capture the scaling invariance of the Einstein-scalar field equations. In fact, if one examines the structure of the nonlinearities in the Einstein-scalar field equations carefully, the additional non-trivial kernel will not destroy the nonlinear structure. A simple argument for this is as follows \begin{align*}
    &\mathcal{P}(r_{p},\Psi_{p},m_{p})-\mathcal{N}(r_{p},\Psi_{p},m_{p};r_{k},\Psi_{k},m_{k}) \\=& \mathcal{F}(r_{k}+r_{p},\Psi_{k}+\Psi_{p},m_{k}+m_{p})-\mathcal{F}(r_{k},\Psi_{k},m_{k})\\=&
    \mathcal{F}((1+c_{0})r_{k}+r_{p}-c_{0}r_{k},(1+c_{0})\Psi_{k}+\Psi_{p}-c_{0}\Psi_{k},(1+c_{0})m_{k}+m_{p}-c_{0}m_{k})-\mathcal{F}(r_{k},\Psi_{k},m_{k})\\=&
    \mathcal{F}((1+c_{0})r_{k}+r_{p}-c_{0}r_{k},(1+c_{0})\Psi_{k}+\Psi_{p}-c_{0}\Psi_{k},(1+c_{0})m_{k}+m_{p}-c_{0}m_{k})-\mathcal{F}((1+c_{0})r_{k},(1+c_{0})\Psi_{k},(1+c_{0})m_{k})\\=&
    \mathcal{P}(r_{p}-c_{0}r_{k},\Psi_{p}-c_{0}\Psi_{k},m_{p}-c_{0}m_{k})-\mathcal{N}(r_{p}-c_{0}r_{k},\Psi_{p}-c_{0}\Psi_{k},m_{p}-c_{0}m_{k};(1+c_{0})r_{k},(1+c_{0})\Psi_{k},(1+c_{0})m_{k}),
\end{align*} 
where the third identity above uses the scaling invariance of the Einstein-scalar field equations. Therefore, by the scaling invariance of the Einstein-scalar field equations again, subtracting the kernel of the linearized operator from scaling symmetry preserves the structure of the nonlinear terms, at the cost of rescaling the coefficients of $\mathcal{N}$, which does not affect the decay rate of the nonlinearities.

\section{Nonlinear stability for perturbations above the threshold}
\label{sec: conclude the nonlinear stability}
\subsection{Set-up of the initial perturbations}
Given the gauge choice of the center $\Gamma = \{(-v)= (-u)^{q_{k}}\}$, the free characteristic initial data consist of the gauge choice of $r$ and the scalar field $\phi$. Since in the spherically symmetric Einstein-scalar field equations, the scalar field $\phi$ only appears in the form of derivatives, the Einstein-scalar field equations are invariant under the translation of $\phi$. In the formulation of the stability problem, we fix the translation freedom of the background $k$-self-similar solution by setting $\phi_{k}(-1,0) = 0$.

We use the $(r,\Psi,m)$ formulation. The gauge choice and the initial data of $(r,\Psi)|_{\Sigma_{-1}^{(in)}}$ is \begin{equation}
    r(-1,v) = r_{k}+r_{p}^{0}(-1,v),\quad \Psi(-1,v) = \Psi_{k}(-1,v)+\Psi_{p}^{0},\quad -1\leq v\leq0.\label{eq: gauge choice and the initial data in the nonlinear argument}
\end{equation}
where $(r_{p}^{0},\Psi_{p}^{0})\in \mathcal{C}_{N+1}^{\frac{7}{4}}\times\mathcal{C}^{\alpha}_{N+1}$ for $\alpha\in(p_{k},\frac{3}{2})$.

\begin{remark}
    For initial scalar field perturbations above the threshold, a natural gauge choice of $r|_{\Sigma_{-1}^{(in)}}$ could be $r_{k}|_{\Sigma_{-1}^{(in)}}$, i.e., $r_{p}^{0} = 0$ in \eqref{eq: gauge choice and the initial data in the nonlinear argument}. Nevertheless, to facilitate the proof of nonlinear stability at the threshold, we consider nontrivial $r_{p}^{0}$ in this section.
\end{remark}

In the above formulation, we did not fix the translation freedom of $\Psi$. Hence, one still has the freedom to add $cr$ on $\Psi_{p}^{0}$, which will not change the function space of $\Psi_{p}^{0}$.

Motivated by this observation, we say that two initial perturbations $\Psi_{p}^{0}$ and $\widetilde{\Psi}_{p}^{0}$ are equivalent if they differ by a constant multiple of $r|_{\Sigma_{-1}^{(in)}}$. The solutions $(r,\Psi,m)$ and $(\widetilde{r},\widetilde{\Psi},\widetilde{m})$ to the Einstein-scalar field equations with two equivalent initial perturbations have the following relation:\begin{equation*}
    r = \widetilde{r},\quad \Psi-\widetilde{\Psi} = cr,\quad m = \widetilde{m},
\end{equation*}
which are essentially the same solution. We denote by $[\Psi_{p}^{0}]$ this equivalence class of $\Psi_{p}^{0}$.

Our strategy for demonstrating nonlinear stability is to use the fixed-point argument.
\subsection{Set-up of the fixed point argument}
To set up the fixed-point argument, we express the nonlinear Einstein-scalar field equations in the form of \begin{align*}
    &\mathcal{P}(r_{p},\Psi_{p},m_{p}) =\mathcal{N}(r_{p},\Psi_{p},m_{p}),\\&
    (r_{p},\Psi_{p})|_{\Sigma_{-1}^{(in)}} = (r_{p}^{0},\Psi_{p}^{0}).
\end{align*}
where $r_{p} = r-r_{k}$, $\Psi_{p} = \Psi-\Psi_{k}$, and $m_{p} = m-m_{k}$, and the explicit nonlinear structure $\mathcal{N}$ is derived in Appendix \ref{appendix: nonlinear equations} with $b=0$. In the fixed-point argument, we substitute $r_{p}$, $\Psi_{p}$, and $m_{p}$ in the nonlinearity by $f$, $g$, and $h$ with $(f,g,h)$ in suitably designed function spaces. Precisely, we shall consider the following evolutionary problem:
\begin{equation}
\begin{aligned}
    &\mathcal{P}(r_{p},\Psi_{p},m_{p}) = \mathcal{N}(f,g,h),\\&
    (r_{p},\Psi_{p})|_{\Sigma_{-1}^{(in)}} = (r_{p}^{0},\Psi_{p}^{0}).
\end{aligned}\label{eq: nonlinear formulation}
\end{equation}
We can further decompose the above nonlinear evolutionary equations as 
\begin{equation}
\begin{aligned}
    &\mathcal{P}(r_{p}^{(1)},\Psi_{p}^{(1)},m_{p}^{(1)}) = 0,\\&
   (r_{p}^{(1)},\Psi_{p}^{(1)})|_{\Sigma_{-1}^{(in)}} = (r_{p}^{0},\Psi_{p}^{0}),
\end{aligned}
\label{eq: nonlinear equation 1}
\end{equation}
and 
\begin{equation}
\begin{aligned}
    &\mathcal{P}(r_{p}^{(2)},\Psi_{p}^{(2)},m_{p}^{(2)}) = \mathcal{N}(f,g,h),\\&
    (r_{p}^{(2)},\Psi_{p}^{(2)}) = (0,0).
\end{aligned}
\label{eq: nonlinear equation 2}
\end{equation}
Next, we define the function space we will consider in this section. Let $(f,\partial_{s}f,g,h)(s,\cdot)\in H_{N+1}^{-\frac{1}{4}+\tau}\times H_{N}^{-\frac{1}{4}+\tau}\times H_{N+1}^{\frac{3}{2}-\alpha+\tau}\times H_{N+1}^{\frac{3}{2}-p_{k}+\tau}$ with \begin{equation*}
\left\Vert(f,g,h)\right\Vert_{\mathcal{S}^{a,N,\tau}}: = \sup_{s\geq 0}e^{(1+a)s}\left(\left\Vert f\right\Vert_{H_{N+1}^{-\frac{1}{4}+\tau}}+\left\Vert\partial_{s}f\right\Vert_{H_{N}^{-\frac{1}{4}+\tau}}+\left\Vert g\right\Vert_{H_{N+1}^{\frac{3}{2}-\alpha+\tau}}+\left\Vert h\right\Vert_{H_{N+1}^{\frac{3}{2}-p_{k}+\tau}}\right)<\infty
\end{equation*}
for some suitable $a>0$ and $0<\tau<\alpha-p_{k}$ which will be determined later. We define the function space $\mathcal{S}^{a,N,\tau}$ to be \begin{equation*}
\mathcal{S}^{a,N,\tau}
:=
\left\{
(f,g,h)
\middle|\
\begin{array}{l}
(f,\partial_s f,g,h)(s,\cdot)\in
H_{N+1}^{\tau}\times H_{N}^{\tau}\times
H_{N+1}^{\frac32-\alpha+\tau}\times
H_{N+1}^{\frac32-p_k+\tau},\\[0.3em]
\|(f,g,h)\|_{\mathcal{S}^{a,N,\tau}}<\infty,\\[0.3em]
f(s,-1)=g(s,-1)=h(s,-1)=0
\end{array}
\right\}.
\end{equation*}


Let  $\gamma:= \frac{3}{2}-\alpha+\tau$ and $S_{\delta}^{a,N,\tau}$ denote the ball of radius $\delta$ in $\mathcal{S}^{a,N,\tau}$. By Lemma \ref{lemma: sobolev embedding}, we have that \begin{equation*}
    f\in \mathcal{C}_{N}^{\frac{7}{4}-\tau},\quad \partial_{s}f\in\mathcal{C}_{N-1}^{\frac{7}{4}-\tau},\quad g\in\mathcal{C}^{\alpha-\tau}_{N},\quad h\in\mathcal{C}_{N}^{p_{k}-\tau}.
\end{equation*}

We define the map $\mathfrak{A}(f,g,h)$ as the solution to \eqref{eq: nonlinear formulation}. We use the superscript to denote the component of $\mathfrak{A}$, i.e., $\mathfrak{A}^{(1)}$ corresponds to $r_{p}$, $\mathfrak{A}^{(2)}$ corresponds to $\Psi_{p}$, and $\mathfrak{A}^{(3)}$ corresponds to $m_{p}$. Finally, we define the maps $\mathfrak{A}_{1}(f,g,h)$ as the solution to \eqref{eq: nonlinear equation 1} and $\mathfrak{A}_{2}(f,g,h)$ as the solution to \eqref{eq: nonlinear equation 2}. Then, we have $\mathfrak{A} = \mathfrak{A}_{1}+\mathfrak{A}_{2}$. In the following, we shall prove that $\mathfrak{A}$ is a contraction map from $\mathcal{S}_{\delta}^{a,N,\tau}$ to $\mathcal{S}_{\delta}^{a,N,\tau}$ for suitable $a$, $N$, $\delta$, and $\tau$.

\subsection{Lower-order estimate on $\mathfrak{A}_{1}$}
By Theorem \ref{thm: linearized result}, for the solution $(\mathfrak{A}_{1}^{(1)},\mathfrak{A}_{1}^{(2)},\mathfrak{A}_{1}^{(3)})$ to \eqref{eq: nonlinear equation 1} with the initial data $(r_{p},\Psi_{p})|_{\Sigma^{(in)}_{-1}} = (r_{p}^{0},\Psi_{p}^{0})$, there exist $\alpha^{\prime}\in(p_{k},\alpha)$ and $c_{\infty} = c_{\infty}(\Psi_{p}^{0})$ depending on the initial data, such that we have \begin{equation}
\begin{aligned}
    \sum_{0\leq i+j\leq 1}\left\Vert\partial_{s}^{i}\partial_{z}^{j}\mathfrak{A}_{1}^{(1)}\right\Vert_{L_{z}^{\infty}}\lesssim&\left(\left\Vert r_{p}^{0}\right\Vert_{\mathcal{C}_{N}^{\frac{7}{4}}}+\left\Vert\Psi_{p}^{0}\right\Vert_{\mathcal{C}_{N}^{\alpha}}\right)e^{-\alpha^{\prime}q_{k}s},\\
    \sum_{0\leq i+j\leq 1}\left\Vert\partial_{s}^{i}\partial_{z}^{j}\left(\mathfrak{A}_{1}^{(2)}-c_{\infty}\left(r_{p}^{0},\Psi_{p}^{0}\right)r_{k}\right)\right\Vert_{L_{z}^{\infty}}\lesssim&\left(\left\Vert r_{p}^{0}\right\Vert_{\mathcal{C}_{N}^{\frac{7}{4}}}+\left\Vert\Psi_{p}^{0}\right\Vert_{\mathcal{C}_{N}^{\alpha}}\right)e^{-\alpha^{\prime}q_{k}s},\\
    \sum_{0\leq i+j\leq 1}\left\Vert\partial_{s}^{i}\partial_{z}^{j}\left(\mathfrak{A}^{(3)}_{1}\right)\right\Vert_{L_{z}^{\infty}}\lesssim&\left(\left\Vert r_{p}^{0}\right\Vert_{\mathcal{C}_{N}^{\frac{7}{4}}}+\left\Vert\Psi_{p}^{0}\right\Vert_{\mathcal{C}_{N}^{\alpha}}\right)e^{-\alpha^{\prime}q_{k}s}\\
    \left\vert c_{\infty}(r_{p}^{0},\Psi_{p}^{0})\right\vert\lesssim&\left\Vert r_{p}^{0}\right\Vert_{\mathcal{C}_{N}^{\frac{7}{4}}}+\left\Vert\Psi_{p}^{0}\right\Vert_{\mathcal{C}_{N}^{\alpha}}.
    \end{aligned}
\end{equation} 
Hence, we can modify the initial data $\Psi_{p}^{0}$ of \eqref{eq: nonlinear equation 1} to be $\Psi_{p}^{0}-c_{\infty}(\Psi_{p}^{0})r_{k}$, such that the solution $(\mathfrak{A}_{1}^{(1)},\mathfrak{A}_{1}^{(2)},\mathfrak{A}_{1}^{(3)})$ to \eqref{eq: nonlinear equation 1} with the initial data $(r_{p},\Psi_{p})|_{\Sigma_{-1}^{(in)}} = (r_{p}^{0},\Psi_{p}^{0}-c_{\infty}(\Psi_{p}^{0})r_{k})$ satisfies \begin{equation}
    \sum_{0\leq i+j\leq 1}e^{\alpha^{\prime}q_{k}s}\left(\left\Vert\partial_{s}^{i}\partial_{z}^{j}\mathfrak{A}_{1}^{(1)}\right\Vert_{L_{z}^{\infty}}+\left\Vert\partial_{s}^{i}\partial_{z}^{j}\mathfrak{A}_{1}^{(2)}\right\Vert_{L_{z}^{\infty}}+\left\Vert\partial_{s}^{i}\partial_{z}^{j}\mathfrak{A}_{1}^{(3)}\right\Vert_{L_{z}^{\infty}}\right)\lesssim\left\Vert r_{p}^{0}\right\Vert_{\mathcal{C}_{N}^{\frac{7}{4}}}+\left\Vert\Psi_{p}^{0}\right\Vert_{\mathcal{C}_{N}^{\alpha}}\lesssim \left\Vert r_{p}^{0}\right\Vert_{\mathcal{C}_{N+1}^{\frac{7}{4}}}+\left\Vert\Psi_{p}^{0}\right\Vert_{\mathcal{C}_{N+1}^{\alpha}}.
\end{equation}
\subsection{Lower-order estimate on $\mathfrak{A}_{2}$}
Since the linearized result (Theorem \ref{thm: linearized result}) is only for the homogeneous linear equations $\mathcal{P} = 0$, in this section, we prove the analogous result for the inhomogeneous equations with the trivial initial data by using the Duhamel principle. Under the self-similar coordinates, the equations \eqref{eq: nonlinear equation 2} can be written as \begin{align*}
    \partial_{s}\partial_{z}r_{p}-\frac{\mu_{k}}{1-\mu_{k}}\frac{\partial_{z}r_{k}}{r_{k}}\partial_{s}r_{p}+S_{1}[r_{p},\Psi_{p},m_{p}] &= \mathcal{N}^{(1)}(f,g,h),\\\partial_{s}\partial_{z}\Psi_{p}-\frac{\Psi_{k}}{r_{k}^{2}}\frac{\mu_{k}}{1-\mu_{k}}\partial_{z}r_{k}\partial_{s}r_{p}+S_{2}[r_{p},\Psi_{p},m_{p}]& = \mathcal{N}^{(2)}(f,g,h),\\
    \partial_{z}m_{p}+\frac{r_{k}}{\partial_{z}r_{k}}(\partial_{z}\phi_{k})^{2}m_{p}+S_{3}[r_{p},\Psi_{p}]& = \mathcal{N}^{(3)}(f,g,h),
\end{align*}
where $S_{1}$, $S_{2}$, and $S_{3}$ are three linear operators which only depend on the $(r_{p},\Psi_{p},m_{p})$ and $z$-derivatives of $(r_{p},\Psi_{p},m_{p})$. More precisely, $S_{1}$ and $S_{2}$ take the form of \begin{align*}
    S_{1}[r_{p},\Psi_{p},m_{p}] &= \widetilde{S}_{1}[r_{p},\Psi_{p}]-\frac{2}{(1-\mu_{k})^{2}}\frac{(\partial_{s}r_{k}+q_{k}z\partial_{z}r_{k})\partial_{z}r_{k}}{r_{k}^{2}}m_{p},\\
    S_{2}[r_{p},\Psi_{p},m_{p}]& = \widetilde{S}_{2}[r_{p},\Psi_{p}]-\frac{2\Psi_{k}(\partial_{s}r_{k}+q_{k}z\partial_{z}r_{k})\partial_{z}r_{k}}{r_{k}^{3}(1-\mu_{k})^{2}}m_{p},
\end{align*}
where $\widetilde{S}_{1}$ and $\widetilde{S}_{2}$ have the corresponding explicit form in \eqref{eq:wave eq for rp under self-similar coordinates}-\eqref{eq: wave equation for psi under self-similar coordinates}, respectively. Then, we can express $m_{p}$ as \begin{equation}
    m_{p}:=m_{p}[r_{p},\Psi_{p},\mathcal{N}^{(3)}]= \underbrace{\int_{-1}^{z}-e^{\int_{z}^{\widetilde{z}}\frac{r_{k}}{\partial_{z}r_{k}}(\partial_{z}\phi_{k})^{2}}S_{3}[r_{p},\Psi_{p}]}_{=:(m_{p})_{L}(r_{p},\Psi_{p})}+\int_{-1}^{z}e^{\int_{z}^{\widetilde{z}}\frac{r_{k}}{\partial_{z}r_{k}}(\partial_{z}\phi_{k})^{2}}\mathcal{N}^{(3)}(f,g,h)
    \label{eq: definition of mpl}
\end{equation}
Hence, we can rewrite the equation $\mathcal{P}(r_{p},\Psi_{p},m_{p}) = \mathcal{N}(f,g,h)$ as \begin{equation}
\begin{aligned}
   & \mathcal{P}^{(1)}(r_{p},\Psi_{p},(m_{p})_{L}) =\underbrace{\mathcal{N}^{(1)}(f,g,h)+\frac{2(\partial_{s}r_{k}+q_{k}z\partial_{z}r_{k})\partial_{z}r_{k}}{(1-\mu_{k})^{2}r_{k}^{2}}\int_{-1}^{z}e^{\int_{z}^{\widetilde{z}}\frac{r_{k}}{\partial_{z}r_{k}}(\partial_{z}\phi_{k})^{2}}\mathcal{N}^{(3)}(f,g,h)}_{=:\widetilde{\mathcal{N}}^{(1)}},\\&
   \mathcal{P}^{(2)}(r_{p},\Psi_{p},(m_{p})_{L}) = \underbrace{\mathcal{N}^{(2)}(f,g,h)+\frac{2\Psi_{k}(\partial_{s}r_{k}+q_{k}z\partial_{z}r_{k})\partial_{z}r_{k}}{(1-\mu_{k})^{2}r_{k}^{3}}\int_{-1}^{z}e^{\int_{z}^{\widetilde{z}}\frac{r_{k}}{\partial_{z}r_{k}}(\partial_{z}\phi_{k})^{2}}\mathcal{N}^{(3)}(f,g,h)}_{=:\widetilde{\mathcal{N}}^{(2)}},\\&
   \mathcal{P}^{(3)}(r_{p},\Psi_{p},(m_{p})_{L}) = 0.
    \end{aligned}
\end{equation}
Then, by the standard Duhamel principle, the solution $(r_{p},\Psi_{p},m_{p})$ to \eqref{eq: nonlinear equation 2} can be written in the form of \begin{equation}
\begin{aligned}
    &r_{p}(s,z) = \int_{0}^{s}r_{p}^{s_{1}}(s,z)ds_{1},\quad \Psi_{p}(s,z) = \int_{0}^{s}\Psi_{p}^{s_{1}}(s,z)ds_{1},\\&m_{p}(s,z) = (m_{p})_{L}(s,z)+\int_{-1}^{z}e^{\int_{z}^{\widetilde{z}}\frac{r_{k}}{\partial_{z}r_{k}}(\partial_{z}\phi_{k})^{2}}\mathcal{N}^{(3)}(f,g,h),
    \\
    &\partial_{z}(m_{p})_{L}(s,z)+\frac{r_{k}}{\partial_{z}r_{k}}(\partial_{z}\phi_{k})^{2}(m_{p})_{L}+S_{3}[r_{p},\Psi_{p}] =0.
    \end{aligned}
\end{equation}
where $(r_{p}^{s_{1}},\Psi_{p}^{s_{1}})(s,z)$ is a solution to \begin{align}
    \mathcal{P}(r_{p}^{s_{1}},\Psi_{p}^{s_{1}},(m_{p})_{L}) &= 0,\label{eq: duhamel main equation}\\
    \frac{d}{dz}r_{p}^{s_{1}}(s_{1},z)- \frac{\mu_{k}}{1-\mu_{k}}\frac{\partial_{z}r_{k}}{r_{k}}r_{p}^{s_{1}}(s_{1},z) &= \widetilde{\mathcal{N}}^{(1)},\label{eq: duhamel initial data for rp}\\[1em]
    \frac{d}{dz}\Psi_{p}^{s_{1}}(s_{1},z)-\frac{\Psi_{k}}{r_{k}^{2}}\frac{\mu_{k}}{1-\mu_{k}}\partial_{z}r_{k}r_{p}^{s_{1}}(s_{1},z)& = \widetilde{\mathcal{N}}^{(2)}.\label{eq: duhamel initial data for psip}
\end{align}
Hence, we can express the initial data $r_{p}^{s_{1}}(s_{1},z)$ and $\Psi_{p}^{s_{1}}(s_{1},z)$ of \eqref{eq: duhamel main equation} as \begin{align}
    r_{p}^{s_{1}}(s_{1},z) &= \int_{-1}^{z}e^{\int_{z^{\prime}}^{z}\frac{\mu_{k}}{1-\mu_{k}}\frac{\partial_{z}r_{k}}{r_{k}}d\widetilde{z}}\widetilde{\mathcal{N}}^{(1)}(f,g,h)(s_{1},z^{\prime})dz^{\prime},\label{eq: initial data r for duhamel principle}\\
    \Psi_{p}^{s_{1}}(s_{1},z) &=\int_{-1}^{z}\frac{\Psi_{k}}{r_{k}^{2}}\frac{\mu_{k}}{1-\mu_{k}}\partial_{z}r_{k}r_{p}^{s_{1}}(s_{1},z^{\prime})+\widetilde{\mathcal{N}}^{(2)}(f,g,h)(s_{1},z^{\prime})dz^{\prime}.\label{eq: initial data psi for duhamel principle}
    \end{align}

The following lemmas explore the regularity of $r_{p}^{s_{1}}$ and $\Psi_{p}^{s_{1}}$ in terms of the regularity of $\widetilde{\mathcal{N}}^{(1)}$ and $\widetilde{\mathcal{N}}^{(2)}$.
\begin{lemma}
\label{lemma: Using Talor theory to control the Linfty norm}
    Let $f\in C^{N}([-1,-\frac{1}{2}])$ with $f(-1) = 0$, then we have \begin{equation*}
        \left\Vert\partial_{z}^{n}\left(\frac{1}{(z+1)^{2}}f\right)\right\Vert_{L_{z}^{\infty}([-1,-\frac{1}{2}])}\lesssim \left\Vert \partial_{z}^{n+2}f\right\Vert_{L_{z}^{\infty}([-1,-\frac{1}{2}])},\quad 0\leq n\leq N-2.
    \end{equation*}
\end{lemma}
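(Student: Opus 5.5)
The plan is to use Taylor's formula with integral remainder centred at the axis $z=-1$. Dividing by $(z+1)^{2}$ then turns $f/(z+1)^{2}$ into an average of $\partial_{z}^{2}f$ along the segment from $-1$ to $z$. After that, every $z$-derivative simply passes onto $f$, with no loss coming from the singular weight.

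One caveat must be settled first. As literally stated, with only $f(-1)=0$, the estimate cannot hold. The function $f=z+1$ has $\partial_{z}^{2}f=0$, yet $f/(z+1)^{2}=1/(z+1)$ is unbounded. So the argument must also use the vanishing $\partial_{z}f(-1)=0$. I expect this is satisfied in every application in the paper: there $f$ is built from quantities like $r_{p}$, $\Psi_{p}$, $m_{p}$ and the nonlinearities, which by the regularity conditions at the center $\Gamma$ vanish at least to second order at $z=-1$. Equivalently, the $L^{\infty}$ norm on the right-hand side should be read as including $|\partial_{z}f(-1)|$. I will carry out the proof under $f(-1)=\partial_{z}f(-1)=0$ and point out this requirement where the lemma is invoked.

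\textbf{Step 1 (representation).} For $z\in(-1,-\tfrac12]$, Taylor's theorem with integral remainder, together with $f(-1)=\partial_{z}f(-1)=0$, gives after the substitution $\zeta=-1+t(z+1)$:
\begin{equation*}
f(z)=\int_{-1}^{z}(z-\zeta)\,\partial_{z}^{2}f(\zeta)\,d\zeta=(z+1)^{2}\int_{0}^{1}(1-t)\,\partial_{z}^{2}f\bigl(-1+t(z+1)\bigr)\,dt .
\end{equation*}
Hence
\begin{equation*}
\frac{f(z)}{(z+1)^{2}}=\int_{0}^{1}(1-t)\,\partial_{z}^{2}f\bigl(-1+t(z+1)\bigr)\,dt,
\end{equation*}
and this right-hand side extends continuously to $z=-1$.

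\textbf{Step 2 (differentiation).} Since $f\in C^{N}$ and $n+2\le N$, differentiation under the integral sign is justified. It yields
\begin{equation*}
\partial_{z}^{n}\Bigl(\frac{f}{(z+1)^{2}}\Bigr)(z)=\int_{0}^{1}(1-t)\,t^{n}\,\partial_{z}^{n+2}f\bigl(-1+t(z+1)\bigr)\,dt .
\end{equation*}
The argument $-1+t(z+1)$ stays in $[-1,-\tfrac12]$ for all $t\in[0,1]$. Taking absolute values therefore gives the bound
\begin{equation*}
\frac{1}{(n+1)(n+2)}\,\bigl\|\partial_{z}^{n+2}f\bigr\|_{L^{\infty}([-1,-\frac12])},
\end{equation*}
uniformly in $z$, which is the claim with an explicit constant.

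There is no real obstacle beyond identifying the second vanishing condition. The only point needing care is that the constant is independent of $z$ near the axis, and the representation in Step 1 makes this transparent.
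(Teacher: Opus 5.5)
Your argument is correct and is essentially the paper's. Both expand $f$ by Taylor's formula with integral remainder about the axis $z=-1$. The paper goes to order $n+1$, applies the Leibniz rule to $(z+1)^{-2}$ times the remainder, and bounds each resulting term $(z+1)^{-(2+i)}\int_{-1}^{z}|\partial_{z}^{n+2}f(z')|(z-z')^{i+1}\,dz'$ by $\|\partial_{z}^{n+2}f\|_{L^{\infty}}$. Your rescaled second-order remainder instead lets every derivative fall directly on $\partial_{z}^{2}f$ and yields the explicit constant $\frac{1}{(n+1)(n+2)}$.

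Your caveat is also right. With only $f(-1)=0$ the statement fails (take $f=z+1$). The paper's proof tacitly uses $\partial_{z}f(-1)=0$ as well when it discards the Taylor term $\partial_{z}f(-1)(z+1)^{-1}$; the terms with $i\ge 2$ are killed by $\partial_{z}^{n}$ after division by $(z+1)^{2}$. This extra condition does hold where the lemma is applied, e.g.\ for $f=\int_{-1}^{z}(\widetilde{z}+1)^{2}\mathcal{N}^{(3)}/\mathring{r}^{2}\,d\widetilde{z}$.
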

\begin{proof}
    Using the Taylor expansion, we have \begin{equation*}
        f(z) =\sum_{i = 0}^{n+1}\frac{1}{i!}(\partial_{z}^{i}f)(-1,0)(z+1)^{i}+\frac{1}{(n+1)!}\int_{-1}^{z}(\partial_{z}^{n+2}f)(z^{\prime})(z-z^{\prime})^{n+1}dz^{\prime}.
    \end{equation*}
    Then, we have \begin{align*}
        \left\vert\partial_{z}^{n}\left(\frac{1}{(z+1)^{2}}f\right)\right\vert\lesssim \sum_{i = 0}^{n}(z+1)^{-(2+i)}\int_{-1}^{z}\left\vert\partial_{z}^{n+2}f(z^{\prime})\right\vert(z-z^{\prime})^{i+1}dz^{\prime}\lesssim \left\Vert\partial_{z}^{n+2}f\right\Vert_{L_{z}^{\infty}([-1,-\frac{1}{2}])}.
    \end{align*}
\end{proof}
Using the estimates in Proposition \ref{prop: nonlinear structure equation}, we have the following lemma estimating $\widetilde{\mathcal{N}}^{(1)}$ and $\widetilde{\mathcal{N}}^{(2)}$.
\begin{lemma}
For $(f,g,h)\in\mathcal{S}^{a,N,\tau}$, $\widetilde{\mathcal{N}}^{(1)}$ and $\widetilde{\mathcal{N}}^{(2)}$ have the following estimates \begin{align}
    \left\vert(-z)^{\max\{0,j-\frac{3}{4}+\tau\}}\partial_{z}^{j}\widetilde{\mathcal{N}}^{(1)}(f,g,h)\right\vert\lesssim &e^{-(1+2a)s}\left\Vert(f,g,h)\right\Vert_{\mathcal{S}^{a,N,\tau}}^{2},\quad 0\leq j\leq N-1,\label{eq: pointwise estimate for renormalized N1}\\
     \left\vert(-z)^{\max\{0,j-\frac{3}{4}+\tau\}}\partial_{z}^{j}\widetilde{\mathcal{N}}^{(2)}(f,g,h)\right\vert\lesssim &e^{-(1+2a)s}\left\Vert(f,g,h)\right\Vert_{\mathcal{S}^{a,N,\tau}}^{2},\quad 0\leq j\leq N-1,\label{eq: pointwise estimate for renormalized N2}\\
     \sum_{i = 1}^{N}\int_{-1}^{0}(-z)^{2i-\frac{5}{2}+2\tau}\left\vert\partial_{z}^{i}\widetilde{\mathcal{N}}^{(1)}\right\vert^{2}\lesssim& e^{-2(1+2a)s}\left\Vert(f,g,h)\right\Vert_{\mathcal{S}^{a,N,\tau}}^{4},\label{eq: L2 estimate for renormalized N1}\\
     \sum_{i = 1}^{N}\int_{-1}^{0}(-z)^{2i-\frac{5}{2}+2\tau}\left\vert\partial_{z}^{i}\widetilde{\mathcal{N}}^{(2)}\right\vert^{2}\lesssim& e^{-2(1+2a)s}\left\Vert(f,g,h)\right\Vert_{\mathcal{S}^{a,N,\tau}}^{4}.\label{eq: L2 estimate for renormalized N2}
\end{align}
\end{lemma}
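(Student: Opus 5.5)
The estimates follow from the definitions of $\widetilde{\mathcal{N}}^{(1)}$ and $\widetilde{\mathcal{N}}^{(2)}$ as $\mathcal{N}^{(1)},\mathcal{N}^{(2)}$ plus a background coefficient times the weighted $z$-integral of $\mathcal{N}^{(3)}$. The plan is to estimate each of $\mathcal{N}^{(1)},\mathcal{N}^{(2)},\mathcal{N}^{(3)}$ first, and then transfer the bounds through the integral operator
\[
\mathcal{I}[\mathcal{N}^{(3)}](z)=\int_{-1}^{z}e^{\int_{z}^{\widetilde z}\frac{r_k}{\partial_z r_k}(\partial_z\phi_k)^2}\mathcal{N}^{(3)}(\widetilde z)\,d\widetilde z .
\]

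\textbf{Step 1: the quadratic structure.} Use Proposition~\ref{prop: nonlinear structure equation} (Appendix~\ref{appendix: nonlinear equations}). Each $\mathcal{N}^{(i)}$ is a sum of products of two factors drawn from $f,\partial_s f,\partial_z f,g,\partial_z g,h,\partial_z h$ (possibly divided by powers of $r_k\sim e^{-s}\mr r$), with $s$-independent coefficients. By Proposition~\ref{prop: estimate on the exact k self similar spacetime}, these coefficients lie in the classes $\mathcal{F}_a,\mathcal{G}_a,\mathcal{K}$ of Definition~\ref{def:schematic notation}. The self-similar weights are as in the toy model $\mathcal{N}_{toy}\approx e^{s}(\partial_z\Psi_p)^2$: every quadratic term carries exactly one net factor $e^{s}$ relative to the product of the two perturbations. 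Each factor is bounded by $e^{-(1+a)s}\Vert(f,g,h)\Vert_{\mathcal{S}^{a,N,\tau}}$, so the product is bounded by $e^{s}e^{-2(1+a)s}=e^{-(1+2a)s}$, which gives the stated rate.

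\textbf{Step 2: pointwise bounds in $z$.} Control each factor and its $z$-derivatives via Lemma~\ref{lemma: sobolev embedding}, which gives $f\in\mathcal{C}^{7/4-\tau}_N$, $g\in\mathcal{C}^{\alpha-\tau}_N$ and $h\in\mathcal{C}^{p_k-\tau}_N$. Then $|\partial_z^j f|\lesssim(-z)^{7/4-\tau-j}$, so the worst weight needed for $\partial_z^j$ of a product is $(-z)^{j-3/4+\tau}$. That loss comes from the $f$-factor carrying all the derivatives, while the $g,h$ factors have stronger regularity near $z=0$ relative to the required weight. I expect the subtle point to be products containing $\partial_z g$ or $\partial_z h$. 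Their degenerate behaviour, of order $(-z)^{\alpha-1-\tau}$ and $(-z)^{p_k-1-\tau}$, is still better than $(-z)^{3/4-\tau}$ after $j\geq1$ derivatives, because $\alpha,p_k>1$. Coefficient singularities of type $F_k$ cost only $(-z)^{-\epsilon}$, which is absorbed since $\tau<\alpha-p_k$ leaves room.

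Near the axis, the factors $1/r_k$ and $1/\mr r^2$ (for instance $m_p/\mr r^2$ or $\Psi/r$) are singular. Here I use the vanishing conditions $f(s,-1)=g(s,-1)=h(s,-1)=0$ together with Lemma~\ref{lemma: Using Talor theory to control the Linfty norm} to trade each division by $(z+1)^2$ for two extra derivatives. This is where the requirement $N+1$ in the norm is used. For $h$ one also needs the cubic vanishing of $m$ at the axis; I would take it from the structure of $\mathcal{N}^{(3)}$, which contains an explicit $\mr r^2$.

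\textbf{Step 3: the integral term and the $L^2$ bounds.} The weight $e^{\int\cdots}$ is bounded because $\frac{r_k}{\partial_z r_k}(\partial_z\phi_k)^2\in\mathcal{F}_{k^2}$ is integrable. Hence $\mathcal{I}[\mathcal{N}^{(3)}]$ gains one derivative, and its $z$-derivatives are the same integral with coefficients differentiated, plus $\mathcal{N}^{(3)}$ itself. Multiplying by the coefficients $\frac{(\partial_s r_k+q_kz\partial_z r_k)\partial_z r_k}{(1-\mu_k)^2r_k^2}\sim e^{2s}\cdot e^{-s}$ keeps the rate $e^{-(1+2a)s}$: $\mathcal{N}^{(3)}$ is of size $r_k^2$ times a product, so the powers of $e^{s}$ balance. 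The weighted $L^2$ estimates \eqref{eq: L2 estimate for renormalized N1}--\eqref{eq: L2 estimate for renormalized N2} follow by putting the highest-derivative factor in $H^{\gamma}_{N+1}$ and the other factor in $L^\infty$ via Step 2, and distributing the weights $(-z)^{i-5/4+\tau}$ as in Leibniz's rule.

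I expect the main obstacle to be the bookkeeping of the $z$-weights when all $N$ derivatives fall on the least regular factor, $h$ or $g$. There one must check that the weight $(-z)^{2i-5/2+2\tau}$ dominates $(-z)^{2i-3+2\gamma_h}$ with $\gamma_h=\frac32-p_k+\tau$. This uses $p_k<5/4$, i.e.\ $k$ small.
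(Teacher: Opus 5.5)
Your Steps 1--2 essentially reprove Proposition~\ref{prop: nonlinear structure equation}, which the paper simply invokes. The content specific to this lemma is the term involving $\mathcal{N}^{(3)}$, and there Step 3 has a genuine gap at the axis.

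\textbf{The near-axis average.} The coefficient $\frac{(\partial_s r_k+q_kz\partial_z r_k)\partial_z r_k}{(1-\mu_k)^2r_k^2}=\frac{\mr{\nu}\,\partial_z\mr{r}}{(1-\mu_k)^2\mr{r}^2}$ is $s$-independent, not $\sim e^{s}$. What matters is its $\mr{r}^{-2}\sim(z+1)^{-2}$ singularity at $\Gamma$. Near $z=-1$ the extra term is therefore, up to smooth factors, the Hardy average $A[F](z)=(z+1)^{-2}\int_{-1}^{z}(\widetilde z+1)^2F(\widetilde z)\,d\widetilde z$ with $F=\mathcal{N}^{(3)}/\mr{r}^2$. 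You need $\Vert\partial_z^jA[F]\Vert_{L^\infty([-1,-\frac12])}\lesssim\sum_{i\le j}\Vert\partial_z^iF\Vert_{L^\infty([-1,-\frac12])}$ \emph{without loss}. The reason is that \eqref{eq: quadratic structure for N3} controls $F$ only for $j\le N-1$, the same range claimed in the lemma.

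Differentiating $\mathcal{I}$ and then multiplying by the coefficient, as you propose, does not give this bound. In the Leibniz expansion the single term $\partial_z^j(z+1)^{-2}\cdot\mathcal{I}$ is only bounded by $(z+1)^{1-j}\Vert F\Vert_{L^\infty}$, which diverges for $j\ge2$, so the cancellation has to be exhibited. Lemma~\ref{lemma: Using Talor theory to control the Linfty norm}, which you use in Step 2, does exhibit it, but it costs a derivative. It controls $\partial_z^jA[F]$ by $\partial_z^{j+1}\big((z+1)^2F\big)$, hence by $\partial_z^{j+1}F$, which is unavailable at $j=N-1$.

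The index $N+1$ in $\mathcal{S}^{a,N,\tau}$ is not a spare derivative. It is already spent on the embedding of Lemma~\ref{lemma: sobolev embedding} and on the $\partial_zf,\partial_zg$ inside $\mathcal{N}^{(3)}$. The paper uses the Taylor lemma only for $i\le j-1$, and treats $i=j$ by repeated integration by parts that moves the derivatives onto $F$. Equivalently, you can write $A[F](z)=(z+1)\int_0^1t^2F(-1+t(z+1))\,dt$ and differentiate under the integral. The same near-axis bound is what feeds the weighted $L^2$ estimates.

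\textbf{The weight bookkeeping in Step 2.} The comparison you flag as the subtle point is wrong. Products containing $\partial_zg$ or $\partial_zh$ do \emph{not} obey the weight $(-z)^{j-\frac34+\tau}$. Since $\vert\partial_z^{j+1}g\vert\lesssim(-z)^{\alpha-\tau-j-1}$, you get $(-z)^{j-\frac34+\tau}\vert\partial_z^{j+1}g\vert\lesssim(-z)^{\alpha-\frac74}$, which blows up because $\alpha<\frac32$. Beating $(-z)^{\frac34-\tau}$ would need $\alpha\ge\frac74$; the fact that $\alpha,p_k>1$ is beside the point.

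The lemma holds for a structural reason your plan does not use. $\mathcal{N}^{(1)}$ depends only on $(r_p,\partial r_p,m_p)$ and $\mathcal{N}^{(2)}$ only on $(r_p,\partial r_p,m_p,\Psi_p)$, so neither contains $\partial_z\Psi_p$ or $\partial_zm_p$. The dependence on $\partial_zg$ enters only through $\mathcal{N}^{(3)}$, which carries the weaker weight $(-z)^{\max\{0,j+1-p_k\}}$. This is compensated by the one-derivative gain of the $z$-integral. Away from the axis, $\partial_z^j$ of the integral term involves only $\partial_z^{\le j-1}\mathcal{N}^{(3)}$, and $(-z)^{j-\frac34+\tau}\le(-z)^{\max\{0,j-p_k\}}$ since $p_k>\frac34$. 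You mention this gain, but it, not the regularity comparison, has to carry the argument.

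Your final check is mis-indexed in the same way. The $H^{\gamma_h}_{N+1}$ weight on $\partial_z^ih$ is $(-z)^{2i-4+2\gamma_h}$, and the comparison with $(-z)^{2i-\frac52+2\tau}$ needs only $p_k\ge\frac34$. With your exponent $2i-3+2\gamma_h$ it would need $p_k\ge\frac54$, which fails for small $k$.
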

\begin{proof}
    We only prove \eqref{eq: pointwise estimate for renormalized N1} and \eqref{eq: L2 estimate for renormalized N1}. For the zeroth-order pointwise estimate, using \eqref{eq: quadratic structure of the nonlinear terms in the fixed point argument for N1}-\eqref{eq: quadratic structure for N3} for $j = 0$, we have \begin{align*}
        \left\Vert\widetilde{\mathcal{N}}^{(1)}\right\Vert_{L_{z}^{\infty}([-1,0])}\lesssim& \left\Vert\mathcal{N}^{(1)}\right\Vert_{L_{z}^{\infty}([-1,0])}+\left\Vert\frac{1}{\mr{r}^{2}}\int_{-1}^{z}\mr{r}^{2}\frac{\mathcal{N}^{(3)}}{\mr{r}^{2}}\right\Vert_{L_{z}^{\infty}([-1,0])}\\\lesssim&
        \left\Vert\mathcal{N}^{(1)}\right\Vert_{L_{z}^{\infty}([-1,0])}+\left\Vert\frac{\mathcal{N}^{(3)}}{\mr{r}^{2}}\right\Vert_{L_{z}^{\infty}([-1,0])}\lesssim e^{-(1+2a)s}\left\Vert(f,g,h)\right\Vert_{\mathcal{S}^{a,N,\tau}}^{2}.
    \end{align*}
    For the higher-order pointwise estimate ($j\geq 1$), using Proposition~\ref{prop: estimate on the exact k self similar spacetime} on the background $k$-self-similar spacetime and \eqref{eq: quadratic structure of the nonlinear terms in the fixed point argument for N1}-\eqref{eq: quadratic structure for N3}, we have \begin{align*}
        \left\Vert(-z)^{j-\frac{3}{4}+\tau}\partial_{z}^{j}\widetilde{\mathcal{N}}^{(1)}\right\Vert_{L_{z}^{\infty}([-1,0])}\lesssim &\left\Vert(-z)^{j-\frac{3}{4}+\tau}\partial_{z}^{j}{\mathcal{N}}^{(1)}\right\Vert_{L_{z}^{\infty}([-1,0])}+\sum_{0\leq l\leq j-1}\left\Vert(-z)^{\min\{0,l+1-p_{k}\}}\partial_{z}^{l}{\mathcal{N}}^{(3)}\right\Vert_{L_{z}^{\infty}([-1,0])}\\&+\sum_{i = 0}^{j}\left\Vert\partial_{z}^{i}\left(\frac{1}{(z+1)^{2}}\int_{-1}^{z}(\widetilde{z}+1)^{2}\frac{\mathcal{N}^{(3)}}{\mr{r}^{2}}\right)d\widetilde{z}\right\Vert_{L_{z}^{\infty}([-1,-\frac{1}{2}])}.
    \end{align*}
    For the last term in the above inequality, assuming that for $0\leq i\leq j-1$ and using Lemma~\ref{lemma: Using Talor theory to control the Linfty norm}, we have \begin{align*}
        \sum_{i = 0}^{j-1}\left\Vert\partial_{z}^{i}\left(\frac{1}{(z+1)^{2}}\int_{-1}^{z}(\widetilde{z}+1)^{2}\frac{\mathcal{N}^{(3)}}{\mr{r}^{2}}d\widetilde{z}\right)\right\Vert_{L_{z}^{\infty}([-1,-\frac{1}{2}])}\lesssim \sum_{i = 0}^{j}\left\Vert\partial_{z}^{i}\left(\frac{\mathcal{N}^{(3)}}{\mr{r}^{2}}\right)\right\Vert_{L_{z}^{\infty}([-1,-\frac{1}{2}])},
    \end{align*}
    For $i = j$, we have 
    \begin{align*}
        &\left\Vert\partial_{z}^{j}\left(\frac{1}{(z+1)^{2}}\int_{-1}^{z}(\widetilde{z}+1)^{2}\frac{\mathcal{N}^{(3)}}{\mr{r}^{2}}\right)\right\Vert_{L_{z}^{\infty}([-1,-\frac{1}{2}])}\\\lesssim& \left\Vert\partial_{z}^{j-1}\left(\frac{\mathcal{N}^{(3)}}{\mr{r}^{2}}\right)\right\Vert_{L_{z}^{\infty}([-1,-\frac{1}{2}])}+\left\Vert\partial_{z}^{j-1}\left(\frac{1}{(z+1)^{3}}\int_{-1}^{z}(\widetilde{z}+1)^{2}\frac{\mathcal{N}^{(3)}}{\mr{r}^{2}}\right)\right\Vert_{L_{z}^{\infty}([-1,-\frac{1}{2}])}\\\lesssim&
        \left\Vert\partial_{z}^{j-1}\left(\frac{\mathcal{N}^{(3)}}{\mr{r}^{2}}\right)\right\Vert_{L_{z}^{\infty}([-1,-\frac{1}{2}])}+\left\Vert\partial_{z}^{j-1}\left(\frac{1}{(z+1)^{3}}\int_{-1}^{z}(\widetilde{z}+1)^{3}\partial_{z}\left(\frac{\mathcal{N}^{(3)}}{\mr{r}^{2}}\right)\right)\right\Vert_{L_{z}^{\infty}([-1,-\frac{1}{2}])},
    \end{align*}
    where the second inequality follows from the integration by parts. Arguing inductively, we have \begin{align*}
        &\left\Vert\partial_{z}^{j}\left(\frac{1}{(z+1)^{2}}\int_{-1}^{z}(\widetilde{z}+1)^{2}\frac{\mathcal{N}^{(3)}}{\mr{r}^{2}}\right)\right\Vert\\\lesssim &\sum_{i = 0}^{j-1}\left\Vert\partial_{z}^{i}\left(\frac{\mathcal{N}^{(3)}}{\mr{r}^{2}}\right)\right\Vert_{L_{z}^{\infty}([-1,-\frac{1}{2}])}+\left\Vert\frac{1}{(z+1)^{j+2}}\int_{-1}^{z}(z+1)^{j+2}\partial_{z}^{j}\left(\frac{\mathcal{N}^{(3)}}{\mr{r}^{2}}\right)\right\Vert_{L_{z}^{\infty}([-1,-\frac{1}{2}])}\\\lesssim&\sum_{i = 0}^{j}\left\Vert\partial_{z}^{i}\left(\frac{\mathcal{N}^{(3)}}{\mr{r}^{2}}\right)\right\Vert_{L_{z}^{\infty}([-1,-\frac{1}{2}])}.
    \end{align*}
Using \eqref{eq: quadratic structure of the nonlinear terms in the fixed point argument for N1}--\eqref{eq: quadratic structure for N3}, we can conclude the proof of \eqref{eq: pointwise estimate for renormalized N1}.

For the $L^{2}$-estimate \eqref{eq: L2 estimate for renormalized N1}, we have \begin{align*}
        \int_{-1}^{0}(-z)^{2j-\frac{5}{2}+2\tau}\left\vert\partial_{z}^{j}\widetilde{\mathcal{N}}^{(1)}\right\vert^{2}\lesssim&\int_{-1}^{0}(-z)^{2j-\frac{5}{2}+2\tau}\left\vert\partial_{z}^{j}\mathcal{N}^{(1)}\right\vert^{2}+\sum_{i = 0}^{j-1}\int_{-1}^{0}(-z)^{2i+2\tau-\frac{1}{2}}\left\vert\partial_{z}^{i}\mathcal{N}^{(3)}\right\vert^{2}\\&+\sum_{i = 0}^{j}\left\Vert \partial_{z}^{i}\left(\frac{1}{(z+1)^{2}}\int_{-1}^{z}(\widetilde{z}+1)^{2}\frac{\mathcal{N}^{(3)}}{\mr{r}^{2}}\right)d\widetilde{z}\right\Vert_{L_{z}^{\infty}([-1,-\frac{1}{2}])}^{2}\\\lesssim& 
        e^{-2(1+2a)s}\left\Vert(f,g,h)\right\Vert_{\mathcal{S}^{a,N,\tau}}^{4}.
    \end{align*}
\end{proof}
\begin{lemma}
Given $(f,g,h)\in \mathcal{S}^{a,N,\tau}$, then $(r_{p}^{s_{1}},\Psi_{p}^{s_{1}})(s_{1},z)\in \mathcal{C}_{N}^{\frac{7}{4}-\tau}\times \mathcal{C}_{N}^{\alpha-\tau}$. Moreover, we have that \begin{equation}
    \left\Vert r_{p}^{s_{1}}(s_{1},\cdot)\right\Vert_{\mathcal{C}_{N}^{\frac{7}{4}-\tau}}+\left\Vert \Psi_{p}^{s_{1}}(s_{1},\cdot)\right\Vert_{\mathcal{C}_{N}^{\alpha-\tau}}\lesssim \left\Vert(f,g,h)\right\Vert_{\mathcal{S}^{a,N,\tau}}^{2}e^{-(1+2a)s_{1}}.
\end{equation}
\end{lemma}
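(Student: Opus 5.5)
We argue directly from the explicit integral representations \eqref{eq: initial data r for duhamel principle}--\eqref{eq: initial data psi for duhamel principle}, feeding in the bounds \eqref{eq: pointwise estimate for renormalized N1}--\eqref{eq: L2 estimate for renormalized N2} on $\widetilde{\mathcal{N}}^{(1)},\widetilde{\mathcal{N}}^{(2)}$ at time $s_{1}$. The argument has three parts: bound the lowest-order norms, bound the weighted higher derivatives, and check membership in the spaces.

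\textbf{Step 1: lowest-order norms.} The integrating factor $e^{\int_{z'}^{z}\frac{\mu_{k}}{1-\mu_{k}}\frac{\partial_{z}r_{k}}{r_{k}}}$ is bounded above and below. This follows from Proposition~\ref{prop: estimate on the exact k self similar spacetime}: $\mu_{k}/\mr r^{2}\lesssim k^{2}$, $\mr r\approx 1+z$ and $\partial_{z}\mr r\approx 1$ in the interior, so the exponent is $O(k^{2}(1+z))$. It is also smooth away from $z=0$, with derivatives obeying the $G_{k^{2}}$/$F_{k^{2}}$ bounds. Consequently $|r_{p}^{s_{1}}|+|\partial_{z}r_{p}^{s_{1}}|\lesssim\|\widetilde{\mathcal N}^{(1)}\|_{L^{\infty}}$. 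Similarly, $\frac{\Psi_{k}}{r_{k}^{2}}\frac{\mu_{k}}{1-\mu_{k}}\partial_{z}r_{k}$ is bounded, because $\Psi_{k}/r_{k}=\mr\phi+\dots$ is bounded and $\mu_{k}/r_{k}\lesssim k^{2}\mr r e^{s}$ scales appropriately in $(s,z)$. Hence $\Psi_{p}^{s_{1}}$ and $\partial_{z}\Psi_{p}^{s_{1}}$ are bounded by $\|\widetilde{\mathcal N}^{(1)}\|_{L^\infty}+\|\widetilde{\mathcal N}^{(2)}\|_{L^\infty}$. Both right-hand sides are $\lesssim e^{-(1+2a)s_{1}}\|(f,g,h)\|^{2}_{\mathcal S^{a,N,\tau}}$ by the $j=0$ case of the previous lemma.

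\textbf{Step 2: weighted higher derivatives.} For $2\le j\le N$, differentiate the transport relations \eqref{eq: duhamel initial data for rp}--\eqref{eq: duhamel initial data for psip}. This gives
\[
\partial_{z}^{j}r_{p}^{s_{1}}=\partial_{z}^{j-1}\widetilde{\mathcal N}^{(1)}+\sum_{l\le j-1}\partial_{z}^{j-1-l}\Big(\tfrac{\mu_{k}}{1-\mu_{k}}\tfrac{\partial_{z}r_{k}}{r_{k}}\Big)\partial_{z}^{l}r_{p}^{s_{1}},
\]
and the analogous identity for $\Psi_{p}^{s_{1}}$. The first term, weighted by $|z|^{j-(7/4-\tau)}=|z|^{(j-1)-3/4+\tau}$, is controlled by \eqref{eq: pointwise estimate for renormalized N1} with $j-1$ in place of $j$; the exponents match exactly, which explains the choice of $\frac74-\tau$. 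For the coefficient terms, the background quantities have $\partial_{z}^{m}$ behaving like $|z|^{-m+1+p_{k}k^{2}-1}$ or better, by part (2) and part (5) of Proposition~\ref{prop: estimate on the exact k self similar spacetime}. Each factor $|z|^{-(j-1-l)}$ is therefore compensated by the weight gained on the lower-order derivative, and we proceed by induction on $j$. For $\Psi_{p}^{s_{1}}$ the target weight is $|z|^{j-\alpha+\tau}$. Since $\alpha<\frac32<\frac74$, this weight is stronger than the one used for $r_{p}^{s_{1}}$, so the source $\partial_{z}^{j-1}\widetilde{\mathcal N}^{(2)}$ term is bounded as before. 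The coupling term involves $\Psi_{k}$, which has only $\mathcal C^{p_{k}}$ regularity; it contributes $|z|^{-(j-1-l)+p_{k}-\dots}$ behaviour times $\partial^{l}r_{p}^{s_{1}}$. This is acceptable because $p_{k}-1>0$ and $r_{p}^{s_{1}}$ already carries the $\frac74$-weights, so $\alpha-\tau<\frac74$ leaves room.

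\textbf{Step 3: H\"older continuity.} It remains to upgrade the bounds to $C^{0}$ membership of $|z|^{j-\beta}\partial^{j}_{z}(\cdot)$; since the norms use $\delta=0$, this is only continuity. Away from $z=0$ it follows from smoothness, and at $z=0$ the strict inequality in the exponent gives a limit (zero). The same point in \eqref{eq: pointwise estimate for renormalized N1} only yields boundedness, so we would use the $L^{2}$ bounds \eqref{eq: L2 estimate for renormalized N1} together with the fundamental-theorem-of-calculus argument of Lemma~\ref{lemma: sobolev embedding}, at the cost of an arbitrarily small loss of exponent that is absorbed into $\tau$.

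\textbf{Main obstacle.} We expect the hardest step to be the bookkeeping of singular background coefficients near $z=0$ in the $\Psi$ equation, where the $C^{1,p_{k}k^{2}}$ background meets the prescribed weights. If the induction loses an exponent there, we would first try rewriting the coupling term via $\partial_{z}(\Psi_{k}/r_{k})=\partial_{z}\mr\phi$ before differentiating.
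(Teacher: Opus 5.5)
Your proposal is correct and follows essentially the paper's own route: bound $r_{p}^{s_{1}},\Psi_{p}^{s_{1}}$ and their first $z$-derivatives from the explicit integral formulas, then differentiate the transport relations and induct on $j$ using the pointwise bounds on $\partial_{z}^{j-1}\widetilde{\mathcal{N}}^{(1)},\partial_{z}^{j-1}\widetilde{\mathcal{N}}^{(2)}$, with exactly the exponent matching you identify. Your Step 3 is extra care the paper omits, and splitting off the tail of the weighted $L^{2}$ integral near $z=0$ already gives the vanishing limit with no loss in $\tau$.

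One slip in Step 2: part (2) of Proposition~\ref{prop: estimate on the exact k self similar spacetime} gives $\bigl|\partial_{z}^{m}\bigl(\frac{\mu_{k}}{1-\mu_{k}}\frac{\partial_{z}r_{k}}{r_{k}}\bigr)\bigr|\lesssim|z|^{-m+1+p_{k}k^{2}}$ for $m\geq 2$, which is one power better than the rate you wrote. It is this extra power, not any weight on $r_{p}^{s_{1}}$, that controls the $l=0$ term $\partial_{z}^{j-1}(\cdot)\,r_{p}^{s_{1}}$; with your stated rate, that term would only be $O(|z|^{-\frac{3}{4}+\tau+p_{k}k^{2}})$ after weighting.
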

\begin{proof}
By the explicit form of $r_{p}^{s_{1}}(s_{1},z)$ \eqref{eq: initial data r for duhamel principle} and the estimate \eqref{eq: pointwise estimate for renormalized N1}, we have 
\begin{equation}
\begin{aligned}
    \left\Vert r_{p}^{s_{1}}(s_{1},\cdot)\right\Vert_{L^{\infty}_{z}([-1,0])}\lesssim \left\Vert \widetilde{\mathcal{N}}^{(1)}\right\Vert_{L_{z}^{\infty}([-1,0])}\lesssim  \left\Vert(f,g,h)\right\Vert_{\mathcal{S}^{a,N,\tau}}^{2}e^{-(1+2a)s_{1}},
\end{aligned}
\end{equation}
where we have used Lemma \ref{lemma: Using Talor theory to control the Linfty norm} in the third inequality above.

Using \eqref{eq: duhamel initial data for rp}, we have \begin{equation}
\begin{aligned}
    \left\Vert \frac{d}{dz}r_{p}^{s_{1}}(s_{1},\cdot)\right\Vert_{L^{\infty}_{z}([-1,0])}\lesssim&\left\Vert r_{p}^{s_{1}}(s_{1},\cdot)\right\Vert_{L_{z}^{\infty}([-1,0])}+\left\Vert\widetilde{\mathcal{N}}^{(1)}\right\Vert_{L_{z}^{\infty}([-1,0])}\\\lesssim &\left\Vert(f,g,h)\right\Vert_{\mathcal{S}^{a,N,\tau}}^{2}e^{-(1+2a)s_{1}}.
    \end{aligned}
\end{equation}
For $\frac{d^{2}r_{p}^{s_{1}}}{dz^{2}}(s_{1},z)$, commuting \eqref{eq: duhamel initial data for rp} with $\partial_{z}$, we have \begin{equation}
    \begin{aligned}
        \left\Vert(-z)^{\frac{1}{4}+\tau}\frac{d^{2}}{dz^{2}}r_{p}^{s_{1}}(s_{1},z)\right\Vert_{L_{z}^{\infty}([-1,0])}\lesssim& \sum_{i = 0}^{1}\left\Vert\frac{d^{i}r_{p}^{s_{1}}}{dz^{i}}\right\Vert_{L_{z}^{\infty}([-1,0])}+\left\Vert(-z)^{\frac{1}{4}+\tau}\partial_{z}\widetilde{\mathcal{N}}^{(1)}\right\Vert_{L_{z}^{\infty}([-1,0])}\\\lesssim& \left\Vert(f,g,h)\right\Vert_{\mathcal{S}^{a,N,\tau}}^{2}e^{-(1+2a)s_{1}}
    \end{aligned}
\end{equation}

Arguing inductively and using \eqref{eq: pointwise estimate for renormalized N1}, we have\begin{equation}
    \left\Vert r_{p}^{s_{1}}(s_{1},\cdot)\right\Vert_{\mathcal{C}_{N}^{\frac{7}{4}-\tau}}\lesssim \left\Vert(f,g,h)\right\Vert_{\mathcal{S}^{a,N,\tau}}^{2}e^{-(1+2a)s_{1}}.
\end{equation}
Similarly, for $\Psi_{p}^{s_{1}}$, we have that \begin{equation}
\left\Vert\Psi_{p}^{s_{1}}(s_{1},\cdot)\right\Vert_{\mathcal{C}_{N}^{\alpha-\tau}}\lesssim\left\Vert(f,g,h)\right\Vert_{\mathcal{S}^{a,N,\tau}}^{2}e^{-(1+2a)s_{1}}.
\end{equation}
This concludes the proof.
\end{proof}

Then by Theorem \ref{thm: linearized result}, for $\tau\in(0,\alpha-\alpha^{\prime})$ and $s\geq s_{1}$, we have that \begin{align*}
    \sum_{0\leq i+j\leq1}\left\vert\partial_{s}^{i}\partial_{z}^{j}\left(\Psi_{p}^{s_{1}}-c_{\infty}^{s_{1}}r_{k}\right)\right\vert\leq& C \left(\left\Vert r_{p}^{s_{1}}(s_{1},\cdot)\right\Vert_{\mathcal{C}_{N}^{\frac{7}{4}-\tau}}+\left\Vert\Psi_{p}^{s_{1}}(s_{1},\cdot)\right\Vert_{\mathcal{C}_{N}^{\alpha-\tau}}\right)e^{-\alpha^{\prime}q_{k}(s-s_{1})},\\
    \sum_{0\leq i+j\leq1}\left\vert\partial_{s}^{i}\partial_{z}^{j}r_{p}^{s_{1}}\right\vert\leq& C\left(\left\Vert r_{p}^{s_{1}}(s_{1},\cdot)\right\Vert_{\mathcal{C}_{N}^{\frac{7}{4}-\tau}}+\left\Vert\Psi_{p}^{s_{1}}(s_{1},\cdot)\right\Vert_{\mathcal{C}_{N}^{\alpha-\tau}}\right)e^{-\alpha^{\prime}q_{k}(s-s_{1})},
\end{align*}
where $c_{\infty}^{s_{1}}$ satisfies the estimate \begin{equation}
    \left\vert c_{\infty}^{s_{1}}\right\vert\lesssim \left\Vert r_{p}^{s_{1}}(s_{1},\cdot)\right\Vert_{\mathcal{C}_{N}^{\frac{7}{4}-\tau}}+\left\Vert\Psi_{p}^{s_{1}}(s_{1},\cdot)\right\Vert_{\mathcal{C}_{N}^{\alpha-\tau}}\lesssim \left\Vert(f,g,h)\right\Vert_{\mathcal{S}^{a,N,\tau}}^{2}e^{-(1+2a)s_{1}}.
\end{equation}
Let $1+2a<\alpha^{\prime}q_{k}$. Then by the Duhamel principle,  for $\mathfrak{A}_{2}^{(1)}$, we have \begin{equation}
\begin{aligned}
    \sum_{0\leq i+j\leq1}\left\Vert\partial_{s}^{i}\partial_{z}^{j}\mathfrak{A}_{2}^{(1)}(s,\cdot)\right\Vert_{L^{\infty}_{z}} =&\sum_{0\leq i+j\leq 1}\left\Vert\partial_{s}^{i}\partial_{z}^{j}\int_{0}^{s}r_{p}^{s_{1}}(s,\cdot)ds_{1}\right\Vert_{L_{z}^{\infty}}\\\leq&\sum_{0\leq j\leq 1}\left\Vert\partial_{z}^{j}r_{p}^{s}(s,\cdot)\right\Vert_{L_{z}^{\infty}}+
    \sum_{0\leq i+j\leq 1}\left\Vert\int_{0}^{s}\partial_{s}^{i}\partial_{z}^{j}r_{p}^{s_{1}}(s,\cdot)ds_{1}\right\Vert_{L_{z}^{\infty}}\\\lesssim&\left\Vert(f,g,h)\right\Vert_{\mathcal{S}^{a,N,\tau}}^{2}\left(e^{-(1+2a)s}+\int_{0}^{s}e^{-\alpha^{\prime}q_{k}s}e^{\alpha^{\prime}q_{k}s_{1}}e^{-(1+2a)s_{1}}ds_{1}\right)\\\lesssim& \left\Vert(f,g,h)\right\Vert_{\mathcal{S}^{a,N,\tau}}^{2}e^{-(1+2a)s}.
    \end{aligned}
\end{equation}
For $\mathfrak{A}_{2}^{(2)}$, we have \begin{equation}
    \begin{aligned}
        \sum_{0\leq i+j\leq 1}\left\Vert\partial_{s}^{i}\partial_{z}^{j}\left(\mathfrak{A}_{2}^{(2)}-r_{k}\int_{0}^{\infty}c_{\infty}^{s_{1}}ds_{1}\right)\right\Vert_{L_{z}^{\infty}} \leq& \sum_{0\leq i+j\leq 1}\left\Vert\partial_{s}^{i}\partial_{z}^{j}\int_{0}^{s}\Psi_{p}^{s_{1}}(s,z)-c_{\infty}^{s_{1}}r_{k}(s,z)ds_{1}\right\Vert_{L_{z}^{\infty}}\\&+\sum_{0\leq i+j\leq 1}\left\Vert \partial_{s}^{i}\partial_{z}^{j}\int_{s}^{\infty}c_{\infty}^{s_{1}}r_{k}(s,z)ds_{1}\right\Vert_{L_{z}^{\infty}}\\\lesssim&\left\Vert(f,g,h)\right\Vert_{\mathcal{S}^{a,N,\tau}}^{2}\left(e^{-(1+2a)s}+e^{-(2+2a)s}\right)\\\lesssim&\left\Vert(f,g,h)\right\Vert_{\mathcal{S}^{a,N,\tau}}^{2}e^{-(1+2a)s}.
    \end{aligned}
\end{equation}
Hence, modifying the trivial initial data of \eqref{eq: nonlinear equation 2} to be $(r_{p},\Psi_{p})|_{\Sigma_{-1}^{(in)}} = (0,-r_{k}\int_{0}^{\infty}c_{\infty}^{s_{1}}ds_{1})$, the solution $\mathfrak{A}_{2}$ to \eqref{eq: nonlinear equation 2} has the estimate \begin{equation}
    \sum_{0\leq i+j\leq 1}e^{(1+2a)s}\left\Vert\partial_{s}^{i}\partial_{z}^{j}\mathfrak{A}_{2}^{(1)}\right\Vert_{L_{z}^{\infty}}+\left\Vert\partial_{s}^{i}\partial_{z}^{j}\mathfrak{A}_{2}^{(2)}\right\Vert_{L_{z}^{\infty}}\lesssim \left\Vert(f,g,h)\right\Vert_{\mathcal{S}^{a,N,\tau}}^{2},\quad 1+2a<\alpha^{\prime}q_{k}.
\end{equation}
It remains to estimate $\mathfrak{A}_{2}^{(3)}$. Using the equation \begin{equation*}
    \partial_{z}m_{p}+\frac{r_{k}}{\partial_{z}r_{k}}(\partial_{z}\phi_{k})^{2}m_{p}+S_{3}[r_{p},\Psi_{p}] = \mathcal{N}^{(3)}(f,g,h),
\end{equation*}
we have \begin{equation}
    \sum_{0\leq i+j\leq 1}e^{(1+2a)s}\left\vert\partial_{s}\partial_{z}\mathfrak{A}_{2}^{(3)}\right\vert\lesssim \left\Vert(f,g,h)\right\Vert_{\mathcal{S}^{a,N,\tau}}^{2}.
\end{equation}
We can conclude the discussions in these two sections with the following proposition:
\begin{proposition}
\label{prop: lower order energy estimate}
    Given functions $(r_{p}^{0},\Psi_{p}^{0})\in \mathcal{C}_{N+1}^{\frac{7}{4}}\times\mathcal{C}_{N+1}^{\alpha}$ with $\alpha\in(p_{k},\frac{3}{2})$ and $(f,g,h)\in\mathcal{S}^{a,N,\tau}_{\delta}$ with $a\in\left(0,\frac{\alpha^{\prime}q_{k}-1}{2}\right)$, there exists a constant $\widetilde{c}_{\infty}$ satisfying \begin{equation}
        \vert \widetilde{c}_{\infty}\vert\lesssim\left\Vert r_{p}^{0}\right\Vert_{\mathcal{C}_{N}^{\frac{7}{4}}}+ \left\Vert\Psi_{p}^{0}\right\Vert_{\mathcal{C}_{N}^{\alpha}}+\left\Vert(f,g,h)\right\Vert_{\mathcal{S}^{a,N,\tau}}^{2},
    \end{equation}
    such that the solution $\mathfrak{A}$ to $\mathcal{P}(r_{p},\Psi_{p},m_{p}) = \mathcal{N}(f,g,h)$ with the initial data $(r_{p},\Psi_{p})|_{\Sigma_{-1}^{(in)}} = (r_{p}^{0},\Psi_{p}^{0}-\widetilde{c}_{\infty}r_{k}-\widetilde{c}_{\infty}r_{p}^{0})$ satisfies the estimate \begin{equation}
        \sum_{0\leq i+j\leq 1}\sum_{l = 1}^{3}e^{(1+2a)s}\left\Vert\partial_{s}^{i}\partial_{z}^{j}\mathfrak{A}^{(l)}(s,\cdot)\right\Vert_{L_{z}^{\infty}}\lesssim\left\Vert r_{p}^{0}\right\Vert_{\mathcal{C}_{N}^{\frac{7}{4}}}+ \left\Vert\Psi_{p}^{0}\right\Vert_{\mathcal{C}_{N}^{\alpha}}+\left\Vert(f,g,h)\right\Vert_{\mathcal{S}^{a,N,\tau}}^{2}.\label{eq: total lower order fixed point argument}
    \end{equation}
    Moreover, in the near-center region $\{-1\leq z\leq -e^{-2q_{k}}\}$, we have that \begin{equation*}
         \sum_{0\leq i+j\leq 2}\sum_{l = 1}^{3}e^{(1+2a)s}\left\vert\partial_{s}^{i}\partial_{z}^{j}\mathfrak{A}^{(l)}(s,z)\right\vert\lesssim\left\Vert r_{p}^{0}\right\Vert_{\mathcal{C}_{N}^{\frac{7}{4}}}+ \left\Vert\Psi_{p}^{0}\right\Vert_{\mathcal{C}_{N}^{\alpha}}+\left\Vert(f,g,h)\right\Vert_{\mathcal{S}^{a,N,\tau}}^{2}.
    \end{equation*}
\end{proposition}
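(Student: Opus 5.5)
The proof combines the estimates already obtained for $\mathfrak{A}_{1}$ and $\mathfrak{A}_{2}$ by linearity, since $\mathfrak{A}=\mathfrak{A}_{1}+\mathfrak{A}_{2}$. The one real issue is that the two corrections to the initial data, namely $-c_{\infty}(r_{p}^{0},\Psi_{p}^{0})r_{k}$ for $\mathfrak{A}_{1}$ and $-r_{k}\int_{0}^{\infty}c_{\infty}^{s_{1}}ds_{1}$ for $\mathfrak{A}_{2}$, have to be merged into a single constant $\widetilde{c}_{\infty}$. The correction must also be a multiple of the full $r|_{\Sigma_{-1}^{(in)}}=r_{k}+r_{p}^{0}$ rather than of $r_{k}$ alone, since only multiples of $r$ lie in the kernel coming from translation invariance, $\Psi\mapsto\Psi+cr$.

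\emph{Step 1: construct $\widetilde{c}_{\infty}$.} Translation invariance is an exact symmetry of the nonlinear system. If $(r,\Psi,m)$ solves the equations with data $(r_{k}+r_{p}^{0},\Psi_{k}+\Psi_{p}^{0})$, then $(r,\Psi-cr,m)$ solves them with data $(r_{k}+r_{p}^{0},\Psi_{k}+\Psi_{p}^{0}-c(r_{k}+r_{p}^{0}))$.

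For the split problem $\mathcal{P}=\mathcal{N}(f,g,h)$ with fixed $(f,g,h)$, I would check directly from the explicit form of $\mathcal{P}$ that $(0,r_{k},0)$ lies in $\ker\mathcal{P}$. The coefficient of $\partial_{z}r_{p}$ in $\mathcal{P}^{(2)}$ contains $k$ exactly so that this holds. Consequently, adding $c\,r_{k}$ to the $\Psi_{p}$-data only shifts $\mathfrak{A}^{(2)}$ by $c\,r_{k}$.

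Let $c_{\infty}^{(1)}:=c_{\infty}(r_{p}^{0},\Psi_{p}^{0})$ from Theorem~\ref{thm: linearized result}, and $c^{(2)}_{\infty}:=\int_{0}^{\infty}c_{\infty}^{s_{1}}ds_{1}$. The latter converges because $|c_{\infty}^{s_{1}}|\lesssim\|(f,g,h)\|^{2}e^{-(1+2a)s_{1}}$. I would set
\[
\widetilde{c}_{\infty}:=c_{\infty}^{(1)}+c_{\infty}^{(2)},
\]
so that $|\widetilde{c}_{\infty}|$ obeys the stated bound.

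The extra term $-\widetilde{c}_{\infty}r_{p}^{0}$ in the data must be handled separately. It is an initial perturbation of size $|\widetilde{c}_{\infty}|\,\|r_{p}^{0}\|_{\mathcal{C}^{7/4}_{N}}$ in $\mathcal{C}^{7/4}_{N}\subset\mathcal{C}^{\alpha}_{N}$. I would treat it by one more application of Theorem~\ref{thm: linearized result}, which yields a further constant $c'$ of size $\lesssim|\widetilde{c}_{\infty}|\|r_{p}^{0}\|$. The map $c\mapsto c_{\infty}$ is linear, so the requirement that the total asymptotic constant vanishes becomes a scalar linear equation:
\[
\widetilde{c}_{\infty}\bigl(1+\ell(r_{p}^{0})\bigr)=c_{\infty}^{(1)}+c_{\infty}^{(2)},\qquad |\ell(r_{p}^{0})|\lesssim\|r_{p}^{0}\|\ll1.
\]
This is uniquely solvable with the same bound. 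I expect this step to be the main obstacle, namely making sure the $r_{p}^{0}$-correction does not spoil the structure. It uses linearity of $c_{\infty}$ in the data together with smallness of $r_{p}^{0}$, which I would assume is implicit in the ball setting.

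\emph{Step 2: first-order decay.} With the data so modified, the $\mathfrak{A}_{1}$ bound gives decay $e^{-\alpha'q_{k}s}$ and the $\mathfrak{A}_{2}$ bound gives $e^{-(1+2a)s}$. Since $a<(\alpha'q_{k}-1)/2$ implies $1+2a<\alpha'q_{k}$, both are dominated by $e^{-(1+2a)s}$. Summing yields \eqref{eq: total lower order fixed point argument} for $l=1,2$. For $l=3$ I would integrate the transport equation for $m_{p}$ from the center $z=-1$, where $m_{p}=0$. Bounding $S_{3}[r_{p},\Psi_{p}]$ by the already-controlled first derivatives and $\mathcal{N}^{(3)}$ by $e^{-(1+2a)s}\|(f,g,h)\|^{2}$, as in the bound preceding the proposition, gives the $m_{p}$ estimate and its $\partial_{s}$ derivative.

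\emph{Step 3: near-center second-order bounds.} On $\{-1\le z\le -e^{-2q_{k}}\}$ I would repeat Steps 1 and 2 using the second-order estimates \eqref{eq: linearized estimate 1 second order}--\eqref{eq: linearized estimate 3 second order} in place of the first-order ones. Inside the Duhamel integral, the $\partial_{s}$ derivative produces the boundary term $r^{s}_{p}(s,\cdot)$. Its second $z$-derivatives are controlled by the $\mathcal{C}^{7/4-\tau}_{N}$ bound on the Duhamel data, which away from $z=0$ is a plain $C^{N}$ bound. Integrating against $e^{-\alpha'q_{k}(s-s_{1})}e^{-(1+2a)s_{1}}$ again gives $e^{-(1+2a)s}$, and $m_{p}$ follows from its transport equation, differentiated once more.
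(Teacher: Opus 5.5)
Your proposal is correct and follows the paper's route. You split $\mathfrak{A}=\mathfrak{A}_{1}+\mathfrak{A}_{2}$, absorb $c_{\infty}(r_{p}^{0},\Psi_{p}^{0})$ and $\int_{0}^{\infty}c_{\infty}^{s_{1}}ds_{1}$ into the $\Psi_{p}$-data through the kernel element $(0,r_{k},0)$, use $1+2a<\alpha'q_{k}$ in the Duhamel integral, and recover $m_{p}$ from its $z$-transport equation. One correction to your justification: $(0,r_{k},0)$ lies in $\ker\mathcal{P}$ because $\mathcal{P}^{(2)}(0,r_{k},0)$ is exactly the background $r$-equation and the $\Psi_{p}$-terms of $\mathcal{P}^{(3)}$ cancel on $r_{k}$. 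It has nothing to do with the $k\partial_{z}r_{p}$ coefficient, which multiplies $\partial_{z}r_{p}=0$ here.

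The one thing you add is a treatment of the $-\widetilde{c}_{\infty}r_{p}^{0}$ term, which the paper's discussion never handles, since it only subtracts multiples of $r_{k}$. Your scalar equation $\widetilde{c}_{\infty}\bigl(1+c_{\infty}(0,r_{p}^{0})\bigr)=c_{\infty}^{(1)}+c_{\infty}^{(2)}$ is the right fix. The nondegeneracy you flag is genuinely needed: for $r_{p}^{0}=-r_{k}|_{s=0}$ the $\widetilde{c}_{\infty}$-dependence of the data vanishes identically. It does hold in the paper's application $r_{p}^{0}=-c_{0}r_{k}$, where the factor is $1-c_{0}$.
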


\subsection{Second-order energy estimate on $\mathfrak{A}$}
In this section, we establish the second-order energy estimate on $\mathfrak{A}$, using the established lower-order pointwise estimates. First, we establish the estimates for $\frac{(m_{p})_{L}}{\mr{r}^{2}}$, where $(m_{p})_{L}$ is defined in \eqref{eq: definition of mpl}. Let $(\phi_{p})_{L}$ be the linear dependence of $\phi_{p}$ on $r_{p}$ and $\Psi_{p}$.
\begin{proposition}
    Assuming $(r_{p},\Psi_{p})$ satisfies the estimate \eqref{eq: total lower order fixed point argument} in Proposition \ref{prop: lower order energy estimate}, then we have the following two estimates for $(m_{p})_{L}$:\begin{align}
       e^{(1+2a)s} \left\Vert\frac{(m_{p})_{L}}{\mr{r}^{2}}(s,\cdot)\right\Vert_{L_{z}^{\infty}}&\lesssim \left\Vert r_{p}^{0}\right\Vert_{\mathcal{C}_{N}^{\frac{7}{4}}}+\left\Vert\Psi_{p}^{0}\right\Vert_{\mathcal{C}_{N}^{\alpha}}+\left\Vert(f,g,h)\right\Vert_{\mathcal{S}^{a,N,\tau}}^{2},\label{eq: first order mp estimate in the nonlinear problem}\\e^{(1+2a)s}\left\Vert\partial_{z}\left(\frac{(m_{p})_{L}}{\mr{r}^{2}}\right)(s,\cdot)\right\Vert_{L_{z}^{\infty}}&\lesssim \left\Vert r_{p}^{0}\right\Vert_{\mathcal{C}_{N}^{\frac{7}{4}}}+\left\Vert\Psi_{p}^{0}\right\Vert_{\mathcal{C}_{N}^{\alpha}}+\left\Vert(f,g,h)\right\Vert_{\mathcal{S}^{a,N,\tau}}^{2}.\label{eq: second order mp estimate in the nonlinear problem}
    \end{align}
\end{proposition}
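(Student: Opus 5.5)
We work directly from the defining formula \eqref{eq: definition of mpl}. Since $(m_{p})_{L}$ solves the linear transport equation $\partial_{z}(m_{p})_{L}+\frac{r_{k}}{\partial_{z}r_{k}}(\partial_{z}\phi_{k})^{2}(m_{p})_{L}=-S_{3}[r_{p},\Psi_{p}]$ with $(m_{p})_{L}(s,-1)=0$, the plan is to bound the source $S_{3}$ pointwise using the lower-order bounds \eqref{eq: total lower order fixed point argument}, and then to integrate from the center.

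\textbf{Structure of $S_{3}$.} By \eqref{eq:v transport eq for m under self-similar coordinates} and its schematic form \eqref{eq:z-transport eq for mrho}, we have
\[
-S_{3}=\mr{r}^{2}\Big(F_{k^{2}}\tfrac{r_{p}}{\mr{r}}+F_{k^{2}}\partial_{z}r_{p}+F_{k}\big(\partial_{z}(\tfrac{\Psi_{p}}{\mr{r}})-\partial_{z}(\mr{\phi}\tfrac{r_{p}}{\mr{r}})\big)\Big),
\]
with the powers of $e^{s}$ absorbed by the self-similar scaling. The coefficients $F_{k^2},F_k$ lie in $\mathcal{F}_a$, so they are integrable in $z$ near $z=0$ (choose $\epsilon<1$ in Definition~\ref{def:schematic notation}). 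The only dangerous factors are the $1/\mr{r}$ and $1/\mr{r}^{2}$ singularities near the axis $z=-1$. These are handled using the vanishing $r_{p}=\Psi_{p}=0$ at $z=-1$. Then $\frac{r_{p}}{\mr{r}}$ and $\partial_{z}(\Psi_{p}/\mr{r})$ are bounded by the $C^{1}$ (respectively $C^{2}$ near the center) norms of $r_{p},\Psi_{p}$, via Lemma~\ref{lemma: Using Talor theory to control the Linfty norm} (Hardy/Taylor). We use the near-center second-order estimate in Proposition~\ref{prop: lower order energy estimate} on $\{-1\le z\le -e^{-2q_k}\}$, and the first-order bounds elsewhere, where $\mr{r}\approx 1$.

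\textbf{Estimate \eqref{eq: first order mp estimate in the nonlinear problem}.} The integrating factor $e^{\int\frac{r_k}{\partial_z r_k}(\partial_z\phi_k)^2}$ is bounded above and below, by Proposition~\ref{prop: estimate on the exact k self similar spacetime}(4). Hence
\[
|(m_{p})_{L}(s,z)|\lesssim\int_{-1}^{z}\mr{r}^{2}\,|F|\,\big(\cdots\big)\,d\tilde z .
\]
Near the axis, $\mr{r}^{2}\approx(1+\tilde z)^{2}$, so the integral is $\lesssim (1+z)^{3}\sup|\cdots|$, and dividing by $\mr{r}^{2}$ leaves a bounded quantity. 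Away from the axis, $\mr{r}\approx1$ and the $F$-coefficients are integrable up to $z=0$. Multiplying by $e^{(1+2a)s}$ and inserting \eqref{eq: total lower order fixed point argument} gives the first bound.

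\textbf{Estimate \eqref{eq: second order mp estimate in the nonlinear problem}.} We write
\[
\partial_{z}\Big(\frac{(m_{p})_{L}}{\mr{r}^{2}}\Big)=\frac{\partial_{z}(m_{p})_{L}}{\mr{r}^{2}}-2\frac{\partial_{z}\mr{r}}{\mr{r}}\,\frac{(m_{p})_{L}}{\mr{r}^{2}},
\]
and substitute the transport equation for $\partial_{z}(m_{p})_{L}$. Near $z=0$, the terms $\frac{\partial_z(m_p)_L}{\mr r^2}$ are bounded by $|F|\cdot(\cdots)$. This is the \emph{main obstacle}: $F_{k}$ is merely $O(|z|^{-\epsilon})$, not bounded. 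So for the $L^{\infty}$ bound I would exploit the sharper information that $\partial_{z}\mr{\phi}$ is actually bounded (Proposition~\ref{prop: estimate on the exact k self similar spacetime}(2): $\frac{d\mr\phi}{dz}\sim 1/k$ at $z=0$), so the relevant coefficients are bounded with $k$-dependent constants, which is allowed. Near the axis, the two terms individually behave like $(1+z)^{-1}$, and they must cancel. Indeed, $\frac{(m_p)_L}{\mr r^2}=\frac{1}{\mr r^{2}}\int_{-1}^{z}\mr r^{2}G$ with $G$ bounded together with its derivative. Lemma~\ref{lemma: Using Talor theory to control the Linfty norm}, applied as in the proof of the $\widetilde{\mathcal N}$ estimates (integration by parts, $\frac{1}{(z+1)^{3}}\int(\tilde z+1)^{3}\partial_z G$), then bounds the derivative by $\|G\|_{L^\infty}+\|\partial_z G\|_{L^\infty}$ on $[-1,-\tfrac12]$. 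This requires second derivatives of $r_p,\Psi_p$ near the center, which are supplied by the near-center estimate of Proposition~\ref{prop: lower order energy estimate}.
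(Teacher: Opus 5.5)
Your overall strategy is the paper's: represent $(m_p)_L$ through the integrating factor, and bound the source pointwise. Away from the axis you use the first-order bounds; near it you use the near-center second-order bounds of Proposition~\ref{prop: lower order energy estimate}. You then get the derivative bound from the transport equation. The first estimate is fine. Your own computation $|(m_p)_L|\lesssim(1+z)^3\sup|\cdots|$ in fact gives the stronger bound that $e^{(1+2a)s}\,|(m_p)_L/\mr{r}^{3}|$ is controlled by the same right-hand side, and this is exactly what the paper proves first. Your treatment of $z\to 0$ is also consistent with the paper: the explicit coefficients contain only $\partial_z\mr{\phi}$, which is bounded with $k$-dependent constants.

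The step that does not work is the axis part of the second estimate. There is no $(1+z)^{-1}$ singularity to cancel. By the $\mr{r}^{-3}$ bound, $2\frac{\partial_z\mr{r}}{\mr{r}}\frac{(m_p)_L}{\mr{r}^2}=2\,\partial_z\mr{r}\,\frac{(m_p)_L}{\mr{r}^3}$ is bounded. By the transport equation, $\frac{\partial_z(m_p)_L}{\mr{r}^2}=-\frac{\mr{r}}{\partial_z\mr{r}}(\partial_z\mr{\phi})^2\frac{(m_p)_L}{\mr{r}^2}-\frac{S_3}{\mr{r}^2}$ is bounded by the same source bounds used for the first estimate. So each term is bounded separately, which is the paper's one-line argument.

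The replacement you propose does not close under the stated hypotheses. You bound the derivative by $\|G\|_{L^\infty}+\|\partial_z G\|_{L^\infty}$ on $[-1,-\frac12]$. But $\partial_z G$ contains $\partial_z^2(\Psi_p/\mr{r})$ and $\partial_z^2(\mr{\phi}\,r_p/\mr{r})$. Bounding these up to $z=-1$ requires third $z$-derivatives of $\Psi_p$ and $r_p$ at the center. With only $C^2$ control this fails: for $\Psi_p=(1+z)^{5/2}$, the quantity $\partial_z^2(\Psi_p/(1+z))$ is unbounded. Proposition~\ref{prop: lower order energy estimate} only supplies $i+j\le 2$ there, contrary to your claim that second derivatives suffice.

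The integration by parts is not needed anyway. Differentiating directly gives $\partial_z\big(\mr{r}^{-2}\int_{-1}^{z}\mr{r}^2G\big)=G-2\,\partial_z\mr{r}\,\mr{r}^{-3}\int_{-1}^{z}\mr{r}^2G$. This is bounded by $C\|G\|_{L^\infty}$, since $\mr{r}$ is increasing with $\mr{r}\approx 1+z$. So drop the cancellation step and state the $\mr{r}^{-3}$ bound explicitly; the second estimate then follows immediately.
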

\begin{proof}
    Note that $(m_{p})_{L}$ satisfies the equation 
    \begin{equation}
    \begin{aligned}
        &\partial_{z}(m_{p})_{L}+\frac{r_{k}}{\partial_{z}r_{k}}(\partial_{z}\phi_{k})^{2}(m_{p})_{L} \\=&\left[\frac{r_{k}}{\partial_{z}r_{k}}(\partial_{z}\phi_{k})^{2}-\frac{m_{k}(\partial_{z}\phi_{k})^{2}}{\partial_{z}r_{k}}\right]r_{p}+\left[\frac{r_{k}m_{k}}{(\partial_{z}r_{k})^{2}}(\partial_{z}\phi_{k})^{2}-\frac{1}{2}\frac{r_{k}^{2}(\partial_{z}\phi_{k})^{2}}{(\partial_{z}r_{k})^{2}}\right]\partial_{z}r_{p}\\&+\left[\frac{r_{k}^{2}}{\partial_{z}r_{k}}(\partial_{z}\phi_{k})-\frac{2r_{k}m_{k}}{\partial_{z}r_{k}}(\partial_{z}\phi_{k})\right]\partial_{z}\left((\phi_{p})_{L}\right),
    \end{aligned}
    \label{eq: transport equation for mp in the nonlinear setting}
    \end{equation}
    where $(\phi_{p})_{L}: = \frac{\Psi_{p}-r_{p}\mr{\phi}}{r_{k}}$. By Proposition \ref{prop: lower order energy estimate}, we have that \begin{equation*}
        e^{2as}\left\vert\partial_{z}(\phi_{p})_{L}\right\vert\lesssim \left\Vert r_{p}^{0}\right\Vert_{\mathcal{C}_{N}^{\frac{7}{4}}}+\left\Vert\Psi_{p}^{0}\right\Vert_{\mathcal{C}_{N}^{\alpha}}+\left\Vert(f,g,h)\right\Vert_{\mathcal{S}^{a,N,\tau}}^{2}.
    \end{equation*}
Hence, using the integration factor to write down the explicit form of $(m_{p})_{L}$, we have 
\begin{equation}
\label{eq: zero order estimate on mpl}
\begin{aligned}
    e^{(1+2a)s}\left\vert\frac{(m_{p})_{L}}{\mr{r}^{3}}\right\vert\lesssim& e^{(1+2a)s}\left(\left\Vert \frac{r_{p}}{r_{k}}\right\Vert_{L_{z}^{\infty}}+\left\Vert\partial_{z}r_{p}\right\Vert_{L_{z}^{\infty}}+e^{-s}\left\Vert\partial_{z}\left((\phi_{p})_{L}\right)\right\Vert_{L_{z}^{\infty}}\right)\\\lesssim &\left\Vert r_{p}^{0}\right\Vert_{\mathcal{C}_{N}^{\frac{7}{4}}}+\left\Vert\Psi_{p}^{0}\right\Vert_{\mathcal{C}_{N}^{\alpha}}+\left\Vert(f,g,h)\right\Vert_{\mathcal{S}^{a,N,\tau}}^{2}.
\end{aligned}
\end{equation}
Using the equation for $(m_{p})_{L}$, we have \begin{equation}
\label{eq: first order estimate on mpl}
\begin{aligned}
    e^{(1+2a)s}\left\vert\partial_{z}\left(\frac{(m_{p})_{L}}{\mr{r}^{2}}\right)\right\vert\lesssim& e^{(1+2a)s}\left(\left\vert\frac{\partial_{z}(m_{p})_{L}}{\mr{r}^{2}}\right\vert+\left\vert\frac{(m_{p})_{L}}{\mr{r}^{3}}\right\vert\right)\\\lesssim& \left\Vert r_{p}^{0}\right\Vert_{\mathcal{C}_{N}^{\frac{7}{4}}}+\left\Vert\Psi_{p}^{0}\right\Vert_{\mathcal{C}_{N}^{\alpha}}+\left\Vert(f,g,h)\right\Vert_{\mathcal{S}^{a,N,\tau}}^{2}.
\end{aligned}
\end{equation}
\end{proof}
To get the second-order energy estimates for $(r_{p},\Psi_{p})$, we first prove the following Hardy inequalities. \begin{lemma}
\label{lemma: Hardy inequality}
    We have the following estimates on $(r_{p},\Psi_{p})$ for a universal constant $C$ independent of $s$: \begin{align}
        \int_{-1}^{0}r_{p}^{2}(s,z)dz\leq C\int_{-1}^{0}(\partial_{z}r_{p})^{2}(s,z)dz,\quad \int_{-1}^{0}\Psi_{p}^{2}(s,z)dz\leq C\int_{-1}^{0}(\partial_{z}\Psi_{p})^{2}dz.
    \end{align}
\end{lemma}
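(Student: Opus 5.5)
The plan is to prove both inequalities as one-dimensional Poincaré/Hardy estimates on $I=[-1,0]$. The input is the vanishing of the perturbation at the center $z=-1$. Along the center $\Gamma=\{z=-1\}$ we have $r=r_k=0$ by regularity on the axis (the gauge of $\Gamma$ is fixed), hence $r_p(s,-1)=0$ for all $s$. Similarly $\Psi = r(\phi+k\log(-u))$ vanishes where $r=0$, provided $\phi$ stays bounded at the axis; this is the case for the $C^1$ solutions furnished by the local well-posedness of Section~\ref{sec: recall the local well-posedness}. Since also $\Psi_k(s,-1)=0$, we get $\Psi_p(s,-1)=0$. In the fixed-point setting this is also built into $\mathcal{S}^{a,N,\tau}$ through the condition $f(s,-1)=g(s,-1)=h(s,-1)=0$.

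Given the boundary values, I will write, for any $w\in W^{1,2}(I)$ with $w(-1)=0$,
\begin{equation*}
w(s,z)=\int_{-1}^{z}\partial_z w(s,\widetilde{z})\,d\widetilde{z}.
\end{equation*}
Cauchy--Schwarz then gives
\begin{equation*}
|w(s,z)|^2\le (z+1)\int_{-1}^{0}(\partial_z w)^2\,d\widetilde{z}.
\end{equation*}
Integrating over $z\in[-1,0]$ yields the bound with $C=\tfrac12$. This constant is universal and in particular independent of $s$. Applying it to $w=r_p$ and $w=\Psi_p$ gives both estimates. Alternatively, one can integrate by parts against the weight $(z+1)$, using $\int_{-1}^0 w^2=\int_{-1}^0 \partial_z(z+1)\,w^2$. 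The boundary term at $z=0$ is nonnegative, so this route would need a sign check; the direct Cauchy--Schwarz argument avoids it.

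The only point needing care is justifying the boundary values and regularity. $W^{1,2}$ regularity of $r_p(s,\cdot)$ and $\Psi_p(s,\cdot)$ on $I$ holds because they are $C^1$ up to $z=0$ for $C^{1,\alpha}$-type solutions, as in the definition of $H^\gamma_N$. Vanishing at $z=-1$ follows from the axis regularity conditions $r=r\phi=0$ on $\Gamma$ in the definition of $C^1$ solutions. If one prefers not to rely on $r\phi=0$, the same vanishing can be read off from the Duhamel representation~\eqref{eq: initial data r for duhamel principle}--\eqref{eq: initial data psi for duhamel principle}, whose integrals start at $z=-1$. It is also preserved by the linear flow.
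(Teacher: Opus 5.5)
Your proposal is correct and follows essentially the same route as the paper's proof: vanishing of $r_p,\Psi_p$ at the axis $z=-1$, the fundamental theorem of calculus from $z=-1$, then Cauchy--Schwarz and integration in $z$. The only differences are additions on your side: you obtain the explicit constant $C=\tfrac12$, and you justify the boundary values $r_p(s,-1)=\Psi_p(s,-1)=0$ more carefully than the paper, which only asserts $r_p(s,-1)=0$ and says the $\Psi_p$ case is similar.
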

\begin{proof}
    Since $r_{p}(s,-1) = 0$, by the fundamental theorem of calculus, we have \begin{equation*}
        r_{p}(s,z) = \int_{-1}^{z}\partial_{z}r_{p}(s,z^{\prime})dz^{\prime}.
    \end{equation*}
    Then we have \begin{equation*}
        \int_{-1}^{0}(r_{p})^{2}(s,z)dz\leq C\int_{-1}^{0}\int_{-1}^{0}(\partial_{z}r_{p})^{2}(s,z^{\prime})dz^{\prime}dz\leq C\int_{-1}^{0}(\partial_{z}r_{p})^{2}(s,z)dz.
    \end{equation*}
    The proof of $\Psi_{p}$ will follow similarly. We conclude the proof of this lemma.
\end{proof}

We can prove the following proposition:
\begin{proposition}
\label{prop: second order energy estimate in the nonlinear problem}
    Given functions $(r_{p}^{0},\Psi_{p}^{0})\in\mathcal{C}_{N+1}^{\frac{7}{4}}\times \mathcal{C}_{N+1}^{\alpha}$ and $(f,g,h)\in\mathcal{S}_{\delta}^{a,N,\tau}$ with $a\in\left(0,\frac{\alpha^{\prime}q_{k}-1}{2}\right)$. Define $a_{\lambda} = \lambda a+(1-\lambda) (2a)$. Then the solution $\mathfrak{A} = (r_{p},\Psi_{p},(m_{p})_{L})$ to the equation $\mathcal{P}(r_{p},\Psi_{p},(m_{p})_{L}) = \widetilde{\mathcal{N}}(f,g,h)$ with the initial data $(r_{p},\Psi_{p})|_{\Sigma_{-1}^{(in)}} = (r_{p}^{0},\Psi_{p}^{0}-\widetilde{c}_{\infty}r_{k}-\widetilde{c}_{\infty}r_{p}^{0})$   satisfies the following energy estimate:
    \begin{equation}
        \begin{aligned}
            &e^{2(1+a_{\frac{1}{100}})s_{0}}\left(\int_{s = s_{0}}(-z)^{2\tau-\frac{1}{2}}\left(\partial_{z}^{2}r_{p}\right)^{2}+(-z)^{2\tau-\frac{1}{2}}(\partial_{s}\partial_{z}r_{p})^{2}+(-z)^{2\gamma}\left(\partial_{z}^{2}\Psi_{p}\right)^{2}\right)\\&\hspace{7cm}\lesssim \delta^{-\frac{2}{100}}\left(\left\Vert r_{p}^{0}\right\Vert_{\mathcal{C}_{N+1}^{\frac{7}{4}}}^{2}+\left\Vert\Psi_{p}^{0}\right\Vert_{\mathcal{C}_{N+1}^{\alpha}}^{2}+\left\Vert(f,g,h)\right\Vert_{\mathcal{S}^{a,N,\tau}}^{4}\right).
        \end{aligned}
    \end{equation}
\end{proposition}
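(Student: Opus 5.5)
We prove this by a weighted energy estimate for the $\partial_z$-commuted system \eqref{eq:second order wave equation for r rho}--\eqref{second order wave equation for psi rho}. The inhomogeneous terms are $\partial_z\widetilde{\mathcal N}^{(1)}$ and $\partial_z\widetilde{\mathcal N}^{(2)}$. The lower-order quantities $(r_p,\partial_z r_p,\partial_s r_p,\Psi_p,\partial_z\Psi_p)$ and $(m_p)_L/\mr{r}^2$, $\partial_z((m_p)_L/\mr{r}^2)$ are treated as known sources, controlled by Proposition~\ref{prop: lower order energy estimate} and by \eqref{eq: first order mp estimate in the nonlinear problem}--\eqref{eq: second order mp estimate in the nonlinear problem} at the rate $e^{-(1+2a)s}$.

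\textbf{Multipliers.} For $X=\partial_z^2 r_p$ we multiply the equation by $(-z)^{2\tau-\frac12}X$. For $Y=\partial_z^2\Psi_p$ we multiply by $(-z)^{2\gamma}Y$. We integrate over $\{s=s_0\}\times[-1,0]$. The transport operator $\partial_s+q_kz\partial_z$, after integrating $z\partial_z(X^2)$ by parts against the weight $(-z)^{w}$, produces a damping coefficient. For $X$ this coefficient is
\[
2\cdot 2q_k-q_k(1+w),\qquad w=2\tau-\tfrac12 .
\]
For $Y$, with $w=2\gamma=3-2\alpha+2\tau$, it becomes $q_k(3-2\gamma)/2\cdot 2$ up to the $O(k^2)$ errors from $G_{k^2}$. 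The boundary term at $z=0$ vanishes because the weight vanishes there. The boundary term at $z=-1$ has a good sign, or is controlled via the near-axis second-order bounds in Proposition~\ref{prop: lower order energy estimate}.

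\textbf{Decay rate.} In both cases the resulting decay rate for $E(s)=\int(\cdots)$ is strictly larger than $2(1+a)$. For $\Psi$ one checks that $2q_k+\tfrac12\cdot 2q_k(\alpha-\tau)$ is about $2\alpha q_k>2$, which follows from $\alpha'>p_k$ and $a<(\alpha'q_k-1)/2$. We thus obtain a differential inequality
\[
\frac{d}{ds}E+2(1+a_0)E\lesssim \sqrt{E}\,\bigl(\text{sources}\bigr).
\]
The mixed term $(\partial_s\partial_zr_p)^2$ is then recovered directly from the equation, which expresses $\partial_s\partial_z^2 r_p$ (respectively $\partial_s\partial_z r_p$) in terms of $z\partial_z^3 r_p$ and lower-order terms. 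Alternatively, one commutes with $\partial_s$ and applies the same estimate to $\partial_s\partial_z r_p$, using the $\partial_s f$ component of the $\mathcal S^{a,N,\tau}$ norm to bound the source.

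\textbf{Sources.}
\begin{itemize}
\item The $\widetilde{\mathcal N}$ sources are handled by the $L^2$ bounds \eqref{eq: L2 estimate for renormalized N1}--\eqref{eq: L2 estimate for renormalized N2}. These bounds are weighted exactly by $(-z)^{2-\frac52+2\tau}=(-z)^{2\tau-\frac12}$ and give decay $e^{-2(1+2a)s}\|(f,g,h)\|^4$.
\item The linear lower-order sources decay like $e^{-2(1+2a)s}$ in $L^\infty$. Since the weights are integrable near $z=0$, their contribution is bounded by the squared right-hand side of \eqref{eq: total lower order fixed point argument}; for $\Psi$ we use $\gamma>-\tfrac12$.
\item The coupling term $G_k\partial_z^2 r_p$ in the $\Psi$ equation is absorbed by weighting the $r$-energy with a large constant, exploiting the smallness of $k$.
\item The $F_{k^2}$ coefficients carry $|z|^{-\epsilon}$ singularities. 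Taking $\epsilon$ small compared with $\tau$, Cauchy--Schwarz gives $F_{k^2}\cdot(\text{bounded})$ in the weighted $L^2$ space.
\end{itemize}

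\textbf{Grönwall and the $a_{1/100}$ loss.} Integrating the inequality from $0$ to $s_0$ with initial energy bounded by $\|r_p^0\|_{\mathcal C_{N+1}^{7/4}}^2+\|\Psi_p^0\|_{\mathcal C_{N+1}^\alpha}^2$ (by Lemma~\ref{lemma: sobolev embedding}, applied in reverse, or by direct weighted integrability of $\mathcal C^\alpha$ functions) gives
\[
E(s_0)\lesssim e^{-2(1+a_0)s_0}E(0)+\int_0^{s_0}e^{-2(1+a_0)(s_0-s)}e^{-2(1+2a)s}\,ds\cdot(\text{data}).
\]
This yields decay slightly slower than $2(1+2a)$, which is why $a_{1/100}$ appears. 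The $\delta^{-2/100}$ factor comes from Young's inequality $\sqrt{E}\,S\le \eta E+\eta^{-1}S^2$. Here $\eta$ is chosen in terms of $\delta$ (the ball radius) by interpolating between the rates $a$ and $2a$: the quadratic source is $\lesssim \delta^2 e^{-2(1+2a)s}$, and trading a $\delta^{1/100}$ power exchanges the rate $2a$ for $a_{1/100}$.

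\textbf{Main obstacle.} The main obstacle is verifying that the damping coefficient for $\partial_z^2\Psi_p$ with the weight $(-z)^{2\gamma}$ exceeds $2(1+a_{1/100})$ uniformly in the region. The margin is only of order $(\alpha-\tau)q_k-1$, so the $O(k^2)$ terms and the $|z|^{-\epsilon}$ singular coefficients must be shown to eat less than this margin. I would handle this first by a careful pointwise lower bound on the coefficient near $z=0$, where $\mr{\phi}''\sim|z|^{-1+p_kk^2}$ forces $\epsilon<\tau$.
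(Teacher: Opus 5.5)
Your proposal is correct and is essentially the paper's argument. The paper applies the same multipliers $(-z)^{2\tau-\frac12}\partial_z^2 r_p$ and $(-z)^{2\gamma}\partial_z^2\Psi_p$ to the $\partial_z$-commuted equations, closes the $r_p$ bound first and feeds its bulk into the $\Psi_p$ bound, controls the sources as you list, and recovers $\partial_s\partial_z r_p$ from $\mathcal{P}^{(1)}=\widetilde{\mathcal{N}}^{(1)}$; it integrates over $\mathcal{R}(s_0)$ for $e^{(1+a_\lambda)s}r_p$, $e^{(1+a_\lambda)s}\Psi_p$ rather than using Gr\"onwall, and adds a factor $w(z)=3-2(-z)^{\frac12}$ for $\Psi_p$ that is not needed at this order.

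Two harmless corrections. First, the coefficient of $\partial_z^2\Psi_p$ in \eqref{second order wave equation for psi rho} is exactly $2q_k$, and that of $\partial_z^2 r_p$ is $2q_k+G_{k^2}$ with $G_{k^2}$ bounded, so your ``main obstacle'' does not arise; the damping rates are $q_k(\frac72-2\tau)$ and $2q_k(\alpha-\tau)$ (not $2q_k+q_k(\alpha-\tau)$), and both exceed $2(1+2a)$ for $\tau<\alpha-\alpha'$. Second, $\delta^{\frac{1}{100}}$ is only the Cauchy--Schwarz absorption constant, and $a_{\frac{1}{100}}<2a$ serves to keep $\int_0^{s_0}e^{2(a_\lambda-2a)s}\,ds$ bounded in the paper's spacetime version, so no power of $\delta$ is traded for decay; your Gr\"onwall route in fact yields the stronger rate $e^{-2(1+2a)s_0}$, which implies the stated bound.
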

\begin{proof}
    We first derive the second-order energy estimate for $r_{p}$. Let $r_{\lambda} = e^{(1+a_{\lambda})s}r_{p}$ and $\Psi_{\lambda} = e^{(1+a_{\lambda})s}\Psi_{p}$. Using the $\mathcal{P}(r_{p},\Psi_{p},(m_{p})_{L}) = \widetilde{\mathcal{N}}$ formulation and the schematic form \eqref{eq:second order wave equation for r rho}, we have \begin{equation}
       \begin{aligned}
&\partial_{s}\partial_{z}^{2}r_{\lambda}+q_{k}z\partial_{z}^{3}r_{\lambda}+\left(2q_{k}-(1+a_{\lambda})+G_{k^{2}}(z)\right)\partial_{z}^{2}r_{\lambda}\\=&F_{k^{2}}(z)\partial_{z}r_{\lambda}+F_{k^{2}}(z)\partial_{s}r_{\lambda}+F_{k^{2}}(z)r_{\lambda}+K(z)\partial_{z}\left(e^{(1+a_{\lambda})s}\frac{(m_{p})_{L}}{\mr{r}^{2}}\right)+\left(K(z)+F_{k^{2}}(z)\right)e^{(1+a_{\lambda})s}\frac{(m_{p})_{L}}{\mr{r}^{2}}\\&+e^{(1+a_{\lambda})s}\partial{z}\widetilde{\mathcal{N}}^{(1)}.
       \end{aligned}
    \end{equation}
    Multiplying by $(-z)^{2\tau-\frac{1}{2}}\partial_{z}^{2}r_{\lambda}$, and applying the integration by parts and the Cauchy--Schwarz inequality, we have \begin{equation}
        \begin{aligned}
            &\frac{1}{2}\int_{s = s_{0}}(-z)^{2\tau-\frac{1}{2}}\left(\partial_{z}^{2}r_{\lambda}\right)^{2}+\frac{1}{2}q_{k}\int_{\Gamma}\left(\partial_{z}^{2}r_{\lambda}\right)^{2}\\&+\left(\left(\frac{7}{4}-\tau\right)q_{k}-(1+a_{\lambda})-Ck^{2}-C\delta^{\frac{1}{100}}\right)\iint_{\mathcal{R}(s_{0})}(-z)^{2\tau-\frac{1}{2}}\left(\partial_{z}^{2}r_{\lambda}\right)^{2}\\\lesssim& k^{2}\delta^{-\frac{1}{100}}\int_{0}^{s_{0}}e^{2(1+a_{\lambda})s}\sum_{0\leq i+j\leq 1}\left\Vert \partial_{s}^{i}\partial_{z}^{j}r_{p}(s,\cdot)\right\Vert_{L_{z}^{\infty}}^{2}+\delta^{-\frac{1}{100}}\iint_{\mathcal{R}(s_{0})}e^{2(1+a_{\lambda})s}\sum_{0\leq j\leq 1}\left\vert\partial_{z}^{j}\left(\frac{(m_{p})_{L}}{\mr{r}^{2}}\right)\right\vert^{2}\\&+\delta^{-\frac{1}{100}}\iint_{\mathcal{R}(s_{0})}e^{2(1+a_{\lambda})s}(-z)^{2\tau-\frac{1}{2}}\left\vert\partial_{z}\widetilde{\mathcal{N}}^{(1)}\right\vert^{2}dzds+\left\Vert r_{p}^{0}\right\Vert_{H_{N+1}^{-\frac{1}{4}+\tau}}^{2}.
        \end{aligned}
    \end{equation}
By Proposition \ref{prop: lower order energy estimate}, we have that \begin{equation*}
    \delta^{-\frac{1}{100}}\int_{0}^{s_{0}}e^{2(1+a_{\lambda})s}\sum_{0\leq i+j\leq 1}\left\Vert\partial_{s}^{i}\partial_{z}^{j}r_{p}(s,\cdot)\right\Vert_{L_{z}^{\infty}}^{2}dz\lesssim \delta^{-\frac{1}{100}}\left(\left\Vert r_{p}^{0}\right\Vert_{\mathcal{C}_{N}^{\frac{7}{4}}}^{2}+\left\Vert\Psi_{p}^{0}\right\Vert_{\mathcal{C}_{N}^{\alpha}}^{2}+\left\Vert(f,g,h)\right\Vert_{\mathcal{S}^{a,N,\tau}}^{4}\right).
\end{equation*}
Using the estimates \eqref{eq: first order mp estimate in the nonlinear problem}-\eqref{eq: second order mp estimate in the nonlinear problem} for $\frac{(m_{p})_{L}}{\mr{r}^{2}}$, we have that \begin{equation*}
    \delta^{-\frac{1}{100}}\sum_{0\leq j\leq 1}\iint_{\mathcal{R}(s_{0})}e^{2(1+a_{\lambda})s}\left\vert\partial_{z}^{j}\left(\frac{(m_{p})_{L}}{\mr{r}^{2}}\right)\right\vert^{2}\lesssim\delta^{-\frac{1}{100}}\left(\left\Vert r_{p}^{0}\right\Vert_{\mathcal{C}_{N}^{\frac{7}{4}}}^{2}+\left\Vert\Psi_{p}^{0}\right\Vert_{\mathcal{C}_{N}^{\alpha}}^{2}+\left\Vert(f,g,h)\right\Vert_{\mathcal{S}^{a,N,\tau}}^{4}\right).
\end{equation*}
For the term with $\partial_{z}\mathcal{N}^{(1)}$, using \eqref{eq: L2 estimate for renormalized N1}, we have \begin{equation}
        \delta^{-\frac{1}{100}}\iint_{\mathcal{R}(s_{0})}e^{2(1+a_{\lambda})s}(-z)^{2\tau-\frac{1}{2}}\left\vert\partial_{z}\widetilde{\mathcal{N}}^{(1)}\right\vert^{2}\lesssim \delta^{-\frac{1}{100}}\left\Vert(f,g,h)\right\Vert_{\mathcal{S}^{a,N,\tau}}^{4}.
    \end{equation}
    Hence, putting everything together, we have \begin{equation}
    \begin{aligned}
       &\int_{s = s_{0}}(-z)^{2\tau-\frac{1}{2}}\left(\partial_{z}^{2}r_{\lambda}\right)^{2}+\left((\frac{7}{2}-2\tau)q_{k}-2(1+a_{\lambda})-Ck^{2}-C\delta^{\frac{1}{100}}\right)\iint_{\mathcal{R}(s_{0})}(-z)^{2\tau-\frac{1}{2}}(\partial_{z}^{2}r_{\lambda})^{2}\\\lesssim& \delta^{-\frac{1}{100}}\left(\left\Vert r_{p}^{0}\right\Vert_{\mathcal{C}_{N}^{\frac{7}{4}}}^{2}+\left\Vert\Psi_{p}^{0}\right\Vert_{\mathcal{C}_{N}^{\alpha}}^{2}+\left\Vert r_{p}^{0}\right\Vert_{H_{N+1}^{-\frac{1}{4}+\tau}}^{2}+\left\Vert(f,g,h)\right\Vert_{\mathcal{S}^{a,N,\tau}}^{4}\right)\\\lesssim& \delta^{-\frac{1}{100}}\left(\left\Vert r_{p}^{0}\right\Vert_{\mathcal{C}_{N+1}^{\frac{7}{4}}}^{2}+\left\Vert\Psi_{p}^{0}\right\Vert_{\mathcal{C}_{N+1}^{\alpha}}^{2}+\left\Vert(f,g,h)\right\Vert_{\mathcal{S}^{a,N,\tau}}^{4}\right).
        \end{aligned}
        \label{eq: nonlinear second order estimate for rp}
    \end{equation}
    Similarly, for the second-order energy estimate of $\Psi_{p}$, using the schematic form \eqref{second order wave equation for psi rho} and multiplying by $(-z)^{2\gamma}w(z)\partial_{z}^{2}\Psi_{\lambda}$, where $w(z) = -2(-z)^{\frac{1}{2}}+3$, we have \begin{equation}
        \begin{aligned}
            &\frac{1}{2}\int_{s = s_{0}}(-z)^{2\gamma}\left(\partial_{z}^{2}\Psi_{\lambda}\right)^{2}+\frac{1}{2}q_{k}\int_{\Gamma}\left(\partial_{z}^{2}\Psi_{\lambda}\right)^{2}\\&+\iint_{\mathcal{R}(s_{0})}\left(\left(\left(\frac{3}{2}-\gamma\right)q_{k}-(1+a_{\lambda})\right)w(z)+\frac{1}{2}q_{k}(-z)^{\frac{1}{2}}-C\delta^{\frac{1}{100}}\right)(-z)^{2\gamma}\left(\partial_{z}^{2}\Psi_{\lambda}\right)^{2}\\\lesssim&\left\Vert\Psi_{p}^{0}\right\Vert_{H_{N+1}^{\gamma}}+
            \delta^{-\frac{1}{100}}\int_{0}^{s_{0}} e^{2(1+a_{\lambda})s}\sum_{0\leq i+j\leq 1}\left(\left\Vert\partial_{s}^{i}\partial_{z}^{j}\Psi_{p}\right\Vert_{L_{z}^{\infty}}^{2}+\left\Vert\partial_{s}^{i}\partial_{z}^{j}r_{p}\right\Vert_{L_{z}^{\infty}}^{2}\right)\\&+\delta^{-\frac{1}{100}}\iint_{\mathcal{R}(s_{0})}e^{2(1+a_{\lambda})s}(-z)^{2\gamma-\frac{1}{2}}\left\vert\partial_{z}^{2}r_{p}\right\vert^{2}+\delta^{-\frac{1}{100}}\int_{0}^{s_{0}}e^{2(1+a_{\lambda})s}\sum_{0\leq j\leq 1}\left\Vert\partial_{z}^{j}\left(\frac{(m_{p})_{L}}{\mr{r}^{2}}\right)\right\Vert_{L_{z}^{\infty}}^{2}\\&+\delta^{-\frac{1}{100}}\iint_{\mathcal{R}(s_{0})}e^{2(1+a_{\lambda})s}(-z)^{2\gamma}\left\vert\partial_{z}\widetilde{\mathcal{N}}^{(2)}\right\vert^{2}.
        \end{aligned}
        \label{second order estimate for nonlinear psip}
    \end{equation}
For the terms on the right-hand side of \eqref{second order estimate for nonlinear psip}, using \eqref{eq: nonlinear second order estimate for rp} to bound the bulk term with $\partial_{z}^{2}r_{p}$, using the estimates \eqref{eq: first order mp estimate in the nonlinear problem}-\eqref{eq: second order mp estimate in the nonlinear problem} to bound the term with $\frac{m_{p}}{\mr{r}^{2}}$, using Proposition \ref{prop: lower order energy estimate} to bound the terms with lower-order derivatives of $r_{p}$ and $\Psi_{p}$, and using \eqref{eq: L2 estimate for renormalized N2} to bound the bulk term with $\mathcal{N}^{(2)}$, we have that \begin{equation}
    \begin{aligned}
        &\frac{1}{2}e^{2(1+a_{\lambda})s_{0}}\left(\int_{s = s_{0}}(-z)^{2\gamma}\left(\partial_{z}^{2}\Psi_{p}\right)^{2}+q_{k}\int_{\Gamma}\left(\partial_{z}^{2}\Psi_{p}\right)^{2}\right)\\&+\left(\left(\frac{3}{2}-\gamma\right)q_{k}-(1+a_{\lambda})-C\delta^{\frac{1}{100}}\right)e^{2(1+a_{\lambda})s_{0}}\iint_{\mathcal{R}(s_{0})}(-z)^{2\gamma}\left(\partial_{z}^{2}\Psi_{p}\right)^{2}\\\lesssim&\delta^{-\frac{2}{100}}\left(\left\Vert\Psi_{p}^{0}\right\Vert_{H_{N+1}^{\gamma}}^{2}+\left\Vert r_{p}^{0}\right\Vert_{\mathcal{C}_{N}^{\frac{7}{4}}}^{2}+\left\Vert\Psi_{p}^{0}\right\Vert_{\mathcal{C}_{N}^{\alpha}}^{2}+\left\Vert(f,g,h)\right\Vert_{\mathcal{S}^{a,N,\tau}}^{4}\right).
    \end{aligned}
    \label{eq: final version of the second order nonlinear estimate for psip}
\end{equation}

    Hence, combining \eqref{eq: nonlinear second order estimate for rp} and \eqref{eq: final version of the second order nonlinear estimate for psip}, we have that \begin{equation}
        \begin{aligned}
           \int_{s = s_{0}}(-z)^{2\tau-\frac{1}{2}}\left(\partial_{z}^{2}r_{\lambda}\right)^{2}+\int_{s = s_{0}}(-z)^{2\gamma}\left(\partial_{z}^{2}\Psi_{\lambda}\right)^{2}\lesssim \delta^{-\frac{2}{100}}\left(\left\Vert r_{p}^{0}\right\Vert_{\mathcal{C}_{N+1}^{\frac{7}{4}}}^{2}+\left\Vert\Psi_{p}^{0}\right\Vert_{\mathcal{C}_{N+1}^{\alpha}}^{2}+\left\Vert(f,g,h)\right\Vert_{\mathcal{S}^{a,N,\tau}}^{4}\right).
        \end{aligned}
    \end{equation}
Using the equation $\mathcal{P}^{(1)} = \widetilde{N}^{(1)}$, we can derive the weighted $L^{2}$-bound for $\partial_{s}\partial_{z}r_{p}$. Taking $\lambda = \frac{1}{100}$ concludes the proof.
\end{proof}

\subsection{Higher-order energy estimate on $\mathfrak{A}$}
\label{sec: higher-order estimate}
In this section, we establish the higher-order energy estimate inductively. We first prove the following two preliminary lemmas.
\begin{lemma}
\label{lemma: inductive hardy}
Given $f\in C^{n}([-1,0))$ with $\lim_{z\rightarrow -1}\frac{f(z)}{\left\vert(z+1)\right\vert^{i-\frac{1}{2}}} = 0$, then for any $\omega>0$ and $1\leq i\in\mathbb{N}$, we have that 
\begin{equation}
    \int_{-1}^{0}(-z)^{2\omega}\frac{f^{2}(z)}{(z+1)^{2i}}dz\lesssim_{\omega,i}\int_{-1}^{0}(-z)^{2\omega+2}\frac{(f^{\prime})^{2}}{(z+1)^{2(i-1)}}dz.
\end{equation}
\end{lemma}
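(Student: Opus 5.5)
The plan is to split $[-1,0]$ at $z=-\tfrac12$. On each half only one of the two weights is singular, so each half reduces to a one-dimensional Hardy inequality proved by integrating by parts against an explicit primitive of the weight. We may assume the right-hand side is finite, since otherwise there is nothing to prove.

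\emph{Step 1: the near-center region $[-1,-\tfrac12]$.} Here $(-z)^{2\omega}\approx 1$ and $(-z)^{2\omega+2}\approx1$ with constants depending on $\omega$, so it suffices to show
\begin{equation*}
\int_{-1}^{-\frac12}\frac{f^{2}}{(z+1)^{2i}}dz\lesssim_{i}\int_{-1}^{-\frac12}\frac{(f^{\prime})^{2}}{(z+1)^{2i-2}}dz .
\end{equation*}
Work on $[-1+\eta,-\tfrac12]$ and write $(z+1)^{-2i}=\frac{d}{dz}\big(\frac{(z+1)^{1-2i}}{1-2i}\big)$, then integrate by parts.
\begin{itemize}
\item The boundary term at $z=-\tfrac12$ is $-\frac{2^{2i-1}}{2i-1}f(-\tfrac12)^2\le 0$, so it has a favourable sign and can be dropped.
\item The boundary term at $z=-1+\eta$ is $\frac{1}{2i-1}\eta^{1-2i}f(-1+\eta)^2$. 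It tends to $0$ as $\eta\to0$ precisely because $f=o\big((z+1)^{i-\frac12}\big)$; this is where the hypothesis is used.
\item The bulk term is $\frac{2}{2i-1}\int f f'(z+1)^{1-2i}$. Cauchy--Schwarz splits it as $\big(f(z+1)^{-i}\big)\big(f'(z+1)^{1-i}\big)$, and the first factor is absorbed into the left-hand side.
\end{itemize}
Letting $\eta\to0$ (with finiteness of the truncated left side justified by the same absorption) gives the claim on this half.

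\emph{Step 2: the near-horizon region $[-\tfrac12,0)$.} Here $(z+1)^{-2i}\approx1$. Set $x=-z\in(0,\tfrac12]$; it suffices to show $\int_0^{1/2}x^{2\omega}f^2\,dx\lesssim\int_0^{1/2}x^{2\omega+2}(f')^2\,dx+f(-\tfrac12)^2$. Since $f$ is not assumed controlled at $z=0$, truncate to $[\epsilon,\tfrac12]$ and use $x^{2\omega}=\frac{d}{dx}\frac{x^{2\omega+1}}{2\omega+1}$, which is where $\omega>0$ (so $2\omega+1>0$) enters.
\begin{itemize}
\item The boundary term at $x=\epsilon$ is $-\frac{\epsilon^{2\omega+1}}{2\omega+1}f^2\le0$, which has a good sign, so no information at $z=0$ is needed.
\item The boundary term at $x=\tfrac12$ is $C_\omega f(-\tfrac12)^2$.
\item The bulk term $\frac{2}{2\omega+1}\int x^{2\omega+1}ff'$ is handled by Cauchy--Schwarz, writing it as $(x^{\omega}f)(x^{\omega+1}f')$ and absorbing the first factor.
\end{itemize}
Then let $\epsilon\to0$ by monotone convergence.

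\emph{Step 3: the only coupling term.} The one term linking the two halves is $f(-\tfrac12)^2$. Since $f(z)\to0$ as $z\to-1$, we have $f(-\tfrac12)=\int_{-1}^{-1/2}f'\,dz$. Cauchy--Schwarz against the integrable weight $(z+1)^{2i-2}$ then gives
\begin{equation*}
f(-\tfrac12)^2\le\int_{-1}^{-\frac12}(z+1)^{2i-2}\,dz\int_{-1}^{-\frac12}\frac{(f')^2}{(z+1)^{2i-2}}\,dz,
\end{equation*}
and the last integral is bounded by the right-hand side of the lemma because $(-z)^{2\omega+2}\approx1$ there. Summing the two regions finishes the proof.

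The main point needing care is the boundary bookkeeping: the term at $z=-1$ needs the exact decay rate $o\big((z+1)^{i-\frac12}\big)$ from the hypothesis, the term at $z=0$ must be shown to have a favourable sign rather than estimated, and the interface value $f(-\tfrac12)$ must be recovered as in Step 3.
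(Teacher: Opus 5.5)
Your proof is correct and is essentially the paper's argument: split at $z=-\frac{1}{2}$, integrate by parts against a primitive of the singular weight on each piece, then apply Cauchy--Schwarz and absorb. The only difference is that the paper runs the near-horizon step as a Hardy inequality over all of $[-1,0]$, bounding $\int_{-1/2}^{0}$ by $\int_{-1}^{0}(-z)^{2\omega}f^{2}$ with the boundary term at $z=-1$ killed by $f(-1)=0$, so no interface term $f(-\frac{1}{2})^{2}$ appears and your Step 3 is not needed; your explicit truncation and boundary-term bookkeeping is more careful than the paper's.
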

\begin{proof}
    We split the integral into two parts:
    \begin{equation*}
        \int_{-1}^{0}(-z)^{2\omega}\frac{f^{2}}{(z+1)^{2i}}dz = \int_{-1}^{-\frac{1}{2}}(-z)^{2\omega}\frac{f^{2}}{(z+1)^{2i}}+\int_{-\frac{1}{2}}^{0}(-z)^{2\omega}\frac{f^{2}}{(z+1)^{2i}}.
    \end{equation*}
For the integration over $[-1,-\frac{1}{2}]$, we have that \begin{equation}
    \begin{aligned}
        \int_{-1}^{-\frac{1}{2}}(-z)^{2\omega}\frac{f^{2}}{(z+1)^{2i}}dz\lesssim_{\omega,i}&-\int_{-1}^{-\frac{1}{2}}f^{2}d(z+1)^{-2i+1}\\\lesssim_{\omega,i}&\left(\int_{-1}^{-\frac{1}{2}}(-z)^{2\omega}\frac{f^{2}}{(z+1)^{2i}}\right)^{\frac{1}{2}}\left(\int_{-1}^{-\frac{1}{2}}(-z)^{2\omega+2}\frac{(f^{\prime})^{2}}{(z+1)^{2(i-1)}}\right)^{\frac{1}{2}}.
    \end{aligned}
\end{equation}
For the integration over $[-\frac{1}{2},0]$, we have that \begin{equation}
    \begin{aligned}
        \int_{-\frac{1}{2}}^{0}(-z)^{2\omega}\frac{f^{2}}{(z+1)^{2i}}\lesssim_{\omega,i}\int_{-1}^{0}(-z)^{2\omega}f^{2}dz\lesssim_{\omega,i} \left(\int_{-1}^{0}(-z)^{2\omega}f^{2}\right)^{\frac{1}{2}}\left(\int_{-1}^{0}(-z)^{2\omega+2}(f^{\prime})^{2}\right)^{\frac{1}{2}}.
    \end{aligned}
\end{equation}
Hence, we can conclude that \begin{equation*}
    \int_{-1}^{0}(-z)^{2\omega}\frac{f^{2}(z)}{(z+1)^{2i}}dz\lesssim_{\omega,i}\int_{-1}^{0}(-z)^{2\omega+2}\frac{(f^{\prime})^{2}}{(z+1)^{2(i-1)}}dz.
\end{equation*}
\end{proof}

\begin{proposition}
\label{prop: weighted L2 estimate on mpl}
For any $2\leq n\in\mathbb{N}$, we have that \begin{equation}
\label{eq: higher order estimate for mp without singular r weight}
\begin{aligned}
    &\int_{-1}^{0}(-z)^{2(n-1)-\frac{1}{2}}\left(\partial_{z}^{n}(m_{p})_{L}\right)^{2}(s,z)dz\\\leq& C(k)e^{-(1+2a)s}\left(\left\Vert r_{p}^{0}\right\Vert_{\mathcal{C}_{N}^{\frac{7}{4}}}+\left\Vert\Psi_{p}^{0}\right\Vert_{\mathcal{C}_{N}^{\alpha}}+\left\Vert(f,g,h)\right\Vert_{\mathcal{S}^{a,N,\tau}}^{2}\right)\\&+C(k)\sum_{i = 2}^{n-1}\int_{-1}^{0}(-z)^{2i-4+2\tau-\frac{1}{2}}\left(\partial_{z}^{i}r_{p}\right)^{2}(s,z)+(-z)^{2\gamma+2i-4}\left(\partial_{z}^{i}\Psi_{p}\right)^{2}(s,z)dz\\&+Ck\int_{-1}^{0}(-z)^{2\tau+2n-4}\left(\partial_{z}^{n}r_{p}\right)^{2}(s,z)+(-z)^{2\gamma+2n-4+\frac{1}{2}}\left(\partial_{z}^{n}\Psi_{p}\right)^{2}(s,z)dz.
    \end{aligned}
\end{equation}
\end{proposition}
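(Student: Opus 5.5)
We will work directly from the transport equation \eqref{eq: transport equation for mp in the nonlinear setting} for $(m_{p})_{L}$. Schematically it reads
\begin{equation*}
\partial_{z}(m_{p})_{L}+a_{0}(z)(m_{p})_{L}=a_{1}(z)r_{p}+a_{2}(z)\partial_{z}r_{p}+a_{3}(z)\partial_{z}(\phi_{p})_{L},
\end{equation*}
with $a_{0}=\frac{r_{k}}{\partial_{z}r_{k}}(\partial_{z}\phi_{k})^{2}$ and $(\phi_{p})_{L}=\frac{\Psi_{p}-r_{p}\mr{\phi}}{r_{k}}$. The coefficients are built from $\partial_{z}\phi_{k}$, $r_{k}$, $m_{k}$. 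By Proposition~\ref{prop: estimate on the exact k self similar spacetime}, each $a_{i}$ contains at least one factor of $\partial_{z}\mr{\phi}$, which lies in $\mathcal{F}_{k}$. This is where the small prefactor $k$ on the top-order terms will come from.

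We differentiate the equation $n-1$ times, which expresses $\partial_{z}^{n}(m_{p})_{L}$ through three groups of terms:
\begin{enumerate}
\item $\partial_{z}^{j}a_{0}\,\partial_{z}^{n-1-j}(m_{p})_{L}$;
\item $\partial_{z}^{j}a_{i}$ times $\partial_{z}^{l}r_{p}$ with $l\le n$;
\item $\partial_{z}^{j}a_{3}$ times $\partial_{z}^{l}(\phi_{p})_{L}$ with $l\le n$.
\end{enumerate}
Every $z$-derivative falling on a coefficient costs at most $|z|^{-1}$, since $|z|^{1+\epsilon}|\mr{\phi}''|\lesssim k^{2}/\epsilon$ and similarly for higher derivatives (Proposition~\ref{prop: estimate on the exact k self similar spacetime}(2),(5)). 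Each such loss is absorbed by one power of the weight $(-z)^{2(n-1)-\frac12}$, so a term with $l$ derivatives on the perturbation carries the weight $(-z)^{2(l-1)-\frac12}$. We sort terms by $l$ as follows.
\begin{enumerate}
\item For $l\le1$, we use the pointwise bounds of Proposition~\ref{prop: lower order energy estimate} together with \eqref{eq: first order mp estimate in the nonlinear problem}--\eqref{eq: second order mp estimate in the nonlinear problem}. These give the first term on the right, with rate $e^{-(1+2a)s}$. We would rather take it quadratically, but the stated linear form suffices.
\item For $2\le l\le n-1$, we get the middle sum with a $C(k)$ constant.
\item For the top order $l=n$, the coefficient is undifferentiated and of size $|\partial_{z}\mr{\phi}|\lesssim k/\epsilon\,|z|^{-\epsilon}$. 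This gives the $Ck$ term, since the weights $(-z)^{2\tau+2n-4}$ and $(-z)^{2\gamma+2n-4+\frac12}$ dominate $(-z)^{2n-\frac52}|z|^{-2\epsilon}$ after we choose $\epsilon$ small relative to $\tau$.
\end{enumerate}
Here we need $\gamma=\frac32-\alpha+\tau\le\frac12$ to compare exponents, which holds since $\alpha>p_{k}>1$.

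The terms in group 1 involving lower derivatives of $(m_{p})_{L}$ are handled inductively in $n$. The base case $n=1$ is \eqref{eq: second order mp estimate in the nonlinear problem}, and each lower derivative reduces to the same three types of terms. The term $\partial_{z}^{l}(\phi_{p})_{L}$ is expanded via the Leibniz rule into $\partial_{z}^{i}\Psi_{p}$ and $\partial_{z}^{i}r_{p}$ times derivatives of $\mr{r}^{-1}$ and $\mr{\phi}/\mr{r}$. The factor $r_{k}^{-1}=e^{s}\mr{r}^{-1}$ is compensated by the explicit $r_{k}$ or $r_{k}^{2}$ prefactor in $a_{3}$; this is the same cancellation used in \eqref{eq: zero order estimate on mpl}.

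The main obstacle is the axis $z=-1$, where $\mr{r}^{-1}$ and its derivatives blow up like $(z+1)^{-1-i}$. There we plan to use the vanishing $r_{p}(s,-1)=\Psi_{p}(s,-1)=0$ and apply Lemma~\ref{lemma: inductive hardy} repeatedly to trade each factor $(z+1)^{-1}$ for one derivative. Any remaining low-order pieces near the axis we bound by Lemma~\ref{lemma: Using Talor theory to control the Linfty norm} and the near-center second-order pointwise bounds of Proposition~\ref{prop: lower order energy estimate}. Away from the axis, on $[-\frac12,0]$, the weights are harmless.
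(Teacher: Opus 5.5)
Your overall scheme is the paper's: commute the transport equation for $(m_p)_L$ with $\partial_z^{n-1}$, take the factor $k$ at top order from $\partial_z\mr{\phi}$, absorb the $|z|^{-1}$ losses of differentiated background coefficients into the weight at $z=0$, use the pointwise bounds at the lowest orders, and induct on $n$. The gap is your treatment of the axis.

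You keep the term $a_3\,\partial_z(\phi_p)_L$ and Leibniz-expand $\partial_z^{l}(\phi_p)_L$ into $\partial_z^i\Psi_p$, $\partial_z^i r_p$ times $\partial_z^{l-i}(\mr{r}^{-1})\sim(z+1)^{-1-l+i}$. Since $a_3$ contains $r_k^2$, it is $O((z+1)^2)$ and generically no smaller. So the pieces of $\partial_z^{n-1}\bigl(a_3\partial_z(\phi_p)_L\bigr)$ with no derivative on $\Psi_p$ are of size $(z+1)^{1-n}|\Psi_p|$. Only $\Psi_p(s,-1)=0$ is available, and generically $\partial_z\Psi_p(s,-1)\neq0$, so these pieces behave like $(z+1)^{2-n}$. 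For $n\geq3$ they are not even square integrable near $z=-1$, so no Hardy inequality can bound them one at a time.

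Consistently, Lemma~\ref{lemma: inductive hardy} requires $f/(z+1)^{i-\frac12}\to0$. For $f=\Psi_p$ or $r_p$ this holds only for $i=1$, so you can trade one factor of $(z+1)^{-1}$ for a derivative, not each of them. These singular terms cancel only in the sum. The near-center pointwise bounds of Proposition~\ref{prop: lower order energy estimate} stop at second order, so they cannot repair this for $n\geq3$ either.

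The missing step is to carry out the cancellation algebraically \emph{before} differentiating. Multiplying out gives
\[
a_3\,\partial_z(\phi_p)_L=(1-\mu_k)\frac{\mr{r}}{\partial_z\mr{r}}\partial_z\mr{\phi}\,\partial_z\Psi_p-(1-\mu_k)\partial_z\mr{\phi}\,\Psi_p+(\text{analogous terms in } r_p,\ \partial_z r_p),
\]
which is exactly the form \eqref{equation for mp under psip formualtion} that the paper differentiates. Its coefficients are built from $\frac{\mr{r}}{\partial_z\mr{r}}$, $\mr{\phi}$, $\partial_z\mr{\phi}$ and $\mu_k$, and are smooth up to $z=-1$. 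No negative powers of $z+1$ ever appear, so the axis is not an obstacle at all. The zeroth-order terms are then handled by the elementary Lemma~\ref{lemma: Hardy inequality} together with \eqref{eq: zero order estimate on mpl}--\eqref{eq: first order estimate on mpl}. With this change, the rest of your plan goes through.

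One smaller correction concerns the top-order $\Psi_p$ term. The exponent comparison there needs $\tfrac12-\gamma=\alpha-1-\tau\geq\epsilon$ with $\epsilon\gg k$, so that $|z|^{\epsilon}|\partial_z\mr{\phi}|\lesssim k/\epsilon$ is usable. This is stronger than $\gamma\leq\tfrac12$, and what constrains $\epsilon$ is $\alpha-1-\tau$, not $\tau$.
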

\begin{proof}
We can rewrite the transport equation for $(m_{p})_{L}$ as \begin{equation}
\label{equation for mp under psip formualtion}
\begin{aligned}
    &\partial_{z}(m_{p})_{L}+\frac{r_{k}}{\partial_{z}r_{k}}(\partial_{z}\phi_{k})^{2}(m_{p})_{L} \\= &-\left(\frac{1}{2}\left(\frac{\mr{r}}{\partial_{z}\mr{r}}\right)^{2}(\partial_{z}\mr{\phi})^{2}+\frac{\mr{r}\mr{\phi}}{\partial_{z}\mr{r}}\partial_{z}\mr{\phi}\right)(1-\mu_{k})\partial_{z}r_{p}+\left(\frac{1}{2}\frac{\mr{r}}{\partial_{z}\mr{r}}(\partial_{z}\mr{\phi})^{2}\mu_{k}+(1-\mu_{k})\mr{\phi}\partial_{z}\mr{\phi}\right)r_{p}\\&+(1-\mu_{k})\frac{\mr{r}}{\partial_{z}\mr{r}}\partial_{z}\mr{\phi}\partial_{z}\Psi_{p}-(1-\mu_{k})\partial_{z}\mr{\phi}\Psi_{p}.
\end{aligned}
\end{equation}
For $n = 2$, commuting the equation \eqref{equation for mp under psip formualtion} with $\partial_{z}$, we have \begin{align*}
    \partial_{z}^{2}(m_{p})_{L} =& -\frac{\mr{r}}{\partial_{z}\mr{r}}(\partial_{z}\mr{\phi})^{2}\partial_{z}(m_{p})_{L}-\partial_{z}\left(\frac{\mr{r}}{\partial_{z}\mr{r}}(\partial_{z}\mr{\phi})^{2}\right)(m_{p})_{L}+(1-\mu_{k})\frac{\mr{r}}{\partial_{z}\mr{r}}\partial_{z}\mr{\phi}\partial_{z}^{2}\Psi_{p}+\partial_{z}\left[(1-\mu_{k})\frac{\mr{r}}{\partial_{z}\mr{r}}\partial_{z}\mr{\phi}\right]\partial_{z}\Psi_{p}\\&-(1-\mu_{k})\partial_{z}\mr{\phi}\partial_{z}\Psi_{p}-\partial_{z}\left[(1-\mu_{k})\partial_{z}\mr{\phi}\right]\Psi_{p}-\left(\frac{1}{2}\left(\frac{\mr{r}}{\partial_{z}\mr{r}}\right)^{2}(\partial_{z}\mr{\phi})^{2}+\frac{\mr{r}\mr{\phi}}{\partial_{z}\mr{r}}\partial_{z}\mr{\phi}\right)(1-\mu_{k})\partial_{z}^{2}r_{p}\\&-\partial_{z}\left[\left(\frac{1}{2}\left(\frac{\mr{r}}{\partial_{z}\mr{r}}\right)^{2}(\partial_{z}\mr{\phi})^{2}+\frac{\mr{r}\mr{\phi}}{\partial_{z}\mr{r}}\partial_{z}\mr{\phi}\right)(1-\mu_{k})\right]\partial_{z}r_{p}+\left(\frac{1}{2}\frac{\mr{r}}{\partial_{z}\mr{r}}(\partial_{z}\mr{\phi})^{2}\mu_{k}+(1-\mu_{k})\mr{\phi}\partial_{z}\mr{\phi}\right)\partial_{z}r_{p}\\&+\partial_{z}\left(\frac{1}{2}\frac{\mr{r}}{\partial_{z}\mr{r}}(\partial_{z}\mr{\phi})^{2}\mu_{k}+(1-\mu_{k})\mr{\phi}\partial_{z}\mr{\phi}\right)r_{p}.
\end{align*}
Therefore, using the estimates on the background $k$-self-similar spacetime in Proposition~\ref{prop: estimate on the exact k self similar spacetime}, the estimates on $(m_{p})_{L}$ \eqref{eq: zero order estimate on mpl}--\eqref{eq: first order estimate on mpl}, and Lemma~\ref{lemma: Hardy inequality}, we have \begin{align*}
    \int_{-1}^{0}(-z)^{\frac{3}{2}}\left(\partial_{z}^{2}(m_{p})_{L}\right)^{2}(s,z)dz\leq &Ck\int_{-1}^{0}(-z)^{2\tau}(\partial_{z}^{2}r_{p})^{2}(s,z)+(-z)^{2\gamma+\frac{1}{2}}(\partial_{z}^{2}\Psi_{p})^{2}(s,z)dz\\&+C(k)\int_{-1}^{0}(\partial_{z}r_{p})^{2}(s,z)+(\partial_{z}\Psi_{p})^{2}(s,z)dz.
\end{align*}
The case $n\geq 3$ will follow inductively. Hence, we conclude the proof of this proposition.
\end{proof}

Next, we establish the higher-order estimates for $\frac{m_{p}}{\mr{r}^{2}}$.
\begin{proposition}
\label{prop: higher-order estimate for mpl}
    We have the following estimate for $(m_{p})_{L}$ for $n\geq 2$: \begin{equation}
    \begin{aligned}
        \int_{-1}^{0}(-z)^{2(n-1)-\frac{1}{2}}\left\vert\partial_{z}^{n}\left(\frac{(m_{p})_{L}}{\mr{r}^{2}}\right)\right\vert^{2}\leq& C(k)e^{-(1+2a)s}\left(\left\Vert r_{p}^{0}\right\Vert_{\mathcal{C}_{N}^{\frac{7}{4}}}+\left\Vert\Psi_{p}^{0}\right\Vert_{\mathcal{C}_{N}^{\alpha}}+\left\Vert(f,g,h)\right\Vert_{\mathcal{S}^{a,N,\tau}}^{2}\right)\\&+C(k)\sum_{i = 2}^{n}\int_{-1}^{0}(-z)^{2\tau+2i-4-\frac{1}{2}}\left(\partial_{z}^{i}r_{p}\right)^{2}+(-z)^{2\gamma+2i-4}\left(\partial_{z}^{i}\Psi_{p}\right)^{2}\\&+Ck\int_{-1}^{0}(-z)^{2n+2\tau-2}\left(\partial_{z}^{n+1}r_{p}\right)^{2}\\&+Ck\int_{-1}^{0}(-z)^{2n+2\gamma-2+\frac{1}{2}}\left(\partial_{z}^{n+1}\Psi_{p}\right)^{2}.
        \end{aligned}
        \label{eq: higher-order estimate for mp in the nonlinear problem}
    \end{equation}
\end{proposition}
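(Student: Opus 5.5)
We treat the near-axis region $[-1,-\frac12]$ and the near-horizon region $[-\frac12,0]$ separately, because the weight $\frac{1}{\mr{r}^{2}}$ only degenerates at the axis. By Proposition~\ref{prop: estimate on the exact k self similar spacetime}, $\mr{r}\approx 1-|z| = z+1$, and on $[-\frac12,0]$ every $z$-derivative of $\mr{r}^{-2}$ is bounded. There the Leibniz rule gives
\begin{equation*}
\partial_{z}^{n}\left(\frac{(m_{p})_{L}}{\mr{r}^{2}}\right)=\sum_{l=0}^{n}\binom{n}{l}\partial_{z}^{n-l}\left(\mr{r}^{-2}\right)\partial_{z}^{l}(m_{p})_{L},
\end{equation*}
so on $[-\frac12,0]$ the left side of \eqref{eq: higher-order estimate for mp in the nonlinear problem} is bounded by $\sum_{l\le n}\int(-z)^{2(n-1)-\frac12}|\partial_{z}^{l}(m_{p})_{L}|^{2}$.

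In the near-horizon region, the top term $l=n$ is exactly the quantity controlled by Proposition~\ref{prop: weighted L2 estimate on mpl}. The terms with $2\le l\le n-1$ carry the weight $(-z)^{2(n-1)-\frac12}\le (-z)^{2(l-1)-\frac12}$, so the same proposition applied at level $l$ handles them. The terms $l=0,1$ follow from the pointwise bounds \eqref{eq: zero order estimate on mpl}--\eqref{eq: first order estimate on mpl}, which give the $C(k)e^{-(1+2a)s}(\cdots)$ contribution. I note that \eqref{eq: higher-order estimate for mp in the nonlinear problem} is stated with the un-squared data norm, as in Proposition~\ref{prop: weighted L2 estimate on mpl}, and I keep that form. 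Proposition~\ref{prop: weighted L2 estimate on mpl} at level $n$ contains the $Ck$ terms with $\partial_{z}^{n}r_{p},\partial_{z}^{n}\Psi_{p}$. These are absorbed into the $C(k)\sum_{i=2}^{n}$ sum, since their weights $(-z)^{2\tau+2n-4}$ and $(-z)^{2\gamma+2n-4+\frac12}$ are bounded by the weights there. So the near-horizon part does not need the $(n+1)$-st derivatives at all.

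The main obstacle is the near-axis region, where $\mr{r}^{-2}\sim(z+1)^{-2}$ and the Leibniz expansion produces singular factors $(z+1)^{-2-(n-l)}$. Here the weight $(-z)$ is harmless and bounded from below, so everything reduces to an unweighted estimate in $z+1$. I would not expand. Instead I would write
\begin{equation*}
\frac{(m_{p})_{L}}{\mr{r}^{2}} = \frac{1}{\mr{r}^{2}}\int_{-1}^{z}\partial_{z}(m_{p})_{L},
\end{equation*}
using $(m_{p})_{L}(s,-1)=0$, and substitute the transport equation \eqref{equation for mp under psip formualtion} for $\partial_{z}(m_{p})_{L}$. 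Its right side vanishes at the axis to the order forced by regularity: $r_{p},\Psi_{p}$ vanish at $z=-1$, $\mr{\phi}^{\prime}=O(k^{2})$, and $\frac{\mr{r}}{\partial_{z}\mr{r}}=O(z+1)$. Differentiating $n$ times, I would apply the averaging argument in the proof of the lemma on $\widetilde{\mathcal{N}}^{(1)}$ (integrate by parts inductively in $(\widetilde z+1)^{j}$), or alternatively Lemma~\ref{lemma: Using Talor theory to control the Linfty norm} and Lemma~\ref{lemma: inductive hardy} in $L^{2}$ form. This bounds the near-axis contribution by $\sum_{i\le n+1}\|\partial_{z}^{i}\,(\text{RHS})\|_{L^{2}[-1,-\frac12]}$.

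Taking those derivatives costs one extra derivative, and this is the source of the $\partial_{z}^{n+1}r_{p}$ and $\partial_{z}^{n+1}\Psi_{p}$ terms. Their coefficients involve $\mr{\phi}^{\prime}$ and $\frac{\mr{r}}{\partial_{z}\mr{r}}\mr{\phi}^{\prime}$, which are $O(k)$ by Proposition~\ref{prop: estimate on the exact k self similar spacetime}(4). That is what gives the small constant $Ck$ in front of them. All lower-order pieces, including those where derivatives fall on background coefficients (bounded by property (5) of the same proposition), are either of order $\le n$ and go into the $C(k)$ sum, or of order $\le 1$ and go into the exponentially decaying term via Proposition~\ref{prop: lower order energy estimate} and Lemma~\ref{lemma: Hardy inequality}. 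Adding the two regions gives \eqref{eq: higher-order estimate for mp in the nonlinear problem}.
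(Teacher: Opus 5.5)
Your near-horizon half works: on $[-\frac12,0]$ the Leibniz expansion, Proposition~\ref{prop: weighted L2 estimate on mpl} at levels $2\le l\le n$, and the pointwise bounds for $l\le 1$ close without $(n+1)$-st derivatives. One correction there: not every $z$-derivative of $\mr{r}^{-2}$ is bounded. Since $\mr{r}$ is only $C^{2,p_kk^2}$ at $z=0$, one has $\partial_z^{j}(\mr{r}^{-2})=O(|z|^{2-j+p_kk^2})$ for $j\ge 3$, but the weight $(-z)^{2(n-1)-\frac12}$ absorbs these factors.

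The gap is the derivative count near the axis. The right-hand side of \eqref{equation for mp under psip formualtion} contains $\partial_z r_p$ and $\partial_z\Psi_p$. Your intermediate bound $\sum_{i\le n+1}\|\partial_z^{i}(\text{RHS})\|_{L^2[-1,-\frac12]}$ therefore involves $\partial_z^{n+2}r_p$ and $\partial_z^{n+2}\Psi_p$, not $\partial_z^{n+1}$. Writing $(m_p)_L/\mr{r}^2=\mr{r}^{-2}\int_{-1}^{z}G$ and paying two derivatives of $\int_{-1}^{z}G$ for the division by $(z+1)^2$ loses exactly one derivative. Switching to the averaging argument of the $\widetilde{\mathcal N}^{(1)}$ lemma, with $(\tilde z+1)^2$ pulled out of the integrand, does not help. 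The remaining factor contains $\partial_z\big(\frac{\Psi_p-\mr{\phi}r_p}{\mr{r}}\big)$, and $n$ derivatives of it cost $n+2$ derivatives of $\Psi_p$ after Hardy. Such terms are not in the statement and cannot be afforded. Proposition~\ref{prop: higher order energy estimate on nonlinear problem} uses this estimate at level $n=i$ to close the energy at level $i+1$, where only $\partial_z^{i+1}$ with a small coefficient can be absorbed. At $i=N$, a $\partial_z^{N+2}$ term lies outside the space $\mathcal{S}^{a,N,\tau}$ altogether.

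The missing step is to take the derivative off the perturbation before applying the Taylor/Hardy bounds. After integrating the transport equation, the paper integrates by parts in $\int_{-1}^{z}$ the terms carrying $\partial_z r_p$ and $\partial_z\big(\frac{\Psi_p}{\mr{r}}-\mr{\phi}\frac{r_p}{\mr{r}}\big)$; these are the pieces $\mathfrak{m}_3$, $\mathfrak{m}_4$, and $\mathfrak{m}_5$ on $[-1,-\frac12]$. Their coefficients contain $\mr{r}^2$, so the boundary terms are regular at the axis. An example is $\frac12(1-\mu_k)\frac{\mr{r}^2}{(z+1)^2}\frac{\partial_z\mr{\phi}}{\partial_z\mr{r}}\big(\frac{\Psi_p}{\mr{r}}-\mr{\phi}\frac{r_p}{\mr{r}}\big)$. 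Their $n$-th derivatives cost only $\partial_z^{n+1}$ by Lemma~\ref{lemma: inductive hardy}, with an $O(k^2)$ coefficient from $\partial_z\mr{\phi}$ on $[-1,-\frac12]$. The leftover integrals contain undifferentiated $r_p,\Psi_p$, so the $n+2$ derivatives demanded by the Taylor bound fall on only $n+1$ derivatives of the perturbation.

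An equivalent fix within your framework is to pull out only one power of $(\tilde z+1)$. This is possible because the coefficients of $\partial_z r_p$ and $\partial_z\Psi_p$ are $O(z+1)$ and $r_p,\Psi_p$ vanish at $z=-1$. Then use $\frac{1}{(z+1)^2}\int_{-1}^{z}(\tilde z+1)u\,d\tilde z=\int_0^1\theta\,u(-1+\theta(z+1))\,d\theta$, which costs no derivatives of $u$. The $(m_p)_L$ term in the integrand is harmless either way, via Proposition~\ref{prop: weighted L2 estimate on mpl}. Only the terms with a derivative on $r_p$ or $\Psi_p$ force this restructuring.
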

\begin{proof}
For $\left\Vert(-z)^{n-1-\frac{1}{4}}\partial_{z}^{n}\left(\frac{(m_{p})_{L}}{\mr{r}^{2}}\right)(s,\cdot)\right\Vert_{L_{z}^{2}}$ with $n\geq 1$, we have that \begin{equation}
    \begin{aligned}
        &\int_{-1}^{0}\left(-z\right)^{2(n-1)-\frac{1}{2}}\left(\partial_{z}^{n}\left(\frac{\left(m_{p}\right)_{L}}{\mr{r}^{2}}\right)\right)^{2}dz\\=&
        \int_{-1}^{0}(-z)^{2(n-1)-\frac{1}{2}}\left(\partial_{z}^{n}\left(\frac{\left(m_{p}\right)_{L}}{(z+1)^{2}}\frac{(z+1)^{2}}{\mr{r}^{2}}\right)\right)^{2}\\\leq&C(k)\sum_{i = 0}^{1}\int_{-1}^{0}\left(\partial_{z}^{i}\left(\frac{\left(m_{p}\right)_{L}}{(z+1)^{2}}\right)\right)^{2}+C(k)\sum_{i=2}^{n-1}\int_{-1}^{0}(-z)^{2(i-1)-\frac{1}{2}}\left(\partial_{z}^{i}\left(\frac{\left(m_{p}\right)_{L}}{(z+1)^{2}}\right)\right)^{2}\\&+C\int_{-1}^{0}(-z)^{2(n-1)-\frac{1}{2}}\left(\partial_{z}^{n}\left(\frac{(m_{p})_{L}}{(z+1)^{2}}\right)\right)^{2}.
    \end{aligned}
\end{equation}
Hence, it suffices to establish the estimates for $\left\Vert(-z)^{n-1-\frac{1}{4}}\partial_{z}^{n}\left(\frac{(m_{p})_{L}}{(z+1)^{2}}\right)(s,\cdot)\right\Vert_{L_{z}^{2}}$. By \eqref{eq: zero order estimate on mpl}, we have that \begin{equation*}
    \int_{-1}^{0}\left(\frac{\left(m_{p}\right)_{L}}{(z+1)^{2}}\right)^{2}\lesssim e^{-(1+2a)s}\left(\left\Vert r_{p}^{0}\right\Vert_{\mathcal{C}_{N}^{\frac{7}{4}}}+\left\Vert\Psi_{p}^{0}\right\Vert_{\mathcal{C}_{N}^{\alpha}}+\left\Vert(f,g,h)\right\Vert_{\mathcal{S}^{a,N,\tau}}^{2}\right)
\end{equation*} 
By \eqref{eq: first order estimate on mpl}, we have that \begin{align*}
    \int_{-1}^{0}\partial_{z}\left(\frac{(m_{p})_{L}}{(z+1)^{2}}\right)^{2}\lesssim e^{-(1+2a)s}\left(\left\Vert r_{p}^{0}\right\Vert_{\mathcal{C}_{N}^{\frac{7}{4}}}+\left\Vert\Psi_{p}^{0}\right\Vert_{\mathcal{C}_{N}^{\alpha}}+\left\Vert(f,g,h)\right\Vert_{\mathcal{S}^{a,N,\tau}}^{2}\right).
\end{align*}
Directly integrating the equation \eqref{eq: transport equation for mp in the nonlinear setting}, we have \begin{align*}
    &\frac{(m_{p})_{L}}{(z+1)^{2}} \\=& -\frac{1}{(z+1)^{2}}\left(\int_{-1}^{z}\frac{\mr{r}}{\partial_{z}\mr{r}}(\partial_{z}\mr{\phi})^{2}(m_{p})_{L}-\int_{-1}^{z}\left(1-\frac{1}{2}\mu_{k}\right)\frac{(\partial_{z}\mr{\phi})^{2}}{\partial_{z}\mr{r}}\mr{r}^{2}\frac{r_{p}}{\mr{r}}-\int_{-1}^{z}\frac{1}{2}\left(\mu_{k}-1\right)\frac{(\partial_{z}\mr{\phi})^{2}}{(\partial_{z}\mr{r})^{2}}\mr{r}^{2}\partial_{z}r_{p}\right)\\&+\frac{1}{(z+1)^{2}}\int_{-1}^{z}\frac{1}{2}(1-\mu_{k})\frac{\partial_{z}\mr{\phi}}{\partial_{z}\mr{r}}\mr{r}^{2}\left(\partial_{z}\left(\frac{\Psi_{p}}{\mr{r}}\right)-\partial_{z}\left(\mr{\phi}\frac{r_{p}}{\mr{r}}\right)\right)\\=&:\frac{1}{(z+1)^{2}}\left(\mathfrak{m}_{1}+\mathfrak{m}_{2}\right)+\frac{1}{2}\left(\mu_{k}-1\right)\frac{(\partial_{z}\mr{\phi})^{2}}{(\partial_{z}\mr{r})^{2}}\frac{\mr{r}^{2}}{(z+1)^{2}}r_{p}\bigg|_{-1}^{z}-\frac{1}{2(z+1)^{2}}\int_{-1}^{z}\frac{d}{dz}\left((\mu_{k}-1)\frac{(\partial_{z}\mr{\phi})^{2}}{(\partial_{z}\mr{r})^{2}}\mr{r}^{2}\right)r_{p}\\&+\frac{1}{(z+1)^{2}}\int_{-1}^{z}\frac{1}{2}(1-\mu_{k})\frac{\partial_{z}\mr{\phi}}{\partial_{z}\mr{r}}\mr{r}^{2}\left(\partial_{z}\left(\frac{\Psi_{p}}{\mr{r}}\right)-\partial_{z}\left(\mr{\phi}\frac{r_{p}}{\mr{r}}\right)\right)\\=&:\frac{1}{(z+1)^{2}}\left(\mathfrak{m}_{1}+\mathfrak{m}_{2}\right)+\mathfrak{m}_{3}+\frac{1}{(z+1)^{2}}\mathfrak{m}_{4}+\frac{1}{(z+1)^{2}}\mathfrak{m}_{5}.
\end{align*}

Using the Taylor expansion, we can express $\mathfrak{m}_{i}$ in the following form for $n\geq 1$:
\begin{equation}
\mathfrak{m}_{i}(s,z) = \sum_{i = 0}^{n+1}\frac{\partial_{z}^{i}\mathfrak{m}_{i}(s,-1)}{i!}(z+1)^{i}+\frac{1}{(n+1)!}\int_{-1}^{z}\partial_{z}^{n+2}\mathfrak{m}_{i}(s,z^{\prime})(z-z^{\prime})^{n+1}dz^{\prime}.
\end{equation} 
Hence, using Lemma \ref{lemma: inductive hardy} repeatedly, for $i = 1,2,4$, we have \begin{equation}
    \begin{aligned}
        &\int_{-1}^{0}(-z)^{2(n-1)-\frac{1}{2}}\left\vert\partial_{z}^{n}\left(\frac{\mathfrak{m}_{i}(s,z)}{(z+1)^{2}}\right)\right\vert^{2} dz\\\lesssim&\int_{-1}^{0}(-z)^{2(n-1)-\frac{1}{2}}\left\vert\partial_{z}^{n}\left(\frac{1}{(z+1)^{2}}\int_{-1}^{z}\partial_{z}^{n+2}\mathfrak{m}_{i}(s,z^{\prime})(z-z^{\prime})^{n+1}dz^{\prime}\right)\right\vert^{2}dz\\\lesssim&\sum_{j = 0}^{n}\int_{-1}^{0}(-z)^{2(n-1)-\frac{1}{2}}(z+1)^{-2(2+j)}\left(\int_{-1}^{z}\partial_{z}^{n+2}\mathfrak{m}_{i}(s,z^{\prime})(z-z^{\prime})^{j+1}dz^{\prime}\right)^{2}dz\\\lesssim&\sum_{j = 0}^{n}\int_{-1}^{0}(-z)^{2(n-1)-\frac{1}{2}+2(j+2)}\left\vert\partial_{z}^{n+2}\mathfrak{m}_{i}\right\vert^{2}(s,z)dz\\\lesssim&\int_{-1}^{0}(-z)^{2(n+1)-\frac{1}{2}}\left\vert\partial_{z}^{n+2}\mathfrak{m}_{i}\right\vert^{2}(s,z)dz.
    \end{aligned}
    \label{eq: Taylor expansion estimate}
\end{equation}
Then for $\mathfrak{m}_{1}$, we have that \begin{equation}
\begin{aligned}
    &\int_{-1}^{0}(-z)^{2(n+1)-\frac{1}{2}}\left\vert\partial_{z}^{n+2}\mathfrak{m}_{1}\right\vert^{2}(s,z)dz\\\leq& C(k)\int_{-1}^{0}(-z)^{2(n+1)-\frac{1}{2}}\left\vert\partial_{z}^{n+1}\left(\frac{\mr{r}}{\partial_{z}\mr{r}}(\partial_{z}\mr{\phi})^{2}(m_{p})_{L}\right)\right\vert^{2}\\\leq&C(k)e^{-(1+2a)s}\left(\left\Vert r_{p}^{0}\right\Vert_{\mathcal{C}_{N}^{\frac{7}{4}}}+\left\Vert\Psi_{p}^{0}\right\Vert_{\mathcal{C}_{N}^{\alpha}}+\left\Vert(f,g,h)\right\Vert_{\mathcal{S}^{a,N,\tau}}^{2}\right)\\&+C(k)\sum_{i  =2}^{n}\int_{-1}^{0}(-z)^{2\tau+2i-4-\frac{1}{2}}(\partial_{z}^{i}r_{p})^{2}(s,z)+(-z)^{2\gamma+2i-4}(\partial_{z}^{i}\Psi_{p})^{2}(s,z)dz\\&+Ck\int_{-1}^{0}(-z)^{2n+2\tau-2}\left(\partial_{z}^{n+1}r_{p}\right)^{2}(s,z)+(-z)^{2n+2\gamma-2+\frac{1}{2}}\left(\partial_{z}^{n+1}\Psi_{p}\right)^{2}(s,z)dz.
    \end{aligned}
\end{equation}
For $\mathfrak{m}_{2}$, we have that \begin{equation}
    \begin{aligned}
        \int_{-1}^{0}(-z)^{2(n+1)-\frac{1}{2}}\left\vert\partial_{z}^{n+2}\mathfrak{m}_{2}\right\vert^{2}(s,z)dz\leq&C(k)\int_{-1}^{0}(-z)^{2(n+1)-\frac{1}{2}}\left\vert\partial_{z}^{n+1}\left(\left(1-\frac{1}{2}\mu_{k}\right)\frac{(\partial_{z}\mr{\phi})^{2}}{\partial_{z}\mr{r}}\mr{r}^{2}\frac{r_{p}}{\mr{r}}\right)\right\vert^{2}\\\leq&C(k)e^{-(1+2a)s}\left(\left\Vert r_{p}^{0}\right\Vert_{\mathcal{C}_{N}^{\frac{7}{4}}}+\left\Vert\Psi_{p}^{0}\right\Vert_{\mathcal{C}_{N}^{\alpha}}+\left\Vert(f,g,h)\right\Vert_{\mathcal{S}^{a,N,\tau}}^{2}\right)\\&+C(k)\sum_{i = 2}^{n}\int_{-1}^{0}(-z)^{2\tau+2i-4-\frac{1}{2}}(\partial_{z}^{i}r_{p})^{2}dz+Ck\int_{-1}^{0}(-z)^{2n+2\tau-2}\left(\partial_{z}^{n+1}r_{p}\right)^{2}dz.
    \end{aligned}
\end{equation}
Similarly, for $\mathfrak{m}_{4}$, we have that 
\begin{align}
        \int_{-1}^{0}(-z)^{2(n+1)-\frac{1}{2}}\left\vert\partial_{z}^{n+2}\mathfrak{m}_{4}\right\vert^{2}dz\leq&C(k)\int_{-1}^{0}(\partial_{z}r_{p})^{2}dz+C(k)\sum_{i = 2}^{n}\int_{-1}^{0}(-z)^{2\tau+2i-4-\frac{1}{2}}(\partial_{z}^{i}r_{p})^{2}dz\nonumber\\&+Ck\int_{-1}^{0}(-z)^{2n+2\tau-2}\left(\partial_{z}^{n+1}r_{p}\right)^{2}dz.
\end{align}
For $\mathfrak{m}_{3}$, we have that \begin{equation}
    \begin{aligned}
        \int_{-1}^{0}(-z)^{2(n-1)-\frac{1}{2}}\left\vert\partial_{z}^{n}\mathfrak{m}_{3}\right\vert^{2}(s,z)dz\leq&C(k)\int_{-1}^{0}(\partial_{z}r_{p})^{2}dz+C(k)\sum_{i = 2}^{n}\int_{-1}^{0}(-z)^{2\tau+2i-4-\frac{1}{2}}(\partial_{z}^{i}r_{p})^{2}dz.
    \end{aligned}
\end{equation}
For $\mathfrak{m}_{5}$, we first split the integral into two parts \begin{align*}
    \int_{-1}^{0}(-z)^{2(n-1)-\frac{1}{2}}\left\vert\partial_{z}^{n}\left(\frac{\mathfrak{m}_{5}}{(z+1)^{2}}\right)\right\vert^{2}dz = &\int_{-1}^{-\frac{1}{2}}(-z)^{2(n-1)-\frac{1}{2}}\left\vert\partial_{z}^{n}\left(\frac{\mathfrak{m}_{5}}{(z+1)^{2}}\right)\right\vert^{2}dz\\&+\int_{-\frac{1}{2}}^{0}(-z)^{2(n-1)-\frac{1}{2}}\left\vert\partial_{z}^{n}\left(\frac{\mathfrak{m}_{5}}{(z+1)^{2}}\right)\right\vert^{2}dz.
\end{align*}
For the integration over $[-1,-\frac{1}{2}]$, using the integration by parts, we have that \begin{equation}
    \begin{aligned}
        &\int_{-1}^{-\frac{1}{2}}(-z)^{2(n-1)-\frac{1}{2}}\left\vert\partial_{z}^{n}\left(\frac{\mathfrak{m}_{5}}{(z+1)^{2}}\right)\right\vert^{2}\\\leq& C\int_{-1}^{-\frac{1}{2}}\left\vert\partial_{z}^{n}\left(\frac{\mathfrak{m}_{5}}{(z+1)^{2}}\right)\right\vert^{2}\\\leq&C\int_{-1}^{-\frac{1}{2}}\left\vert\partial_{z}^{n}\left(\frac{1}{2}(1-\mu_{k})\frac{\mr{r}^{2}}{(z+1)^{2}}\frac{\partial_{z}\mr{\phi}}{\partial_{z}\mr{r}}\left(\frac{\Psi_{p}}{\mr{r}}-\mr{\phi}\frac{r_{p}}{\mr{r}}\right)\right)\right\vert^{2}dz\\&+C\int_{-1}^{-\frac{1}{2}}\left\vert\partial_{z}^{n}\left(\frac{1}{(z+1)^{2}}\int_{-1}^{z}\partial_{z}\left(\frac{1}{2}(1-\mu_{k})\frac{\partial_{z}\mr{\phi}}{\partial_{z}\mr{r}}\mr{r}^{2}\right)\left(\frac{\Psi_{p}}{\mr{r}}-\mr{\phi}\frac{r_{p}}{\mr{r}}\right)d\widetilde{z}\right)\right\vert^{2}dz\\\leq&C(k)\sum_{i = 0}^{n-1}\int_{-1}^{-\frac{1}{2}}\left(\partial_{z}^{i}\left(\frac{\Psi_{p}}{(z+1)}\right)\right)^{2}+\left(\partial_{z}^{i}\left(\frac{r_{p}}{(z+1)}\right)\right)^{2}\\&+Ck\int_{-1}^{-\frac{1}{2}}\left(\partial_{z}^{n}\left(\frac{\Psi_{p}}{(z+1)}\right)\right)^{2}+\left(\partial_{z}^{n}\left(\frac{r_{p}}{(z+1)}\right)\right)^{2}+C\int_{-1}^{-\frac{1}{2}}\left\vert\partial_{z}^{n}\left(\frac{1}{(z+1)^{2}}\widetilde{\mathfrak{m}}_{5}\right)\right\vert^{2}dz,
    \end{aligned}
\end{equation}
where $\widetilde{\mathfrak{m}}_{5}$ is defined to be \begin{equation*}
    \widetilde{\mathfrak{m}}_{5}:=\int_{-1}^{z}\partial_{z}\left(\frac{1}{2}(1-\mu_{k})\frac{\partial_{z}\mr{\phi}}{\partial_{z}\mr{r}}\mr{r}^{2}\right)\left(\frac{\Psi_{p}}{\mr{r}}-\mr{\phi}\frac{r_{p}}{\mr{r}}\right)d\widetilde{z}.
\end{equation*}
Using the Taylor expansion as in \eqref{eq: Taylor expansion estimate}, we have that \begin{equation}
    \begin{aligned}
        \int_{-1}^{-\frac{1}{2}}(-z)^{2(n-1)-\frac{1}{2}}\left\vert\partial_{z}^{n}\left(\frac{\mathfrak{m}_{5}}{(z+1)^{2}}\right)\right\vert^{2}\leq& C(k)\int_{-1}^{0}\left(\partial_{z}r_{p}\right)^{2}+\left(\partial_{z}\Psi_{p}\right)^{2}dz\\&+C(k)\sum_{i = 2}^{n}\int_{-1}^{0}(-z)^{2i-4+2\tau-\frac{1}{2}}\left(\partial_{z}^{i}r_{p}\right)^{2}+(-z)^{2i-4+2\gamma}\left(\partial_{z}^{i}\Psi_{p}\right)^{2}\\&+Ck\int_{-1}^{0}(-z)^{2n+2\tau-2}\left(\partial_{z}^{n+1}r_{p}\right)^{2}\\&+Ck\int_{-1}^{0}(-z)^{2n+2\gamma-2+\frac{1}{2}}\left(\partial_{z}^{n+1}\Psi_{p}\right)^{2}.
    \end{aligned}
\end{equation}
For the integration over $[-\frac{1}{2},0]$, we have that \begin{equation}
    \begin{aligned}
        \int_{-\frac{1}{2}}^{0}(-z)^{2(n-1)-\frac{1}{2}}\left\vert\partial_{z}^{n}\left(\frac{\mathfrak{m}_{5}}{(z+1)^{2}}\right)\right\vert^{2}\leq &C(k)\int_{-1}^{0}\left(\partial_{z}r_{p}\right)^{2}+\left(\partial_{z}\Psi_{p}\right)^{2}\\&+C(k)\sum_{i = 2}^{n}\int_{-1}^{0}(-z)^{2\tau+2i-4-\frac{1}{2}}\left(\partial_{z}^{i}r_{p}\right)^{2}+(-z)^{2\gamma+2i-4}\left(\partial_{z}^{i}\Psi_{p}\right)^{2}.
    \end{aligned}
\end{equation}
Putting everything together concludes the proof of this proposition.
\end{proof}
Now, we can prove the following higher-order energy estimate for $\mathfrak{A}$.
\begin{proposition}
\label{prop: higher order energy estimate on nonlinear problem}
    Given functions $(r_{p}^{0},\Psi_{p}^{0})\in\mathcal{C}_{N+1}^{\frac{7}{4}}\times \mathcal{C}_{N+1}^{\alpha}$ and $(f,\partial_{s}f,g)\in\mathcal{S}_{\delta}^{a,N,\tau}$ with $\tau$ sufficiently small and $a\in\left(0,\frac{\alpha^{\prime}q_{k}-1}{2}\right)$, then the solution $\mathfrak{A} = (r_{p},\Psi_{p},(m_{p})_{L})$ to the equation $\mathcal{P}(r_{p},\Psi_{p},(m_{p})_{L}) = \widetilde{\mathcal{N}}(f,g,h)$ with the initial data $(r_{p},\Psi_{p})|_{\Sigma_{-1}^{(in)}} = (r_{p}^{0},\Psi_{p}^{0}-\widetilde{c}_{\infty}r_{k}-\widetilde{c}_{\infty}r_{p}^{0})$ satisfies the following energy estimate for $2\leq n\leq N+1$:
    \begin{equation}
        \begin{aligned}
            &e^{2(1+a_{\frac{n-1}{100}})s_{0}}\left(\int_{s = s_{0}}(-z)^{2n-4+2\tau-\frac{1}{2}}\left(\partial_{z}^{n}r_{p}\right)^{2}+(-z)^{2\tau-\frac{1}{2}+\max\{2n-6,0\}}\left(\partial_{s}\partial_{z}^{n-1}r_{p}\right)^{2}+(-z)^{2\gamma+2n-4}\left(\partial_{z}^{n}\Psi_{p}\right)^{2}\right)\\&\hspace{7cm}\lesssim\delta^{-\frac{n}{100}}\left(\left\Vert r_{p}^{0}\right\Vert_{\mathcal{C}_{N+1}^{\frac{7}{4}}}^{2}+\left\Vert\Psi_{p}^{0}\right\Vert_{\mathcal{C}_{N+1}^{\alpha}}^{2}+\left\Vert(f,g,h)\right\Vert_{\mathcal{S}^{a,N,\tau}}^{4}\right)
        \end{aligned}
        \label{eq: higher-order energy estimate}
    \end{equation}
\end{proposition}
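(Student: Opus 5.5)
We argue by induction on $n$, with the base case $n=2$ given by Proposition~\ref{prop: second order energy estimate in the nonlinear problem}; there $a_{1/100}$ and $\delta^{-2/100}$ match the claimed exponents. Assume \eqref{eq: higher-order energy estimate} holds for all orders $2,\dots,n-1$. Commute the schematic equations \eqref{eq:second order wave equation for r rho}--\eqref{second order wave equation for psi rho} with $\partial_{z}^{n-2}$, using $(m_{p})_{L}$ and the source $\widetilde{\mathcal{N}}$ as in the second-order case. Then pass to the renormalized unknowns $r_{\lambda}=e^{(1+a_{\lambda})s}r_{p}$ and $\Psi_{\lambda}=e^{(1+a_{\lambda})s}\Psi_{p}$ with $\lambda=\frac{n-1}{100}$. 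The top-order terms become
\begin{equation*}
\partial_{s}\partial_{z}^{n}r_{\lambda}+q_{k}z\partial_{z}^{n+1}r_{\lambda}+\left(nq_{k}-(1+a_{\lambda})+G_{k^{2}}\right)\partial_{z}^{n}r_{\lambda},
\end{equation*}
and similarly for $\Psi_{\lambda}$. The commutator terms fall into three groups: lower $z$-derivatives of $(r_{p},\Psi_{p})$ with coefficients $\partial_{z}^{l}F_{k^{2}}$ (of size $|z|^{-l-\epsilon}$), $\partial_{z}^{n-1}$ and $\partial_{z}^{n-2}$ of $(m_{p})_{L}/\mr{r}^{2}$, and $\partial_{z}^{n-1}\widetilde{\mathcal{N}}$.

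We multiply by $(-z)^{2n-4+2\tau-\frac12}\partial_{z}^{n}r_{\lambda}$ for $r$, and by $(-z)^{2\gamma+2n-4}w(z)\partial_{z}^{n}\Psi_{\lambda}$ for $\Psi$, with the same $w$ as before, and integrate over $\mathcal{R}(s_{0})$. Integrating the $q_{k}z\partial_{z}^{n+1}$ term by parts produces a bulk coefficient of order $\bigl(\tfrac{7}{4}-\tau\bigr)q_{k}-(1+a_{\lambda})$ for $r$ and $\bigl(\tfrac32-\gamma\bigr)q_{k}-(1+a_{\lambda})$ for $\Psi$, which is what the weights are designed for. These coefficients are independent of $n$, since the extra weight $(-z)^{2n-4}$ exactly compensates the factor $nq_{k}$. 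They are positive because $a<\frac{\alpha' q_k-1}{2}$ and $\tau$ is small, and there is also a good boundary term on $\Gamma$.

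The error terms are handled as follows.
\begin{itemize}
\item Lower-order terms with singular background coefficients are bounded by Hardy-type arguments (Lemma~\ref{lemma: inductive hardy} near $z=0$) together with the induction hypothesis. Each use of a lower-order estimate costs a factor $\delta^{-1/100}$ through Cauchy--Schwarz with a $\delta^{1/100}$ absorption, and loses the $e^{2(a_{\lambda}-a_{\lambda'})s}$ room between consecutive $a_{\lambda}$'s.
\item The mass terms are controlled by Proposition~\ref{prop: higher-order estimate for mpl} at order $n-1$ and Proposition~\ref{prop: weighted L2 estimate on mpl}. Their right-hand sides involve order-$n$ quantities only with a small factor $Ck$, which is absorbed into the positive bulk term for $k$ small; the remaining order $\le n-1$ quantities are controlled by induction.
\item The source is estimated by \eqref{eq: L2 estimate for renormalized N1}--\eqref{eq: L2 estimate for renormalized N2}, which carry $e^{-2(1+2a)s}$ decay; this exceeds $e^{-2(1+a_{\lambda})s}$ because $a_{\lambda}<2a$.
\item The coupling term $G_{k}\partial_{z}^{n}r_{p}$ in the $\Psi$ equation is handled by doing the $r$ estimate first and inserting its bulk bound, which costs one more power $\delta^{-1/100}$ and gives the total $\delta^{-n/100}$.
\end{itemize}
The initial data contribute $\|r_{p}^{0}\|_{H^{-1/4+\tau}_{N+1}}+\|\Psi_{p}^{0}\|_{H^{\gamma}_{N+1}}$, which Lemma~\ref{lemma: sobolev embedding} bounds by the $\mathcal{C}_{N+1}$ norms. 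Finally, the $\partial_{s}\partial_{z}^{n-1}r_{p}$ bound is read off from the equation $\mathcal{P}^{(1)}=\widetilde{\mathcal{N}}^{(1)}$ differentiated $n-2$ times; the weaker weight $\max\{2n-6,0\}$ reflects the loss of one $z$-power there.

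The main obstacle is the mass term together with the singular background coefficients near $z=0$. Derivatives of $\partial_{z}\mr{\phi}$ behave like $|z|^{-l-1+p_kk^2}$, so each commutator term must sit exactly at the weight $(-z)^{2n-4+\dots}$ without any loss. Moreover, the order-$n+1$ term appearing in Proposition~\ref{prop: higher-order estimate for mpl} must be applied at index $n-1$, so that it only requires order-$n$ energy multiplied by $k$ and can be absorbed. This bookkeeping is what forces a loss of $\delta^{-1/100}$ and a decrease of the exponent $a_{\lambda}$ at each derivative level. It closes only because $n\le N+1$ is finite and $a_{\lambda}$ stays between $a$ and $2a$.
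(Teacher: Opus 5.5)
Your scheme follows the paper's in almost every respect:
\begin{itemize}
\item induction from Proposition~\ref{prop: second order energy estimate in the nonlinear problem};
\item the same multipliers, including $w$;
\item the $n$-independent bulk coefficients;
\item the mass terms through Proposition~\ref{prop: higher-order estimate for mpl} at index $n-1$;
\item the sources through \eqref{eq: L2 estimate for renormalized N1}--\eqref{eq: L2 estimate for renormalized N2}.
\end{itemize}
The gap is in how you close the coupled top-order system: it does not work for $n\geq 3$. You transplant the $n=2$ strategy, doing the $r$ estimate first and inserting its bulk bound into the $\Psi$ estimate at the price $\delta^{-1/100}$. At $n=2$ this is legitimate only because the mass terms in the $r$ equation are controlled by the pointwise bounds \eqref{eq: first order mp estimate in the nonlinear problem}--\eqref{eq: second order mp estimate in the nonlinear problem}.

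For $n\geq 3$ the $r$ equation contains $K(z)\,\partial_z^{n-1}\bigl((m_p)_L/\mr{r}^2\bigr)$, where $K$ is merely bounded, not small. Proposition~\ref{prop: higher-order estimate for mpl} at index $n-1$ returns, among other terms, $Ck\int(-z)^{2n+2\gamma-4+\frac12}(\partial_z^n\Psi_p)^2$. You need a Cauchy--Schwarz split to absorb the $r$-part; after it, the $r$ estimate carries $Ck^{1/2}$ times the top-order $\Psi$ bulk on its right-hand side. The $r$ bulk cannot absorb this, so there is no self-contained $r$ bound to insert. Your remark that the $Ck$ order-$n$ mass contributions are ``absorbed into the positive bulk term'' holds only for the combined bulk of both estimates.

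If you insert the $r$ bound anyway, using a $\delta^{\pm1/100}$ Young split of the coupling $G_k\partial_z^n r_p$, two things go wrong:
\begin{itemize}
\item The $\Psi$ bulk comes back with coefficient $\delta^{-1/100}$ times a power of $k$, which is not small because $\delta$ is chosen after $k$.
\item Even ignoring this, every level costs $\delta^{-2/100}$, so you get $\delta^{-2(n-1)/100}$ rather than the stated $\delta^{-n/100}$.
\end{itemize}

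The repair, which is what the paper does, is to use $|G_k|\lesssim k$ directly for the coupling, with no $\delta$. Using $\gamma>\tau$, bound it by
\[
k\iint(-z)^{2n+2\gamma-4+\frac12}(\partial_z^n\Psi_\lambda)^2+k\iint(-z)^{2n+2\tau-4-\frac12}(\partial_z^n r_\lambda)^2 .
\]
This gives a $\Psi$ estimate with $k$ times the top-order $r$ bulk on the right, and an $r$ estimate with $Ck^{1/2}$ times the top-order $\Psi$ bulk on the right. Add the two and take $k$ small, so both cross terms are absorbed simultaneously. The only $\delta^{-1/100}$ lost per level then comes from the Young splits for the lower-order and source terms, which yields exactly $\delta^{-n/100}$.

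A minor point: Lemma~\ref{lemma: inductive hardy} is a Hardy inequality at the axis $z=-1$, used for the $(z+1)^{-2}$ factors in the mass terms, not at $z=0$.
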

\begin{proof}
We prove the estimate \eqref{eq: higher-order energy estimate} inductively. The case $n = 2$ has been proved in Proposition \ref{prop: second order energy estimate in the nonlinear problem}. Assuming the estimate \eqref{eq: higher-order energy estimate} holds for $2\leq n\leq i$, then it suffices to establish \eqref{eq: higher-order energy estimate} for $n = i+1$. 

We first commute the wave equation for $\Psi_{\lambda}$ with $\partial_{z}^{i}$ and multiply by $$(-z)^{2(i+1)+2\gamma-4}w(z)\partial_{z}^{i+1}\Psi_{\lambda},$$ where $w(z) = -2(-z)^{\frac{1}{2}}+3$. We have that \begin{equation}
\label{eq: higher-order energy estimate in the induction process}
    \begin{aligned}
        &\frac{1}{2}\int_{s_{0}}(-z)^{2(i+1)+2\gamma-4}w(z)\left(\partial_{z}^{i+1}\Psi_{\lambda}\right)^{2}+\frac{1}{2}q_{k}\int_{\Gamma}(\partial_{z}^{i+1}\Psi_{\lambda})^{2}\\&+\iint_{\mathcal{R}(s_{0})}\left[\left(q_{k}\left(\frac{3}{2}-\gamma\right)-(1+a_{\lambda})\right)w(z)+\frac{1}{2}q_{k}(-z)^{\frac{1}{2}}\right](-z)^{2(i+1)+2\gamma-4}\left(\partial_{z}^{i+1}\Psi_{\lambda}\right)^{2}\\\leq&\iint_{\mathcal{R}(s_{0})}G_{k}(-z)^{2(i+1)+2\gamma-4}\left\vert\partial_{z}^{i+1}r_{\lambda}\right\vert\left\vert\partial_{z}^{i+1}\Psi_{\lambda}\right\vert+G_{k^{2}}(-z)^{2(i+1)+2\gamma-3}\left\vert\partial_{z}^{i+1}r_{{\lambda}}\right\vert\left\vert\partial_{z}^{i+1}\Psi_{{\lambda}}\right\vert\\&+\iint_{\mathcal{R}(s_{0})}G_{k}(-z)^{2(i+1)+2\gamma-4}e^{(1+a_{\lambda})s}\left\vert\partial_{z}^{i}\left(\frac{(m_{p})_{L}}{\mr{r}^{2}}\right)\right\vert\left\vert\partial_{z}^{i+1}\Psi_{{\lambda}}\right\vert\\&+\iint_{\mathcal{R}(s_{0})}(-z)^{2(i+1)+2\gamma-4}e^{(1+a_{\lambda})s}\left\vert\partial_{z}^{i}\widetilde{\mathcal{N}}^{(2)}\right\vert\left\vert\partial_{z}^{i+1}\Psi_{{\lambda}}\right\vert\\&+C(k)\iint_{\mathcal{R}(s_{0})}(-z)^{2(i+1)+2\gamma-4}\left\vert\text{l.o.t}\right\vert\left\vert\partial_{z}^{i+1}\Psi_{{\lambda}}\right\vert,
    \end{aligned}
\end{equation}
where  by our induction hypothesis, the regularity of the background spacetime, Proposition \ref{prop: higher-order estimate for mpl}, and the Cauchy--Schwarz inequality, for $\lambda = \frac{i}{100}$, we have the following estimate for the lower-order terms:\begin{equation}
    \begin{aligned}
        &\iint_{\mathcal{R}(s_{0})}(-z)^{2(i+1)+2\gamma-4}\left\vert\text{l.o.t}\right\vert\left\vert\partial_{z}^{i+1}\Psi_{{\lambda}}\right\vert\\\lesssim&\delta^{\frac{1}{100}}\iint_{\mathcal{R}(s_{0})}(-z)^{2(i+1)+2\gamma-4}\left\vert\partial_{z}^{i+1}\Psi_{{\lambda}}\right\vert^{2}+\delta^{-\frac{i+1}{100}}\left(\left\Vert r_{p}^{0}\right\Vert_{\mathcal{C}_{N+1}^{\frac{7}{4}}}^{2}+\left\Vert\Psi_{p}^{0}\right\Vert_{\mathcal{C}_{N+1}^{\alpha}}^{2}+\left\Vert(f,g,h)\right\Vert_{\mathcal{S}^{a,N,\tau}}^{4}\right).
    \end{aligned}
\end{equation}
For the first and second terms on the right-hand side of \eqref{eq: higher-order energy estimate in the induction process}, we have that \begin{equation}
    \begin{aligned}
        \iint_{\mathcal{R}(s_{0})}G_{k}(-z)^{2(i+1)+2\gamma-4}\left\vert\partial_{z}^{i+1}r_{{\lambda}}\right\vert\left\vert\partial_{z}^{i+1}\Psi_{{\lambda}}\right\vert\lesssim& k\iint_{\mathcal{R}(s_{0})}(-z)^{2(i+1)+2\gamma-4+\frac{1}{2}}\left\vert\partial_{z}^{i+1}\Psi_{\lambda}\right\vert^{2}\\&+k\iint_{\mathcal{R}(s_{0})}(-z)^{2(i+1)+2\gamma-4-\frac{1}{2}}\left\vert\partial_{z}^{i+1}r_{{\lambda}}\right\vert^{2}\\\lesssim&k\iint_{\mathcal{R}(s_{0})}(-z)^{2(i+1)+2\gamma-4+\frac{1}{2}}\left\vert\partial_{z}^{i+1}\Psi_{{\lambda}}\right\vert^{2}\\&+k\iint_{\mathcal{R}(s_{0})}(-z)^{2(i+1)+2\tau-4-\frac{1}{2}}\left(\partial_{z}^{i+1}r_{{\lambda}}\right)^{2}.
    \end{aligned}
\end{equation}
For the third term on the right-hand side of \eqref{eq: higher-order energy estimate in the induction process}, we have that \begin{equation}
    \begin{aligned}
        &\iint_{\mathcal{R}(s_{0})}G_{k}(-z)^{2i+2\gamma-2}e^{(1+a_{\lambda})s}\left\vert\partial_{z}^{i}\left(\frac{(m_{p})_{L}}{\mr{r}^{2}}\right)\right\vert\left\vert\partial_{z}^{i+1}\Psi_{{\lambda}}\right\vert\\\leq&Ck\iint_{\mathcal{R}(s_{0})}(-z)^{2i+2\gamma-2+\frac{1}{2}}\left\vert\partial_{z}^{i+1}\Psi_{{\lambda}}\right\vert^{2}+Ck\iint_{\mathcal{R}(s_{0})}(-z)^{2i+2\gamma-2-\frac{1}{2}}e^{2(1+a_{\lambda})s}\left\vert\partial_{z}^{i}\left(\frac{(m_{p})_{L}}{\mr{r}^{2}}\right)\right\vert^{2}\\\leq&
        Ck\iint_{\mathcal{R}(s_{0})}(-z)^{2i+2\gamma-2+\frac{1}{2}}\left\vert\partial_{z}^{i+1}\Psi_{{\lambda}}\right\vert^{2}+Ck\iint_{\mathcal{R}(s_{0})}(-z)^{2(i-1)-\frac{1}{2}}e^{2(1+a_{\lambda})s}\left\vert\partial_{z}^{i}\left(\frac{(m_{p})_{L}}{\mr{r}^{2}}\right)\right\vert^{2}\\\leq&C(k)\left(\left\Vert r_{p}^{0}\right\Vert_{\mathcal{C}_{N}^{\frac{7}{4}}}^{2}+\left\Vert\Psi_{p}^{0}\right\Vert_{\mathcal{C}_{N}^{\alpha}}^{2}+\left\Vert(f,g,h)\right\Vert_{\mathcal{S}^{a,N,\tau}}^{4}\right)\\&+C(k)\sum_{j = 2}^{i}\iint_{\mathcal{R}(s_{0})}(-z)^{2j+2\tau-4-\frac{1}{2}}\left(\partial_{z}^{j}r_{{\lambda}}\right)^{2}+(-z)^{2j+2\gamma-4}\left(\partial_{z}^{j}\Psi_{{\lambda}}\right)^{2}\\&+Ck\iint_{\mathcal{R}(s_{0})}(-z)^{2i+2\tau-2}\left(\partial_{z}^{i+1}r_{{\lambda}}\right)^{2}+(-z)^{2i+2\gamma-2+\frac{1}{2}}\left(\partial_{z}^{i+1}\Psi_{{\lambda}}\right)^{2}.
    \end{aligned}
\end{equation}
For the last term on the right-hand side of \eqref{eq: higher-order energy estimate in the induction process}, we have that \begin{equation}
    \begin{aligned}
        &\iint_{\mathcal{R}(s_{0})}(-z)^{2(i+1)+2\gamma-4}e^{(1+a_{\lambda})s}\left\vert\partial_{z}^{i}\widetilde{\mathcal{N}}^{(2)}\right\vert\left\vert\partial_{z}^{i+1}\Psi_{{\lambda}}\right\vert^{2}\\\lesssim&\delta^{\frac{1}{100}}\iint_{\mathcal{R}(s_{0})}(-z)^{2(i+1)+2\gamma-4}\left\vert\partial_{z}^{i+1}\Psi_{{\lambda}}\right\vert^{2}+\delta^{-\frac{1}{100}}\iint_{\mathcal{R}(s_{0})}(-z)^{2(i+1)+2\gamma-4}e^{2(1+a_{\lambda})s}\left\vert\partial_{z}\widetilde{\mathcal{N}}^{(2)}\right\vert^{2}\\\lesssim&\delta^{\frac{1}{100}}\iint_{\mathcal{R}(s_{0})}(-z)^{2(i+1)+2\gamma-4}\left\vert\partial_{z}^{i+1}\Psi_{{\lambda}}\right\vert^{2}+\delta^{-\frac{1}{100}}\left(\left\Vert r_{p}^{0}\right\Vert_{\mathcal{C}_{N+1}^{\frac{7}{4}}}^{2}+\left\Vert\Psi_{p}^{0}\right\Vert_{\mathcal{C}_{N+1}^{\alpha}}^{2}+\left\Vert(f,g,h)\right\Vert_{\mathcal{S}^{a,N,\tau}}^{4}\right).
    \end{aligned}
\end{equation}
Putting everything together and using the induction assumption, since $k$ is sufficiently small, we can choose $\delta$ to be small enough, such that for $\lambda = \frac{i}{100}$ we have that \begin{equation}
    \begin{aligned}
        &\int_{s = s_{0}}(-z)^{2(i+1)+2\gamma-4}\left(\partial_{z}^{i+1}\Psi_{{\frac{i}{100}}}\right)^{2}\\&+\iint_{\mathcal{R}(s_{0})}\left[\left(q_{k}\left(\frac{3}{2}-\gamma\right)-(1+a_{\frac{i}{100}})\right)w(z)+\frac{1}{4}q_{k}(-z)^{\frac{1}{2}}\right](-z)^{2(i+1)+2\gamma-4}\left(\partial_{z}^{i+1}\Psi_{\frac{i}{100}}\right)^{2}\\\lesssim& \delta^{-\frac{i+1}{100}}\left(\left\Vert r_{p}^{0}\right\Vert_{\mathcal{C}_{N+1}^{\frac{7}{4}}}^{2}+\left\Vert\Psi_{p}^{0}\right\Vert_{\mathcal{C}_{N+1}^{\alpha}}^{2}+\left\Vert(f,g,h)\right\Vert_{\mathcal{S}^{a,N,\tau}}^{4}\right)+k\iint_{\mathcal{R}(s_{0})}(-z)^{2i+2\tau-2}\left\vert\partial_{z}^{i+1}r_{{\frac{i}{100}}}\right\vert^{2}.
    \end{aligned}
    \label{eq: higher-order energy estimate for psip}
\end{equation}

Commuting the wave equation for $r_{{\lambda}}$ with $\partial_{z}^{i}$ and multiplying by $(-z)^{2(i+1)+2\tau-4-\frac{1}{2}}\partial_{z}^{i+1}r_{{\lambda}}$, we have that \begin{equation}
    \begin{aligned}
    &\frac{1}{2}\int_{s = s_{0}}(-z)^{2i+2\tau-2-\frac{1}{2}}\left(\partial_{z}^{i+1}r_{{\lambda}}\right)^{2}+\frac{1}{2}q_{k}\int_{\Gamma}\left(\partial_{z}^{i+1}r_{{\lambda}}\right)^{2}\\&+\left(q_{k}\left(\frac{7}{4}-\tau\right)-(1+a_{\lambda})-Ck^{2}\right)\iint_{\mathcal{R}(s_{0})}(-z)^{2i+2\tau-2-\frac{1}{2}}\left(\partial_{z}^{i+1}r_{{\lambda}}\right)^{2}\\\leq&C\iint_{\mathcal{R}(s_{0})}(-z)^{2i+2\tau-2-\frac{1}{2}}e^{(1+a_{\lambda})s}\left\vert\partial_{z}^{i}\left(\frac{(m_{p})_{L}}{\mr{r}^{2}}\right)\right\vert\left\vert\partial_{z}^{i+1}r_{{\lambda}}\right\vert+(-z)^{2i+2\tau-2-\frac{1}{2}}e^{(1+a_{\lambda})s}\left\vert\partial_{z}^{i}\widetilde{\mathcal{N}}^{(1)}\right\vert\left\vert\partial_{z}^{i+1}r_{{\lambda}}\right\vert\\&+\iint_{\mathcal{R}(s_{0})}(-z)^{2i+2\tau-2-\frac{1}{2}}\left\vert\text{l.o.t}\right\vert\left\vert\partial_{z}^{i+1}r_{{\lambda}}\right\vert+\frac{1}{2}\int_{s = 0}(-z)^{2i+2\tau-2-\frac{1}{2}}(\partial_{z}^{i+1}r_{\lambda})^{2}
    \end{aligned}
    \label{eq: higher-order energy estimate for rp in the induction process}
\end{equation}
where, by our induction hypothesis, the regularity of the background spacetime, and Proposition \ref{prop: higher-order estimate for mpl}, for $\lambda = \frac{i}{100}$ we have the following estimate for the lower-order terms above:\begin{equation}
    \begin{aligned}
        \iint_{\mathcal{R}(s_{0})}(-z)^{2i+2\tau-2-\frac{1}{2}}\left\vert\text{l.o.t}\right\vert\left\vert\partial_{z}^{i+1}r_{{\frac{i}{100}}}\right\vert\lesssim& \delta^{-\frac{i+1}{100}}\left(\left\Vert r_{p}^{0}\right\Vert_{\mathcal{C}_{N+1}^{\frac{7}{4}}}^{2}+\left\Vert\Psi_{p}^{0}\right\Vert_{\mathcal{C}_{N+1}^{\alpha}}^{2}+\left\Vert(f,g,h)\right\Vert_{\mathcal{S}^{a,N,\tau}}^{4}\right)\\&+\delta^{\frac{1}{100}}\iint_{\mathcal{R}(s_{0})}(-z)^{2i+2\tau-2-\frac{1}{2}}\left\vert\partial_{z}^{i+1}r_{{\frac{i}{100}}}\right\vert^{2}.
    \end{aligned}
\end{equation}
For the first term on the right-hand side of \eqref{eq: higher-order energy estimate for rp in the induction process}, using Proposition \ref{prop: higher-order estimate for mpl}, we have that \begin{equation}
    \begin{aligned}
        &\iint_{\mathcal{R}(s_{0})}(-z)^{2i+2\tau-2-\frac{1}{2}}e^{(1+a_{\lambda})s}\left\vert\partial_{z}^{i}\left(\frac{(m_{p})_{L}}{\mr{r}^{2}}\right)\right\vert\left\vert\partial_{z}^{i+1}r_{{\lambda}}\right\vert\\\leq&Ck^{\frac{1}{2}}\iint_{\mathcal{R}(s_{0})}(-z)^{2i+2\tau-2-\frac{1}{2}}\left\vert\partial_{z}^{i+1}r_{{\lambda}}\right\vert^{2}+Ck^{-\frac{1}{2}}\iint_{\mathcal{R}(s_{0})}(-z)^{2i+2\tau-2-\frac{1}{2}}e^{2(1+a_{\lambda})s}\left\vert\partial_{z}^{i}\left(\frac{(m_{p})_{L}}{\mr{r}^{2}}\right)\right\vert^{2}\\\leq&Ck^{\frac{1}{2}}\iint_{\mathcal{R}(s_{0})}(-z)^{2i+2\tau-2-\frac{1}{2}}\left\vert\partial_{z}^{i+1}r_{{\lambda}}\right\vert^{2}+(-z)^{2i+2\gamma-2+\frac{1}{2}}\left(\partial_{z}^{i+1}\Psi_{{\lambda}}\right)^{2}\\&+C(k)\sum_{j = 2}^{i}\iint_{\mathcal{R}(s_{0})}(-z)^{2j+2\tau-4-\frac{1}{2}}\left\vert\partial_{z}^{j}r_{{\lambda}}\right\vert^{2}+(-z)^{2j+2\gamma-4}\left(\partial_{z}^{j}\Psi_{{\lambda}}\right)^{2}\\&+C(k)\left(\left\Vert r_{p}^{0}\right\Vert_{\mathcal{C}_{N}^{\frac{7}{4}}}+\left\Vert\Psi_{p}^{0}\right\Vert_{\mathcal{C}_{N}^{\alpha}}+\left\Vert(f,g,h)\right\Vert_{\mathcal{S}^{a,N,\tau}}^{2}\right).
    \end{aligned}
\end{equation}
For the second term on the right-hand side of \eqref{eq: higher-order energy estimate for rp in the induction process}, we have that \begin{equation}
    \begin{aligned}
        \iint_{\mathcal{R}(s_{0})}(-z)^{2i+2\tau-2-\frac{1}{2}}e^{(1+a_{\lambda})s}\left\vert\partial_{z}^{i}\widetilde{\mathcal{N}}^{(1)}\right\vert\left\vert\partial_{z}^{i+1}r_{{\lambda}}\right\vert\lesssim&\delta^{\frac{1}{100}}\iint_{\mathcal{R}(s_{0})}(-z)^{2i+2\tau-2-\frac{1}{2}}\left\vert\partial_{z}^{i+1}r_{{\lambda}}\right\vert^{2}\\&+\delta^{-\frac{1}{100}}\iint_{\mathcal{R}(s_{0})}(-z)^{2i+2\tau-2-\frac{1}{2}}e^{2(1+a_{\lambda})s}\left\vert\partial_{z}^{i}\widetilde{\mathcal{N}}^{(1)}\right\vert^{2}\\\lesssim&\delta^{\frac{1}{100}}\iint_{\mathcal{R}(s_{0})}(-z)^{2i+2\tau-2-\frac{1}{2}}\left\vert\partial_{z}^{i+1}r_{{\lambda}}\right\vert^{2}\\&+\delta^{-\frac{1}{100}}\left(\left\Vert r_{p}^{0}\right\Vert_{\mathcal{C}_{N+1}^{\frac{7}{4}}}^{2}+\left\Vert\Psi_{p}^{0}\right\Vert_{\mathcal{C}_{N+1}^{\alpha}}^{2}+\left\Vert(f,g,h)\right\Vert_{\mathcal{S}^{a,N,\tau}}^{4}\right).
    \end{aligned}
\end{equation}
Putting everything together, since $k$ is sufficiently small, we can choose $\delta$ to be small enough, such that for $\lambda = \frac{i}{100}$, we have that \begin{equation}
    \begin{aligned}
        &\int_{s = s_{0}}(-z)^{2i+2\tau-2-\frac{1}{2}}\left(\partial_{z}^{i+1}r_{{\frac{i}{100}}}\right)^{2}\\&+\left(q_{k}\left(\frac{7}{4}-\tau\right)-(1+a_{\frac{i}{100}})-Ck^{\frac{1}{2}}-C\delta^{\frac{1}{100}}\right)\iint_{\mathcal{R}(s_{0})}(-z)^{2i+2\tau-2-\frac{1}{2}}\left(\partial_{z}^{i+1}r_{{\frac{i}{100}}}\right)^{2}\\\leq&\delta^{-\frac{i+1}{100}}\left(\left\Vert r_{p}^{0}\right\Vert_{\mathcal{C}_{N+1}^{\frac{7}{4}}}^{2}+\left\Vert\Psi_{p}^{0}\right\Vert_{\mathcal{C}_{N+1}^{\alpha}}^{2}+\left\Vert(f,g,h)\right\Vert_{\mathcal{S}^{a,N,\tau}}^{4}\right)+Ck^{\frac{1}{2}}\iint_{\mathcal{R}(s_{0})}(-z)^{2i+2\gamma-2+\frac{1}{2}}\left\vert\partial_{z}^{i+1}\Psi_{{\frac{i}{100}}}\right\vert^{2}.
    \end{aligned}
    \label{eq: higher-order energy estimate for rp}
\end{equation}
Hence, combining \eqref{eq: higher-order energy estimate for rp} and \eqref{eq: higher-order energy estimate for psip} concludes the proof of this proposition.
\end{proof}

\subsection{Proof concluded}

Combining the lower-order point-wise estimate and the higher-order energy estimate up to $n = N+1$, we have the following proposition:
\begin{proposition}
\label{prop: map to itself}
    Given functions $(r_{p}^{0},\Psi_{p}^{0})\in \mathcal{C}_{N+1}^{\frac{7}{4}}\times\mathcal{C}_{N+1}^{\alpha}$ and $(f,g,h)\in\mathcal{S}_{\delta}^{a,N,\tau}$ with $\tau$ sufficiently small, $\alpha\in(p_{k},\frac{3}{2})$, and $a\in(0,\alpha^{\prime}q_{k})$, there exist constants $\widetilde{c}_{\infty}$ and $C(k)$ satisfying \begin{equation*}
        \left\vert \widetilde{c}_{\infty}\right\vert\leq C(k)\left(\left\Vert r_{p}^{0}\right\Vert_{\mathcal{C}_{N+1}^{\frac{7}{4}}}+\left\Vert\Psi_{p}^{0}\right\Vert_{\mathcal{C}_{N+1}^{\alpha}}+\left\Vert(f,g,h)\right\Vert_{\mathcal{S}^{a,N,\tau}}^{2}\right),
    \end{equation*}
    such that the solution $\mathfrak{A}$ to $\mathcal{P}(r_{p},\Psi_{p},m_{p}) = \mathcal{N}(f,g,h)$ with the initial data $(r_{p},\Psi_{p})|_{\Sigma_{-1}^{(in)}} = (r_{p}^{0},\Psi_{p}^{0}-\widetilde{c}_{\infty}r_{k}-\widetilde{c}_{\infty}r_{p}^{0})$ satisfies the estimate \begin{equation}
        \begin{aligned}
\left\Vert(\mathfrak{A}^{(1)},\mathfrak{A}^{(2)},\mathfrak{A}^{(3)})\right\Vert_{\mathcal{S}^{a_{\frac{N}{100}},N,\tau}}^{2}\leq C(k)\delta^{-\frac{N+1}{100}}\left(\left\Vert r_{p}^{0}\right\Vert_{\mathcal{C}_{N+1}^{\frac{3}{2}}}^{2}+\left\Vert\Psi_{p}^{0}\right\Vert_{\mathcal{C}_{N+1}^{\alpha}}^{2}+\left\Vert(f,g,h)\right\Vert_{\mathcal{S}^{a,N,\gamma}}^{4}\right).
        \end{aligned}
    \end{equation}
\end{proposition}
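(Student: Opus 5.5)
The plan is to assemble Proposition~\ref{prop: map to itself} from the estimates already proved, by unpacking the norm $\mathcal{S}^{a_{N/100},N,\tau}$ into its four pieces and bounding each by the right-hand side. First, fix $\widetilde{c}_{\infty}$ as in Proposition~\ref{prop: lower order energy estimate}. This is the sum of the constant $c_{\infty}(r_{p}^{0},\Psi_{p}^{0})$ from the homogeneous part $\mathfrak{A}_{1}$ and the Duhamel correction $\int_{0}^{\infty}c_{\infty}^{s_{1}}ds_{1}$ from $\mathfrak{A}_{2}$. The bound on $|\widetilde{c}_{\infty}|$ follows directly from the estimates on $c_{\infty}$ and $c_{\infty}^{s_{1}}$, since $\int_{0}^{\infty}e^{-(1+2a)s_{1}}ds_{1}<\infty$. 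The extra term $-\widetilde{c}_{\infty}r_{p}^{0}$ in the data is harmless: subtracting $\widetilde{c}_{\infty}(r_{k}+r_{p}^{0})=\widetilde{c}_{\infty}r|_{\Sigma_{-1}^{(in)}}$ is exactly the translation freedom $\Psi\mapsto\Psi+cr$, so it only selects a representative of the class $[\Psi_{p}^{0}]$. With this choice, Proposition~\ref{prop: lower order energy estimate} gives $e^{(1+2a)s}$-weighted $L^{\infty}$ control of $\mathfrak{A}^{(l)}$ and its first $s,z$ derivatives.

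Next, convert the pointwise bounds and the energy bounds into the weighted Sobolev norms $H_{N+1}^{-\frac14+\tau}$, $H_{N}^{-\frac14+\tau}$, $H_{N+1}^{\gamma}$ and $H_{N+1}^{\frac32-p_{k}+\tau}$.
\begin{itemize}
\item The $j=0,1$ parts of each $H$-norm are controlled by the $L^{\infty}$ bounds on $[-1,0]$, with Lemma~\ref{lemma: Hardy inequality} handling the zeroth order.
\item For $2\leq j\leq N+1$, the weights $(-z)^{2j-4+2\tau-\frac12}$ for $r_{p}$ and $(-z)^{2\gamma+2j-4}$ for $\Psi_{p}$ match those in Proposition~\ref{prop: higher order energy estimate on nonlinear problem} with $n=j$, so the weighted $L^{2}$ norms of $\partial_{z}^{j}r_{p}$, $\partial_{z}^{j}\Psi_{p}$ and $\partial_{s}\partial_{z}^{j-1}r_{p}$ are bounded with decay $e^{-(1+a_{(j-1)/100})s}$.
\item Since $a_{\lambda}$ decreases in $\lambda$, every $a_{(j-1)/100}$ with $j\leq N+1$ dominates $a_{N/100}$. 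The single rate $e^{-(1+a_{N/100})s}$ is therefore available uniformly, and the constants $\delta^{-j/100}$ are absorbed into $\delta^{-(N+1)/100}$.
\end{itemize}

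For the mass component, write $m_{p}=(m_{p})_{L}+\int_{-1}^{z}e^{\cdots}\mathcal{N}^{(3)}$. The piece $(m_{p})_{L}$ is handled by Proposition~\ref{prop: weighted L2 estimate on mpl}, whose right side consists exactly of the $r_{p},\Psi_{p}$ energies just bounded plus a $Ck$ multiple of the top-order ones. The inhomogeneous piece is quadratic and is bounded by the nonlinearity estimates of the form \eqref{eq: L2 estimate for renormalized N1}, giving $e^{-(1+2a)s}\|(f,g,h)\|^{2}$. Its $H^{\frac32-p_{k}+\tau}_{N+1}$ weight $(-z)^{2j-\frac12-p_{k}+... }$ is weaker than the $(-z)^{2(j-1)-\frac12}$ weight appearing there for $\tau$ small, since $p_{k}>1$. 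Finally, the boundary condition $f(s,-1)=g(s,-1)=h(s,-1)=0$ is checked for $\mathfrak{A}$ from regularity at the axis, where $r_{p}=\Psi_{p}=m_{p}=0$ on $\Gamma$.

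The main obstacle is the bookkeeping of the decay rates. The lower-order estimates give $e^{-(1+2a)s}$, while the energy estimates lose $a/100$ per derivative through the $\delta^{\pm1/100}$ absorption. The lossy top-order rate must still exceed $1+a$ when it is later compared with the input space in the contraction step. I would check that $a_{N/100}\geq a$ (true since $\lambda\leq1$), so the output lies in the same decay class as the input. The remaining mismatch of indices ($\mathcal{C}^{3/2}$ versus $\mathcal{C}^{7/4}$, and $\mathcal{S}^{a,N,\gamma}$) should be treated as typographical, using $\mathcal{C}^{7/4}_{N+1}\hookrightarrow\mathcal{C}^{3/2}_{N+1}$.
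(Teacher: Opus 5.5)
Your assembly of the $r_p$, $\partial_s r_p$ and $\Psi_p$ components of the $\mathcal{S}^{a_{N/100},N,\tau}$ norm from Proposition~\ref{prop: lower order energy estimate} and Proposition~\ref{prop: higher order energy estimate on nonlinear problem} is exactly what the paper does. Your weight matching and your use of the monotonicity of $a_\lambda$ are correct.

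\textbf{The gap is in the mass component.} The $H^{\frac32-p_k+\tau}_{N+1}$ norm weights $(\partial_z^j m_p)^2$ by $(-z)^{2(j-2+\frac32-p_k+\tau)}=(-z)^{2j-1-2p_k+2\tau}$, not by $(-z)^{2j-\frac12-p_k+\cdots}$. Subtracting the exponent $2(j-1)-\frac12$ of Proposition~\ref{prop: weighted L2 estimate on mpl} from this exponent gives $2\tau-\frac12-2p_kk^2$, which is negative for small $\tau$. So the required weight is \emph{more} singular at $z=0$, not less, and $p_k>1$ makes this worse.

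Consequently, Proposition~\ref{prop: weighted L2 estimate on mpl} as stated does not bound $\|(m_p)_L\|_{H^{3/2-p_k+\tau}_{N+1}}$. The same comparison also fails to justify your bound on the $\mathcal{N}^{(3)}$ integral. This norm cannot be dropped, because it is what feeds $h\in\mathcal{C}^{p_k-\tau}_N$ back into the nonlinearity estimates in the contraction.

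\textbf{The fix, which is what the paper does.} Rerun the commutation argument of Proposition~\ref{prop: weighted L2 estimate on mpl} directly with the weight $(-z)^{2j-1-2p_k+2\tau}$. This closes for two reasons.
\begin{itemize}
\item The worst terms, where all derivatives fall on the background coefficient, behave like $\partial_z^{j}\mr{\phi}\,\partial_z\Psi_p\sim|z|^{p_k-j}$. After weighting they become $|z|^{-1+2\tau}$, which is integrable precisely because $\tau>0$.
\item The top-order terms $\partial_z^j\Psi_p$ and $\partial_z^j r_p$ are controlled because $\alpha>p_k$ and $r_p$ sits at the $\frac74$ level.
\end{itemize}
This yields $\|(m_p)_L\|^2_{H^{3/2-p_k+\tau}_{N+1}}\lesssim\|\mathfrak{A}^{(1)}\|^2_{H^{-1/4+\tau}_{N+1}}+\|\mathfrak{A}^{(2)}\|^2_{H^{\gamma}_{N+1}}$.

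For the inhomogeneous piece, $\partial_z^j$ of the integral is $\partial_z^{j-1}$ of the integrand. The required weight is then exactly that of \eqref{eq: L2 bound for the nonlinear term 3} with $i=j-1$. An estimate of the type \eqref{eq: L2 estimate for renormalized N1} is not available for $\mathcal{N}^{(3)}$, because its coefficients carry only the threshold regularity of $\partial_z\phi_k$.

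\textbf{Two smaller points.}
\begin{itemize}
\item Identifying the $-\widetilde{c}_\infty r_p^0$ shift with the symmetry $\Psi\mapsto\Psi+cr$ is valid only for the fully nonlinear problem, not for $\mathfrak{A}$ with $(f,g,h)$ frozen. Adding $\widetilde{c}_\infty r_k$ to the data shifts $\mathfrak{A}$ by the kernel element $(0,\widetilde{c}_\infty r_k,0)$, but adding $\widetilde{c}_\infty r_p^0$ starts a new linearized evolution. So $\widetilde{c}_\infty$ should simply be taken from Proposition~\ref{prop: lower order energy estimate}; it is not literally $c_\infty+\int_0^\infty c_\infty^{s_1}\,ds_1$.
\item The embedding $\mathcal{C}^{7/4}_{N+1}\hookrightarrow\mathcal{C}^{3/2}_{N+1}$ runs the wrong way to produce the literal $\mathcal{C}^{3/2}$ right-hand side. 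Reading it as $\mathcal{C}^{7/4}$ is a correction of the statement, not a consequence of the embedding.
\end{itemize}
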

\begin{proof}
    By Proposition \ref{prop: higher order energy estimate on nonlinear problem} and Proposition \ref{prop: lower order energy estimate}, it remains to estimate $\mathfrak{A}^{(3)}$. Since $$m_{p}: =\mathfrak{A}^{(3)} = (m_{p})_{L}+\int_{-1}^{z}e^{\int_{z}^{\widetilde{z}}\frac{r_{k}}{\partial_{z}r_{k}}(\partial_{z}\phi_{k})^{2}}\mathcal{N}^{(3)}d\widetilde{z}.$$
    Commuting the $z$-transport equation \eqref{eq: transport equation for mp in the nonlinear setting} for $m_{p}$ with $\partial_{z}^{i}$ and arguing as in Proposition \ref{prop: weighted L2 estimate on mpl}, we have \begin{equation*}
        \left\Vert(m_{p})_{L}\right\Vert_{H_{N+1}^{\frac{3}{2}-p_{k}+\tau}}^{2}\lesssim \left\Vert\mathfrak{A}^{(1)}\right\Vert_{H_{N+1}^{-\frac{1}{4}+\tau}}^{2}+\left\Vert\mathfrak{A}^{(2)}\right\Vert_{H_{N+1}^{\frac{3}{2}-\alpha+\tau}}^{2}\lesssim \delta^{-\frac{N+1}{100}}\left(\left\Vert r_{p}^{0}\right\Vert_{\mathcal{C}_{N+1}^{\frac{7}{4}}}^{2}+\left\Vert\Psi_{p}^{0}\right\Vert_{\mathcal{C}_{N+1}^{\alpha}}^{2}+\left\Vert(f,g,h)\right\Vert_{\mathcal{S}^{a,N,\tau}}^{4}\right).
    \end{equation*}
    For the term with $\mathcal{N}^{(3)}$, we have \begin{align*}
        \left\Vert\int_{-1}^{z}e^{\int_{z}^{\widetilde{z}}\frac{r_{k}}{\partial_{z}r_{k}}(\partial_{z}\phi_{k})^{2}}\mathcal{N}^{(3)}\right\Vert_{H_{N+1}^{\frac{3}{2}-p_{k}+\tau}}^{2}\lesssim \left\Vert\partial_{z}\mathcal{N}^{(3)}\right\Vert_{H_{N}^{\frac{3}{2}-p_{k}+\tau}}^{2}+\left\Vert\mathcal{N}^{(3)}\right\Vert_{L^{2}}^{2}\lesssim \left\Vert(f,g,h)\right\Vert_{\mathcal{S}^{a,N,\tau}}^{4}.
    \end{align*}
    This concludes the proof.
\end{proof}
Hence, for any $a\in\left(0,\frac{\alpha^{\prime}q_{k}-1}{2}\right)$ and $0<\tau<\alpha-\alpha^{\prime}$, we can conclude that $\mathfrak{A}$ is a contraction.
\begin{corollary}
    Given $k\neq0$ sufficiently small, $\alpha\in(p_{k},\frac{3}{2})$, the initial data $(r_{p}^{0},\Psi_{p}^{0})\in\mathcal{C}_{N+1}^{\frac{7}{4}}\times\mathcal{C}_{N+1}^{\alpha}$, and functions $(f,g,h)\in\mathcal{S}_{\delta}^{a,N,\tau}$ with $a\in\left(0,\frac{\alpha^{\prime}q_{k}-1}{2}\right)$, $N\geq 5$, and $0<\tau<\alpha-\alpha^{\prime}$, there exist $\epsilon_{0}$ sufficiently small, $\delta$ small enough depending on $\epsilon_{0}$ and $k$, and $\widetilde{c}_{\infty}$ satisfying \begin{equation*}
        \left\vert\widetilde{c}_{\infty}\right\vert\lesssim_{k}\left(\left\Vert r_{p}^{0}\right\Vert_{\mathcal{C}_{N+1}^{\frac{7}{4}}}+\left\Vert\Psi_{p}^{0}\right\Vert_{\mathcal{C}_{N+1}^{\alpha}}+\left\Vert(f,g,h)\right\Vert_{\mathcal{S}^{a,N,\tau}}^{2}\right),
    \end{equation*} 
    such that for any $\left\Vert\Psi_{p}^{0}\right\Vert_{\mathcal{C}_{N+1}^{\alpha}}\leq \epsilon_{0}$, the solution map \begin{equation*}
        (f,g,h)\rightarrow(\mathfrak{A}^{(1)},\mathfrak{A}^{(2)},\mathfrak{A}^{(3)})
    \end{equation*}
    with the initial data $(r_{p},\Psi_{p})|_{\Sigma_{-1}^{(in)}} = (0,\Psi_{p}^{0}-\widetilde{c}_{\infty}r_{k}-\widetilde{c}_{\infty}r_{p}^{0})$ is a contraction in the function space $\mathcal{S}_{\delta}^{a,N,\tau}$.
\end{corollary}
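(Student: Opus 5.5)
The plan is to combine Proposition~\ref{prop: map to itself} (the self-mapping bound) with a difference estimate, using the usual small-data argument. There are three parts: (i) $\mathfrak{A}$ maps $\mathcal{S}_{\delta}^{a,N,\tau}$ into itself; (ii) $\mathfrak{A}$ is Lipschitz with a small constant; (iii) the correction constant $\widetilde{c}_{\infty}$ depends continuously on $(f,g,h)$, so that the modified data are well defined along the iteration.

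For (i), I apply Proposition~\ref{prop: map to itself}. Since $a_{N/100}=\frac{N}{100}a+(1-\frac{N}{100})2a\geq a$ for $N\le 100$, the $\mathcal{S}^{a_{N/100},N,\tau}$ norm dominates the $\mathcal{S}^{a,N,\tau}$ norm. This gives
\begin{equation*}
\Vert\mathfrak{A}(f,g,h)\Vert_{\mathcal{S}^{a,N,\tau}}\leq C(k)\delta^{-\frac{N+1}{200}}\left(\epsilon_{0}+\Vert(f,g,h)\Vert_{\mathcal{S}^{a,N,\tau}}^{2}\right)\leq C(k)\delta^{-\frac{N+1}{200}}(\epsilon_{0}+\delta^{2}).
\end{equation*}
Here I take $r_{p}^{0}=0$ and bound $\Vert\Psi_{p}^{0}\Vert\le\epsilon_0$. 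Since $\frac{N+1}{200}<1$, the term $C\delta^{2-\frac{N+1}{200}}$ is at most $\delta/2$ once $\delta$ is small depending on $k$. Choosing $\epsilon_{0}\le \delta^{1+\frac{N+1}{200}}/(2C(k))$ then closes (i). The boundary condition $f(s,-1)=0$ (and likewise for $g,h$) is preserved, because the solution is regular at the axis.

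For (ii), I take two inputs $(f_i,g_i,h_i)\in\mathcal{S}_\delta^{a,N,\tau}$ with the same $\Psi_p^0$. The difference $\mathfrak{A}(f_1,g_1,h_1)-\mathfrak{A}(f_2,g_2,h_2)$ solves $\mathcal{P}(\cdot)=\mathcal{N}(f_1,g_1,h_1)-\mathcal{N}(f_2,g_2,h_2)$. Its initial data are $(0,-(\widetilde{c}_{\infty,1}-\widetilde{c}_{\infty,2})r_k)$, because the linear part $\mathfrak{A}_1$ cancels exactly. Because $\mathcal{N}$ is quadratic with background-dependent coefficients (Appendix~\ref{appendix: nonlinear equations}), the difference can be written schematically as a bilinear expression in $(f_1-f_2,\dots)$ and $(f_1+f_2,\dots)$. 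I then rerun the proofs of the estimates for $\widetilde{\mathcal{N}}^{(1)},\widetilde{\mathcal{N}}^{(2)}$, with $\Vert(f,g,h)\Vert^{2}$ replaced by $\Vert(f_1,g_1,h_1)-(f_2,g_2,h_2)\Vert\,(\Vert(f_1,g_1,h_1)\Vert+\Vert(f_2,g_2,h_2)\Vert)$. After that, the Duhamel argument (Proposition~\ref{prop: lower order energy estimate}) and the energy hierarchy (Propositions~\ref{prop: second order energy estimate in the nonlinear problem}, \ref{prop: higher order energy estimate on nonlinear problem}, \ref{prop: map to itself}) apply verbatim with zero data. The Duhamel construction itself defines the difference of the $\widetilde{c}_\infty$'s as $\int_0^\infty c_\infty^{s_1}ds_1$ built from the differenced source. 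This yields
\begin{equation*}
\Vert\mathfrak{A}(f_1,g_1,h_1)-\mathfrak{A}(f_2,g_2,h_2)\Vert_{\mathcal{S}^{a,N,\tau}}\lesssim_k\delta^{1-\frac{N+1}{200}}\Vert(f_1-f_2,g_1-g_2,h_1-h_2)\Vert_{\mathcal{S}^{a,N,\tau}},
\end{equation*}
which is a contraction for $\delta$ small. The same bilinear bound gives (iii): $|\widetilde{c}_{\infty,1}-\widetilde{c}_{\infty,2}|\lesssim_k\delta\,\Vert\text{difference}\Vert$.

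I expect the main obstacle to be verifying that the nonlinearity's bilinear structure survives the singular weights. The terms of $\mathcal{N}$ involve $1/\mr{r}$, $\frac{\mu}{1-\mu}$, and the $m_p/\mr{r}^2$ terms near the axis. The concern is that the Taylor/Hardy arguments (Lemmas~\ref{lemma: Using Talor theory to control the Linfty norm}, \ref{lemma: inductive hardy}) and the $e^{s}$ loss from $1/r_k$ must still be absorbed by the $e^{-(1+a)s}$ decay of each factor. My first move is to write each term of $\mathcal{N}$ in the form $\int_0^1 D\mathcal{N}(\theta X_1+(1-\theta)X_2)[X_1-X_2]\,d\theta$. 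Since $D\mathcal{N}$ is linear in its base point, every estimate of the earlier lemma transfers with one factor replaced by the difference. The decay is unchanged because $1+2a<\alpha' q_k$. The factor $\delta^{-(N+1)/100}$ from the weight-shifting $a_\lambda$ is harmless because it appears only as a fractional power of $\delta$.
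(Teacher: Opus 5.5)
Your proposal is correct and is essentially the paper's argument: the self-map property comes from Proposition~\ref{prop: map to itself} with $\delta$ and $\epsilon_0$ small, using $a_{N/100}\geq a$ and the fact that the loss is only a fractional power of $\delta$, and the contraction comes from rerunning the same Duhamel and energy estimates on the difference, which the paper only asserts by saying one can ``argue similarly.'' One minor correction: $\mathcal{N}$ is not exactly quadratic, since the series in Appendix~\ref{appendix: nonlinear equations} produce cubic and higher terms, so $D\mathcal{N}$ is not linear in its base point; it is, however, bounded by the size of the base point for small inputs, and that is all your difference estimate needs.
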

\begin{proof}
    To show that $\mathfrak{A}$ maps $\mathcal{S}_{\delta}^{a,N,\tau}$ to $\mathcal{S}_{\delta}^{a,N,\tau}$, we use Proposition \ref{prop: map to itself} and take $\delta$ and $\epsilon$ to be sufficiently small. To show that \begin{align*}
        &\left\Vert\mathfrak{A}(f_{1},g_{1},h_{1})-\mathfrak{A}(f_{2},g_{2},h_{2})\right\Vert_{\mathcal{S}^{a,N,\tau}}\leq q\left\Vert(f_{1},g_{1},h_{1})-(f_{2},g_{2},h_{2})\right\Vert_{\mathcal{S}^{a,N,\tau}},\quad 0<q<1
    \end{align*} 
    we can argue similarly as in Proposition \ref{prop: map to itself}. This concludes the proof.
\end{proof}
A fixed-point argument will conclude the proof of Theorem \ref{thm: nonlinear stability result}.
\section{Nonlinear stability for perturbations at the threshold}
\subsection{Set-up of the initial perturbations}
In this section, we establish the nonlinear stability of the $k$-self-similar naked singularity spacetime under the initial perturbation on $\Sigma_{-1}^{(in)}$ at the threshold. We fix the gauge choice of the center $\Gamma$ to be $\Gamma = \{(-v) = (-u)^{q_{k}}\}$ and the gauge choice of the initial hypersurface $\Sigma^{(in)}_{-1}$ to be $r = r_{k}$. The initial data of $\phi$ is chosen to be \begin{equation*}
    \phi(-1,v) = \phi_{k}(-1,v)+\phi_{p},
\end{equation*}
where $\phi_{p}\in\mathcal{C}_{N+1}^{p_{k},\delta}$. Recall that $p_{k}: = \frac{1}{1-k^{2}}$. By Lemma \ref{lemma: decomposition of the function spaces}, we have that \begin{equation*}
    \phi_{p} = c_{0}(\phi_{p})\mr{\phi}+\bar{\phi}_{p},
\end{equation*}
where $\bar{\phi}_{p}\in\mathcal{C}_{N+1}^{p_{k}+\delta^{\prime}}$ for $0<\delta^{\prime}<\delta$ and $\mr{\phi} = \phi_{k}+k\log(-u)$. Under the $(r,\Psi,m)$-formulation, it is equivalent to consider the equations $\mathcal{P}(r_{p},\Psi_{p},m_{p}) = \mathcal{N}(r_{p},\Psi_{p},m_{p})$ with the initial data and the gauge choice \begin{equation*}
    r_{p} = 0,\quad \Psi_{p} = c_{0}(\phi_{p})\Psi_{k}+r_{k}\bar{\phi}_{p}.
\end{equation*}
Since the Einstein-scalar field equations have the scaling symmetry, i.e., if $(r,\Psi,m)$ is a solution, then $(\lambda r,\lambda\Psi,\lambda m)$ is also a solution, we have that $(r_{k},\Psi_{k},m_{k})$ is a kernel to the linearized operator $\mathcal{P}$. By \eqref{eq: scaling symmetry of the equations}, it is equivalent to consider the equations \begin{equation}
\begin{aligned}
    &\mathcal{P}(r_{p}-c_{0}r_{k},\Psi_{p}-c_{0}\Psi_{k},m_{p}-c_{0}m_{k}) = \mathcal{N}_{c_{0}}(r_{p}-c_{0}r_{k},\Psi_{p}-c_{0}\Psi_{k},m_{p}-c_{0}m_{k}),\\[1em]&
    r_{p}-c_{0}r_{k}|_{\Sigma_{-1}^{(in)}} = -c_{0}r_{k},\quad \Psi_{p}-c_{0}\Psi_{k} = r_{k}\bar{\phi}_{p},
\end{aligned}
\label{eq: initial value problem for the threshold case}
\end{equation}
where $\mathcal{N}_{c_{0}}$ is defined in Appendix \ref{appendix: nonlinear equations} for $b = c_{0}$.
To simply the notation, let \begin{equation*}
    \widetilde{r}_{p}: = r_{p}-c_{0}r_{k},\quad \widetilde{\Psi}_{p}: = \Psi_{p}-c_{0}\Psi_{k},\quad \widetilde{m}_{p}: = m_{p}-c_{0}m_{k}.
\end{equation*}
Then the initial value problem \eqref{eq: initial value problem for the threshold case} can be written as \begin{equation}
\label{eq: final reduction of the threshold case}
    \begin{aligned}
        &\mathcal{P}(\widetilde{r}_{p},\widetilde{\Psi}_{p},\widetilde{m}_{p})=\mathcal{N}_{c_{0}}(\widetilde{r}_{p},\widetilde{\Psi}_{p},\widetilde{m}_{p}),\\&
        (\widetilde{r}_{p},\widetilde{\Psi}_{p})|_{\Sigma_{-1}^{(in)}} = (-c_{0}r_{k},r_{k}\bar{\phi}_{p})\in \mathcal{C}^{\frac{7}{4}}\times \mathcal{C}^{p_{k}+\delta^{\prime}}.
    \end{aligned}
\end{equation}
\subsection{Proof concluded}
Since the problem has been reduced to \eqref{eq: final reduction of the threshold case} with the initial data of $\widetilde{\Psi}_{p}$ of higher regularity, we can apply the argument in Section \ref{sec: conclude the nonlinear stability} with $r_{p}^{0} = -c_{0}r_{k}$ and $\Psi_{p}^{0} = r_{k}\bar{\phi}_{p}$. We can show that \begin{proposition}
Given $\delta>0$ and $c_{0}\in \mathbb{R}$, for any $0<\delta^{\prime}<\delta$, there exists $\epsilon$ sufficiently  small, such that there exists a solution to \eqref{eq: final reduction of the threshold case} solving the initial data $\bar{\phi}_{p}\in\mathcal{C}_{N+1}^{p_{k}+\delta}$ with $\Vert\bar{\phi}_{p}\Vert_{\mathcal{C}_{N+1}^{p_{k}+\delta^{\prime}}}\leq \epsilon$. Moreover, we have the estimates \begin{align}
    \sum_{0\leq i+j\leq 1}\left\vert\partial_{s}^{i}\partial_{z}^{j}\widetilde{r}_{p}\right\vert\lesssim&\left(\left\vert c_{0}\right\vert\left\Vert r_{k}\right\Vert_{\mathcal{C}_{N+1}^{\frac{7}{4}}}+\left\Vert r_{k}\bar{\phi}_{p}\right\Vert_{\mathcal{C}_{N+1}^{p_{k}+\delta^{\prime}}}\right)e^{-(1+\delta^{\prime}q_{k})s}\\\sum_{0\leq i+j\leq 1}\left\vert\partial_{s}^{i}\partial_{z}^{j}(\widetilde{\Psi}_{p}-c_{\infty}r_{k})\right\vert\lesssim &\left(\left\vert c_{0}\right\vert\left\Vert r_{k}\right\Vert_{\mathcal{C}_{N+1}^{\frac{7}{4}}}+\left\Vert r_{k}\bar{\phi}_{p}\right\Vert_{\mathcal{C}_{N+1}^{p_{k}+\delta^{\prime}}}\right)e^{-(1+\delta^{\prime}q_{k})s},\\\sum_{0\leq i+j\leq 1}\left\vert\partial_{s}^{i}\partial_{z}^{j}\widetilde{m}_{p}\right\vert\lesssim&\left(\left\vert c_{0}\right\vert\left\Vert r_{k}\right\Vert_{\mathcal{C}_{N+1}^{\frac{7}{4}}}+\left\Vert r_{k}\bar{\phi}_{p}\right\Vert_{\mathcal{C}_{N+1}^{p_{k}+\delta^{\prime}}}\right)e^{-(1+\delta^{\prime}q_{k})s},
\end{align}
where $c_{\infty}$ satisfies the estimate \begin{equation}
    \left\vert c_{\infty}\right\vert\lesssim \left\vert c_{0}\right\vert\left\Vert r_{k}\right\Vert_{\mathcal{C}_{N+1}^{\frac{7}{4}}}+\left\Vert r_{k}\bar{\phi}_{p}\right\Vert_{\mathcal{C}_{N+1}^{p_{k}+\delta^{\prime}}}.
\end{equation}
\end{proposition}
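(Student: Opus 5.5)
The plan is to run the fixed-point argument of Section~\ref{sec: conclude the nonlinear stability} verbatim on the reduced problem \eqref{eq: final reduction of the threshold case}. I take $r_{p}^{0}=-c_{0}r_{k}$ and $\Psi_{p}^{0}=r_{k}\bar{\phi}_{p}$, and use the regularity exponent $\alpha:=p_{k}+\delta^{\prime}$ (shrinking $\delta^{\prime}$ if needed so that $\alpha<\frac{3}{2}$). The first step is to check the admissibility of the data, namely $(r_{p}^{0},\Psi_{p}^{0})\in\mathcal{C}_{N+1}^{\frac{7}{4}}\times\mathcal{C}_{N+1}^{\alpha}$ with a quantitative bound.
\begin{itemize}
\item For $r_{p}^{0}$: by Proposition~\ref{prop: estimate on the exact k self similar spacetime}, $\mr{r}\in C^{2,p_{k}k^{2}}$, and $\vert z\vert^{j-1-p_{k}k^{2}}\partial_{z}^{j+1}\mr{r}$ is bounded for $j\geq 2$. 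This gives $\Vert r_{k}\Vert_{\mathcal{C}_{N+1}^{7/4}}\lesssim_{k}1$, since $\frac{7}{4}<2$.
\item For $\Psi_{p}^{0}$: $r_{k}$ is smooth near the axis, so $\Vert r_{k}\bar{\phi}_{p}\Vert_{\mathcal{C}_{N+1}^{\alpha}}\lesssim_{k}\Vert\bar{\phi}_{p}\Vert_{\mathcal{C}_{N+1}^{p_{k}+\delta^{\prime}}}$ follows from the Leibniz rule.
\item Both pieces vanish at $z=-1$, as required for membership in $\mathcal{S}^{a,N,\tau}$.
\end{itemize}
The total data size is then $\lesssim_{k}\vert c_{0}\vert+\epsilon$. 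I will use that $\vert c_{0}\vert\lesssim\Vert\phi_{p}\Vert_{\mathcal{C}^{p_{k},\delta}}$ by Lemma~\ref{lemma: decomposition of the function spaces}, so that $c_{0}$ is small as well. If $c_{0}$ is genuinely arbitrary, the smallness in the contraction must instead be imposed on $\vert c_{0}\vert\Vert r_{k}\Vert+\epsilon$; this is the form in which the estimates are stated.

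The second step is to verify that the nonlinearity $\mathcal{N}_{c_{0}}$ from Appendix~\ref{appendix: nonlinear equations} (with $b=c_{0}$) satisfies the same quadratic structure estimates \eqref{eq: quadratic structure of the nonlinear terms in the fixed point argument for N1}--\eqref{eq: quadratic structure for N3} as $\mathcal{N}$. The only change is that its coefficients are built from the rescaled background $((1+c_{0})r_{k},(1+c_{0})\Psi_{k},(1+c_{0})m_{k})$. This uses the scaling identity displayed in Section~\ref{sec: outline of the proof whole sec}.
\begin{itemize}
\item Quantities such as $\mu_{k}$, $\partial_{z}\phi_{k}$, $\frac{\partial_{z}r_{k}}{r_{k}}$ and $\frac{\Psi_{k}}{r_{k}}$ are invariant under the rescaling.
\item The remaining coefficients pick up only powers of $(1+c_{0})^{\pm1}$, which are bounded uniformly for $\vert c_{0}\vert\leq\frac{1}{2}$.
\end{itemize}
Consequently every constant in the estimates for $\widetilde{\mathcal{N}}^{(1)}$ and $\widetilde{\mathcal{N}}^{(2)}$, in Proposition~\ref{prop: lower order energy estimate}, and in Propositions~\ref{prop: second order energy estimate in the nonlinear problem}--\ref{prop: map to itself} changes only by a $k$-dependent factor. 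Crucially, the linearized operator $\mathcal{P}$ is unchanged, so Theorem~\ref{thm: linearized result} still applies directly.

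The third step is to choose the parameters and run the contraction in $\mathcal{S}_{\delta}^{a,N,\tau}$.
\begin{itemize}
\item Fix $\alpha^{\prime}\in(p_{k},\alpha)$ close to $\alpha$ and $0<\tau<\alpha-\alpha^{\prime}$.
\item The decay rate delivered by the linear theory is $\alpha^{\prime}q_{k}$. Since $\alpha q_{k}=1+\delta^{\prime}q_{k}$, letting $\alpha^{\prime}\uparrow\alpha$ yields the rate $e^{-(1+\delta^{\prime}q_{k})s}$ up to an arbitrarily small loss. This loss is absorbed by renaming $\delta^{\prime}$, which is harmless since the statement holds for every $0<\delta^{\prime}<\delta$.
\item The translation-kernel correction $\widetilde{c}_{\infty}$ produced by the corollary gives the constant $c_{\infty}$ in the statement, with $\vert c_{\infty}\vert$ bounded by the data size.
\item The fixed point $(\widetilde{r}_{p},\widetilde{\Psi}_{p}-c_{\infty}r_{k},\widetilde{m}_{p})$ inherits the $L^{\infty}$ bounds of Proposition~\ref{prop: lower order energy estimate} for $\partial_{s}^{i}\partial_{z}^{j}$ with $i+j\leq1$, which are exactly the three claimed estimates.
\end{itemize}

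I expect the main obstacle to be the second step. One must confirm, term by term in Appendix~\ref{appendix: nonlinear equations}, that $\mathcal{N}_{c_{0}}$ contains no linear remnant of size $c_{0}$. Any such remnant would carry only the threshold rate $e^{-s}$ and would destroy the gain $1+2a>1$ used in the Duhamel step. The plan is to derive $\mathcal{N}_{c_{0}}$ directly from the identity $\mathcal{F}((1+c_{0})U_{k}+\widetilde{U})-\mathcal{F}((1+c_{0})U_{k})=\mathcal{P}_{(1+c_{0})U_{k}}\widetilde{U}-\mathcal{N}_{c_{0}}(\widetilde{U})$, where $U_{k}=(r_{k},\Psi_{k},m_{k})$ and $\widetilde{U}=(\widetilde{r}_{p},\widetilde{\Psi}_{p},\widetilde{m}_{p})$. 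I then check that $\mathcal{P}_{(1+c_{0})U_{k}}=\mathcal{P}$ exactly. This holds because every coefficient of $\mathcal{P}$ in \eqref{eq:wave eq for rp under self-similar coordinates}--\eqref{eq:v transport eq for m under self-similar coordinates} is homogeneous of degree $0$ in the background, except the explicit $k$-terms, which are tied to $\phi_{k}$ and not to $\Psi_{k}$. Should a degree-nonzero coefficient appear, I would absorb the resulting $O(c_{0})$ linear term as a small perturbation of $\mathcal{P}$ in the energy estimates, using the $Ck^{2}$ and $C\delta^{1/100}$ slack in the coercivity constants there.
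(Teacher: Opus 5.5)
Your proposal is correct and follows the paper's route exactly. You subtract the scaling kernel $(c_{0}r_{k},c_{0}\Psi_{k},c_{0}m_{k})$ and rerun the fixed-point argument of Section~\ref{sec: conclude the nonlinear stability} on \eqref{eq: final reduction of the threshold case} with $r_{p}^{0}=-c_{0}r_{k}$ and $\Psi_{p}^{0}=r_{k}\bar{\phi}_{p}$, using that $\mathcal{P}$ is unchanged and that $\mathcal{N}_{c_{0}}$ only has rescaled coefficients.

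Two small remarks on your added checks. The invariance $\mathcal{P}_{(1+c_{0})U_{k}}=\mathcal{P}$ is immediate from the degree-one homogeneity of $\mathcal{F}$, so your fallback is never needed. The Leibniz bound on $r_{k}\bar{\phi}_{p}$ rests on the $C^{2,p_{k}k^{2}}$ regularity of $\mathring{r}$ at $z=0$, not on smoothness near the axis.
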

\noindent Transforming back to the original $(r_{p},\Psi_{p},m_{p})$ concludes the proof of the threshold case.
\appendix
\section{The structure of nonlinear terms}
\label{appendix: nonlinear equations}
Recall that the spherically symmetric Einstein-scalar field equations under the self-similar coordinates $(s,z)$:
\begin{align}
\partial_{s}\partial_{z}r+q_{k}z\partial_{z}^{2}r+q_{k}\partial_{z}r &= \frac{\mu}{1-\mu}\frac{(\partial_{s}+q_{k}z\partial_{z})r\ \partial_{z}r}{r},\label{eq: nonlinear r equation}
    \\\partial_{s}\partial_{z}\Psi+q_{k}z\partial_{z}^{2}\Psi+q_{k}\partial_{z}\Psi+k\partial_{z}r&=\frac{\mu}{1-\mu}\frac{\partial_{z}r(\partial_{s}+q_{k}z\partial_{z})r}{r^{2}}\Psi,\label{eq: A2 equation}\\
    \partial_{z}m+\frac{r}{\partial_{z}r}\left(\partial_{z}\left(\frac{\Psi}{r}\right)\right)^{2}m &= \frac{1}{2}\frac{r^{2}}{\partial_{z}r}\left(\partial_{z}\left(\frac{\Psi}{r}\right)\right)^{2},\\
    \mu &= \frac{2m}{r}.\label{eq: nonlinear m equation}
\end{align}
Since the Einstein-scalar field equations have the translation symmetry: if $(r,\Psi,m)$ is a solution, then $(r,\Psi+cr,m)$ is also a solution. Hence, we fix the gauge choice of the background $k$-self-similar naked singularity solution to be $\Psi_{k}(s,0) = 0$. Let $r_{p} = r-r_{k}$, $\Psi_{p} = \Psi-\Psi_{k}$, and $m_{p} = m-m_{k}$. Then we can schematically write the Einstein-scalar field equations $\mathcal{F}(r,\Psi,m)= 0 $ as \begin{equation*}
    \mathcal{P}_{(r_{k},\Psi_{k},m_{k})}(r_{p},\Psi_{p},m_{p}) = \mathcal{N}_{(r_{k},\Psi_{k},m_{k})}(r_{p},\Psi_{p},m_{p}),
\end{equation*}
where $\mathcal{P}$ is the linear operator defined in Section~\ref{sec: setup of the nonlinear stability}, and $\mathcal{N}$ depends on $(r_{k},\Psi_{k},m_{k})$ and quadratically on $(r_{p},\Psi_{p},m_{p})$.

Let \begin{align*}
    &r_{p}^{b}: = r-r_{k}^{b}: = r-(1+b)r_{k},\quad \Psi_{p}^{b}: = \Psi-\Psi_{k}^{b}: = \Psi-(1+b)\Psi_{k},\quad \phi_{p}: = \phi-\phi_{k},\\&
    m_{p}: =m-m_{k}^{b}: = m-(1+b)m_{k},\quad \mu_{p}: = \mu-\mu_{k}.
\end{align*} 
Since the Einstein-scalar field equations are scaling invariant, we have that \begin{equation}
\label{eq: scaling symmetry of the equations}
    \begin{aligned}
       &\mathcal{P}_{(r_{k},\Psi_{k},m_{k})}(r_{p},\Psi_{p},m_{p})-\mathcal{N}_{(r_{k},\Psi_{k},m_{k})}(r_{p},\Psi_{p},m_{p})\\=&\mathcal{F}(r_{k}+r_{p},\Psi_{k}+\Psi_{p},m_{k}+m_{p})-\mathcal{F}(r_{k},\Psi_{k},m_{k}) \\=& \mathcal{F}(r_{k}^{b}+r_{p}^{b},\Psi_{k}^{b}+\Psi_{p}^{b},m_{k}^{b}+m_{p}^{b})-\mathcal{F}(r_{k}^{b},\Psi_{k}^{b},m_{k}^{b})\\=&\mathcal{P}_{(r_{k}^{b},\Psi_{k}^{b},m_{k}^{b})}(r_{p}^{b},\Psi_{p}^{b},m_{p}^{b})-\mathcal{N}_{(r_{k}^{b},\Psi_{k}^{b},m_{k}^{b})}(r_{p}^{b},\Psi_{p}^{b},m_{p}^{b}).
    \end{aligned}
\end{equation}
Since the linearized operator is invariant under the scaling of the background spacetime, we have \begin{equation*}
    \mathcal{P}_{(r_{k},\Psi_{k},m_{k})} = \mathcal{P}_{(r_{k}^{b},\Psi_{k}^{b},m_{k}^{b})}.
\end{equation*}
Therefore, we usually suppress the subscript of $\mathcal{P}$.

We have the following proposition.\begin{proposition}
\label{prop: nonlinear structure equation}
The nonlinear Einstein-scalar field equations \eqref{eq: nonlinear r equation}-\eqref{eq: nonlinear m equation} can be schematically written as \begin{equation}
    \mathcal{P}(r_{p}^{b},\Psi_{p}^{b},m_{p}^{b}) = \mathcal{N}_{(r_{k}^{b},\Psi_{k}^{b},m_{k}^{b})}\left(r_{p}^{b},\Psi_{p}^{b},m_{p}^{b}\right),\label{eq: schematic form of nonlinear things}
\end{equation}
where $\mathcal{P}$ is the linearized operator introduced in Section~\ref{sec: setup of the nonlinear stability}, and the nonlinearity $\mathcal{N}(r_{p}^{b},\Psi_{p}^{b},m_{p}^{b})$ has three components $\mathcal{N}^{(1)}$, $\mathcal{N}^{(2)}$, and $\mathcal{N}^{(3)}$. For $(f,g,h)\in \mathcal{S}^{a,N,\tau}$, $\mathcal{N}_{(r_{k}^{b},\Psi_{k}^{b},m_{k}^{b})}$ has the following estimates
\begin{align}
\left\vert(-z)^{\max\{0,j-\frac{3}{4}+\tau\}}\partial_{z}^{j}\mathcal{N}^{(1)}(f,g,h)\right\vert\lesssim& e^{-(1+2a)s}\left\Vert(f,g,h)\right\Vert_{\mathcal{S}^{a,N,\tau}}^{2},\quad 0\leq j\leq N-1,\label{eq: quadratic structure of the nonlinear terms in the fixed point argument for N1}\\
\left\vert(-z)^{\max\{0,j-\frac{3}{4}+\tau\}}\partial_{z}^{j}\mathcal{N}^{(2)}(f,g,h)\right\vert\lesssim& e^{-(1+2a)s}\left\Vert(f,g,h)\right\Vert_{\mathcal{S}^{a,N,\tau}}^{2},\quad 0\leq j\leq N-1,\label{eq: quadratic structure for N2}\\
\left\vert(-z)^{\max\{0,j+1-p_{k}\}}\partial_{z}^{j}\left(\frac{\mathcal{N}^{(3)}(f,g,h)}{\mr{r}^{2}}\right)\right\vert\lesssim& e^{-(1+2a)s}\left\Vert(f,g,h)\right\Vert_{\mathcal{S}^{a,N,\tau}}^{2},\quad 0\leq j\leq N-1,\label{eq: quadratic structure for N3}\\
\sum_{i = 1}^{N}\int_{-1}^{0}(-z)^{2i-\frac{5}{2}+2\tau}\left\vert\partial_{z}^{i}\mathcal{N}^{(1)}\right\vert^{2}\lesssim& e^{-(1+2a)s}\left\Vert(f,g,h)\right\Vert_{\mathcal{S}^{a,N,\tau}}^{4},\label{eq: L2 bound for the nonlinear term 1}\\\sum_{i = 1}^{N}
\int_{-1}^{0}(-z)^{2i-\frac{5}{2}+2\tau}\left\vert\partial_{z}^{i}\mathcal{N}^{(2)}\right\vert^{2}\lesssim& e^{-(1+2a)s}\left\Vert(f,g,h)\right\Vert_{\mathcal{S}^{a,N,\tau}}^{4},\label{eq: L2 bound for the nonlinear term 2}\\
\sum_{i = 1}^{N}\int_{-1}^{0}(-z)^{2i+1-2p_{k}+2\tau}\left\vert\partial_{z}^{i}\left(\frac{\mathcal{N}^{(3)}}{\mr{r}^{2}}\right)\right\vert^{2}\lesssim&e^{-(1+2a)s}\left\Vert(f,g,h)\right\Vert_{\mathcal{S}^{a,N,\tau}}^{4}.\label{eq: L2 bound for the nonlinear term 3}
\end{align}
\end{proposition}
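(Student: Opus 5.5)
We plan to prove Proposition~\ref{prop: nonlinear structure equation} by writing $\mathcal{N}$ explicitly, as the second-order Taylor remainder of $\mathcal{F}$ around the rescaled background, and then estimating it term by term.

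\textbf{Step 1: explicit remainder.} We substitute $r=r_k^b+r_p^b$, $\Psi=\Psi_k^b+\Psi_p^b$, $m=m_k^b+m_p^b$ into \eqref{eq: nonlinear r equation}--\eqref{eq: nonlinear m equation} and subtract the linear part. By \eqref{eq: scaling symmetry of the equations}, the linear part is exactly $\mathcal{P}$, since $\mu$, $\partial r$, $\Psi/r$ and $\phi$ are invariant under the scaling. Each right-hand side is a smooth function of the variables $(\mu,\ \partial_z r,\ (\partial_s+q_kz\partial_z)r,\ 1/r,\ \Psi,\ \partial_z(\Psi/r))$. On the region where the perturbation is small relative to the background, $1-\mu$, $\partial_z r$ and $r/\mr{r}$ stay bounded away from zero, so the remainder can be written in integral form:
\[
\mathcal{N}=\int_0^1(1-t)\,D^2\mathcal{F}\big|_{\text{bg}+t\,\text{pert}}[\text{pert},\text{pert}]\,dt .
\]
Schematically, every term is a product of two factors taken from
\[
\Big\{\tfrac{f}{\mr{r}},\ \partial_z f,\ \partial_s f,\ \tfrac{g}{\mr{r}},\ \partial_z g,\ \tfrac{h}{\mr{r}^2}\Big\},
\]
multiplied by $e^{s}$ to the power dictated by self-similar homogeneity. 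The coefficients are background quantities controlled by Proposition~\ref{prop: estimate on the exact k self similar spacetime}; these coefficients involve $\partial_z\mr{\phi}$ and $\mu_k/\mr{r}^2$, hence belong to $\mathcal{F}_k$ and $\mathcal{G}_{k^2}$. Each quadratic factor carries a factor $e^{-(1+a)s}\|(f,g,h)\|$, and the explicit power of $e^{s}$ is at most one less than the total decay, which yields the rate $e^{-(1+2a)s}$ (exactly as in the toy computation for $\mathcal{N}_{toy}$).

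\textbf{Step 2: pointwise bounds.} We use Lemma~\ref{lemma: sobolev embedding}, which gives $f\in\mathcal{C}^{7/4-\tau}_N$, $g\in\mathcal{C}^{\alpha-\tau}_N$ and $h\in\mathcal{C}^{p_k-\tau}_N$ with norms weighted by $e^{(1+a)s}$. Applying the Leibniz rule to $\partial_z^j$ of each product, the weights $(-z)^{j-3/4+\tau}$ for $f$ and $(-z)^{j-\alpha+\tau}$ for $g$ distribute over the factors, and the least regular factor dictates the total weight; this accounts for the $\max\{0,\cdot\}$ in the weights. The near-axis factors $1/\mr{r}$ are handled with $f(s,-1)=g(s,-1)=h(s,-1)=0$ together with Lemma~\ref{lemma: Using Talor theory to control the Linfty norm}. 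For $\mathcal{N}^{(3)}/\mr{r}^2$, the weaker weight $j+1-p_k$ comes from the factor $\partial_z(\Psi/r)$ times background $\partial_z\mr{\phi}$, which has threshold regularity.

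\textbf{Step 3: $L^2$ bounds.} For each product we place the highest derivative in $L^2$, using the $H^\gamma_{N+1}$ components of the $\mathcal{S}^{a,N,\tau}$-norm, and the remaining factors in $L^\infty$ via Step 2. We then check that the weights $2i-\frac52+2\tau$ (and $2i+1-2p_k+2\tau$ for $\mathcal{N}^{(3)}$) dominate the product of the factor weights. Where a lower-order factor carries a negative power of $(-z)$, the inductive Hardy inequality Lemma~\ref{lemma: inductive hardy} absorbs it.

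\textbf{Main obstacle.} The hard step is the bookkeeping at $z\to0$ for terms containing background factors of only threshold regularity: $\partial_z^2\mr{\phi}\sim|z|^{-1+p_kk^2}$ and $\Psi_k$-dependent coefficients multiplying $\partial_z^{i}g$. One must verify that $\alpha-\tau>p_k$ leaves enough room to spare in the $(-z)$-powers; this is where the choice $0<\tau<\alpha-p_k$ is used. A second delicate point is uniformity in $b$: the rescaled coefficients depend on $b=c_0$ only through bounded factors $(1+b)^{\pm1}$, and these must be tracked so that all constants remain independent of $b$ for $|b|$ small.
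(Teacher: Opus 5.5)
Your proposal follows the paper's proof essentially step for step: the paper expands $\frac{1}{1-\mu}$, $\frac{1}{r}$ and $\frac{1}{\partial_z r}$ in geometric series rather than using an integral Taylor remainder. It then likewise bounds the quadratic terms by $e^{-s}$ times squares of $e^{s}$-weighted $L^\infty$ norms of the perturbation, gets the higher pointwise bounds inductively from Lemma~\ref{lemma: sobolev embedding}, and proves the $L^2$ bounds by putting the top derivative in $L^2$ and the rest in $L^\infty$; your homogeneity count of the $e^{s}$ factor and your remarks on $\tau<\alpha-p_k$ and on uniformity in $b$ are simply more explicit than the paper.

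One caveat applies equally to the paper: $\mathcal{N}^{(1)}$ (and similarly $\mathcal{N}^{(2)}$) contains $\frac{2m_p}{(1-\mu_k)^2 r_k^2}\,\partial_z r_p\,(\partial_s+q_kz\partial_z)r_k$, which near $z=-1$ is a nonvanishing multiple of $e^{s}h\,\partial_z f/\mr{r}^{2}$. The factor $h/\mr{r}^{2}$ on your list is not controlled by $h(s,-1)=0$ alone, because Lemma~\ref{lemma: Using Talor theory to control the Linfty norm} genuinely requires $f(-1)=f'(-1)=0$ (it fails for $f=z+1$). So $\mathcal{S}^{a,N,\tau}$ should also encode second-order vanishing of $h$ at the axis; this is harmless, since the map's output $\mathfrak{A}^{(3)}$ is $O(\mr{r}^{3})$.
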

\begin{proof}
    We can write the equations for $r_{p}$ and $\Psi_{p}$ as \begin{align}
        &\partial_{s}\partial_{z}r_{p}^{b}+q_{k}z\partial_{z}^{2}r_{p}^{b}+q_{k}\partial_{z}r_{p}^{b}\nonumber\\=&\frac{1}{r_{k}^{b}}
    \left[\frac{\mu_{k}}{1-\mu_{k}}+\frac{1}{1-\mu_{k}}\sum_{n = 1}^{\infty}\left(\frac{\mu_{p}}{1-\mu_{k}}\right)^{n}\right]\left(1+\sum_{n=1}^{\infty}(-1)^{n}\left(\frac{r_{p}^{b}}{r_{k}^{b}}\right)^{n}\right)(\partial_{z}r_{k}^{b}+\partial_{z}r_{p}^{b})\nonumber\\&\times \left((\partial_{s}+q_{k}z\partial_{z})r_{k}^{b}+(\partial_{s}+q_{k}z\partial_{z})r_{p}^{b}\right)-\frac{\mu_{k}}{1-\mu_{k}}\frac{(\partial_{s}+q_{k}z\partial_{z})r_{k}^{b}\partial_{z}r_{k}^{b}}{r_{k}^{b}}, \allowdisplaybreaks\label{eq: nonlineae wave equation for rp}
    \\[1em]&
\partial_{s}\partial_{z}\Psi_{p}^{b}+q_{k}z\partial_{z}^{2}\Psi_{p}^{b}+q_{k}\partial_{z}\Psi_{p}^{b}+k\partial_{z}r_{p}^{b}\nonumber\\=&\frac{1}{(r_{k}^{b})^{2}}\left(\frac{\mu_{k}}{1-\mu_{k}}+\frac{1}{1-\mu_{k}}\sum_{n = 1}^{\infty}\left(\frac{\mu_{p}}{1-\mu_{k}}\right)^{n}\right)\left(\sum_{n = 0}^{\infty}(-1)^{n}\left(\frac{r_{p}^{b}}{r_{k}^{b}}\right)^{n}\left(2+\frac{r_{p}^{b}}{r_{k}^{b}}\right)^{n}\right)(\partial_{z}r_{k}^{b}+\partial_{z}r_{p}^{b})\nonumber\\&\times\left((\partial_{s}+q_{k}z\partial_{z})r_{k}^{b}+(\partial_{s}+q_{k}z\partial_{z})r_{p}^{b}\right)(\Psi_{k}^{b}+\Psi_{p}^{b})-\frac{\mu_{k}}{1-\mu_{k}}\frac{\partial_{z}r_{k}^{b}(\partial_{s}+q_{k}z\partial_{z})r_{k}^{b}\cdot \Psi_{k}^{b}}{(r_{k}^{b})^{2}},\allowdisplaybreaks\label{eq: nonlinear wave equation for psip}\\&\mu_{p} = \frac{2m_{p}^{b}}{r_{k}^{b}}-\frac{2m_{k}^{b}}{r_{k}^{b}}\frac{r_{p}^{b}}{r_{k}^{b}}-2\frac{r_{p}^{b}}{r_{k}^{b}}\frac{m_{p}^{b}}{r_{k}^{b}}+\left(\frac{2m_{k}^{b}}{r_{k}^{b}}+\frac{2m_{p}^{b}}{r_{k}^{b}}\right)\sum_{n = 2}^{\infty}(-1)^{n}\left(\frac{r_{p}^{b}}{r_{k}^{b}}\right)^{n}.\label{eq: nonlinear mup equation}
    \end{align}
    Taking out the linear part of the right-hand sides of \eqref{eq: nonlineae wave equation for rp}-\eqref{eq: nonlinear wave equation for psip}, we can rewrite the above equations as \begin{equation*}
        \mathcal{P}^{(1)} = \mathcal{N}^{(1)},\quad \mathcal{P}^{(2)} = \mathcal{N}^{(2)},
    \end{equation*}
    where $\mathcal{N}^{(1)}$ depends quadratically on $(r_{p},\partial r_{p},m_{p})$, independent of $\Psi_{p}$, and $\mathcal{N}^{(2)}$ depends quadratically on $(r_{p},\partial r_{p},m_{p},\Psi_{p})$. For $\mathcal{N}^{(1)}$, we can estimate \begin{align*}
        \left\vert\mathcal{N}^{(1)}(f,g,h)\right\vert\lesssim& e^{-s}\left(\left\Vert e^{s}f(s,\cdot)\right\Vert_{L_{z}^{\infty}}+\left\Vert e^{s}\partial f(s,\cdot)\right\Vert_{L_{z}^{\infty}}+\left\Vert e^{s}h(s,\cdot)\right\Vert_{L_{z}^{\infty}}\right)^{2}\\\lesssim& e^{-s}e^{-2as}\left\Vert(f,g,h)\right\Vert_{\mathcal{S}^{a,N,\tau}}^{2} = e^{-(1+2a)s}\left\Vert(f,g,h)\right\Vert_{\mathcal{S}^{a,N,\tau}}^{2}.
    \end{align*}
    For the higher-order pointwise estimates on $\mathcal{N}^{(1)}$, arguing inductively and using Lemma \ref{lemma: sobolev embedding}, we have \begin{equation*}
        (-z)^{j-\frac{3}{4}+\tau}\left\vert\partial_{z}^{j}\mathcal{N}^{(1)}(f,g,h)\right\vert\lesssim e^{-(1+2a)s}\left\Vert(f,g,h)\right\Vert_{\mathcal{S}^{a,N,\tau}}^{2},\quad 1\leq j\leq N-1.
    \end{equation*}
    Similarly, we can prove \eqref{eq: quadratic structure for N2}.

For the $L^{2}$-estimates \eqref{eq: L2 bound for the nonlinear term 1}-\eqref{eq: L2 bound for the nonlinear term 2}, using $L^{2}$-norm to control the top-order derivatives and using $L^{\infty}$-norm to control the lower-order derivatives, we can derive \eqref{eq: L2 bound for the nonlinear term 1}-\eqref{eq: L2 bound for the nonlinear term 2} inductively.

Next, we consider the nonlinear transport equation \eqref{eq: nonlinear m equation}. We can write this equation as \begin{equation}
\begin{aligned}
    &\partial_{z}m_{p}^{b}+\frac{1}{\partial_{z}r_{k}^{b}}(r_{k}^{b}+r_{p}^{b})\left[\sum_{n = 0}^{\infty}(-1)^{n}\left(\frac{\partial_{z}r_{p}^{b}}{\partial_{z}r_{k}^{b}}\right)^{n}\right]\left(\partial_{z}\left(\frac{\Psi_{k}^{b}+\Psi_{p}^{b}}{r_{k}^{b}}\sum_{n = 0}^{\infty}(-1)^{n}\left(\frac{r_{p}^{b}}{r_{k}^{b}}\right)^{n}\right)\right)^{2}(m_{k}^{b}+m_{p}^{b})\\=&\frac{1}{2}\frac{(r_{k}^{b}+r_{p}^{b})^{2}}{\partial_{z}r_{k}^{b}}\left(\sum_{n = 0}^{\infty}(-1)^{n}\left(\frac{\partial_{z}r_{p}^{b}}{\partial_{z}r_{k}^{b}}\right)^{n}\right)\left(\partial_{z}\left(\frac{\Psi_{k}^{b}+\Psi_{p}^{b}}{r_{k}^{b}}\sum_{n = 0}^{\infty}(-1)^{n}\left(\frac{r_{p}^{b}}{r_{k}^{b}}\right)^{n}\right)\right)^{2}\\&+\frac{r_{k}^{b}}{\partial_{z}r_{k}^{b}}\left(\partial_{z}\left(\frac{\Psi_{k}^{b}}{r_{k}^{b}}\right)\right)^{2}m_{k}^{b}-\frac{1}{2}\frac{(r_{k}^{b})^{2}}{\partial_{z}r_{k}^{b}}\left(\partial_{z}\left(\frac{\Psi_{k}^{b}}{r_{k}^{b}}\right)\right)^{2}.
\end{aligned}
\end{equation}
Then the above equation can be written as \begin{equation*}
    \mathcal{P}^{(3)} = \mathcal{N}^{(3)},
\end{equation*}
where $\mathcal{N}^{(3)}$ depends quadratically on $(r_{p},\partial r_{p},\Psi_{p},\partial \Psi_{p},m_{p})$. For $(f,g,h)\in\mathcal{S}^{a,N,\tau}$, we have \begin{equation}
\begin{aligned}
\left\vert\frac{\mathcal{N}^{(3)}}{\mr{r}^{2}}\right\vert\lesssim& e^{-s}\left(\left\vert e^{s}\partial_{z}f\right\vert^{2}+\left\vert e^{s}\frac{f}{\mr{r}}\right\vert^{2}+\left\vert e^{s}\partial_{z}\left(\frac{f}{\mr{r}}\right)\right\vert^{2}+\left\vert e^{s}\partial_{z}\left(\frac{g}{\mr{r}}\right)\right\vert^{2}+\left\vert e^{s}\frac{g}{\mr{r}}\right\vert^{2}+\left\vert e^{s}\frac{h}{\mr{r}}\right\vert^{2}\right)\\\lesssim&
e^{-s}\left(\left\vert e^{s}\partial_{z}f\right\vert^{2}+\left\vert e^{s}f\right\vert^{2}+\left\vert e^{s}\partial_{z}g\right\vert^{2}+\left\vert e^{s}g\right\vert^{2}+\left\vert e^{s}\partial_{z}h\right\vert^{2}+\left\vert e^{s} h\right\vert^{2}\right)\\\lesssim& e^{-(1+2a)s}\left\Vert(f,g,h)\right\Vert_{\mathcal{S}^{a,N,\tau}}^{2}.
\end{aligned}
\end{equation}
Arguing similarly to the proof of \eqref{eq: quadratic structure of the nonlinear terms in the fixed point argument for N1} and \eqref{eq: L2 estimate for renormalized N1}, we can prove \eqref{eq: quadratic structure for N3} and \eqref{eq: L2 bound for the nonlinear term 3}.
    
\end{proof}
\section{Black hole formation under large smooth perturbations}
\label{appdx: large smooth perturbation}
In this section, we prove Theorem~\ref{thm: smooth perturbation instability}, which shows that for smooth perturbations that are large in the threshold topology $\mathcal{C}_{N}^{p_{k},\delta}$ while remaining small in the BV norm, the resulting spacetime will form a trapped surface. First, we recall the following criterion for the formation of trapped surfaces to spherically symmetric Einstein-scalar field equations, established in~\cite{chris91}.
\begin{theorem}[\cite{chris91}]
\label{thm: trapped surface}
Let $(\mathcal{M},g,\phi)$ be a solution to the spherically symmetric Einstein-scalar field equations \eqref{eq: ESF 1}--\eqref{eq: ESF 2}, $C_{o}^{+}$ be an outgoing null hypersurface emanating from the center, and $S_{1}$ and $S_{2}$ be two spheres on $C_{o}^{+}$ with $r|_{S_{2}}>r|_{S_{1}}$, i.e., $S_{2}$ lies in the exterior of $S_{1}$. Let $r_{1}: = r|_{S_{1}}$, $r_{2}: = r|_{S_{2}}$, $m_{1}: = m|_{S_{1}}$, and $m_{2}: = m|_{S_{2}}$, where $m$ is the Hawking mass $1-\frac{2m}{r}: = g(\nabla r,\nabla r)$. Denote \begin{equation*}
    \delta = \frac{r_{2}}{r_{1}}-1,\quad \eta = \frac{2(m_{2}-m_{1})}{r_{2}}.
\end{equation*}
Then there exist two positive constants $c_{0}$ and $c_{1}$, such that if \begin{equation}
    \delta\leq c_{0},\quad \eta>c_{1}\delta\log\left(\frac{1}{\delta}\right),
    \label{eq: trapped surface formation}
\end{equation}
then the ingoing light cone emanating from $S_{2}$ will intersect the apparent horizon and there exists a trapped region in $\mathcal{M}$.
\end{theorem}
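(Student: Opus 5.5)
We argue by contradiction in a characteristic rectangle. Introduce double-null coordinates $(u,v)$ near $C_{o}^{+}$ with $C_{o}^{+}=\{u=u_{0}\}$, and normalize $v$ so that $v=r$ on $C_{o}^{+}$; let $S_{1}=(u_{0},v_{1})$ and $S_{2}=(u_{0},v_{2})$. Write $\underline{C}_{1},\underline{C}_{2}$ for the ingoing cones $\{v=v_{1}\},\{v=v_{2}\}$. Suppose that no trapped sphere appears on $\underline{C}_{2}$ before $\underline{C}_{1}$ reaches the center. Then throughout the region $\mathcal{D}=\{u\geq u_{0},\ v_{1}\leq v\leq v_{2}\}$ (up to the center) we have $\nu<0$, $\lambda>0$ and $1-\mu>0$. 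Hence \eqref{eq:u-equ for m}--\eqref{eq:v-equ for m} give $\partial_{v}m\geq0$ and $\partial_{u}m\leq0$. All remaining steps are quantitative consequences of these signs, combined with the Raychaudhuri equations \eqref{eq:u-Ray equation}--\eqref{eq:v-Ray equation} written for $\lambda/(1-\mu)$ and $\nu/(1-\mu)$.

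\textbf{Step 1: mass difference and width.} Along each outgoing cone $u=$ const, set $\Delta m(u)=m(u,v_{2})-m(u,v_{1})\geq0$ and $\Delta r(u)=r(u,v_{2})-r(u,v_{1})$. Using the monotone quantity $\lambda/(1-\mu)$, which is nonincreasing in $v$ by \eqref{eq:v-Ray equation}, and the equation
\[
\partial_{u}\log\lambda=\frac{\mu}{1-\mu}\frac{\nu}{r},
\]
which follows from \eqref{eq:wave equation for r}, I would compare $\lambda$ at $v_{1}$ and $v_{2}$. The goal is the bound
\[
\Delta r(u)\leq C\,\delta\, r(u,v_{1})\exp\Big(\int\frac{\mu}{1-\mu}\frac{-\nu}{r}\,du\Big).
\]
This says the width of the annulus shrinks essentially proportionally to $r$, up to a blue-shift factor.

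\textbf{Step 2: mass loss through the inner cone.} Next I would show that the mass difference cannot leak away too fast, i.e. $\Delta m(u)\gtrsim \eta r_{2}-(\text{error})$. Integrating \eqref{eq:u-equ for m} in $u$ gives
\[
\Delta m(u)-\Delta m(u_{0})=\int_{u_{0}}^{u}\big(\partial_{u}m(\cdot,v_{2})-\partial_{u}m(\cdot,v_{1})\big)du ,
\]
which I would rewrite as $\int\!\!\int\partial_{u}\partial_{v}m$. This bulk term is controlled by the energy flux through the region, i.e. by $\int(\partial_{v}\phi)^{2}$. Using Cauchy--Schwarz and the fact that the flux through $C_{o}^{+}$ between $S_{1}$ and $S_{2}$ is bounded in terms of $\eta$, I expect the relative error to be $O(\delta)$.

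\textbf{Step 3: closing the contradiction.} Follow $\underline{C}_{1}$ inward until $r(u,v_{1})$ has decreased to about $\Delta r$; this must happen before the center, because $r(u,v_{1})\to0$ there. At that time $r(u,v_{2})\lesssim\Delta r$, while $2\Delta m\gtrsim\eta r_{2}$. Then $\mu(u,v_{2})>1$ provided
\[
\eta r_{2}\gtrsim\Delta r\lesssim\delta\,r_{2}\cdot(\text{blue-shift factor}).
\]
Since $\mu\le1$ and $r$ ranges from $r_{1}$ down to $\sim\delta r_{1}$, the blue-shift integral $\int\frac{\mu}{1-\mu}\frac{dr}{r}$ over this range is heuristically at most $\log(1/\delta)$. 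This is exactly what produces the condition $\eta>c_{1}\delta\log(1/\delta)$, and it yields a trapped sphere on $\underline{C}_{2}$, contradicting the assumption.

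The main obstacle is the blue-shift factor in Step 1. The integrand $\frac{\mu}{1-\mu}$ blows up as $\mu\to1$, so a naive bound is circular. I would handle this with a continuity argument: stop at the first $u$ where $\mu(u,v_{2})=1/2$, say. On the stopped region $\frac{\mu}{1-\mu}\le1$, so the exponential contributes at most a factor $\exp(\log(r_{1}/r))$. I would then split the remaining integration into dyadic shells in $r$, so that only a logarithmic loss $\log(1/\delta)$ is incurred rather than a power of $\delta$. Finally, I would check that after the stopping time the mass ratio is forced past $1$ within a further bounded dyadic range.
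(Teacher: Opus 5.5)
The paper gives no proof of this statement; it quotes it from \cite{chris91}, so the comparison is with Christodoulou's argument. Your skeleton is the right one: contradiction in the rectangle $\{u\geq u_0,\ v_1\leq v\leq v_2\}$, the signs $\nu<0<\lambda$ and $0\leq\mu<1$, stopping when $r(u,v_1)$ is comparable to the width, and reading off $\mu>1$ on $\underline{C}_2$. However, Step 1 is wrong, Step 2 lacks its key idea, and the logarithm is placed in the wrong step.

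The Step 1 target $\Delta r(u)\leq C\delta\, r(u,v_1)\exp\big(\int\frac{\mu}{1-\mu}\frac{-\nu}{r}du\big)$ is false. In Minkowski space $\mu\equiv0$, yet $\Delta r$ is constant while $r(u,v_1)\to0$. (Also, \eqref{eq:v-Ray equation} makes $\nu/(1-\mu)$ monotone in $v$; it is $\lambda/(1-\mu)$ that is monotone in $u$.) The correct statement is simpler and contains no blue-shift factor. Since $\partial_u\log\lambda=\frac{\mu}{1-\mu}\frac{\nu}{r}\leq0$, one has $\lambda(u,v)\leq\lambda(u_0,v)$, hence $r(u,v_2)-r(u,v_1)\leq r_2-r_1=\delta r_1$ for all $u$. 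So the ``main obstacle'' you identify is not there. Your treatment of it could not have produced the threshold anyway: an exponent of size $\log(1/\delta)$ yields a factor $1/\delta$, which gives only the useless condition $\eta\gtrsim1$.

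The real content is the lower bound on $\Delta m(u)=m(u,v_2)-m(u,v_1)$, and here Cauchy--Schwarz against the ingoing flux does not close. The bulk $\iint\partial_u\partial_v m$ contains the cross term $(1-\mu)r^{-1}\zeta\theta$, with $\theta=r\partial_v\phi$ and $\zeta=r\partial_u\phi$, and $\zeta$ is controlled neither by the data on $C_o^+$ nor by $\eta$. Moreover, what one can prove is a loss of order $\delta\log(1/\delta)\,r_2$, not $\delta r_2$; that is where the logarithm comes from.

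The missing idea is to absorb $\zeta$ using the $u$-Raychaudhuri equation. Write $\Delta m(u)=\frac12\int_{v_1}^{v_2}\frac{1-\mu}{\lambda}\theta^2\,dv$ (from \eqref{eq:v-equ for m}). Use $\partial_u\theta=-\lambda\partial_u\phi$ (from \eqref{eq:wave equation for phi}) and $\partial_u\frac{1-\mu}{\lambda}=\frac{(1-\mu)\zeta^2}{\lambda(-\nu)r}$ (from \eqref{eq:u-Ray equation}). Completing the square gives
\[
\partial_u\Big(\frac{1-\mu}{\lambda}\theta^2\Big)=\frac{1-\mu}{r}\Big[\frac{(\zeta\theta-\lambda(-\nu))^2}{\lambda(-\nu)}-\lambda(-\nu)\Big]\geq-\frac{(1-\mu)\lambda(-\nu)}{r}.
\]
Integrate over the rectangle, using $1-\mu\leq1$, $\lambda\leq\lambda|_{u=u_0}$ and $\int_{u_0}^{u}\frac{-\nu}{r}du'=\log\frac{r(u_0,v)}{r(u,v)}$. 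This gives a bound that does not involve the scalar field at all:
\[
\Delta m(u)\geq\tfrac12\eta r_2-\tfrac12(r_2-r_1)\log\frac{r_2}{r(u,v_1)}.
\]
At the $u$ with $r(u,v_1)=\delta r_1$ one has $r(u,v_2)\leq2\delta r_1$ and $m(u,v_1)\geq0$, so
\[
\mu(u,v_2)\geq\frac{(1+\delta)\eta-\delta\log\frac{1+\delta}{\delta}}{2\delta}>1
\]
as soon as $\delta\leq c_0$ and $\eta>c_1\delta\log(1/\delta)$. This is, in substance, the mechanism of \cite{chris91}; no stopping at $\mu=\frac12$ and no dyadic decomposition is needed.
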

Therefore, for the $k$-self-similar naked singularity, it suffices to verify the condition~\eqref{eq: trapped surface formation} in an exterior small self-similar neighborhood $\mathcal{U}_{s} = \{-1\leq u<0,\ 0\leq \frac{v}{(-u)^{q_{k}}}\leq \kappa\}$ of the ingoing cone $\{v= 0\}$ emanating from the naked singularity. Now we define a step function $\chi(v)$ which is zero for $v\leq 0$ and is $1$ for $v>0$. Let $\rho(v)$ be a smooth cut-off function with \begin{equation*}
    \text{supp} \rho\in[0,1],\quad \int_{0}^{1}\rho(v)dv=1.
\end{equation*}
Let $\rho_{\delta} = \frac{1}{\delta}\rho(\frac{v}{\delta})$ and $\chi_{\delta} = \rho_{\delta}*\chi$. Then $\chi_{\delta}$ is a smooth function with $\chi_{\delta}(v)\equiv0$ for $v\leq 0$ and $\chi_{\delta}\equiv 1$ for $v\geq \delta$. 

We impose the following gauge choice on the characteristic initial hypersurface $\{-1\leq u<0,\ v = 0\}\cup \{u = -1,\ 0\leq v\leq\kappa\}$:\begin{equation}
    r(u,0) = (-u),\quad r(-1,v) = r_{k}(-1,v),\quad \Omega^{2}(-1,0) = 1.\label{eq: exterior gauge choice}
\end{equation}
We impose the following initial data on $\phi$ on the same characteristic hypersurface: \begin{equation}
    \partial_{u}\phi(u,0) = \partial_{u}\phi_{k} = \frac{k}{(-u)},\quad \partial_{v}\phi(-1,v) = \partial_{v}\phi_{k}(-1,v)+\epsilon\chi_{\delta}(v),\label{eq: exterior initial condition}
\end{equation}
where $\epsilon$ will be chosen to be small later.

To show the formation of trapped surfaces, one cannot expect $\partial_{v}r>0$ in the whole region of the self-similar neighborhood $\mathcal{U}_{s}$. However, one can still manage to show that a suitable subset of $\mathcal{U}_{s}$ remains untrapped. We have the following proposition.
\begin{proposition}
For the spherically symmetric Einstein-scalar field equations \eqref{eq: ESF 1}-\eqref{eq: ESF 2} with the gauge choice \eqref{eq: exterior gauge choice} and characteristic initial data \eqref{eq: exterior initial condition}, the region \begin{equation*}
    \widetilde{\mathcal{U}}_{s}: = \left\{0\leq \frac{v}{(-u)^{1+k^{2}}}\leq \kappa\right\}
\end{equation*}
is free of trapped surfaces for $0<k^{2}<\frac{1}{3}$ and $\kappa,\epsilon$ sufficiently small.
\end{proposition}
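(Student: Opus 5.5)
The plan is a continuity (bootstrap) argument in the thin region $\widetilde{\mathcal{U}}_{s}$, integrating the Raychaudhuri equation~\eqref{eq:v-Ray equation} in $v$ starting from the singular horizon $\{v=0\}$. On $\{v=0\}$ the data coincide with the background, so there $r=(-u)$, $\partial_u\phi=k/(-u)$, and by Proposition~\ref{prop: estimate on the exact k self similar spacetime} we have $\lambda(u,0)\approx(-u)^{k^{2}}$ and $\Omega^{2}(u,0)\approx(-u)^{k^{2}}$. In particular $\lambda/\Omega^{2}\approx1$ on $\{v=0\}$. Integrating~\eqref{eq:v-Ray equation} gives
\[
\frac{\lambda}{\Omega^{2}}(u,v)=\frac{\lambda}{\Omega^{2}}(u,0)-\int_{0}^{v}\frac{r(\partial_v\phi)^{2}}{\Omega^{2}}(u,v')\,dv'.
\]
So it suffices to show that the integral is at most a small fraction of $\frac{\lambda}{\Omega^{2}}(u,0)$. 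This gives $\lambda>0$, and since $\nu<0$, no trapped surfaces.

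The bootstrap assumptions in $\widetilde{\mathcal{U}}_{s}\cap\{u\le u_{*}\}$ would be:
\begin{itemize}
\item $\tfrac12(-u)\le r\le 2(-u)$, $\;-\nu\approx1$, and $\Omega^{2}\approx(-u)^{k^{2}}$;
\item $|\partial_u\phi|\lesssim k/(-u)$;
\item $|r\partial_v\phi|\le A\big((-u)^{k^{2}}+\epsilon\big)$ for a large constant $A$.
\end{itemize}
The bound on $r\partial_v\phi$ is meant to absorb both the background contribution $r\partial_v\phi_{k}\sim (-u)\cdot(-u)^{-q_k}k^{-1}$ and the perturbation. Granting these, the $v$-integral is bounded by the length of the $v$-interval, at most $\kappa(-u)^{1+k^{2}}$, times $\sup r^{-1}\Omega^{-2}(r\partial_v\phi)^{2}$. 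This yields
\[
\int_{0}^{v}\frac{r(\partial_v\phi)^{2}}{\Omega^{2}}\lesssim \kappa A^{2}(-u)^{1+k^{2}}(-u)^{-1-k^{2}}\big((-u)^{k^{2}}+\epsilon\big)^{2}\lesssim\kappa A^{2},
\]
which is small once $\kappa$ is chosen after $A$. This is precisely why the exponent $1+k^{2}$, rather than $q_k$, appears in the definition of $\widetilde{\mathcal{U}}_{s}$: the thinner region compensates for the $r^{-1}\Omega^{-2}\sim(-u)^{-1-k^{2}}$ weight in the worst case.

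The bootstrap assumptions are then improved one at a time.
\begin{itemize}
\item Radius and $\nu$: $r$ follows from $r(u,v)=r(u,0)+\int_0^v\lambda$ together with $v\le\kappa(-u)^{1+k^{2}}$. The bound on $\nu$ follows from~\eqref{eq:wave equation for r}, whose right side is small when integrated over the short $v$-interval.
\item Lapse: $\Omega^{2}$ follows by integrating~\eqref{eq:wave equation for omega} in $v$ from $\{v=0\}$. Every term on the right side is bounded by $(-u)^{-2}$ times bootstrap quantities, so integrating over the short interval produces only $O(\kappa)$ changes in $\log\Omega^{2}$.
\item Scalar field: integrating~\eqref{eq:wave equation for phi} in the form $\partial_u(r\partial_v\phi)=-\lambda\partial_u\phi$ in $u$ from $u=-1$ gives
\[
|r\partial_v\phi|(u,v)\le|r\partial_v\phi|(-1,v)+\int_{-1}^{u}\lambda|\partial_u\phi|.
\]
Here the initial term is bounded by the background value plus $\epsilon\sup|\chi_\delta|\le\epsilon$. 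The integrand is $\lesssim(-u')^{k^{2}}k(-u')^{-1}$, so the integral is at most of order $k\cdot k^{-2}=O(1/k)$ times the background scale. This is absorbed into $A$, using that $\partial_v\phi_k$ on the initial cone supplies the same $k^{-1}$ scaling.
\item $\partial_u\phi$: this is improved symmetrically by integrating $\partial_v(r\partial_u\phi)=-\nu\partial_v\phi$ in $v$ over the short interval from the background values on $\{v=0\}$.
\end{itemize}

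I expect the main obstacle to be the consistency of the $\partial_v\phi$ bound with the blue-shift. In the rough setting of~\cite{chris99}, exactly this quantity is amplified along $\{v=0\}$, and one must check that the amplification only produces the power $(-u)^{k^{2}}$ weight, uniformly in $\delta$. If it did not, the region would need an even smaller exponent. I would first try to verify this by comparison with the self-similar background, using the explicit decay rate $\lambda_k\approx(-u)^{k^{2}}$, before closing the bootstrap by continuity in $u_{*}\to0$.
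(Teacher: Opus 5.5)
There is a gap in the step that closes your bound $|r\partial_v\phi|\le A((-u)^{k^2}+\epsilon)$; as written, that bootstrap does not improve. Integrating $\partial_u(r\partial_v\phi)=-\lambda\partial_u\phi$ from $u=-1$ with absolute values, using your bounds $\lambda\lesssim(-u)^{k^2}$ and $|\partial_u\phi|\lesssim k/(-u)$, gives
\[
|r\partial_v\phi|(u,v)\le \frac{C}{k}+\epsilon+Ck\int_{-1}^{u}(-u')^{k^{2}-1}\,du'=\frac{C}{k}+\epsilon+\frac{C}{k}\bigl(1-(-u)^{k^{2}}\bigr).
\]
This tends to a constant of size $1/k$ as $u\to0$, whereas $A((-u)^{k^2}+\epsilon)\to A\epsilon$. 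So the $O(1/k)$ term cannot be ``absorbed into $A$'' unless $A\gtrsim 1/(k\epsilon)$, and then the weight $(-u)^{k^2}$ is vacuous. The decay $r_k\partial_v\phi_k(u,0)=(-u)^{k^2}\mr{\phi}'(0)$ of the background along the horizon comes from an exact cancellation. Indeed $\mr{\phi}'(0)-\frac{\mr{r}'(0)}{k}\bigl(1-(-u)^{k^2}\bigr)=(-u)^{k^2}\mr{\phi}'(0)$ holds precisely because $k\mr{\phi}'(0)=\mr{r}'(0)$, and the triangle inequality destroys this cancellation. This also settles your closing worry. The blue-shifted perturbation, $r\partial_v\phi_p\approx\epsilon\chi_\delta$, is exactly your non-decaying $+\epsilon$ term and is harmless; the trouble is the background part.

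The repair is to drop the decaying weight and bootstrap $|r\partial_v\phi|\le M/k$, with $M$ depending only on the background.
\begin{itemize}
\item The Raychaudhuri loss becomes $\int_0^v\frac{(r\partial_v\phi)^2}{r\Omega^2}\lesssim\kappa M^2/k^2$.
\item The $\partial_u\phi$ improvement costs $\int_0^v|\nu\partial_v\phi|\lesssim\kappa M(-u)^{k^2}/k$.
\item The bound on $r\partial_v\phi$ improves to $(C+4\mr{r}'(0))/k+\epsilon\le M/(2k)$. This uses $\lambda\le\frac{\lambda}{\Omega^2}(u,0)\,\Omega^2\le 2\mr{r}'(0)(-u)^{k^2}$, which follows from the monotonicity of $\lambda/\Omega^2$ and the lapse bootstrap, with constants independent of $M$.
\end{itemize}
Everything then closes for $\kappa\ll k^2/M^2$.

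Separately, your lapse step is imprecise. Integrating \eqref{eq:wave equation for omega} in $v$ only controls $\partial_u\log\Omega^2$, so you must integrate once more in $u$ from $u=-1$ at fixed $v$. This is allowed because $\widetilde{\mathcal U}_s$ is closed under decreasing $u$. It gives a change $\lesssim(1+M)v/(-u)\le(1+M)\kappa(-u)^{k^2}$, plus the $O(\kappa)$ variation of $\log\Omega^2$ along $\{u=-1\}$.

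Once repaired, your argument is more elementary than the paper's and needs $\epsilon$ only bounded, at the price of $\kappa\ll k^2$ and of yielding only crude bounds. The paper instead bootstraps differences from the background: $\partial_v r_p$, $\partial_u r_p$, $\mu_p$ and $\partial\widetilde\phi$, where $\widetilde\phi=\phi-\phi_k-\phi_S$ and $\phi_S=\epsilon f_\delta(v)/(-u)$ is the explicit blue-shifted profile. These differences vanish on $\{v=0\}$, so their sources carry a factor of $v$ and the cancellation is never lost. The resulting $O(\epsilon\kappa)$ closeness to the background is what the paper then reuses in the mass estimate for trapped-surface formation.
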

\begin{proof}
Before we state our bootstrap assumptions, we first introduce the following renormalized quantities. Let $f_{\delta} = \int_{0}^{v}\chi_{\delta}(\widetilde{v})d\widetilde{v}$ and $\phi_{S}(u,v)= \epsilon\frac{f_{\delta}(v)}{(-u)}$. Then we decompose $(r,\phi,\mu)$ as \begin{align}
    &\phi(u,v) =\phi_{k}(u,v)+\phi_{p}(u,v) = \phi_{k}(u,v)+\phi_{S}(u,v)+\widetilde{\phi}(u,v),\\& r(u,v) = r_{k}(u,v)+r_{p}(u,v),\quad \mu(u,v) = \mu_{k}(u,v)+\mu_{p}(u,v).
\end{align}
We note that if $f_{\delta}\equiv0$, then necessarily we have $\phi_{S} = \widetilde{\phi} = 0$. We make the following bootstrap assumption in the region $\widetilde{\mathcal{U}}_{s}(u_{0},v_{0}): = \{-1\leq u\leq u_{0},\ 0\leq v\leq v_{0}\}\cap\widetilde{\mathcal U}_{s}$
\begin{equation}
\label{eq: bootstrap assumptions for appendix}
\begin{aligned}
   \mathcal{A}(u_{0},v_{0}) = &\sup_{\widetilde{\mathcal{U}}_{s}(u_{0},v_{0})}\left\vert(-u)^{-k^{2}}\partial_{v}r_{p}\right\vert+\sup_{\widetilde{\mathcal{U}}_{s}(u_{0},v_{0})}\left\vert\partial_{u}r_{p}\right\vert+\left\vert\mu\right\vert+\sup_{\widetilde{\mathcal{U}}_{s}(u_{0},v_{0})}\left\vert(-u)\partial_{u}\widetilde{\phi}\right\vert+\sup_{\widetilde{\mathcal{U}}_{s}(u_{0},v_{0})}\left\vert(-u)^{q_{k}}\partial_{v}\widetilde{\phi}\right\vert\\\leq&C\epsilon,
\end{aligned}
\end{equation}
for some universal constant $C$. Using the gauge choice $r_{p}(u,0) = 0$, we have \begin{equation}
    \vert r_{p}(u,v)\vert\leq \int_{0}^{v}\vert\partial_{v}r\vert(u,\widetilde{v})d\widetilde{v}\leq C\epsilon(-u) \frac{v}{(-u)^{q_{k}}}\leq C\epsilon\kappa (-u).\label{eq: improved bound for rp}
\end{equation}
To improve the bound for $\partial_{u}r_{p}$, we have \begin{equation*}
    \partial_{v}\partial_{u}r_{p} = \frac{\mu}{1-\mu}\frac{\partial_{u}r\partial_{v}r}{r}-\frac{\mu_{k}}{1-\mu_{k}}\frac{\partial_{u}r_{k}\partial_{v}r_{k}}{r_{k}} = O(\epsilon (-u)^{-q_{k}}),
\end{equation*}
where the second identity follows from the bootstrap assumption \eqref{eq: bootstrap assumptions for appendix}. Directly integrating the above equation, we have \begin{equation}
    \vert\partial_{u}r_{p}\vert(u,v) \leq C_{0}C\epsilon\frac{v}{(-u)^{q_{k}}}\leq C_{0}C\epsilon\kappa,\label{eq: improved bound for durp}
\end{equation}
where $C_{0}$ is a constant depending on the background $k$-self-similar spacetime.

To estimate $\mu_{p}$, we use the $v$-transport equations for $\mu$:\begin{equation*}
    \partial_{v}\mu+\left(\frac{\partial_{v}r}{r}+\frac{r}{\partial_{v}r}(\partial_{v}\phi)^{2}\right)\mu = \frac{r}{\partial_{v}r}(\partial_{v}\phi)^{2}.
\end{equation*}
Taking the difference, we have \begin{align*}
    \partial_{v}\mu_{p}+\underbrace{\left(\frac{\partial_{v}r}{r}+\frac{r}{\partial_{v}r}(\partial_{v}\phi)^{2}\right)}_{G(u,v)}\mu_{p} = \underbrace{-\mu_{k}\left(\frac{\partial_{v}r}{r}+\frac{r}{\partial_{v}r}(\partial_{v}\phi)^{2}\right)_{p}+\left(\frac{r}{\partial_{v}r}(\partial_{v}\phi)^{2}\right)_{p}}_{F(u,v)}.
\end{align*}
Using the integration factor and the positivity of $G$, we have \begin{equation*}
    \vert\mu_{p}\vert \leq\int_{0}^{v}\vert F(u,\widetilde{v})\vert e^{\int_{v}^{\widetilde{v}}G(u,v^{\prime})dv^{\prime}}d\widetilde{v}\leq \int_{0}^{v}\vert F(u,\widetilde{v})\vert d\widetilde{v}
\end{equation*}
For $F$, by our bootstrap assumptions~\eqref{eq: bootstrap assumptions for appendix}, the choice of the initial perturbation $\chi_{\epsilon}$ and Proposition \ref{prop: estimate on the exact k self similar spacetime} on the background spacetime, we have the estimates\begin{equation*}
   \vert F(u,v)\vert\leq C_{0}C\epsilon (-u)^{-1-k^{2}}.
\end{equation*}
Therefore, we have \begin{equation}
    \vert \mu_{p}\vert\leq C_{0}C\epsilon\frac{v}{(-u)^{1+k^{2}}}\leq C_{0}C\epsilon\kappa.\label{eq: improved estimate for mup}
\end{equation}
To estimate $\partial_{u}\widetilde{\phi}$, it suffices to estimate $\partial_{u}\phi_{p}$. Taking the difference between the equations for $\partial_{u}\phi$ and $\partial_{u}\phi_{k}$, we have \begin{equation*}
    \partial_{v}(r\partial_{u}\phi_{p}) = -\partial_{u}r\partial_{v}\widetilde{\phi}-\epsilon(-u)^{-1}\partial_{u}r\chi_{\delta}-r\left(\frac{\partial_{u}r}{r}\right)_{p}\partial_{v}\phi_{k}-\left(\frac{\partial_{v}r}{r}\right)_{p}\partial_{u}\phi_{k} = O(\epsilon (-u)^{-1}\chi_{\delta})+O(\epsilon(-u)^{-1+k^{2}}).
\end{equation*}
Therefore, we can get the estimate for $\partial_{u}\phi_{p}$: \begin{equation}
    \vert (-u)\partial_{u}\phi_{p}\vert\leq C_{0}C\epsilon\frac{v}{(-u)}\leq C_{0}C\epsilon \kappa.\label{eq: improved bound for duphip}
\end{equation}
Next, we turn to estimating quantities that need integration in the $u$-direction. For $\partial_{v}r_{p}$, we have \begin{equation}
    \partial_{u}\left(\partial_{v}r_{p}\right)-\left(\frac{\mu}{1-\mu}\frac{\partial_{u}r}{r}\right)\partial_{v}r_{p} = \partial_{v}r_{k}\left(\frac{\mu}{1-\mu}\frac{\partial_{u}r}{r}\right)_{p}.\label{du dv perturbation equation}
\end{equation}
Using the bootstrap assumption \eqref{eq: bootstrap assumptions for appendix} and improved bounds \eqref{eq: improved bound for rp}-\eqref{eq: improved estimate for mup}, we can control the right-hand side of \eqref{du dv perturbation equation}\begin{equation*}
    \left\vert\partial_{v}r_{k}\left(\frac{\mu}{1-\mu}\frac{\partial_{u}r}{r}\right)_{p}\right\vert\leq C_{0}C\epsilon \frac{v}{(-u)^{2}}.
\end{equation*}
Since $ -\frac{\mu}{1-\mu}\frac{\partial_{u}r}{r} $ is positive, we have the following improved bound for $\partial_{v}r_{p}$
\begin{equation}
    \vert\partial_{v}r_{p}\vert(u,v)\leq \int_{-1}^{u}C_{0}C\epsilon\frac{v}{(-\widetilde u)}d\widetilde u\leq C_{0}C\epsilon\frac{v}{(-u)}\leq C_{0}C\epsilon\kappa(-u)^{k^{2}}.
\end{equation}
It remains to control $\partial_{v}\widetilde{\phi}$. Note that $\phi_{S}$ is the solution to the equation \begin{equation*}
    \partial_{u}\partial_{v}\phi_{S}(u,v)+\frac{\partial_{u}r}{r}(u,0)\partial_{v}\phi_{S}(u,v) = 0.
\end{equation*}
Then we can write down the equation for $\partial_{v}\widetilde{\phi}$\begin{align}
    \partial_{u}(r\partial_{v}\widetilde{\phi}) = -r\left(\frac{\partial_{v}r}{r}\partial_{u}\phi\right)_{p}-r\left(\frac{\partial_{u}r}{r}\right)_{p}\partial_{v}\phi_{k}-r\left(\frac{\partial_{u}r}{r}(u,v)-\frac{\partial_{u}r}{r}(u,0)\right)\partial_{v}\phi_{S}.\label{eq: equation for dvphi tilde}
\end{align}
For the first term and the second term on the right-hand side of \eqref{eq: equation for dvphi tilde}, using the improved bounds \eqref{eq: improved bound for rp}--\eqref{eq: improved bound for duphip}, we have \begin{equation*}
    \left\vert r\left(\frac{\partial_{v}r}{r}\partial_{u}\phi\right)_{p}\right\vert\leq C_{0}C\epsilon\frac{v}{(-u)^{2-k^{2}}},\quad \left\vert r\left(\frac{\partial_{u}r}{r}\right)_{p}\partial_{v}\phi_{k}\right\vert\leq C_{0}C\epsilon\frac{v}{(-u)^{2-2k^{2}}}.
\end{equation*}
For the third term on the right-hand side of \eqref{eq: equation for dvphi tilde}, using Proposition \ref{prop: estimate on the exact k self similar spacetime}, we have \begin{equation*}
    \left\vert r\left(\frac{\partial_{u}r}{r}(u,v)-\frac{\partial_{u}r}{r}(u,0)\right)\partial_{v}\phi_{S}\right\vert\leq C_{0}C\epsilon \frac{v}{(-u)^{q_{k}}}\frac{\chi_{\delta}}{(-u)}.
\end{equation*}
Therefore, we have \begin{equation*}
    \left\vert(-u)^{q_{k}}\partial_{v}\widetilde{\phi}\right\vert\leq C_{0}C\epsilon\frac{v\chi_{\delta}}{(-u)^{q_{k}}}(-u)^{-k^{2}}\leq C_{0}C\epsilon\kappa.
\end{equation*}
Therefore, by choosing $\kappa$ sufficiently small, we can improve the bootstrap assumption \eqref{eq: bootstrap assumptions for appendix}. This concludes the proof.
\end{proof}

Now we are ready to close the proof of Theorem~\ref{thm: smooth perturbation instability}. Recall the $v$-transport equation for $m$:\begin{equation*}
    \partial_{v}m = \frac{1}{2}\frac{r^{2}}{\partial_{v}r}(\partial_{v}\phi)^{2}(1-\mu)\geq \frac{1}{4}\frac{r^{2}}{\partial_{v}r}\epsilon^{2}\chi_{\delta}^{2}(-u)^{-2}(1-\mu)-\frac{1}{2}\frac{r^{2}}{\partial_{v}r}(\partial_{v}\phi_{k}+\partial_{v}\widetilde{\phi})^{2}(1-\mu).
\end{equation*}
Let $m_{1} = m(u,0)$, $m_{2} = m(u,v)$, $r_{1} = r(u,0)$, and $r_{2} = r(u,v)$. Then by the bootstrap bound \eqref{eq: bootstrap assumptions for appendix}, for $\delta\leq v\leq \kappa(-u)^{1+k^{2}}$, we have \begin{align*}
    m_{2}-m_{1} = m(u,v)-m(u,0)\geq C\epsilon^{2}(-u)^{-k^{2}}\int_{0}^{v}\chi_{\delta}(\widetilde{v})d\widetilde{v}-C\kappa (-u)^{1+2k^{2}}\geq C\epsilon^{2}(-u)^{-k^{2}}(v-\delta)-C\kappa(-u)^{1+2k^{2}}.
\end{align*}
On the other hand, we have \begin{equation*}
    \delta = \frac{r_{2}-r_{1}}{r_{1}} = \frac{r_{2}-r_{1}}{(-u)} = \frac{1}{(-u)}\int_{0}^{v}\partial_{v}r\leq Cv(-u)^{-q_{k}}\leq C\kappa(-u)^{2k^{2}}\rightarrow0.
\end{equation*}
Therefore, we have \begin{equation*}
    \delta\log\frac{1}{\delta}\rightarrow 0,\quad u\rightarrow 0.
\end{equation*}
For fixed $\epsilon$ sufficiently small, we can choose $\delta = \epsilon^{100}$. Then, for $u = -\epsilon^{10}$ and $\epsilon^{100}\leq v=\kappa \epsilon^{10(1+k^{2})}$, we have \begin{equation*}
    \frac{m_{2}-m_{1}}{r_{1}}\geq \frac{1}{2}\kappa\epsilon^{2}\epsilon^{10}\geq c_{1}\kappa\epsilon^{20k^{2}}(-\log(20k^{2})),
\end{equation*}
for $\epsilon$ sufficiently small. Therefore, by Theorem~\ref{thm: trapped surface}, we can conclude that a trapped surface will form in the future of the ingoing cone emanating from $S_{2}$.

\bibliographystyle{plain}
\bibliography{references}

\end{document}

%% file: figures/characteristic_naked_singularity.tex


\begin{tikzpicture}[scale=1.5]

\coordinate (n1) at (0,0);
\coordinate (n2) at (45:6);
\draw[thick] (n1) -- (n2);

\coordinate (n3) at ($(n2) + (135:2) $);
\coordinate (b1) at ($(n3) + (-135:2) $);
\node[circle, inner sep = 0, minimum size = .1cm, label = {[xshift=1mm, yshift=-.5mm]above:{$i^+$}}, draw] (node1) at (n3) {};
\draw[dashed] (n2) -- (node1);

\node[label = left:{$\{r=0\}$}] at ($(0,1.5) + (.1,0)$) {};

\coordinate (n8) at ($(node1) + (-135:4)$) ;
\draw[dashed] (node1) --+ (-135:4);

\node[circle, draw, inner sep = 0, minimum size = .1cm, label = left:{$\mathcal{O}$}] (Onode) at (n8) {};

\node[label = left:{``naked singularity"}] (Bnode) at ($(n8)+(0,.2) + (.3,0)$) {};

\draw[thick] (n1) -- (Onode);
\draw[dashed] (Onode) -- ($(Onode) + (-45:2)$);
\coordinate (Hend) at ($(Onode)+(-45:2)$);
\draw[dashed] (Onode) -- (Hend);

\coordinate (Hmid) at ($(Onode)!0.75!(Hend)$);

\draw[->]
($(Hmid)+(45:2.2)$)
-- node[midway, above=2pt, sloped] {singular horizon $\underline{C}_{o}^{-}$}
(Hmid);
\coordinate (n7) at (45:4) ;
\node[label = {[rotate = 45]below:{initial data hypersurface $C_{out}$}}] at (45:3.4) {};

\coordinate (n95) at ($(0,1.9) + (45:4.67) $ );
\coordinate (n10) at ($(n95) + (-45:.4) $);
\coordinate (n11) at ($(n95) + (-135:.4) $);
\fill[blue!20] ($(0,1.9) + (45:4.67) $ ) -- (n10) -- (n11);
\node[circle, blue, fill, draw, inner sep = 0, minimum size = .1cm, label = {[yshift = 1mm, xshift = -.7mm]right:{distant observer}}] (obs) at (n95) {};

\draw[->, blue] (obs) -- (n10) ;
\draw[->, blue] (obs) -- (n11) ;

\coordinate (n12) at ($(n95) + (0,-.28) $);
\coordinate (n13) at ($(n12) + (-135:.5) $ );

\draw[->] (n13) to[out=10, in = -110] (n12);

\node[label = {[yshift = 2mm]below:{causal past}}] at (n13) {};

\coordinate (b2) at ($(n2) + (-135:.7)  $ ) ;
\coordinate (b3) at ($(n3) + (-135:.7)  $ ) ;
\coordinate (b4) at ($(Onode) + (-45:1)  $ ) ;
\node[label = {[rotate = 45, yshift=-1mm]above:{ $ r\rightarrow \infty $ }}] at (b2) {};
\node[label = {[rotate = 45,yshift=-1mm]above:{ $ r\rightarrow \infty $ }}] at (b3) {};
\coordinate (p1) at (60:3);

\node at (65:3.8) {exterior};
\coordinate (p2) at (70:1.5);
\node[label = above:{interior}] at (p2){};
\end{tikzpicture}
